\title{Conditionally Tight Algorithms for Maximum $k$-Coverage and Partial $k$-Dominating Set via Arity-Reducing Hypercuts}
\newcommand{\bigO}{\mathcal{O}}
\declaretheorem[numberwithin=section]{theorem}
\declaretheorem[unnumbered, name=Theorem]{theorem*}
\declaretheorem[numberlike=theorem]{lemma}
\declaretheorem[numberlike=theorem]{proposition}
\declaretheorem[numberlike=theorem]{corollary}
\declaretheorem[unnumbered]{claim}
\declaretheorem[numberlike=theorem]{observation}
\declaretheorem[unnumbered, name=Definition]{definition*}
\declaretheorem[
  numberlike=theorem,
  shaded={rulecolor=black, rulewidth=1pt},
  name=Definition,
]{boxed-definition}
\declaretheorem[unnumbered, name=Construction]{construction*}
\declaretheorem[unnumbered, name=Conjecture]{conjecture*}
\declaretheorem[numberlike=theorem, name=Hypothesis]{hypothesis}
\declaretheorem[unnumbered, name=Hypothesis]{hypothesis*}
\newcommand{\rem}[3]{\textcolor{blue}{\textsc{#1 #2:}}
  \textcolor{red}{\textsf{#3}}}
\newcommand{\marvin}[2][says]{\rem{Marvin}{#1}{#2}}
\newcommand{\mirza}[2][says]{\rem{Mirza}{#1}{#2}}
\newcommand{\nick}[2][says]{\rem{Nick}{#1}{#2}}
\renewcommand{\rem}[3]{}
\newcommand{\FOP}[1]{$\mathrm{FOP}_k$}
\DeclareMathOperator\MM{MM}
\renewcommand\paragraph{%
  \@startsection{paragraph}
    {4}
    {\z@}
    {3.25ex \@plus1ex \@minus.2ex}
    {-1em}
    {\normalfont\normalsize\bfseries\addperiod}}
\newcommand{\addperiod}[1]{#1\@addpunct{.}}
\newcommand{\norm}[1]{\left\lVert#1\right\rVert}
\newcommand{\hyperH}{\mathcal{H}}
\date{}
\author{Nick Fischer \and Marvin K\"unnemann \and Mirza Redzic}
\newenvironment{subproof}[1][\proofname]{%
  \begin{proof}[#1]%
}{%
  \end{proof}%
}
\begin{document}

\maketitle
\begin{abstract}
    \noindent
    We revisit the classic Maximum $k$-Coverage problem: Determine the largest number $t$ of elements that can be covered by choosing $k$ sets from a given family $\mathcal{F} = \{S_1,\dots, S_n\}$ of a size-$u$ universe. A notable special case is Partial $k$-Dominating Set, in which the task is to select $k$ nodes in a graph to maximize the number $t$ of dominated vertices. Extensive research has established strong hardness results for various aspects of Maximum $k$-Coverage, such as tight inapproximability results, $W[2]$-hardness, and a conditionally tight worst-case running time of $n^{k\pm o(1)}$ (for sufficiently large constant values of $k$). In this paper we ask: (1) Can this time bound be improved for small $t$, at least for Partial $k$-Dominating Set, ideally to time~$t^{k\pm O(1)}$? (2)~More ambitiously, can we even determine the best-possible running time of Maximum $k$-Coverage with respect to the perhaps most natural parameters: the universe size $u$, the maximum set size~$s$, and the maximum frequency $f$?

	We successfully resolve both questions. (1) We give an algorithm that solves Partial $k$\=/Dominating Set in time~\smash{$O(nt + t^{\frac{2\omega}{3} k+O(1)})$} if $\omega \ge 2.25$ and time~\smash{$O(nt+ t^{\frac{3}{2} k+O(1)})$} if $\omega \le 2.25$, where $\omega \le 2.372$ is the matrix multiplication exponent. From this we derive a time bound that is \emph{conditionally optimal}---regardless of $\omega$---based on the well-established $k$-clique and 3\=/uniform hyperclique hypotheses from fine-grained complexity. We also obtain matching upper and lower bounds for \emph{sparse} graphs. To address (2) we design an algorithm for Maximum $k$-Coverage running in time
	\[
        \min \left\{ (f\cdot \min\{\sqrt[3]{u}, \sqrt{s}\})^k + \min\{n,f\cdot \min\{\sqrt{u}, s\}\}^{k\omega/3}, n^k\right\}
        \cdot g(k)n^{\pm O(1)},
    \]
    and, surprisingly, further show that this complicated time bound is also \emph{conditionally optimal}.
    
    Our results are mainly based on a new algorithmic technique which we call \emph{arity-reducing hypercuts}. We are optimistic that this technique finds further applications in the future, perhaps for other problems with (currently) incomparable conditional lower bounds from $k$-clique detection in graphs and hypergraphs.

\end{abstract}

\thispagestyle{empty}
\clearpage
\setcounter{page}{1}

\section{Introduction}
\label{sec:intro}
Consider the following scenario: In a social network modeled as a graph $G=(V,E)$ with user base~$V$ and friendship relation $E$, we are given a budget to hire $k$ users (i.e., \emph{influencers}) to spread some information (e.g., to advertise a new product). A natural and simple measure of effectiveness is the number of users that can be directly reached by these influencers. Formally, we seek to maximize the union of their neighborhoods: $|N_G(x_1) \cup \dots \cup N_G(x_k)|$ over all choices $x_1, \dots, x_k \in V$.\footnote{Here, $N_G(v) = \{ u\in V \mid \{u,v\} \in E\}$.} This objective serves as a clean proxy for more complex \emph{network diffusion} models that are widely studied in practice (see, e.g.,~\cite{KempeKT03, ChenWY09} and references therein). Equivalently, this problem is also known as \emph{Partial $k$-Dominating Set}: Given a graph $G$, find the largest number $t$ of nodes that can be \emph{dominated} by some nodes $x_1, \dots, x_k$, where a node is dominated if it is adjacent to or identical with one of the selected nodes~$x_1, \dots, x_k$.

As a natural optimization problem that generalizes the classical $k$-Dominating Set problem, the complexity of Partial $k$-Dominating Set is well understood: It is $W[2]$-hard~\cite{DowneyF95} (parameterized by~$k$) and thus does not admit $f(k)n^{O(1)}$-time algorithms unless~\makebox{$\mathrm{W}[2] = \mathrm{FPT}$}. From a fine-grained viewpoint, it cannot even be solved in time $O(n^{k-\epsilon})$ (for any $\epsilon > 0$ and $k\ge 3$), assuming the Strong Exponential Time Hypothesis (SETH), due to a reduction by Pătraşcu and Williams~\cite{PatrascuW10}. On the other hand, Eisenbrand and Grandoni's algorithm for $k$-Dominating Set~\cite{EisenbrandG04} extends to solve Partial $k$-Dominating Set in time $n^{k+o(1)}$ for all $k\ge 8$.\footnote{If the matrix multiplication exponent $\omega$ equals $2$ then the algorithm has running time $n^{k + o(1)}$ even for all $k\ge 2$; see Appendix~\ref{sec:baseline}.} Thus, for sufficiently large $k$ the Partial $k$-Dominating Set problem has complexity precisely $n^{k\pm o(1)}$.

At first glance, this state of affairs offers little hope for improvement. However, this impression may be misleading. A core principle in parameterized algorithm design is to look beyond input size and instead develop algorithms whose running time depends on more refined structural parameters. In our setting, the optimal value~$t$ -- the number of nodes that can be dominated by $k$ choices -- stands out as a natural parameter: It is always bounded by~$n$, yet likely smaller in practice. This gap potentially opens the door to faster algorithms, possibly in time $t^{k\pm O(1)}$ which are not ruled out by conditional lower bounds. Many real-world graphs have sublinear maximum degree~$\Delta$ -- e.g., many models for social networks yield $\Delta \leq O(\sqrt{n})$. In such cases we have $t \le k\Delta = O(k\sqrt{n})$, hence such an algorithm would reduce the time complexity to roughly the square root of exhaustive-search time. This leads to our first guiding question:

\begin{center}
	\textbf{Question 1:} Can we obtain a $t^{k\pm O(1)}$-time algorithm for Partial $k$-Dominating Set? \\ If not, what is the best running time that we can achieve?  
\end{center}

Note that research on exponential-time algorithms for Partial $k$-Dominating Set~\cite{Blaser03, KneisMR07, NederlofR10, KoutisW16} culminates in a $2^t n^{O(1)}$-time algorithm~\cite{KoutisW16}. Unfortunately, already for $t=\omega(\log n)$ such algorithms are superpolynomial, so they cannot be used to answer Question 1 positively.

\medskip
More generally, Partial $k$-Dominating Set is a special case of the classic combinatorial optimization problem \emph{Maximum $k$-Coverage} (or \emph{Max $k$-Cover}): Given a family of sets $\mathcal{F} = \{S_1,\dots, S_n\}$ over the universe $[u]\coloneqq \{1,\dots, u\}$, compute the maximum number of items that can be covered using $k$ sets, i.e., $\max_{i_1, \dots, i_k} |S_{i_1} \cup \cdots \cup S_{i_k}|$.\footnote{Specifically, we obtain Partial $k$-Dominating Set by setting $\mathcal{F} = \{ N_G(v)\cup \{v\} \mid v\in V\}$.} The computational complexity of Max $k$-Cover has received even more interest than Partial $k$-Dominating Set.
Beyond the hardness results for exact algorithms that can be derived from Partial $k$-Dominating Set, strong inapproximability results for Max $k$-Cover are known: Even approximating the problem better than the factor of $(1-1/e)$ achieved by its classic greedy algorithm is NP-hard~\cite{Feige98}. This result could recently be strengthened~\cite{Manurangsi20} to rule out even  $n^{o(k)}$-time algorithms for better-than-greedy approximations, assuming gap-ETH. 

Analogously to Question 1, our goal is to determine improvements over exhaustive-search running time $n^{k\pm O(1)}$ for Max $k$-Cover. Here, two parameters particularly lend themselves to an investigation: the maximum size $s$ of any input set, as well as the maximum frequency $f$ of any element in the universe. These parameters are well-studied in how they determine the approximability of the related \emph{Set Cover} problem: (1) The greedy algorithm computes a $(1+ \ln s)$-approximation on the size of the smallest set cover~\cite{Johnson74a,Lovasz75,Chvatal79}; obtaining an approximation factor of $\ln s - O(\ln \ln s)$ is NP-hard~\cite{Feige98, Trevisan01}. (2) It is possible to approximate the minimum set cover size up to a factor of~$f$ (see, e.g.,~\cite{Halperin02}), but any $(f-1-\epsilon)$-approximation is NP-hard~\cite{DinurGKR05}, and in fact, even an $(f-\epsilon)$-approximation assuming the Unique Games Conjecture~\cite{KhotR08}. For the setting of computing Max $k$-Cover, our second and technically even more ambitious question is as follows:
\begin{center}
	\textbf{Question 2:} What is the optimal running time for Max $k$-Cover in terms of $n$, $u$, $s$ and $f$?
\end{center}



\marvin{give further references to Partial Dominating Set works or Maximum $k$ Coverage works?} \nick{Always a good idea in my opinion.}

\marvin{where to cite~\cite{BonnetPS16}?}

\marvin{cite Jain et al. SODA'23 paper somewhere?}

%
%
%
%
%
%

\subsection{Our Results}

\paragraph{\boldmath Question 1: Complexity of Partial $k$-Dominating Set}
Our first main result is to settle the fine-grained complexity of Partial $k$-Dominating Set in terms of the number of vertices $n$ and the optimal value $t$, thereby successfully answering Question 1. In fact, we show that while a running time of $t^{k \pm O(1)}$ \emph{cannot} be achieved (assuming that at least one of two established hardness assumptions hold), we can nevertheless obtain a \emph{conditionally tight} algorithm improving over exhaustive-search time in many cases. 

\begin{theorem}[Fine-grained Complexity of Partial $k$-Dominating Set, informal version] \label{thm:main1}
    Assuming the clique and 3-uniform hyperclique hypotheses, the optimal running time for Partial $k$-Dominating Set is \[\min\{ t^{\frac{3}{2}k} + \min\{t^2,n\}^{\frac{\omega}{3}k}, n^k\} \qquad \text{up to FPT factors of the form $f(k)n^{O(1)}$.}\]  
\end{theorem}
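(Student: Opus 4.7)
The plan is to prove the upper and lower bounds separately; combined, they yield the claimed tight complexity characterization.

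For the upper bound, the $n^k$ term comes from exhaustive enumeration over $k$-tuples of vertices, so I focus on the more refined bound $t^{3k/2} + \min\{t^2,n\}^{\omega k/3}$. I would first observe that any vertex of degree exceeding $t$ would alone dominate more than $t$ vertices, contradicting optimality; hence every vertex has degree at most $t$, bounding the edge count by $O(nt)$. The main algorithmic insight is the arity-reducing hypercut technique highlighted in the abstract: encode the coverage problem as a search problem on an auxiliary hypergraph and partition the $k$-element candidate solution into groups that interact through low-arity constraints. The $\min\{t^2,n\}^{\omega k/3}$ term would arise from applying fast rectangular matrix multiplication in the spirit of the Nešetřil--Poljak $k$-clique algorithm, where the effective vertex count is reduced from $n$ to $t^2$ by discarding vertices that cannot participate in any near-optimal solution. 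The $t^{3k/2}$ term is obtained from a combinatorial enumeration exploiting the sparsity bound $nt$. The crossover at $\omega = 9/4$ mentioned in the abstract corresponds to which of these two terms dominates.

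For the lower bound, I would present two separate reductions. The first reduces $k$-clique on an $N$-vertex graph (requiring $N^{\omega k/3 - o(1)}$ under the clique hypothesis) to Partial $k$-Dominating Set on an instance with $n = N^{O(1)}$ and $t$ chosen so that the relevant regime satisfies $\min\{t^2,n\} = \Theta(N)$. Thus a hypothetical algorithm in time $\min\{t^2,n\}^{(\omega/3 - \varepsilon)k}$ would yield an $N^{(\omega/3 - \varepsilon')k}$ algorithm for clique. The second reduces 3\=/uniform $k$-hyperclique on $N$ vertices (requiring time $N^{k - o(1)}$ under the 3\=/uniform hyperclique hypothesis) to Partial $k$-Dominating Set with $t = \Theta(N^{2/3})$, so that $t^{3k/2} = N^k$; any algorithm faster than $t^{3k/2}$ would then contradict this hypothesis. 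The $n^k$ lower bound branch (tight when $t$ is close to $n$) follows from the clique reduction by setting parameters appropriately, or alternatively from the known SETH-based $n^{k-o(1)}$ lower bound.

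The main obstacle I anticipate lies in designing the arity-reducing hypercut encoding so that it simultaneously supports the matrix-multiplication speed-up (requiring a low-arity bipartite structure amenable to fast matrix multiplication) and the combinatorial enumeration on a sparse substructure, with the resulting running time decomposing cleanly into two terms that align with two conceptually different lower bound sources. A secondary difficulty is calibrating the gadget sizes in the clique and hyperclique reductions so that $n$ and $t$ fall into the intended regimes, correctly certifying optimality in both the $t^2 \le n$ and $t^2 > n$ cases.
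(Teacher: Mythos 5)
Your outline correctly identifies the target decomposition, the two lower-bound sources, and the $\omega=9/4$ crossover, but the central algorithmic step is left as an open problem, and this is exactly where the paper's contribution lies. You write that the $k$ candidate vertices should be ``partitioned into groups that interact through low-arity constraints,'' but this is not generally possible: the induced hypergraph on an optimal solution may have high-arity dependencies everywhere. The paper's key insight is a \emph{win-win argument} built around a structure called a \emph{bundle} (built recursively by extending a vertex set via hyperedges meeting it). Either the optimal solution $S$ has no large bundles, in which case a \emph{balanced arity-reducing hypercut} $S_1,S_2,S_3$ of $\mathcal{H}[S]$ exists and one reads off the optimum from a Max-Weight-Triangle instance on $\binom{H}{k/3}$-indexed supernodes (Lemma~\ref{lemma:triangle-weight-bound-inclusion-exclusion}, Lemma~\ref{lemma:max-weight-triangle-algorithm}); or $S$ contains a large bundle, which can be enumerated at amortized cost roughly $\Delta^{3/2}$ per vertex of $S$ (Lemma~\ref{lemma:listing-bundles}). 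It is this second branch---not ``a combinatorial enumeration exploiting the sparsity bound $nt$''---that yields the $t^{3k/2}$ term. A further refinement (Lemma~\ref{lemma:bundles-and-cuts}) shows that guessing just \emph{two} bundles always leaves a balanced hypercut, which is what keeps the polynomial overhead at $\Delta^{O(1)}$ rather than $\Delta^{O(\sqrt{k})}$.

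On the lower-bound side your two-reduction plan is plausible, but the paper takes a different and more unified route: both the clique-based and hyperclique-based (and also the $k$-OV-based) bounds are derived from a single core reduction via an intermediate family of problems $(k,h)$-minIP/maxIP, instantiated with $h=2,3,k$ respectively. This matters because the reduction needs the vector sets to satisfy a strong $r$-regularity property for all $r<h$ so that, after applying inclusion--exclusion, the lower-order terms in $|N(a_1)\cup\cdots\cup N(a_k)|$ become constants independent of the chosen solution (Lemma~\ref{lemma:khov-regularity}). Arranging this regularization directly for $k$-clique and for $3$-uniform hyperclique separately is possible but more delicate; the $(k,h)$-maxIP/minIP abstraction handles it once for all $h$. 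You should also note that the reduction sets the group size $s$ so that $\Delta\approx d\cdot N^{(h-1)/h}$; choosing $h=3$ gives $\Delta\approx N^{2/3}$ as you predicted, and choosing $h=2$ gives $\Delta\approx N^{1/2}$, which yields $\Delta^{2\omega k/3}=N^{\omega k/3}$ matching your clique calibration. The exchange argument (Proposition~\ref{prop:exchange-pds}, Lemma~\ref{lemma:bounding-number-of-sets}) that restricts candidates to the $\min\{k\Delta^2,n\}$ highest-degree vertices is also essential to actually realize the $\min\{t^2,n\}^{\omega k/3}$ term, and you gesture at it but do not supply it.
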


Here, $2 \leq \omega \leq 2.372$~\cite{AlmanDWXXZ25} is the exponent of matrix multiplication. The conditional optimality in the above theorem is based on two plausible and well-established hypotheses on the complexity of detecting cliques of size $k$ in graphs and hypergraphs, respectively. Both have been used to give a number of tight conditional lower bounds, see, e.g.,~\cite{AbboudBVW18, BringmannW17, Chang19, KunnemannM20, DalirrooyfardVW22} for applications of the clique hypothesis and, e.g., \cite{AbboudBDN18,LincolnVWW18,BringmannFK19,KunnemannM20,DalirrooyfardVW22, Kunnemann22, DalirrooyfardJW22} for applications of the 3-uniform hyperclique hypothesis. See Section~\ref{sec:prelims} for details.

\marvin{the running time curve depending on $\Delta$ would currently consist of 4 parts? $nt$, $t^{2\omega/3 * k}$, $n^{2\omega/3 k} + t^{3/2 k}$, $n^k$? add the corresponding picture for current value of $\omega$?}

As $t$ and the maximum degree $\Delta$ are tightly related via $t/k \le \Delta \le t$, we obtain the same running time bounds when replacing $t$ by $\Delta$.
In particular, in the aforementioned realistic instances with maximum degree~\makebox{$\Delta=\Theta(\sqrt{n})$}, the resulting running time curiously depends on whether $\omega \le 2.25$: If $\omega\ge 2.25$, we obtain a running time of \smash{$f(k)n^{\frac{\omega}{3}k+O(1)}$} which is optimal assuming the $k$-clique hypothesis. If $\omega \le 2.25$ we obtain a running time of \smash{$f(k)n^{\frac{3}{4}k + O(1)}$} which is optimal under the 3-uniform hyperclique hypothesis. If $\omega=2.25$, the resulting running time of \smash{$f(k)n^{\frac{3}{4}k + O(1)}$} would be optimal under both hypotheses.

Furthermore, our results determine that one can solve the problem in linear time up to a threshold $\Delta \leq O(n^{\min\{\frac{2}{3},\frac{3}{2\omega}\}\frac{1}{k} - o(\frac{1}{k})})$, and one can beat exhaustive-search time $n^{k\pm o(1)}$ whenever $\Delta \le O(n^{2/3-\Omega(1)})$; conversely, if $\Delta \ge n^{2/3-o(1)}$ then exhaustive-search running time is necessary (assuming the 3-uniform hyperclique hypothesis).

Interestingly, as detailed in our technical overview below, conditional lower bounds guided our search towards our algorithm. Our main technical ingredient is a new algorithmic approach via so-called \emph{arity-reducing hypercuts} -- a win-win argument that either allows us to reduce to a Maximum-weight Triangle instance, or to identify vertices of the optimal solution at small cost. But also the conditional lower bounds are new and interesting. A conceptual challenge is to construct certain ``regular'' instances (as in~\cite{FischerKRS25}) which we manage to overcome in a unified way for both the clique-based and hyperclique-based lower bounds (and also the OV-based ones for the upcoming Theorem~\ref{thm:main3}) by starting from a carefully chosen intermediate problem.

\paragraph{\boldmath Question 2: Complexity of Max $k$-Cover}
Next, with additional technical effort, we extend our algorithms and conditional lower bounds to also resolve Question 2 -- settling the fine-grained complexity of Max $k$-Cover in terms of the number of sets $n$, the maximum set size $s$, the universe size $u$, and the maximum frequency $f$ of any element. Specifically, we obtain the following result.

\begin{theorem}[Fine-grained Complexity of Max $k$-Cover, informal version]\label{thm:main2}
	Assuming the clique and 3-uniform hyperclique hypotheses, the optimal running time for solving Max $k$-Cover for a set family $\mathcal{F} = \{S_1,\dots, S_n\}$ over the universe $[u]$ with maximum size $s\coloneqq \max_i |S_i|$ and maximum frequency $f \coloneqq \max_{y\in [u]} |\{ i \mid y\in S_i\}|$ is linear in the input size plus 
	\[ \min \left\{ (f\cdot \min\{u^{1/3}, \sqrt{s}\})^k + \min\{n,f\cdot \min\{\sqrt{u}, s\}\}^{k\omega/3}, n^k\right\} \quad \text{up to $g(k)\cdot (sf)^{O(1)}$ factors.} \]
\end{theorem}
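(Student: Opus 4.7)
The plan is to prove Theorem~\ref{thm:main2} in two parts: establishing a matching algorithmic upper bound and then complementary conditional lower bounds. The theorem strictly generalizes Theorem~\ref{thm:main1}: specializing to the Partial $k$-Dominating Set case (where $f=s=\Delta+1$ and $u=n$) recovers the $t^{3k/2}$ and $\min\{t^2,n\}^{k\omega/3}$ terms. The natural substitutions $t^{3/2}\leftrightarrow f\cdot\min\{u^{1/3},\sqrt{s}\}$ and $\min\{t^2,n\}\leftrightarrow\min\{n,f\cdot\min\{\sqrt{u},s\}\}$ strongly suggest that the right strategy is to lift the arity-reducing hypercut framework developed for Theorem~\ref{thm:main1} to the general set-system setting, then redo the reduction-based lower bound constructions with the extra care needed to simultaneously control all four parameters $n,u,s,f$.

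For the upper bound, I would generalize the arity-reducing hypercut algorithm to Max $k$-Cover. The core win-win dichotomy from Theorem~\ref{thm:main1} (either reduce to Maximum-weight Triangle, or identify a vertex of the optimum cheaply) should carry over, with the bounds re-expressed through the set/element incidence structure. In particular, the Maximum-weight Triangle instance arising after splitting the $k$ chosen sets into three groups of size roughly $k/3$ should have its vertex set (corresponding to the distinct union-profiles of $k/3$ sets) bounded by $\min\{n,f\cdot\min\{\sqrt{u},s\}\}$: the $f\cdot s$ bound via a set-side degree argument, and $f\cdot\sqrt{u}$ via a universe-side sparsification that is new to the set-cover setting. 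Fast matrix multiplication then yields the $k\omega/3$ exponent. In the complementary regime, branching on a ``popular'' element and recursing produces the $(f\cdot\min\{u^{1/3},\sqrt{s}\})^k$ term, with the $\sqrt{s}$ bound coming from the standard $t^{3/2}$-style counting (each covered element has at most $f$ candidate sets, each set has at most $s$ elements) and the $u^{1/3}$ bound arising in the regime where $s$ is so large that enumerating over universe triples is more economical than enumerating over sets.

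For the conditional lower bounds, I would adapt the two reductions underlying Theorem~\ref{thm:main1}: from $k$-clique (giving the $\omega/3$-based term) and from $3$-uniform $k$-hyperclique (giving the $3/2$-based term), reusing the ``regular instance'' intermediate-problem construction mentioned in the technical overview. The key adaptation is gadgetry that, starting from a clique or hyperclique instance of size $N$, produces a Max $k$-Cover instance whose parameters $(n,u,s,f)$ realize prescribed relationships tight enough to match every branch of the upper bound, including the $u^{1/3}$ vs.\ $\sqrt{s}$ and $\sqrt{u}$ vs.\ $s$ dichotomies. The three distinct terms in the claim correspond to three distinct reduction regimes, and within each, padding and replication tricks should tune the remaining parameters independently.

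The main obstacle I anticipate is on the lower-bound side. Hardness of $k$-clique and $k$-hyperclique naturally yields a single size parameter $N$, yet we must realize four parameters $(n,u,s,f)$ at prescribed values while preserving the clique/hyperclique structure inside the coverage optimum; this calls for carefully multi-layered gadget constructions on top of the regular-instance framework, and ensuring a single such construction simultaneously handles the extreme regimes (very small $s$, very small $u$, etc.) is the point most likely to require new ideas. On the upper-bound side, the main subtlety is the case analysis verifying that each branch of the $\min$ in the bound corresponds to the optimal algorithmic choice in its parameter regime; this should be manageable once the arity-reducing hypercut engine from Theorem~\ref{thm:main1} has been fully recast in terms of $n$, $u$, $s$, and $f$.
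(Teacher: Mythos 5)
Your plan matches the paper's approach: lifting the arity-reducing hypercut framework to the bipartite set-system setting, bounding the Max-weight-Triangle vertex pool via both a set-side exchange argument ($|X|\le f\cdot s$) and a universe-side regularization argument ($|X|\le O(f\sqrt{u})$), handling the obstructions by bundle enumeration to get the $(f\cdot\min\{u^{1/3},\sqrt{s}\})^k$ term, and proving the lower bounds by the same $(k,h)$-minIP/maxIP regular-instance reduction instantiated at $h=2$ and $h=3$. The one small imprecision is the source of $u^{1/3}$: it comes not from ``enumerating universe triples'' but from enumerating hyperedges of $\mathcal{H}(G)$ (one common element times a triple of its at most $f$ incident sets, i.e., $u f^3$ in total, amortized to $f u^{1/3}$ per solution vertex since each hyperedge pins down three of them).
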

From Theorem~\ref{thm:main2}, we can easily read off conditionally tight running time bounds for essentially any parameter setting. We illustrate some interesting parameter regimes in the following (omitting terms that are linear in the input size):
\marvin{Should we drop the middle items?} \nick{I think the second one is quite cool to have, we can drop the third one if someone prefers.}
\begin{itemize}
	\item If all set sizes are subpolynomial, i.e., $s=n^{o(1)}$, we obtain a tight running time of $g(k) f^{k\pm c}$ for some constant $c$ independent of $k$. Thus, in this case the running time is essentially determined by the maximum frequency $f$ alone.
	\item If $f \approx n^{1/3}$, $u \approx \sqrt{n}$ and $s$ may be arbitrary, we obtain a tight running time of $g(k) n^{k/2\pm c}$ for some $c$ independent of $k$; this bound is essentially the square root of the exhaustive-search baseline.
	\item We can precisely characterize when exhaustive-search running time $g(k)n^{k\pm c}$ is conditionally optimal: specifically, whenever $u\ge (n/f)^{3-o(1)}$ and $s\ge (n/f)^{2-o(1)}$. Conversely, whenever $u\le O((n/f)^{3-\epsilon})$ or $s\le O((n/f)^{2-\epsilon})$, we can beat exhaustive-search running time by a factor $g(k) n^{\delta k}$ with $\delta>0$.
\end{itemize}
\marvin{add partial vertex cover in $O(1)$-uniform hypergraphs? would get a result in $|E|$ and $\Delta$.}
 With this classification, we can immediately read off interesting results for special cases that have been studied for their own sake, such as the partial $k$-vertex cover problem in hypergraphs: Solving this problem on $n$-vertex $m$-edge $r$-uniform hypergraphs corresponds to Max $k$-Cover with $u=m$ and $f=r$. 

Extending our results for Partial $k$-Dominating Set to obtain Theorem~\ref{thm:main2} is far from straightforward: In particular, for small universe sizes, we design an algorithm that is based on a regularization step that allows us to find the optimal solution even more efficiently than using the previous arguments. 

\paragraph{\boldmath Bonus Question: Partial $k$-Dominating Set in Sparse Graphs}
As a further use of the techniques developed for our main questions, we essentially settle the complexity of Partial $k$-Dominating set in \emph{sparse} graphs (where the number of edges $m$ can be seen as yet another natural parameter), by proving the following results:

\begin{theorem}[Partial $k$-Dominating Set in sparse graphs]\label{thm:main3}
    For all $k\ge 3$, the optimal time complexity for Partial $k$-Dominating Set is $g(k)\left(mn^{k-2\pm o(1)} + m^{3k/5\pm O(1)}\right)$, assuming the 3-uniform hyperclique hypothesis.

    For the remaining case $k=2$, we obtain:
    \begin{itemize}[itemsep=0pt,topsep=\smallskipamount]
        \item an $O(m^{\frac{2\omega}{\omega+1}})$-time algorithm for Partial 2-Dominating Set, and
        \item a lower bound of $m^{\frac{4}{3}-o(1)}$ assuming the OV Hypothesis, matching the upper bound if $\omega=2$.
    \end{itemize}
\end{theorem}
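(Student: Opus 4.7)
The plan is to treat the cases $k \geq 3$ and $k = 2$ separately, and within each handle the upper and lower bounds in turn.

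For $k \geq 3$, the upper bound is realized by the minimum of two algorithms. The first, running in time $g(k) \cdot m^{3k/5 + O(1)}$, uses a degree-threshold split at $\Delta = m^{2/5}$: if every chosen vertex has degree at most $\Delta$, then $t \leq k\Delta$ and Theorem~\ref{thm:main1} resolves the instance in time $g(k)(k\Delta)^{3k/2} = g(k) \cdot m^{3k/5}$; otherwise I branch on the (at most $O(m/\Delta)$) candidate high-degree vertices and recurse on the residual Partial $(k-1)$\=/Dominating Set instance with a precovered set, the recursion being absorbed into the $g(k)$ factor. The second algorithm, running in time $g(k) \cdot m n^{k - 2 + o(1)}$, enumerates $k - 2$ of the chosen vertices in $n^{k-2}$ ways and, for each guess, solves the residual Partial 2\=/Dominating Set subproblem in essentially $O(m)$ time by exploiting the structure of the precovered set together with an Eisenbrand--Grandoni-style sparse matrix multiplication on the residual adjacency.

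For the matching lower bound under the 3-uniform $k$-hyperclique hypothesis, I construct two reductions calibrated by a density parameter. The first maps an $N$-vertex hypergraph instance to a graph with $n \approx N$ vertices and $m \approx N^{5/3}$ edges, by replacing each hyperedge with a gadget of inflated size, so the $N^{k - o(1)}$ hardness translates into $m^{3k/5 - o(1)}$. The second produces a denser graph with $m \approx N^2$, yielding $m \cdot n^{k - 2 - o(1)}$. Both reductions must produce sufficiently ``regular'' instances for the exponent to propagate, which I handle via the unified reduction framework used in the proof of Theorem~\ref{thm:main1}.

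For $k = 2$, the upper bound $O(m^{2\omega/(\omega+1)})$ follows an Alon--Yuster--Zwick-style degree-threshold argument. Fix $\Delta$; on the at most $2m/\Delta$ high-degree vertices, compute $|N(u) \cap N(v)|$ for all pairs via a single rectangular matrix multiplication on the high-degree adjacency, then recover $|N[u] \cup N[v]|$ by inclusion-exclusion. For pairs involving a low-degree vertex, enumerate over the low-degree vertex and its $\leq \Delta$ neighbors, costing $O(m\Delta)$ work in total. Balancing yields $\Delta = m^{(\omega - 1)/(\omega + 1)}$ and the claimed exponent. For the lower bound $m^{4/3 - o(1)}$ under OVH, I reduce from Orthogonal Vectors as follows: given $n$ Boolean vectors of dimension $d = n^{o(1)}$, build a graph whose vertices are the $n$ vectors plus a dimension gadget with $\sqrt{n}$ copies of each dimension, and add an edge from each vector to every gadget copy of every dimension in which it has a $1$-entry. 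Then $|N(u) \cap N(v)| = \sqrt{n} \cdot \langle u, v \rangle$, and (after standardizing vector Hamming weights) maximum $|N(u) \cup N(v)|$ is attained precisely by an orthogonal pair. The total edge count is $m \approx n^{3/2 + o(1)}$, so the $n^{2 - o(1)}$ OVH lower bound translates into $m^{4/3 - o(1)}$, matching the upper bound when $\omega = 2$.

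The main obstacle is Algorithm~I for $k \geq 3$: achieving essentially $O(m)$ time per guess for the residual Partial 2\=/Dominating Set subproblem, despite the $m^{4/3}$\=/hardness of the standalone problem. The key is to exploit the precovered structure induced by the fixed $k - 2$ vertices to bypass this lower bound in the residual regime. A secondary challenge is calibrating the hyperclique-based reductions so that the edge density lands precisely at the targeted exponents $3k/5$ and $k-2$ while preserving the regularity required for the lower bound to go through.
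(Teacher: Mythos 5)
Your overall plan — degree-threshold for the $m^{3k/5}$ bound, vertex-enumeration for the $mn^{k-2}$ bound, an Alon--Yuster--Zwick-style split for $k=2$, and $(k,h)$-style reductions for the lower bounds — is structurally close to the paper, but there are two substantive gaps in the algorithms.

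\emph{Gap 1 (the $m^{3k/5}$ algorithm for $k\ge 3$).} You propose that when all solution vertices have degree $\le \Delta = m^{2/5}$, Theorem~\ref{thm:main1} resolves the instance in time $g(k)\Delta^{3k/2} = g(k)m^{3k/5}$. But Theorem~\ref{thm:main1} actually gives $t^{3k/2} + \min\{t^2,n\}^{\omega k/3}$, and the second term does not reduce to $m^{3k/5}$ for current $\omega$: with $t\le k\Delta\approx m^{2/5}$ and $n$ as large as $\Theta(m)$, you get $(t^2)^{\omega k/3}\approx m^{4\omega k/15}$, and $4\omega/15 > 3/5$ exactly when $\omega > 2.25$. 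The paper closes this gap via the Regularization Lemma (Lemma~\ref{lemma:regularization-lemma}): after restricting to $H_2$ (vertices of degree $\ge \Delta/(2k)$), the candidate set has size $O(m/\Delta)$, not $O(\Delta^2)$, and then $(m/\Delta)^{\omega k/3}\le m^{3k/5}$ whenever $\Delta\ge m^{9/(10\omega)}$, while $\Delta^{2\omega k/3}\le m^{3k/5}$ when $\Delta\le m^{9/(10\omega)}$. Without the regularization step your algorithm does not achieve the claimed exponent under current $\omega$.

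\emph{Gap 2 (the $mn^{k-2}$ algorithm for $k\ge 3$).} You want to enumerate $k-2$ vertices and solve each residual Partial 2-Dominating Set in $O(m)$ time, acknowledging this as your ``main obstacle'' — but it is more than an obstacle: the residual instance can have maximum degree $\Theta(\sqrt m)$ and there is no reason the precovering circumvents the $m^{4/3-o(1)}$ barrier per instance. The paper does something structurally different: it enumerates sets of size $\lfloor(k-2)/2\rfloor + 2$ \emph{that contain an edge} (there are $O(mn^{\lfloor(k-2)/2\rfloor})$ of them), and amortizes over all of them simultaneously via one rectangular matrix product $\MM(mn^{\lfloor(k-2)/2\rfloor},\,n,\,n^{\lceil(k-2)/2\rceil})$, which is $mn^{k-2+o(1)}$ for $k\ge 9$ (or all $k\ge 3$ if $\omega=2$). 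This also dictates a different decomposition: the $mn^{k-2}$ routine handles only edge-containing solutions, and the $m^{3k/5}$ routine (via arity-reducing hypercuts after regularization) handles independent-set solutions; neither alone solves the whole problem.

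\emph{On $k=2$.} The heavy--heavy MM step is fine, but ``enumerate over the low-degree vertex and its $\le\Delta$ neighbors, costing $O(m\Delta)$'' does not give you a way to compute $\max_v |N[u]\cup N[v]|$ for light $u$: the pairs $(u,v)$ \emph{without} a common neighbor cannot be enumerated that way, and summing path counts $\sum_{u\text{ light}}\sum_{y\sim u}\deg(y)$ is $\sum_y\deg(y)^2$, which is not $O(m\Delta)$ in general. The paper's light-case routine (Lemma~\ref{lemma:2pds-light-solutions}) is considerably more delicate: it fixes the maximum-degree vertex $x$, splits $V$ into $S=N[x]$ and $R=V\setminus S$, proves structural constraints (e.g., any heavy candidate must miss at most $O(d)$ vertices of $S$), and maintains two auxiliary dictionaries plus a sorted candidate list to avoid iterating over all pairs. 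Your lower-bound sketch for $k=2$, on the other hand, is essentially the paper's reduction (group the OV vectors into $\sqrt n$ groups, $m\approx n^{3/2+o(1)}$, $n^{2-o(1)}\Rightarrow m^{4/3-o(1)}$) and is fine.

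=== END REVIEW ===
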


We remark that this reveals Partial 2-Dominating Set as a curious counterpart to All-Edges Sparse Triangle, see~\cite{VWilliamsX20}: Both problems can be solved in time $O(m^{2\omega/(\omega+1)})$, while matching hardness results exists if $\omega=2$. For All-Edge Triangle Counting, such lower bounds are known based on the 3SUM~\cite{Patrascu10} and APSP~\cite{VWilliamsX20} hypotheses, while for Partial 2-Dominating Set, we establish an OV-based lower bound. The results of Theorem~\ref{thm:main3}, particular the algorithms, crucially rely on technical contributions of the arity-reducing hypercuts and regularization method.
\marvin{added discussion. does that make sense?} \nick{I like it a lot! (Moved it before the discussion on clique versus hyperclique; feel free to change.)}

\paragraph{Beyond our Main Questions}
A major research question in fine-grained complexity is to understand the relationship between the clique hypothesis and the 3-uniform hyperclique hypothesis. Our results reveal that Max $k$-Cover  suffers from conditional lower bounds from either hypothesis; notably, these turn out to be \emph{incomparable} under current bounds on $\omega$. The fact that we can nevertheless obtain a matching upper bound via our arity-reducing hypercuts (even without knowing whether $\omega < 2.25$) is encouraging: If a conceptually similar approach could be made to work for Klee's measure problem --  which for general $d\ge 4$ also suffers from incomparable lower bounds from the $k$-clique hypothesis~\cite{Chan10} and the 3-uniform hyperclique hypothesis~\cite{Kunnemann22, GorbachevK23},  one would break a long-standing time barrier in computational geometry~\cite{OvermarsY91,Chan10, Chan13}.

\subsection{Detailed Results and Technical Overview}\label{sec:technical-overview}

To obtain our results, we apply the paradigm of fine-grained complexity and algorithm design: we find increasingly higher conditional lower bounds and use the resulting insights to obtain faster algorithms, culminating in a conditionally optimal algorithm. We first start with the smallest non-trivial case of $k=2$.

In the following overview, we use that $\Delta \le t \le k\Delta$, where $\Delta$ denotes the maximum degree in the given graph. That is, in asymptotic bounds, we can use $\Delta$ and $t$ interchangeably. Thus, from now on, we will usually only consider the maximum degree $\Delta$ rather than $t$. Note that the worst-case input size is $\Theta(n\Delta)$.

We start with a useful proposition which follows from a simple exchange argument. The proof is deferred to Section~\ref{sec:algorithms}, in which the Proposition is proven in the more general formulation of Lemma~\ref{lemma:bounding-number-of-sets}.
\begin{proposition}\label{prop:exchange-pds}
    Let $H$ denote the set of the $\min\{k\Delta^2,n\}$ highest-degree vertices in $G$ (breaking ties arbitrarily). There exists an optimal solution consisting only of vertices in $H$, i.e.,
    \[\max_{v_1,\dots, v_k \in V} | N(v_1) \cup \dots \cup N(v_k)| = \max_{v_1,\dots, v_k \in H} | N(v_1) \cup \dots \cup N(v_k)|.\]
\end{proposition}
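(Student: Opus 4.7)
The plan is a standard exchange (swap) argument: starting from an arbitrary optimal solution $O = \{v_1, \ldots, v_k\}$, I would iteratively replace each vertex $v_i \in O \setminus H$ with a vertex of $H \setminus O$ in a way that does not decrease the coverage $|N(v_1) \cup \cdots \cup N(v_k)|$. After at most $k$ such swaps, we arrive at an optimal solution entirely inside $H$. The trivial case $n \leq k\Delta^2$ makes $H = V$ and the statement automatic, so I focus on $n > k\Delta^2$, where $|H| = k\Delta^2$.

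For the swap step, pick some $v_i \in O \setminus H$ and isolate the coverage of the remaining vertices: let $U' \coloneqq \bigcup_{j \neq i} N(v_j)$, so $|U'| \leq (k-1)\Delta$. Define the \emph{obstruction set} $B \coloneqq \{w \in V : N(w) \cap U' \neq \emptyset\}$ of all vertices whose neighborhood meets $U'$. The key counting step is $|B| \leq |U'| \cdot \Delta \leq (k-1)\Delta^2$, since each vertex in $U'$ contributes at most $\Delta$ candidates to $B$ (its own neighbors). Because $|H| = k\Delta^2$, this leaves $|H \setminus B| \geq \Delta^2 \geq 1$, so I can pick some $h \in H \setminus B$. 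The payoff of this choice is that $N(h)$ is disjoint from $U'$, giving $|N(h) \cup U'| = |N(h)| + |U'| \geq |N(v_i)| + |U'| \geq |N(v_i) \cup U'|$, where the middle inequality uses $\deg(h) \geq \deg(v_i)$ (because $h \in H$ and $v_i \notin H$). Hence the swapped solution $(O \setminus \{v_i\}) \cup \{h\}$ is at least as good, and it has one more vertex in $H$.

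The main subtlety I expect is checking that the replacement $h$ is genuinely a new vertex, i.e., $h \notin O$. The inequality $h \neq v_i$ is immediate from $h \in H$ and $v_i \notin H$. For $j \neq i$, if $h = v_j$ then $N(h) = N(v_j) \subseteq U'$; when $\deg(v_j) \geq 1$ this forces $h \in B$, contradicting the choice $h \in H \setminus B$. The degenerate case $\deg(v_j) = 0$ needs to be dealt with separately: such $v_j$ contribute nothing to the coverage, so I would clean them up first by replacing each isolated vertex in $O$ with an arbitrary vertex of $H$, thereby assuming WLOG that every $v_j$ has a neighbor. After these minor bookkeeping checks, the iteration goes through and produces the desired optimal solution inside $H$; the argument is really the graph specialization of the more general Max $k$-Cover exchange that the excerpt attributes to Lemma~\ref{lemma:bounding-number-of-sets}, with $\Delta$ playing the role of both maximum set size and maximum frequency.
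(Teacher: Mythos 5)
Your proof is correct and follows essentially the same exchange argument as the paper's Lemma~\ref{lemma:bounding-number-of-sets} (the general bichromatic version): you count the "obstruction" vertices whose neighborhood meets $\bigcup_{j\neq i} N(v_j)$, observe there are at most $(k-1)\Delta^2 < |H|$ of them, and swap $v_i$ for a higher-degree vertex of $H$ outside this set, exactly as the paper does with its set $H'$. The only difference is cosmetic: you additionally verify $h\notin O$ and handle isolated vertices, which is harmless but unnecessary since the maximum in the statement ranges over not-necessarily-distinct $v_1,\dots,v_k$, so repetitions are allowed and the paper's proof silently relies on this.
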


Note that the above proposition \emph{does not} imply that we may simply drop all vertices $V\setminus H$ from $G$. In general, these nodes are crucially involved in the objective value of any $x_1,\dots, x_k$.

\subsubsection{OV-optimal algorithm for $k=2$}
We start with the following simple algorithm for $k=2$ that beats the $O(n^\omega)$ time baseline by Eisenbrand and Grandoni~\cite{EisenbrandG04} whenever $\Delta = O(n^{\omega/4-\epsilon})$.

\begin{theorem}\label{thm:k2-algo}
	We can solve Partial $2$-Dominating Set in time $O(n\Delta + \Delta^4)$.
\end{theorem}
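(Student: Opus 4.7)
My plan is to reduce the problem, via Proposition~\ref{prop:exchange-pds}, to computing common-neighborhood sizes for all pairs inside the candidate set $H$ of the $\min\{2\Delta^2, n\}$ highest-degree vertices, and to organize this computation so that the sparsity of~$G$ (each vertex has at most $\Delta$ neighbors) is exploited. Since $|H| \leq 2\Delta^2$ and
\[ |N[v_1] \cup N[v_2]| = |N[v_1]| + |N[v_2]| - |N[v_1] \cap N[v_2]|, \]
it suffices to compute the matrix $B[v_1, v_2] := |N[v_1] \cap N[v_2]|$ for all $v_1, v_2 \in H$ in time $O(\Delta^4)$ (after linear-time preprocessing) and to then scan over $H \times H$ to return the optimal objective.

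The algorithm I have in mind proceeds as follows. First, read the graph, identify $H$ (e.g., by bucket sort on degrees), and mark its vertices; for each $w \in V$, build the list $L_w := N[w] \cap H$ by walking through $N[w]$ and checking the $H$-mark in $O(1)$ per check. This preprocessing takes $O\bigl(\sum_{w} |N[w]|\bigr) = O(n\Delta)$ time in total. Then, iterate over each $w \in V$ and, for every ordered pair $(v_1, v_2) \in L_w \times L_w$, increment $B[v_1, v_2]$ by one. This correctly computes $B$ because $w$ contributes to $B[v_1, v_2]$ precisely when $v_1, v_2 \in H$ and $w \in N[v_1] \cap N[v_2]$, i.e., when $v_1, v_2 \in L_w$. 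A small $O(1)$-per-pair adjustment handles the open/closed neighborhood bookkeeping (or one may equivalently add self-loops and invoke Proposition~\ref{prop:exchange-pds} directly).

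The key step of the analysis is bounding the pairing cost $\sum_{w \in V} |L_w|^2$. Two ingredients combine here: pointwise, $|L_w| \leq |N[w]| \leq \Delta + 1$, while globally, by swapping the summation order,
\[ \sum_{w \in V} |L_w| = \sum_{v \in H} |N[v]| \leq |H|(\Delta+1) = O(\Delta^3). \]
Multiplying the pointwise bound into the global one,
\[ \sum_{w \in V} |L_w|^2 \leq \Bigl(\max_{w} |L_w|\Bigr) \cdot \sum_{w} |L_w| = O(\Delta) \cdot O(\Delta^3) = O(\Delta^4), \]
which, together with $O(|H|^2) = O(\Delta^4)$ for the final scan and $O(n\Delta)$ for preprocessing, yields the claimed running time.

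The main obstacle I anticipate is precisely this tightness bound: the naive approach of intersecting $N[v_1]$ and $N[v_2]$ for every pair costs $\Theta(\Delta)$ per pair and totals $\Theta(\Delta^5)$, which is too slow. The decisive idea is to organize the work by the common-neighbor witness $w$ rather than by the pair $(v_1, v_2)$, so that the combination of the \emph{local} sparsity $|L_w| \leq \Delta$ and the \emph{global} budget $\sum_w |L_w| = O(\Delta^3)$ (rather than just one of them) saves the extra factor of $\Delta$.
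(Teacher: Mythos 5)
Your proof is correct and takes essentially the same approach as the paper's: after restricting to the set $H$ of the $O(\Delta^2)$ highest-degree vertices via Proposition~\ref{prop:exchange-pds}, both compute the pairwise common-neighborhood counts by enumerating (solution-vertex, common-neighbor-witness) triples rather than intersecting neighborhood lists per pair, giving $O(\Delta^4)$. The loop order differs cosmetically (the paper iterates $x \in H$, then $y \in N_G(x)$, then $x' \in H\cap N_G(y)$, while you iterate the witness $w$ in the outer loop) and the paper bounds the triple count by a direct product of per-loop bounds $|H|\cdot\Delta\cdot\Delta$ rather than your pointwise-max-times-global-sum argument, but these are two presentations of the same counting.
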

\begin{proof}
	By reading in the input graph $G$, we can compute $\deg_G(v)$ for all $v\in V$ and thus the set $H$ of the $\min\{2\Delta^2,n\}$ highest-degree vertices in time $O(n\Delta)$. We initialize a table $T[x,x'] = \deg_G(x)+\deg_G(x')$ for each $x,x'\in H$. For each $x\in H$, we traverse its neighbors $y\in N_G(x)$, and decrement $T[x,x']$ by 1 for every $x'\in H\cap N_G(y)$. Note that this terminates with the values $T[x,x']=|N_G(x) \cup N_G(x')|$ after $O(\Delta^4)$ steps, as for every $x\in H$, there are at most $\Delta$ choices for $y$, and for every $y$, there are at most $\Delta$ choices for $x'$. It remains to return $\max_{x,x'\in X} T[x,x']$, which concludes the $O(n\Delta + \Delta^4)$-time algorithm.
\end{proof}

It turns out that this simple algorithm is conditionally optimal. Specifically, we prove a matching lower bound based on the Orthogonal Vectors  (OV) Hypothesis\footnote{In the Orthogonal Vectors (OV) problem, we are given sets $A_1,A_2\subseteq \{0,1\}^d$ and the task is to determine whether there is an orthogonal pair $a_1\in A_1,a_2\in A_2$. A version of the OV Hypothesis states that this problem requires time $n^{2-o(1)}$ even when $d=n^{o(1)}$. See Section~\ref{sec:prelims} for details.}: The aim is to prove a lower bound of $\Delta^{4-o(1)}$ for instances of Partial Domination with $n$ nodes and maximum degree $\Delta = O(n^{\gamma})$ for $0 < \gamma \le 1/2$. Here, we sketch a simplification of our more general reduction given in Section~\ref{section:kh-OV-LB} when applied to the case $k=2$.  

To this end, we consider an OV instance $A_1,A_2\subseteq \{0,1\}^d$ with $|A_1|=|A_2|=s^2$ and $d= s^{o(1)}$. Without loss of generality, we may assume that all vectors $x\in A_1\cup A_2$ have the same number of ones, i.e., $\norm{a}_1 = \norm{a'}_1$ for all $a\in A_1, a'\in A_2$.\footnote{For the proof of a more general statement, see Lemma~\ref{lemma:khov-regularity}.} We divide each $A_i$ into the groups $A_i^{(1)},\dots, A_i^{(s)}$ of size $s$ each. 

The core of the construction is a graph $G$ with vertex set $A_1 \cup A_2 \cup Z$ where $Z$ is a set of $d s^2$ auxiliary vertices denoted as $(y,g_1, g_2)$ with $y\in [d], g_1,g_2\in [s]$. Any vertex $a\in A_j^{(g)}$ will be adjacent to $(y,g_1,g_2)\in Z$ if and only if $g_j = g$ and $a[y] = 1$.  It is straightforward to see that for any $a_1\in A_1, a_2 \in A_2$, we have 
\begin{equation} |N(a_1) \cup N(a_2)| = \norm{a_1}_1s + \norm{a_2}_1s - a_1 \cdot a_2, \label{eq:k2-main}
\end{equation}
where $a_1 \cdot a_2 = \sum_{y=1}^{d} a_1[y]\cdot a_2[y]$  denotes the inner product of $a_1$ and $a_2$. Exploiting that $C\coloneqq \norm{a_1}_1 = \norm{a_2}_1$ is independent of the choice of $a_1,a_2$, we conclude that there exists $a_1 \in A_1, a_2\in A_2$ dominating at least $2Cs$ vertices if and only if $A_1,A_2$ contain an orthogonal pair of vectors. This construction immediately yields a  Max $k$-Cover instance $\mathcal{F} = \bigcup_{x\in A_1,A_2} \{ N(x) \}$ over universe~$Z$ that is equivalent to the original OV instance. Note that any node in $G$ has maximum degree $O(ds)$: any node $a_1\in A_1$ is adjacent only to nodes $(y,g,g')\in Z$ where $g$ is such that $a_1 \in A_1^{(g)}$ and $g'\in [s]$ is arbitrary. Symmetrically, any node $a_2\in A_2$ has degree at most $O(ds)$. Finally each node $(y,g_1,g_2)$ is adjacent only to nodes $a_1\in A^{(g_1)}_1$ and $a_2\in A^{(g_2)}_2$.

This can be turned into a desired Partial $k$-Dominating Set instance by adding a gadget of at most $O(ds^2)$ additional nodes (and adjacent edges) that enforce that any optimal solution $u,v\in V$ must be of the form $u\in A_1,v\in A_2$ (or vice versa), while keeping the maximum degree $O(ds)$; for details see Section~\ref{section:kh-OV-LB}.\marvin{add idea in footnote?} Let $n'$ be the total number of nodes in this instance, then for any value $n = \Omega(ds^2)$, we may add $n-n'$ isolated nodes to produce an equivalent Partial $k$-Dominating set instance $G'$ with $n$ nodes and maximum degree $O(ds)=s^{1+o(1)}$ . Any algorithm solving Partial 2-Dominating Set in time $O(\Delta^{4-\epsilon})$ would thus solve OV in time $O(s^{4-\epsilon +o(1)})$, refuting the OV Hypothesis.

Formally, we obtain the following lower bound, proven in the more general Theorem~\ref{theorem-partial-dom-LB}.
\begin{theorem}\label{thm:OVLB}
	Let $\epsilon > 0$ and $0 < \gamma \le 1/2$. If Partial $2$-Dominating Set with $\Delta = O(n^{\gamma})$ can be solved in time $O(\Delta^{4-\epsilon})$, then the OV Hypothesis is false.
\end{theorem}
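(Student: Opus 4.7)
The plan is to formalize the sketch given immediately before the theorem. I would begin by fixing an OV instance $A_1, A_2 \subseteq \{0,1\}^d$ with $|A_1|=|A_2|=s^2$ and $d = s^{o(1)}$, for which the OV Hypothesis rules out $O(s^{4-\epsilon})$-time solutions for any $\epsilon > 0$. Invoking the regularization lemma (Lemma~\ref{lemma:khov-regularity}) I may assume all vectors share a common Hamming weight $C$. Partitioning each $A_i$ into $s$ groups $A_i^{(1)},\dots,A_i^{(s)}$ of size $s$, introducing auxiliary vertices $Z = [d] \times [s] \times [s]$, and connecting $a \in A_j^{(g)}$ to $(y,g_1,g_2)$ iff $g_j = g$ and $a[y]=1$ yields the core graph $G$.

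The identity $|N(a_1) \cup N(a_2)| = 2Cs - a_1 \cdot a_2$ for any cross pair $a_1 \in A_1, a_2 \in A_2$ then follows from inclusion--exclusion: each $a_j$ has exactly $Cs$ neighbors in $Z$ (its own group index is fixed, the other ranges freely over $[s]$, and $y$ ranges over the support of $a_j$), while their common neighbors are precisely the triples $(y,g_1,g_2)$ with $a_1[y]=a_2[y]=1$ respecting both group constraints, of size $a_1 \cdot a_2$. Hence an orthogonal pair exists iff some cross pair dominates at least $2Cs$ vertices. The bound $\Delta = O(ds) = s^{1+o(1)}$ is immediate: each $a_j$ has degree $Cs \le ds$, and each $(y,g_1,g_2) \in Z$ has degree $\le 2s$ since its neighbors lie in $A_1^{(g_1)} \cup A_2^{(g_2)}$.

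The main obstacle is the forcing gadget: I must augment $G$ by $O(ds^2)$ extra vertices, each of degree $O(ds)$, so that the global Partial $2$-Dominating Set optimum is achieved only by cross pairs $(a_1,a_2) \in A_1 \times A_2$. The delicate point is that two $A_1$-vertices drawn from different groups have disjoint neighborhoods in $Z$ and could therefore match or even beat a non-orthogonal cross pair in $Z$-coverage alone; the gadget must thus simultaneously (i)~attach a large private bonus to every $A_j$-vertex, ruling out any $Z$-pick in the optimum, and (ii)~impose cross-side synergy (e.g., via landmark-like vertices spread across the two sides) that strictly rewards cross pairs over same-side pairs, while keeping all added degrees within $O(ds)$. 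The construction presented in Section~\ref{section:kh-OV-LB} achieves both at once, and for this theorem I would simply invoke its $k=2$ specialization.

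To obtain the target density $\Delta = O(n^\gamma)$ for any $\gamma \in (0, 1/2]$, I pad with isolated vertices to total size $n = \Theta(s^{1/\gamma})$, which requires $1/\gamma \ge 2+o(1)$ and hence $\gamma \le 1/2$. Padding preserves both the optimum and the maximum degree, so an $O(\Delta^{4-\epsilon})$-time algorithm on the padded instance would yield an $O(s^{(1+o(1))(4-\epsilon)}) = O(s^{4-\epsilon+o(1)})$-time algorithm for OV, contradicting the OV Hypothesis.
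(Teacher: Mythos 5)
Your proposal is correct and elaborates essentially the same reduction the paper uses, which obtains this theorem as the $k=h=2$ specialization of Theorem~\ref{theorem-partial-dom-LB} via Lemmas~\ref{lemma:partial-domination-lb-reduction} and~\ref{lemma:par-dom-lb-khov}; your inclusion--exclusion identity, degree bounds, and padding argument all match. You accurately identify what the forcing gadget must accomplish (and correctly spot that same-side pairs from different groups could otherwise beat non-orthogonal cross pairs in $Z$-coverage), though the paper realizes it not via ``landmark'' vertices straddling both sides but via per-side penalty blocks $P^i_{j,\ell}$ whose coverage is only unlocked by selecting a vertex from the corresponding $X_i$ -- since you ultimately invoke the Section~\ref{section:kh-OV-LB} construction for this, the argument closes correctly.
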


This shows optimality of Theorem~\ref{thm:k2-algo} for all values $\Delta = O(n^\gamma)$ whenever $\gamma \le 1/2$.\footnote{Note that the $O(n\Delta)$ is the input size, so the additional $O(n\Delta)$ term is always necessary.} Thus, if $\omega= 2$ and the OV hypothesis holds, then the time complexity of Partial 2-Domination is $\min\{n\Delta+\Delta^4,n^2\}^{1\pm o(1)}$.

Interestingly, this conditionally rules out an extension of the $m^{\omega/2+o(1)}$ algorithm for 2-Dominating Set given in~\cite{FischerKR24} to Partial $2$-Dominating Set. To see this, note that for $\Delta = O(\sqrt{n})$, an $m^{\omega/2+o(1)}$ algorithm would solve the problem in strongly subquadratic time $O(n^{3\omega/4}) = O(n^{1.78})$, which would refute the OV Hypothesis. This separates the fine-grained complexities of $k$-Dominating Set and Partial $k$-Dominating Set in sparse graphs. 

\subsubsection{Conditional lower bounds for $k\ge 3$}

Theorem~\ref{thm:k2-algo} generalizes in a straightforward way to achieve a $O(n\Delta + f(k)\Delta^{2k+O(1)})$-time algorithm. The generalization of the OV-based lower bound is less obvious. In Section~\ref{section:kh-OV-LB}, we will obtain a $\Delta^{k+ 1 + \frac{1}{k-1}-o(1)}$-time lower bound based on $k$-OV using a more general reduction sketched below. 
 The target of $\Delta^{k+O(1)}$-time is not ruled out by this reduction. Is it possible to reduce the baseline exponent of $2k+O(1)$ to obtain an exponent of $k+1+\frac{1}{k-1}$?


\paragraph{(Hyper)Clique barrier}
Interestingly, it turns out that this is not possible without breaking the $k$-clique hypothesis:

\begin{theorem}\label{thm:clique-lb}
	Let $\epsilon > 0$, $0 < \gamma \le \frac{1}{2}$ and $f$ be a computable function. If we can solve Partial $k$-Dominating Set with $\Delta = \Theta(n^\gamma)$ in time $f(k)\Delta^{(\frac{2\omega}{3}-\epsilon)k}$ for all sufficiently large $k$, then the $k$-clique hypothesis is false. 
\end{theorem}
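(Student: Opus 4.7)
The plan is to adapt the OV\=/based construction of Theorem~\ref{thm:OVLB} and reduce $2k$\=/clique on $N$\=/vertex graphs to Partial $k$\=/Dominating Set with maximum degree $\Delta=\Theta(n^\gamma)$. Starting from a balanced $2k$\=/partite clique instance on $H=(V,E)$ with parts $V_1,\dots,V_{2k}$ of size $N/(2k)$, first apply a regularity reduction (in the spirit of Lemma~\ref{lemma:khov-regularity}) so that every vertex in $V_a$ has exactly $D=\Theta(N/k)$ neighbors in every other part $V_b$. This bi\=/degree uniformity is the clique analog of the assumption $\|a_1\|_1=\|a_2\|_1$ in the OV reduction: it forces every individual neighborhood size in the constructed Partial Dominating Set instance to depend only on $D,k,N$, not on the particular candidate chosen.

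Next, pair consecutive parts $(V_{2i-1},V_{2i})$ for $i\in[k]$ and create a candidate set $A_i=\{x_e : e \in E\cap (V_{2i-1}\times V_{2i})\}$. For each $i<j$ and each cross\=/pair \emph{non\=/edge} $\{u,v\}\notin E$ with $u\in V_{2i-1}\cup V_{2i}$ and $v\in V_{2j-1}\cup V_{2j}$, place a universe vertex $z_{u,v}$, connecting it to $x_e$ whenever $u$ or $v$ is an endpoint of $e$. Since every $z_{u,v}$ can be covered only from $A_i$ or $A_j$, all $\ge 3$\=/wise intersections vanish, and inclusion\=/exclusion yields, in analogy with Equation~\eqref{eq:k2-main},
\begin{equation*}
    \Bigl|\bigcup_{l=1}^{k} N(x_{e_l})\Bigr| \;=\; k\,M \;-\; \sum_{i<j}\#\{\text{non-edges of }H\text{ between the endpoints of }e_i\text{ and }e_j\},
\end{equation*}
where $M$ is the common value of $|N(x_e)|$ for all candidates. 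Thus the coverage attains its maximum $kM$ precisely when every cross\=/pair endpoint pair is an edge of $H$, i.e., exactly when the chosen endpoints form a $2k$\=/clique together with the implicit within\=/pair edges of each $A_i$.

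As in Theorem~\ref{thm:OVLB}, attach a small dummy gadget (of $O(N^2)$ vertices and degree $O(N)$) that forces any optimal $k$\=/subset to pick exactly one vertex from each $A_i$, and pad with isolated vertices until $n=\Theta(\Delta^{1/\gamma})=\Theta(N^{1/\gamma})$, achieving $\Delta=\Theta(n^\gamma)$ for any target $\gamma\le \tfrac{1}{2}$. A hypothetical $f(k)\Delta^{(2\omega/3-\epsilon)k}$\=/time algorithm then solves $2k$\=/clique in time $f(k)\cdot N^{(2\omega/3-\epsilon)k + o(1)}= f(k)\cdot N^{(\omega/3-\epsilon/2)\cdot 2k + o(1)}$, contradicting the $k$\=/clique hypothesis applied to $2k$.

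The main obstacle is engineering the inclusion\=/exclusion so that the pairwise intersections are the only varying quantity: this requires both the bi\=/degree regularization (to equalize all $|N(x_e)|$ to a common constant $M$) and the non\=/edge encoding of $Z$ (to annihilate all higher\=/order intersections). Generalizing the single inner\=/product identity of Equation~\eqref{eq:k2-main} to a clean sum of $\binom{k}{2}$ pairwise terms, without introducing uncontrolled cross\=/contributions across different pair\=/of\=/pairs, is the technical heart of the reduction.
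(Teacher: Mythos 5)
Your construction is, at its core, the $h=2$ specialization of the paper's reduction from $(k,h)$-minIP/maxIP: candidates are short tuples of clique vertices, the universe is indexed by non-edges, and inclusion–exclusion collapses to pairwise intersections. The difference is that the paper first passes to $qk$-clique for a \emph{large} constant $q=2/\delta$, reduces that to $(k,2)$-OV/minIP (Lemma~\ref{lemma:hyperclique-hardness-of-khov}), regularizes the \emph{vectors} (Lemma~\ref{lemma:khov-regularity}), and only then builds the Partial Dominating Set instance via a group-based gadget (Lemma~\ref{lemma:partial-domination-lb-reduction} via Lemma~\ref{lemma:max-k-cover-lb-reduction}), whereas you work directly with $q=2$ and try to regularize the $2k$-partite graph $H$ itself. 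This shortcut is where the argument breaks.

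The regularization step is the genuine gap. You invoke Lemma~\ref{lemma:khov-regularity} ``in spirit,'' but that lemma operates on Boolean vectors by concatenating complementary blocks, and has no straightforward graph analog: adding edges to $H$ to pad degrees risks creating spurious $2k$-cliques, while adding auxiliary vertices of low degree does not make the \emph{non-degree} of every $u$ in every other part uniform without an uncontrolled size blowup (keeping track of which new non-edges are incident to which original vertex reintroduces exactly the dependence you are trying to kill). Nor can you encode your candidates as $(k,2)$-minIP vectors of dimension $d=\Theta(N^2)$ and apply Lemma~\ref{lemma:khov-regularity} literally: the complement coordinates $\bar a_e[y]$ have a 1 in a $1-o(1)$ fraction of the $\Theta(N^2)$ candidates, so the corresponding universe nodes would have degree $\Theta(N^2)$ rather than $\Theta(N)$, destroying the parameter balance $\Delta=\Theta(\sqrt{n'})$ your reduction needs for $\gamma\le 1/2$. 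The paper avoids both problems by taking $q$ large so that $d=N^{\delta}$ is subpolynomial in the number of candidates, and then using the group structure ($s$ buckets per part) in Lemma~\ref{lemma:max-k-cover-lb-reduction} to cap universe-side degrees. If you want to keep the $q=2$ route, you would instead need a bespoke padding of the Partial Dominating Set instance itself --- e.g., for each original clique vertex $u$ and each unit of deficit $C-\sum_{j\ne i(u)}\bar n_u^{(j)}$, a fresh universe vertex adjacent to \emph{all} candidates $x_e$ with $u$ as an endpoint --- which stays within $O(N^2)$ universe nodes of degree $O(N/k)$ and never appears in a cross-part intersection; but this is not the operation you describe, and spelling it out (together with the interaction with the ``one-per-$A_i$'' gadget) is precisely the missing technical content.
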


This result gives a negative answer to Question 1 assuming the $k$-clique hypothesis.

Furthermore, we obtain an incomparable conditional lower bound based on a different hypothesis, the 3-uniform hyperclique hypothesis. 




\begin{theorem}\label{thm:hyperclique-lb}
	Let $k\ge 3, \epsilon > 0, 0 < \gamma \le \frac{2}{3}$ and $f$ be a computable function. If we can solve Partial $k$-Dominating Set with $\Delta = \Theta(n^\gamma)$ in time $f(k)\Delta^{(\frac{3}{2}-\epsilon)k}$, then the 3-uniform $k$-clique hypothesis is false. 
\end{theorem}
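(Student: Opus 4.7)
The plan is to reduce 3-uniform $k$-hyperclique on $N$ vertices per part to Partial $k$-Dominating Set on an instance with $n = N^{1+o(1)}$ vertices and maximum degree $\Delta = N^{2/3+o(1)}$. A hypothetical $f(k)\Delta^{(\frac{3}{2}-\epsilon)k}$-algorithm would then solve 3-uniform $k$-hyperclique in time $f(k) N^{(1 - \frac{2\epsilon}{3})k + o(k)}$, refuting the 3-uniform $k$-clique hypothesis. Instances with $\gamma < 2/3$ are obtained by padding the construction with isolated vertices, which leaves $\Delta$ unchanged.

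Structurally, the reduction would mirror the OV construction sketched for Theorem~\ref{thm:OVLB}, with the 3-uniform hyperedge relation playing the role of the orthogonality predicate. First I would regularize the hypergraph (via an analog of the forthcoming Lemma~\ref{lemma:khov-regularity}) to be \emph{pair-regular}: for every triple of parts $(i,j,\ell)$, every pair $(a,b) \in V_i \times V_j$ has exactly $p_{ij\ell}$ vertices $c \in V_\ell$ with $\{a,b,c\} \in \hyperH$. Then I would build $G$ on selectable vertices $V_1 \cup \dots \cup V_k$ plus auxiliaries $Z = \bigsqcup_{i<j<\ell} Z_{ij\ell}$, where $Z_{ij\ell}$ contains one vertex $z_e$ per hyperedge $e \in \hyperH \cap (V_i \times V_j \times V_\ell)$ adjacent to its three endpoints, together with padding gadgets enforcing one selected vertex per part (as in the OV sketch). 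An inclusion-exclusion analysis on each $Z_{ij\ell}$, exploiting pair-regularity to make the pairwise intersection terms constant in $(v_i,v_j,v_\ell)$, would then yield
\[
    |N_G(v_1) \cup \dots \cup N_G(v_k)| = C_G + \sum_{i<j<\ell} \chi_{\hyperH}(v_i, v_j, v_\ell)
\]
for a constant $C_G$ independent of the specific selection, so the optimum coverage $C_G + \binom{k}{3}$ is attained precisely when $(v_1, \dots, v_k)$ is a $k$-hyperclique.

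The hard part will be the parameter calibration. The naive construction above yields $n = \Theta(N^3)$ and $\Delta = \Theta(N^2)$, which would give only the much weaker $\Delta^{k + O(1)}$ lower bound. To drive $\Delta$ down to $N^{2/3+o(1)}$, I would aggregate auxiliaries via an $s = N^{1/3}$-group partitioning of each $V_i$: within each of the $\binom{k}{3} s^3$ sub-boxes $V_i^{(g_i)} \times V_j^{(g_j)} \times V_\ell^{(g_\ell)}$, the hyperedges would be encoded compactly by a shared substructure rather than one node per hyperedge, still preserving the inclusion-exclusion identity through the regularity assumptions. Making this simultaneously achieve $n = N^{1+o(1)}$ and preserve the hyperclique hypothesis' hardness is the most delicate point; I expect this to follow from the unified regularization framework hinted at in the introduction, starting from a carefully chosen intermediate problem common to the clique-, hyperclique-, and OV-based lower bounds in Section~\ref{section:kh-OV-LB}.
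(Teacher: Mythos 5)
Your high-level skeleton matches the paper's: regularize to kill lower-order inclusion-exclusion terms, partition each part into $s$ groups, add enforcement gadgets to pin one selected vertex per part, and derive an identity in which the third-order term encodes the hyperedge indicators. But the calibration step you flag as ``the hard part''---driving $\Delta$ to $N^{2/3+o(1)}$ while keeping $n = N^{1+o(1)}$ by ``aggregating auxiliaries''---is not merely delicate; it is impossible for a reduction starting from an arbitrary $3$-uniform $k$-hyperclique instance on $N$ vertices per part. If your identity
\[
|N_G(v_1) \cup \dots \cup N_G(v_k)| = C_G + \sum_{i<j<\ell} \chi_{\hyperH}(v_i, v_j, v_\ell)
\]
holds for every $k$-tuple, then $G$ determines $\chi_{\hyperH}$: comparing two $k$-tuples differing in one coordinate recovers every difference $\chi_{\hyperH}(v_i,v_j,v_\ell)-\chi_{\hyperH}(v_i,v_j,v_\ell')$, hence $\chi_{\hyperH}$ up to a global shift, and exactly since $\chi_{\hyperH}\in\{0,1\}$. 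But $\chi_{\hyperH}$ carries $\Theta(N^3)$ independent bits, whereas a graph on $n = N^{1+o(1)}$ vertices of maximum degree $\Delta = N^{2/3+o(1)}$ has only $\bigO(n\Delta) = N^{5/3+o(1)}$ edges. No aggregation scheme ``encoding hyperedges compactly by a shared substructure'' can escape this counting bound: the target graph simply lacks the capacity to represent an arbitrary $3$-uniform hypergraph on $N$ vertices per part.

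The paper avoids this by not reducing from hyperclique to Partial $k$-Dominating Set directly. It uses $(k,3)$-maxIP/minIP (via $(k,3)$-OV) as the intermediate problem, and the crucial ingredient your plan misses is the \emph{blow-up} in Lemma~\ref{lemma:hyperclique-hardness-of-khov}: a $3$-uniform $(qk)$-hyperclique instance on $n$ vertices per part is encoded as a $(k,3)$-OV instance with $N := n^q$ vectors per set of dimension $d = \bigO(n^3) = N^{\bigO(1/q)}$, where each vector corresponds to a $q$-tuple of hypergraph vertices. Taking $q$ large makes $d = N^{o(1)}$: the number of objects is inflated, but the per-object information becomes subpolynomial. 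From there, the $(k,3)$-minIP/maxIP $\to$ Partial $k$-Dominating Set reduction (Lemmas~\ref{lemma:max-k-cover-lb-reduction} and~\ref{lemma:partial-domination-lb-reduction}) yields a graph on $\bigO(Nd)$ vertices with maximum degree $\bigO(N^{2/3}d)$, i.e.\ $N^{5/3+\bigO(\delta)}$ edges, which comfortably houses the $\bigO(Nd) = N^{1+\delta}$-bit OV input. Your final sentence gestures at the intermediate-problem framework, but the $q$-fold blow-up that keeps the vector dimension subpolynomial is exactly what is missing from your calibration step; without it, the direct encoding you sketch cannot close.
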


Note that this lower bound gives another negative answer to Question 1 assuming a plausible hardness hypothesis.

\paragraph{Reduction via $(k,h)$-maxIP/minIP}

We prove all of the above conditional lower bounds conveniently using the problems $(k,h)$-maxIP/minIP as an intermediate step. These problems are the natural optimization versions of the $(k,h)$-OV problem, which was originally introduced in~\cite{BringmannCFK22} to obtain constant-factor inapproximability results\footnote{There, it was used only for the case $h=3$.}. In our setting, $(k,h)$-maxIP/minIP can be used to give incomparable conditional lower bounds under different hypotheses ($k$-clique, 3-uniform hyperclique, and $k$-OV) via a single main reduction. Intuitively, they are a restriction of the $k$-Maximum Inner Product ($k$-maxIP) and $k$-Minimum Inner Product ($k$-minIP) problem (see, e.g.,~\cite{SLM19,BringmannCFK22}) such that in each dimension, only the vectors in at most $h$ sets are \emph{active}, i.e., may be different from $1$.

Formally, in the $(k,h)$-maxIP/minIP problem, we are given a $k$-OV instance, i.e., $k$ sets $A_1,\dots, A_k\subseteq\{0,1\}^d $ of $n$ vectors each, with the following additional promise: In each dimension $y\in [d]$, there are $h$ associated indices $i_1,\dots,i_h$ (called \emph{active} indices); we may assume that $a_i[y] = 1$ for all $i\in [k]\setminus\{i_1,\dots, i_h\}$. For $k$-maxIP ($k$-minIP), the task is to maximize (minimize), over all $a_1\in A_1,\dots, a_k\in A_k$, the number of dimensions $y\in [d]$ such that $a_{i_1}[y] \cdots a_{i_h}[y] = 1$, where $i_1,\dots, i_h$ are the active indices for $y$.

Using standard fine-grained reductions~\cite{AbboudBDN18,KarppaKK18}, one can establish that: (1)  $(k,2)$-maxIP/minIP require time $n^{\omega/3 k-o(1)}$-time assuming the $k$-clique hypothesis, (2) $(k,3)$-maxIP/minIP require time $n^{k-o(1)}$ under the $3$-uniform hyperclique hypothesis, and (3) $(k,k)$-maxIP/minIP require time $n^{k-o(1)}$ assuming the $k$-OV hypothesis, see Lemma~\ref{lemma:hyperclique-hardness-of-khov}. 

We use a reduction approach that on a high-level looks similar to our OV-lower bound for the case $k=2$. Here, we give a sketch of the core construction (with several details deferred to Section~\ref{section:kh-OV-LB}). Specifically, let $A_1,\dots, A_k\subseteq \{0,1\}^d$ be a given $(k,h)$-maxIP/minIP instance with $|A_i| = N$; for ease of presentation, assume that $d=n^{o(1)}$. Similar to before, we group each $A_i$ into groups $A_i^{(1)}, \dots, A_i^{(s)}$; here we choose $s= N^{1/h}$. We construct a graph $G'$ that includes the vertex sets $A_1,\dots, A_k$. For any choice of active indices $1\le i_1 < i_2 < \cdots < i_h \le k$, we let $D_{i_1,\dots i_h}$ denote the dimensions $y$ such that $i_1,\dots, i_h$ are the active indices. For each such choice, we introduce a set of additional vertices labelled $(y,g_1,\dots, g_h)$ with $y\in D_{i_1,\dots, i_h}$ and $g_1,\dots, g_h\in [s]$. Crucially, we connect any edge $a_i \in A_i^{(g)}$ and $(y,g_1,\dots, g_h)\in Y_{i_1,\dots, i_h}$ by an edge if and only if there is some $i_a$ with $i_a = i$, $g_a = g$ and $a_i[y] = 1$.\marvin{notation deviates from proof. Harmonize} 

Consider, for any choice $a_1\in A_1,\dots, a_k\in A_k$, its objective value. Specifically, by the inclusion-exclusion principle, we obtain
\begin{equation}
 |N(a_1) \cup \cdots \cup N(a_k)| = \sum_{r=1}^k (-1)^{r+1} \sum_{1\le i_1< \cdots < i_r \le k} |N(a_{i_1}) \cap \cdots \cap N(a_{i_r})|. \label{eq:inc-exc-main}
\end{equation}
Since each node $(y,g_1,\dots, g_h)$ has only neighbors in $h$ sets (specifically, $A_{i_1},\dots, A_{i_h}$, where $i_1,\dots, i_h$ are the active indices of $y$), any term in~\eqref{eq:inc-exc-main} with $r> h$ vanishes. Furthermore, the term for $r=h$ counts precisely the number of coordinates $y$ in which the active indices of $a_1,\dots, a_k$ are all equal to 1, with a multiplicative factor of $(-1)$ if $h$ is even. Thus, if we can make the contribution of all terms with $r<h$ equal to a constant \emph{independent of $a_1,\dots, a_k$}, the optimal value is attained by $a_1\in A_1,\dots, a_k\in A_k$ that maximize (if $h$ is odd) or minimize (if $h$ is even) the number of coordinates $y$ in which all active indices are equal to 1. If we can do this, we can read off the $(k,h)$-maxIP or $(k,h)$-minIP value, respectively, from~\eqref{eq:inc-exc-main}.

Fortunately, this is possible: We show how to add new dimensions to the vectors $A_1,\dots, A_k$ such that (1) all $a_{i_1} \in A_{i_1},\dots, a_{i_r} \in A_{i_r}$ with $1\le i_1 < \cdots < i_r \le k$, $r < h$ have the same number of $y$ for which $i_1,\dots, i_r$ belong to the active indices and $a_{i_1}[y] = \cdots = a_{i_r}[y] = 1$, and (2) for all $a_1\in A_1,\dots,a_k\in A_k$ the number of $y$ for which the active indices $i_1,\dots, i_h$ satisfy $a_{i_1}[y] = \cdots = a_{i_h}[y] = 1$ remains unchanged. 

Note that each node $a_i\in A_i$ is connected to at most $hds^{h-1} = O(dN^{\frac{h-1}{h}}) \le N^{\frac{h-1}{h}+o(1)}$ dimension nodes $(y,g_1,\dots, g_h)$ and each dimension is connected to at most $hN/s = O(N^{\frac{h-1}{h}})$ nodes in $A_i$, $i\in [k]$. Thus, to obtain an instance with maximum degree $\Delta$, we may choose $N = \Delta^{\frac{h}{h-1}-o(1)}$, since then $dN^{\frac{h-1}{h}} \le \Delta$. Setting $h=k$, we obtain the claimed lower bound of $(\Delta^{\frac{k}{k-1}})^{k-o(1)} = \Delta^{k + 1 +\frac{1}{k-1} - o(1)}$ under the $k$-OV Hypothesis. Setting $h=3$, we obtain the claimed lower bound of $\Delta^{\frac{3}{2}(k-o(1))}=\Delta^{\frac{3}{2}k-o(1)}$ under the $3$-uniform hyperclique hypothesis. Finally, setting $h=2$, we obtain the lower bound of $\Delta^{2(\frac{\omega}{3}k-o(1))} = \Delta^{\frac{2\omega}{3}k-o(1)}$ under the $k$-clique hypothesis.   






\subsubsection{A Matching Algorithm via Arity-Reducing Hypercuts}

Our perhaps most interesting technical contribution is an algorithm given in Section~\ref{sec:algorithms} that matches the conditional lower bounds given by Theorems~\ref{thm:clique-lb} and~\ref{thm:hyperclique-lb}. In fact, we exploit combinatorial insights gained by inspecting why we could not strengthen the above conditional lower bounds.

To formalize our main approach, let $G$ be an instance of Partial $k$-Dominating Set. Recall that $H$ denotes the set of $\min\{k\Delta^2, n\}$ highest-degree vertices, which must contain an optimal solution by Proposition~\ref{prop:exchange-pds}. We define a corresponding hypergraph $\hyperH$ on the vertex set $V(\hyperH) = H$ by $\{v_1, ..., v_h\} \in E(\hyperH)$ if and only if $v_1, \dots, v_h$ share at least one common neighbor in $G$, i.e., $N_G(v_1) \cap \cdots \cap N_G(v_h) \ne \emptyset$. 

A first basic ingredient are well-known subcubic algorithms for Maximum Weight-Triangle with small edge weights~\cite{Yuval76,Zwick02}, see also~\cite{VassilevskaW06}: In the formulation that we need, we are given a graph $G'=(V', E')$ together with vertex weights $w(v)\in \{-M, \dots, M\}$ for $v\in V$ and edge weights $w(u,v)\in \{-M, \dots, M\}$ for $u,v\in V$ (such that $w(u,v) = w(v,u)$), the task is to determine the maximum weight of a triangle in $G'$, i.e., 
\[ \max_{x,y,z\in V'} w(x)+w(y)+w(z) + w(x,y)+w(x,z)+w(y,z). \]
This problem can be solved in time $O(Mn^\omega)$, and extends to computing the analogously defined Maximum-weight $k$-clique problem in time $O(Mn^{\omega \lceil \frac{k}{3} \rceil})$.

The above tool immediately gives a fast algorithm for the restricted case that $\hyperH$ contains no edges of arity at least 3, i.e., when $\hyperH$ is a graph: Specifically, we can construct a complete graph $G'$ with $V(G') = X$ and node weights $w(v) = \deg_G(v)$ and edge weights $w(v,v') = -|N_G(v)\cap N_G(v')|$. Note that if $\hyperH$ contains no edges of arity at least 3, then the weight of any clique $v_1,\dots, v_k$ in $G$ is precisely $|N(v_1)\cup \cdots\cup N(v_k)|$. Using the maximum-weight $k$-clique algorithms, we can thus solve Partial $k$-Domination in this restricted case in time $O(|X|^{\omega \lceil \frac{k}{3}\rceil +1})$.

\paragraph{Handling higher arities.}
The above argument fails quite fundamentally when $\hyperH$ contains hyperedges of arity at least 3. Unfortunately, higher-arity dependencies generally cannot be avoided -- optimal solutions in difficult instances may have dependencies of arbitrary arity, as evidenced by the graphs produced in our reduction from $(k,h)$-maxIP/minIP.

We overcome this challenge via a technique that we call \emph{arity-reducing hypercuts}: The idea is investigate, for a fixed optimal solution $S$, the existence of a cut in the subhypergraph $\hyperH_S\coloneqq \hyperH[S]$ induced by $S$ such that no edge connects vertices that are all taken from different sets. The formal definition is as follows. 
\begin{boxed-definition}
	We say that a partition $S_1,S_2,S_3$ of $S$ is an \emph{arity-reducing hypercut} in $\hyperH_S$ if and only if there is no hyperedge $\{v_1,v_2,v_3\}$ in $\hyperH_S$ with $v_i\in S_i$ for $i\in \{1,2,3\}$.\footnotemark  We say that $S_1,S_2,S_3$ is \emph{balanced} if $|S_1|$, $|S_2|$, $|S_3|$ all differ by at most 1. \nick{Is this the only time we use this environment in the paper? If so: Shouldn't we just have a normal definition? (Very weak opinion on my part)}
\end{boxed-definition}
\footnotetext{It would be reasonable to define arity-reducing hypercuts for any arity $r$: Here, we would say that $S_1,\dots, S_r$ is an arity-reducing hypercut if there is no hyperedge $\{v_1,\dots, v_r\}$ with $v_i \in S_i$. However, for our algorithm, we will only exploit $r=3$.}

We exploit this notion using the following win-win argument: We show that either (1) there exists a balanced arity-reducing hypercut $S_1,S_2,S_3$ of $\hyperH_S$, which enables us to find an optimal solution using an appropriate Maximum-Weight Triangle instance or (2) $S$ contains a structure that can guide us towards finding $S$ more efficiently than brute force.

\paragraph{(1) Exploiting an arity-reducing hypercut}
Interestingly, we will be able to exploit the mere \emph{existence} of a balanced arity-reducing hypercut, without the need to explicitly construct such a hypercut. 

For ease of presentation, we will assume that $k$ is divisible by 3 (we will give the full arguments in Section~\ref{sec:algorithms}). We construct a graph $G'$ with vertex set $V(G') = V' \coloneqq \binom{H}{k/3}$ and call the vertices in $V'$ \emph{super nodes}. For any super node $\overline{u} = (u_1,\dots, u_{k/3})$, we define its weight as  $w(\overline{u}) = |N_G(u_1) \cup \cdots N_G(u_{k/3})|$, and for any pair of super nodes $\overline{u} = (u_1,\dots, u_{k/3})$ and $\overline{v} = (v_1, \dots, v_{k/3})$, we define the weight of the edge between them as $w(\overline{u},\overline{v}) = - \left|\bigcup_{i=1}^{k/3}\bigcup_{j = 1}^{k/3} (N_G(u_i) \cap N_G(v_j))\right|$. Crucially, the weight of any triangle $(x_1,\dots, x_{k/3}), (y_1, \dots, y_{k/3}), (z_1,\dots, z_{k/3})$ in $G'$ is a \emph{lower bound} on the objective value for $S\coloneqq \{x_1,\dots, x_{k/3},y_1,\dots,y_{k/3},z_1,\dots,z_{k/3}\}$, i.e., the weight of this triangle is at most
\[|N_G(x_1) \cup \cdots \cup N_G(x_{k/3}) \cup N_G(y_1) \cup \cdots \cup N_G(y_{k/3}) \cup N_G(z_1) \cup \cdots \cup N_G(z_{k/3})|.\] 
Observe that the weight of the triangle can be strictly smaller than the objective value if $\hyperH_S$ contains hyperedges.
Conversely, consider a solution $S$ for which there exists a balanced arity-reducing hypercut $S_1,S_2,S_3$. Since $|S_i| = k/3$, there are super nodes $x,y,z$ in $G'$ corresponding to $S_1,S_2,S_3$. By definition of $S_1,S_2,S_3$, for all $s_1\in S_1,s_2\in S_2,s_3\in S_3$ we have $N_G(s_1)\cap N_G(s_2)\cap N_G(s_3) = \emptyset$, and thus the weight of the triangle $(x,y,z)$ in $G'$ is \emph{equal} to the objective value of $S$. For a detailed proof, we refer to Lemma~\ref{lemma:triangle-weight-bound-inclusion-exclusion}¸

Consequently, by solving a single Max-Weight Triangle instance on $O(|H|^{k/3})$ nodes with weights in $\{-k\Delta/3,\dots, k\Delta/3\}$, we can detect any optimal solution $S$ admitting a balanced arity-reducing hypercut\footnote{Formally speaking, we obtain a lower bound on the optimum that is at least as large as the objective value of any solution $S$ admitting a balanced arity-reducing hypercut.} in time $O(k|H|^{k\omega/3+1})$.

\paragraph{(2) Obstructions to arity-reducing hypercuts.}
The other side of our win-win argument is to detect an optimal solution $S$ admitting no balanced arity-reducing hypercut. To this end, we show that the only potential obstruction to getting such a hypercut is the existence of certain structures which we will call \emph{bundles}. Intuitively, if $\hyperH$ contains no large bundles, there exists a balanced arity-reducing hypercut, and the previous considerations apply. Otherwise, if $\hyperH$ contains large bundles, we can essentially exhaustively search over these large bundles to identify parts of the solution $S$ quickly.

Formally, our notion of \emph{bundle} is defined recursively: A \emph{0-bundle} is a set consisting of a single vertex $v\in V$. For any $c$-bundle $B$ and hyperedge $\{b, x, y\} \in E(\hyperH)$ with $b\in B$ and $x,y\notin B$, we say that the set $B'= B\cup \{x,y\}$ forms a \emph{$(c+1)$-bundle}. 

The following two observations provide the gist of our algorithmic approach: Let $S$ denote an optimal solution, then
\begin{enumerate}
	\item Any partition $S_1,S_2,S_3$ of $S$ such that any bundle $B\subseteq S$ is completely contained in a single set $S_i$ is an arity-reducing hypercut. Consequently, it is not difficult to see that if there are no $c$-bundles $B\subseteq  S$ with $c\ge c_0$ for some $c_0$, then there exists such a partition such that $|S_i| \le k/3 + O(c_0)$: starting with $S_1=S_2=S_3 = \emptyset$, repeatedly take a maximal bundle and place it into the currently smallest set $S_i$, until all vertices are distributed among $S_1,S_2,S_3$. This yields an \emph{almost-}balanced arity-reducing hypercut. 
	\item If there exists a $c$-bundle $B\subseteq S$ with $c\ge c_0$, then we can guess this bundle by enumerating $\Delta^{3c+2}$ bundles\footnote{This follows from a combination of Lemma~\ref{lemma:listing-bundles} and Proposition~\ref{lemma:bounding-number-of-sets}, detailed in Section~\ref{sec:algorithms}.} of size $c$. Thus intuitively, by spending an effort of $\Delta^{3c+2}$, we obtain $|B|=1+2c$ nodes of $S$. (Note that as $c$ increases, this approaches the ratio of $\Delta^{1.5}$ effort per vertex.) This gives rise to a recursive algorithm that for each of the $\Delta^{3c+2}$ bundles $B$, computes the optimal value in the subproblem obtained by restricting $S$ to contain $B$.
\end{enumerate}

\paragraph{Combining both arguments}
A direct implementation of the above argument would result in an algorithm with running time
\begin{align*} 
	& O\left(T_{\mathrm{MaxWeightTriangle}}(|H|^{\frac{k}{3} + O(c_0)}) + \Delta^{\frac{2+3c_0}{1+2c_0}k}\right)  = O\left(\Delta^{\frac{2\omega k}{3} + O(c_0)} + \Delta^{(\frac{3}{2} + \frac{1}{2+4c_0})k}\right). 
\end{align*}
where we used Proposition~\ref{prop:exchange-pds} to bound $|H| \le k\Delta^2$.
By choosing $c_0 = \gamma \sqrt{k}$, this would give an algorithm running in time
\[O\left(n\Delta + (\Delta^{\frac{2\omega}{3}k} + \Delta^{\frac{3}{2}k})\Delta^{O(\sqrt{k})}\right).\]

We refine the above arguments further to achieve a running time of $O(n\Delta + f(k)(\Delta^{\frac{3}{2}k} + \min\{n,\Delta^2\}^{\frac{\omega}{3}k})\Delta^{O(1)})$, reducing the $\Delta^{O(\sqrt{k})}$ overhead over the conditional lower bounds to $\Delta^{O(1)}$. This is achieved by a surprisingly succinct, non-recursive algorithm (see Algorithm~\ref{alg:partial-dom-alg}): It turns out that it suffices to guess two disjoint bundles $S_1, S_2$ of size $0\le |S_1|+|S_2| \le k$ -- after including these bundles into our solution and simplifying the graph accordingly, we show that there exists a balanced arity-reducing hypercut.
 
 \marvin{I guess the final algorithm has running time $O(n\Delta + k^2(\Delta^{\frac{3}{2}k} + \min\{n,\Delta^2\}^{\frac{\omega}{3}k})\Delta^{2})$. (Not quite: $k^{O(k)}$ terms are involved). write as $f(k)$ or include these factors?}

\subsubsection{Extension to Max $k$-Cover}

Let us return to the Max $k$-Cover problem: We can view it as a bichromatic version of Partial $k$-Dominating Set, in which we are given a graph $G=(X\cup Y, E)$ and the task is to maximize, over all $x_1,\dots, x_k\in X$, the number of $y\in Y$ that are adjacent to at least one $x_i$. In this view, $X=\mathcal{F}$, $Y=[u]$, and $E= \{\{S_i, y\} \mid y\in S_i\}$. Consequently, the parameters $n$, $u$, $s$ and $f$ correspond to $|X|$, $|Y|$, $\Delta_s\coloneqq \max_{x\in X} \deg(x)$ and $\Delta_f \coloneqq \max_{y\in Y} \deg(y)$, respectively.

Generally speaking, both our algorithm and conditional lower bounds can be adapted to this more general setting and analyzed in term of these four parameters. Curiously, however, a straightforward generalization does not yield matching upper and lower bounds, specifically for small universe sizes $u$. We are nevertheless able to determine the tight fine-grained complexity of Max $k$-Cover by devising two additional algorithmic improvements:

\paragraph{Small number of 1-bundles for small universes}
The first improvement is surprisingly simple: We may bound the number of 1-bundles by $\Delta_{f}^3 u$ (which follows by guessing a common neighbor of the three vertices in a 1-bundle, as well as a triple of its neighbors). Intuitively, integrating this observation into our algorithm enables us to improve over a term of $(\Delta_f \sqrt{\Delta_s})^k$ in our running time bound whenever $\sqrt[3]{u} \le \sqrt{\Delta_s}$.

\paragraph{Regularization step}
Consider a setting in which $\Delta_s \ge \sqrt{u}$. The natural generalization of our $(k,2)$-maxIP lower bound yields a conditional lower bound of only $(\Delta_f \min\{\Delta_s, \sqrt{u}\})^{\frac{\omega}{3}k-o(1)}$. However, even assuming that there exists a balanced arity-reducing hypercut, our algorithm produces a maximum-weight triangle instance on $(k\Delta_f \Delta_s)^{\frac{k}{3}}$ nodes, using that the exchange argument of Proposition~\ref{prop:exchange-pds} and Lemma~\ref{lemma:bounding-number-of-sets} reduces $X$ to a size of at most $k\Delta_f\Delta_s$. A natural attempt would be to reduce this size even further to $O(\Delta_f \sqrt{u})$ -- however, it appears impossible to improve the exchange argument sufficiently to achieve this.

Instead, we rely on the following more involved argument, proven in Lemma~\ref{lemma:regularization-lemma}: We observe that any optimal solution must contain a node of \emph{high} degree, specifically, degree at least $\Delta_s/k$. We now distinguish two cases: (1) If there are at most $k^2\Delta_f$ many high-degree nodes, we can afford to guess such a high-degree node (even applying this step repeatedly incurs a cost of at most $g(k)\Delta_f^k$ in total, which is dominated the term of $g'(k)(\Delta_f \min\{\sqrt[3]{u}, \sqrt{s}\})^k$ incurred by handling bundles). (2) Otherwise, we prove that \emph{all} nodes have \emph{moderately high} degree, specifically, degree at least $\Delta_s/(2k)$. Crucially, there can be at most $O(\Delta_f \sqrt{u})$ many such nodes: there are at most $u\Delta_f$ edges in $G$, so there can be at most $2k u \Delta_f / \Delta_s \le  f(k) \Delta_f \sqrt{u}$ many nodes of \emph{moderately high} degree.\footnote{A very observant reader might notice a potential issue in this argument: After guessing some solutions nodes according to case (1) and simplifying the graph, we might be left with a smaller value $1 \le \Delta \le \Delta_s$ for $\Delta_s$. However, together with the exchange argument of Lemma~\ref{lemma:bounding-number-of-sets}, we still obtain a bound of $2k \Delta_f \min\{\Delta,u/\Delta\} \le g(k) \Delta_f \sqrt{u}$.} This argument reduces the number of candidates for solution nodes sufficiently, and yields the final improvement to obtain a conditionally tight algorithm even for the general setting of Max $k$-Cover. We present all details in Section~\ref{sec:max-k-cover}.

\subsubsection{Further application: Influence of sparsity for Partial Dominating Set}

Finally, to obtain a conditionally optimal bound for Partial Dominating Set in terms of the number of vertices $n$ and number of edges $m$, our previous techniques turn out to be essential:
Our algorithm for Partial 2-Dominating Set exploits a careful combination of the baseline $O(n^\omega)$-time algorithm due to Eisenbrand and Grandoni, and the $O(m^{2\omega/(\omega+1)})$-time algorithm for Sparse Triangle Detection/Counting~\cite{AlonYZ97}. To always reduce to one of these cases, we employ a rather complex case distinction using the degrees of the solution nodes, which makes a subtle implicit use of the exchange argument of Proposition~\ref{prop:exchange-pds}. The corresponding conditional lower bound follows by a natural adaptation of Theorem~\ref{thm:OVLB}

For higher values of $k\ge3$, we again employ our arity-reducing hypercuts. Notably, in this setting, we obtain matching upper and conditional lower bounds already under current values of $\omega$ (not only if either $\omega \le 2.25$ or $\omega > 2.25$ can be proven) by further employing the Regularization Lemma sketched above.
For a detailed description and proofs, we refer to Section~\ref{sec:sparse}.

\section{Preliminaries}\label{sec:prelims}
For a positive integer $n$, let $[n]$ denote the set $\{1,\dots, n\}$. 
If $S$ is an $n$-element set and $0\leq k\leq n$ is an integer, then by $\binom{S}{k}$ we denote the set of all $k$-element subsets of $S$.

Let $\omega<2.371552$ \cite{WilliamsXXZ24} denote the optimal exponent of multiplying two $n\times n$ matrices and $\MM(a,b,c)$ the time required to multiply two rectangular matrices of dimensions $a\times b$ and $b\times c$.
Note that if $\omega=2$, $\MM(a,b,c) \leq (ab + ac + bc)^{1+o(1)}$.

For a graph $G$ and a vertex $v\in V(G)$, the \emph{neighborhood} of $v$, denoted $N(v)$ is the set of vertices adjacent to $v$. 
The \emph{closed neighborhood} of $v$, denoted $N[v]$ is defined as $N[v]:=N(v)\cup\{v\}$.
For the subset $S\subseteq V(G)$, we denote $N(S) := \bigcup_{v\in S} N(v)$ (respectively $N[S]:=\bigcup_{v\in S} N[v]$).
The \emph{degree} of $v$ denotes the size of its neighborhood ($\deg(v) = |N(v)|$).
For a (hyper)graph $G = (V,E)$ and a set $S\subseteq V$, we denote by $G[S]$ the subgraph of $G$ induced on $S$.

Given a graph $G$ with $n$ vertices, the \emph{$k$-Clique Detection} problem is to decide if $G$ contains a clique of size $k$. 
If $k$ is divisible by $3$, we can solve this problem by constructing a graph $T$, such that each vertex in $T$ corresponds to a clique in $G$ of size $k/3$ and adding an edge between the vertex corresponding to a clique $C_1$ and the vertex corresponding to a clique $C_2$ if and only if every vertex in $C_1$ is adjacent to every vertex in $C_2$. Now detecting a $k$-clique in $G$ is equivalent to detecting a triangle in $T$. A simple matrix multiplication algorithm detects triangles in graphs with $N$ vertices in time $N^{\omega}$. This yields an algorithm that solves $k$-Clique Detection in $n^{\omega k /3}$. \cite{nevsetvril1985complexity,EisenbrandG04}
Remarkably, no significant improvement over this simple algorithm has been made for decades. This led to the following hardness assumption (see e.g. \cite{AbboudBVW18}).
\begin{hypothesis}[$k$-Clique Hypothesis]
    For no $\varepsilon>0$ and $k\geq 3$ is there an algorithm solving $k$-Clique Detection in time $O(n^{k\omega/3 - \varepsilon})$.
\end{hypothesis}
The \emph{$h$-Uniform $k$-Hyperclique Detection} problem is given an $h$-uniform hypergraph $G$ with $n$ vertices to decide if $G$ contains a hyperclique of size $k$.
It turns out for $h\geq 3$, the similar matrix multiplication techniques fail to detect hypercliques of size $k$ in $h$-uniform hypergraphs.
In fact, no algorithm running in $\bigO(n^{k-\varepsilon})$ is known to be able to solve this problem, and it has been shown that any such algorithm would imply significant improvement for other problems that are conjectured to be hard, like Max-$h$-SAT and, for $h\ge 4$, Max-Weight $k$-Clique (\cite{LincolnVWW18})
This led to the following hardness assumption.
\begin{hypothesis}[$h$-Uniform $k$-Hyperclique Hypothesis]
    For no $\varepsilon>0, h\geq 3, k\geq h+1$ is there an algorithm solving $h$-Uniform $k$-Hyperclique Detection in time $\bigO(n^{k - \varepsilon})$.
\end{hypothesis}

Next, we introduce \emph{$k$-Orthogonal Vectors} and related problems. For vectors~\makebox{$v_1, \dots, v_k \in \{0, 1\}^d$} we write $v_1 \odot \dots \odot v_k = \sum_{y=1}^d x_1[y] \cdot \dots \cdot x_k[y]$ (i.e., a generalization of the inner product to $k$ vectors). In the \emph{$k$-Orthogonal Vectors} ($k$-OV) problem the goal is to decide whether for given size-$n$ sets $V_1, \dots, V_k \subseteq \{0, 1\}^d$, there are vectors $v_1 \in V_1, \dots, v_k \in V_k$ with $v_1 \odot \dots \odot v_k = 0$. The $k$-OV hypothesis postulates that there is no $k$-OV algorithm running in time $\bigO(n^{k-\varepsilon})$ (for any $\varepsilon > 0$) in the regime where $d = \omega(\log n)$.

More generally, consider vector sets $V_1,\dots, V_k\subseteq \{0,1\}^d$ where each coordinate $y\in [d]$ is associated to $h$ (pairwise distinct) \emph{active indices} $i_1,\dots, i_h\in [k]$.
We denote by $a(i_1,\dots, i_h)$ the set of coordinates $y$ such that the active indices associated to $y$ are precisely $i_1,\dots, i_h$.
Note that $a(i_1,\dots, i_h) = a(\pi(i_1),\dots, \pi(i_h))$ for any permutation $\pi$ (i.e. $a$ is a symmetric function).
For any $2\leq r\leq h$ and pairwise distinct $i_1,\dots, i_r\in [k]$, let $v_{i_1}\in V_{i_1},\dots, v_{i_r}\in V_{i_r}$ be vectors and write $v_{i_1}\odot \dots \odot v_{i_r}$ to denote the number of coordinates $y\in a(i_1,\dots, i_r, i_{r+1},\dots, i_h)$ (for any valid choice of $i_{r+1},\dots, i_h$) such that $v_{i_1}[y]=\dots = v_{i_r}[y] = 1$. Furthermore, for $r\geq h$, let $v_1\cdot\dots \cdot v_r = \sum_{i_1<\dots<i_h \in [r]}v_{i_1}\odot \dots \odot v_{i_h}$.
In this notation, for fixed constants $k,h$, the \emph{$(k,h)$-Orthogonal Vectors} ($(k,h)$-OV) problem is to decide if there are vectors $v_1\in V_1,\dots, v_k\in V_k$ satisfying $v_1\cdot \dots \cdot v_k = 0$.
The \emph{$(k,h)$-Maximal Inner Product} ($(k,h)$-maxIP) and \emph{$(k,h)$-Minimal Inner Product} ($(k,h)$-minIP) are the natural optimization versions of $(k,h)$-OV. Namely, $(k,h)$-maxIP ($(k,h)$-minIP) is to find the vectors $v_1\in V_1,\dots, v_k\in V_k$ such that the value $v_1\cdot \dots \cdot v_k$ is maximal (minimal). 

\section{Algorithms for Max $k$-Cover and Partial $k$-Dominating Set}
\label{sec:algorithms}
For the rest of the paper, we will consider the graph-theoretic formulation of Max-$k$-Cover: 
Let $G = (X\cup Y,E)$ be a bipartite graph with $|X| = n$, $|Y| = u$ such that any vertex in $X$ has degree at most $\Delta_s$, and any vertex in $Y$ has degree at most $\Delta_f$.
Then Max-$k$-Cover problem is to maximize, over all $x_1,\dots, x_k\in X$, the value $|N(x_1)\cup\dots \cup N(x_k)|$.

We proceed to construct the algorithms for Max-$k$-Set Cover and Partial $k$-Dominating Set. 
In particular, we prove the following two theorems.
\begin{theorem}[Max-$k$-Set Cover Algorithm]\label{theorem:max-k-cover-alg}
    Given a bipartite graph $G = (X\cup Y,E)$ with $|X| = n$, $|Y| = u$, such that the maximum degree of any vertex $x\in X$ is $\Delta_s$ and the maximum degree of any vertex $y\in Y$ is $\Delta_f$, we can find a collection of $k$ vertices $x_1,\dots, x_k\in X$ that maximize the value $|N(x_1)\cup \dots \cup N(x_k)|$ in time
    \begin{equation*}
        \bigO\Big(\big((\min\{n,\Delta_f\cdot \min\{u^{1/3}, \sqrt{\Delta_s}\}\})^k + (\min\{n,\Delta_f\cdot \min\{\sqrt{u}, \Delta_s\}\})^{k\omega/3}\big)\cdot (\Delta_s\Delta_f)^{c}\Big)
    \end{equation*}
    where $c$ is a constant independent of $k$.
\end{theorem}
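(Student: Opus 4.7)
The plan is to extend the arity-reducing hypercuts framework---developed for Partial $k$-Dominating Set in Section~\ref{sec:technical-overview}---to the bipartite Max-$k$-Cover setting, layered on top of two refinements that are specific to this more general problem. First I would apply the bipartite analogue of Proposition~\ref{prop:exchange-pds} (stated as Lemma~\ref{lemma:bounding-number-of-sets}) to restrict $X$ to a candidate set $H$ of size $\min\{n, k\Delta_f\Delta_s\}$ containing an optimal solution, and then build the hypergraph $\mathcal{H}$ on $H$ whose hyperedges record common $Y$-neighborhoods of triples of $X$-vertices. All subsequent arguments operate on $\mathcal{H}$, treating $Y$ only via the inclusion-exclusion weights that appear in the Max-Weight Clique reduction.

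Building on this, the core of the algorithm would follow the non-recursive structure of Algorithm~\ref{alg:partial-dom-alg}: enumerate pairs of disjoint bundles $S_1, S_2 \subseteq H$ with $0 \le |S_1| + |S_2| \le k$, and for each such pair force $S_1 \cup S_2$ into the solution and solve the residual problem via a single Max-Weight Clique instance on super-nodes of arity roughly $(k - |S_1| - |S_2|)/3$. By Lemma~\ref{lemma:triangle-weight-bound-inclusion-exclusion}, if the optimum admits a balanced arity-reducing hypercut after augmenting by $S_1 \cup S_2$, then one of these clique computations certifies the correct objective value; otherwise, $S$ must contain a further bundle obstruction, which the enumeration picks up at a later iteration. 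The total cost of this step is controlled by the bundle listing bound of Lemma~\ref{lemma:listing-bundles}.

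The bulk of the new work---and the main obstacle---lies in the two Max-$k$-Cover-specific refinements. For the $\min\{u^{1/3}, \sqrt{\Delta_s}\}$ factor in the bundle term, I would use that a $1$-bundle is a triple of $X$-vertices sharing a common $Y$-neighbor, so there are at most $u \cdot \Delta_f^3$ of them (enumerated by guessing a common witness $y \in Y$ together with three of its neighbors); this improves over the generic $\Delta_s \Delta_f^3$ bound precisely when $u \le \Delta_s^3$, and by induction on the bundle size it pushes the per-vertex cost down to $\Delta_f \cdot \min\{u^{1/3}, \sqrt{\Delta_s}\}$. For the $\min\{\sqrt{u}, \Delta_s\}$ factor in the Max-Weight Clique term, I would invoke the regularization argument of Lemma~\ref{lemma:regularization-lemma}: every optimal solution contains a vertex of degree at least $\Delta_s/k$, and a case analysis shows that either at most $k^2\Delta_f$ such vertices exist (in which case we guess one, decrement $k$, and absorb the cost into the first term), or \emph{every} surviving vertex has degree at least $\Delta_s/(2k)$, so that $|E(G)| \le u \Delta_f$ forces at most $O(\Delta_f \sqrt{u})$ surviving vertices.

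Finally, I would stitch the ingredients together: run the bundle enumeration with the $u$-based bound on $1$-bundles, and whenever the Max-Weight Clique candidate set would otherwise exceed $\Delta_f \sqrt{u}$, iteratively apply regularization until either the candidate set shrinks sufficiently or $k$ itself has decreased enough. Capping every term by the trivial $n^k$ baseline produces the $\min\{n, \cdot\}$ expressions in the statement, and the polynomial factor $(\Delta_s \Delta_f)^c$ absorbs the overheads from the Max-Weight Clique computations, the enumeration loops, and the at most $k$ rounds of regularization. The most delicate step will be showing that the two refinements compose cleanly with the bundle enumeration---in particular, that the improved $1$-bundle bound survives after guessing large bundles and that regularization does not inflate the first running time term---which requires tracking how $\Delta_s$, $u$, and $H$ evolve across the iterations.
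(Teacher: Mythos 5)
The overall architecture of your proposal is sound: the bipartite exchange argument (Lemma~\ref{lemma:bounding-number-of-sets}), the hypergraph representation, the reduction to Max-Weight Triangle via inclusion–exclusion (Lemma~\ref{lemma:triangle-weight-bound-inclusion-exclusion}), and the regularization argument (Lemma~\ref{lemma:regularization-lemma}) with its win-win dichotomy all match the paper, and your analysis of how regularization combined with the exchange argument forces $|X| \le O(\Delta_f \sqrt{u})$ is essentially correct, including the subtle point about $\Delta_s$ decreasing across recursive calls.

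However, there is a genuine gap in how you propose to obtain the $(\Delta_f \cdot \min\{u^{1/3}, \sqrt{\Delta_s}\})^k$ term. You claim that the $u\Delta_f^3$ bound on $1$-bundles ``by induction on the bundle size'' pushes the per-vertex cost of bundle guessing down to $\Delta_f \min\{u^{1/3}, \sqrt{\Delta_s}\}$. This does not work. A $c$-bundle has $1+2c$ vertices; extending a $c$-bundle to a $(c+1)$-bundle picks a common neighbor $y$ of a bundle vertex ($\le |B|\Delta_s$ choices) and a pair in $N(y)$ ($\le \Delta_f^2$ choices), so the extension cost per step is $\Theta(\Delta_s\Delta_f^2)$ and is \emph{not} controlled by $u$. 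Even if you bound the number of $c$-bundles by $(u\Delta_f^3)^c$ (counting sequences of hyperedges), the per-vertex cost $(u\Delta_f^3)^{c/(1+2c)}$ tends to $(u\Delta_f^3)^{1/2}$, which exceeds $\Delta_f u^{1/3}$ whenever $u > 1$. Concretely, for $c\ge 2$ in the regime $u \le \Delta_s^{3/2}$ one checks $u\Delta_f^3(\Delta_s\Delta_f^2)^{c-1} > (\Delta_f u^{1/3})^{1+2c}$, so the two-bundle structure of Algorithm~\ref{alg:partial-dom-alg} / Lemma~\ref{lemma:bundles-and-cuts} simply cannot be enumerated cheaply enough. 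The missing idea is that in the small-universe regime the paper abandons the ``guess two bundles, then Max-Weight Triangle'' template and instead uses a different recursion (Lemma~\ref{lemma:small-universe}, Algorithm~\ref{alg:small-universe-size}): guess a \emph{single hyperedge} of $\mathcal{H}(G)$ contained in the optimum (at most $u\Delta_f^3$ of them), remove its three vertices and their neighborhood, and recurse on $k-3$; the induction is on the number of hyperedges in $\mathcal{H}(G)$ restricted to the optimal solution, and terminates with a \textsc{regularize-and-solve} call once that induced hypergraph is empty. This recursion pays $u\Delta_f^3$ per three vertices (amortized $\Delta_f u^{1/3}$ per vertex), independently of $\Delta_s$, and only this achieves the claimed bundle term. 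Your proposal would need to replace the bundle enumeration with this hyperedge recursion in the regime $u \le \Delta_s^{3/2}$ (and keep the two-bundle strategy in the intermediate regime $\Delta_s^{3/2} \le u \le \Delta_s^2$, where $\sqrt{\Delta_s} \le u^{1/3}$ and the standard bundle count suffices).
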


The above theorem provides the upper bound for our main result of Theorem~\ref{thm:main2}. We will begin by giving an algorithm for the special case Partial $k$-Dominating Set, i.e., establish the upper bound of Theorem~\ref{thm:main1}.

\begin{theorem}[Partial $k$-Dominating Set Algorithm]\label{theorem:partial-dom-algorithm}
    Given a graph $G$ with $n$ vertices and maximum degree $\Delta$, we can compute the maximum value $|N[v_1]\cup \dots \cup N[v_k]|$ over all $v_1,\dots, v_k\in V(G)$ in time \[ \bigO\big((\min\{n,\Delta^{3/2}\}^k + \min\{n,\Delta^{2}\}^{k\omega/3})\cdot \Delta^{c}\big),\] where $c$ is a constant independent on $k$.
\end{theorem}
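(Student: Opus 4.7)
My proof follows the arity-reducing hypercut framework from Section~\ref{sec:technical-overview} in its non-recursive form. First, apply Proposition~\ref{prop:exchange-pds} to restrict to the set $H$ of the $\min\{k\Delta^2,n\}$ highest-degree vertices of $G$, which always contains an optimal $k$-tuple $S$. Then build the $3$-uniform auxiliary hypergraph $\mathcal{H}$ on vertex set $H$ where $\{u,v,w\}\in E(\mathcal{H})$ iff $N_G(u)\cap N_G(v)\cap N_G(w)\ne\emptyset$; this takes $\Delta^{O(1)}|H|$ time by iterating over a common neighbor $y\in V(G)$ and then over triples among its at most $\Delta$ neighbors lying in $H$. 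In parallel, enumerate all $c$-bundles of $\mathcal{H}$ for every $c$ from $0$ up to $\lceil k/2\rceil$: since each extension from a $c$-bundle to a $(c+1)$-bundle adds a hyperedge incident to a vertex already chosen, a $c$-bundle is specified by a seed vertex plus $c$ extensions, each encodable by a common neighbor in $G$ and two of its at most $\Delta$ neighbors in $H$. This yields at most $|H|\cdot\Delta^{3c}\le\Delta^{3c+O(1)}$ bundles of arity $c$, each of size $2c+1$, listed within the same time budget.

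\textbf{The algorithm.} Iterate over all ordered pairs of vertex-disjoint bundles $(B_1,B_2)$ with $|B_1|+|B_2|\le k$, allowing either to be empty. For each pair, solve the residual problem of selecting the remaining $k'\coloneqq k-|B_1|-|B_2|$ vertices from $H\setminus(B_1\cup B_2)$ via a maximum-weight $k'$-clique computation (handle the case $k'\not\equiv 0 \pmod 3$ by brute-forcing $1$ or $2$ extra vertices). Build the complete graph $G'$ whose vertices are the $(k'/3)$-subsets of $H\setminus(B_1\cup B_2)$, assigning to each super-vertex the coverage of its members merged with $B_1\cup B_2$, and to each super-edge the negative correction for pairwise common neighbors; this is the natural extension of the construction behind Lemma~\ref{lemma:triangle-weight-bound-inclusion-exclusion}. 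By inclusion-exclusion, the weight of any triangle in $G'$ is a lower bound on the coverage of the corresponding $k$-tuple, with equality exactly when the induced partition of the $k'$ residual vertices into three equal-size parts is an arity-reducing hypercut in $\mathcal{H}$. Running Max-Weight Triangle on $G'$ with integer weights of magnitude $O(k\Delta)$ takes time $\Delta^{O(1)}\cdot\min\{n,\Delta^2\}^{k\omega/3}$ per pair.

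\textbf{The main obstacle: structural lemma.} The heart of the proof---and where I expect the bulk of the effort---is to show that for \emph{every} optimal solution $S$ there exist vertex-disjoint bundles $B_1,B_2\subseteq S$ (possibly empty) with $|B_1|+|B_2|\le k$ such that the residual $S\setminus(B_1\cup B_2)$ admits a \emph{balanced} arity-reducing hypercut in $\mathcal{H}[S\setminus(B_1\cup B_2)]$. The starting point is a greedy bundle decomposition of $S$: repeatedly extract a maximal bundle, yielding $S=B_1^\star\sqcup\dots\sqcup B_p^\star$; a simple argument using the bundle definition shows that no hyperedge of $\mathcal{H}[S]$ crosses two different $B_i^\star$, so any partition respecting the $B_i^\star$ is automatically arity-reducing. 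The technical difficulty is to show that the decomposition has at most two bundles of size exceeding $|S|/3$: once these are guessed and removed, the remaining bundles can be round-robin-assigned into three bins of sizes differing by at most $1$. A careful case analysis distinguishing by the sizes $|B_1^\star|,|B_2^\star|,|B_3^\star|$ relative to $k/3$ should establish this, and is where most of the combinatorial work will lie.

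\textbf{Running time.} Summing over all bundle pairs yields total cost $\sum_{c_1,c_2 \ge 0}\Delta^{3(c_1+c_2)+O(1)}\cdot\min\{n,\Delta^2\}^{(k-2(c_1+c_2)-2)\omega/3+O(1)}$, where the sum is taken over $2(c_1+c_2)+2\le k$. Since the logarithmic exponent is linear in $c_1+c_2$, the sum is dominated by its two endpoints, giving $\Delta^{(3/2)k+O(1)}+\min\{n,\Delta^2\}^{k\omega/3+O(1)}$. Finally, whenever $\Delta^{3/2}>n$ we fall back to brute-force $n^k$, replacing $\Delta^{3/2}$ by $\min\{n,\Delta^{3/2}\}$ and yielding the claimed bound.
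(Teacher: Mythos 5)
Your algorithm (restrict to the $\min\{k\Delta^2,n\}$ highest-degree vertices, iterate over pairs of disjoint bundles, solve the residual via a Max-Weight-Triangle reduction) is the same as the paper's Algorithm~\ref{alg:partial-dom-alg}, and your running-time accounting is essentially right. The place where the proof breaks down is the structural lemma you flag as ``the heart of the proof'': your proposed mechanism for producing the two bundles $D_1,D_2$ to remove does not work.

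Concretely, you suggest extracting maximal bundles $B_1^\star,\dots,B_p^\star$ from $S$, guessing-and-removing those of size exceeding $|S|/3$, and round-robinning the rest into three bins. This fails already for small cases: take $k=11$ with maximal bundle sizes $\{5,3,3\}$ (or $k=9$ with $\{5,3,1\}$). Only the $5$-bundle exceeds $k/3$, so you remove just that one, and round-robinning the residual $\{3,3\}$ gives bins of sizes $3,3,0$ --- imbalance $3$, not the required $\le 1$. The paper's Lemma~\ref{lemma:bundles-and-cuts} sidesteps this by \emph{splitting} bundles rather than removing them whole: after round-robinning all but the two largest maximal bundles $B_{\ell-1},B_\ell$, it peels off a ``suffix'' $B''_{\ell-1}\subseteq B_{\ell-1}$ (resp.\ $B''_{\ell}\subseteq B_\ell$) that is itself a bundle --- this uses the recursive structure: one can seed a new bundle with a single vertex of $B_{\ell-1}$ and repeatedly append extension pairs drawn from $B_{\ell-1}$'s construction history --- choosing its size so that the remaining prefix $B'_{\ell-1}=B_{\ell-1}\setminus B''_{\ell-1}$ exactly tops up the second (resp.\ third) bin. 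In the $\{5,3,3\}$ example this yields $D_1=\emptyset$ and $D_2$ a size-$3$ sub-bundle of the $5$-bundle, leaving a balanced $3,3,2$ partition. Your algorithm iterates over \emph{all} bundle pairs and would therefore try these split-off sub-bundles, so the algorithm itself is correct; what is missing is the existence proof that some such pair $(D_1,D_2)$ works, and the round-robin-after-removing-big-bundles idea does not supply it. As a minor side note, your claim ``no hyperedge of $\mathcal{H}[S]$ crosses two different $B_i^\star$'' is too strong (a hyperedge can have two vertices in one maximal bundle and one in another); the correct statement, and the one you actually need, is that no hyperedge has one vertex in each of three distinct maximal bundles, which follows immediately from maximality.
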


Before giving the overview of our algorithm, we first prove a lemma that allows us to bound the size of $X$ in terms of the parameters $\Delta_s$ and $\Delta_f$.
\begin{lemma}\label{lemma:bounding-number-of-sets}
    Let $G = (X\cup Y,E)$ be as above and let $H\subseteq X$ be a set consisting of the first $\min\{k\Delta_f\cdot \Delta_s, n\}$ many vertices in $X$ sorted in the decreasing order of degrees. 
    Then 
    \[
    \max_{x_1,\dots, x_k\in X} |N(x_1) \cup \dots \cup N(x_k)| = \max_{x_1,\dots, x_k\in H} |N(x_1) \cup \dots \cup N(x_k)|.
    \]
\end{lemma}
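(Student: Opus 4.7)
The argument is a straightforward iterative exchange. First I would dispose of the trivial case $n \le k\Delta_f\Delta_s$, in which $H = X$ and the claim is immediate. So assume $|H| = k\Delta_f\Delta_s$, which guarantees the defining property that every $h \in H$ has $\deg(h) \ge \deg(x)$ for every $x \in X\setminus H$. Take any $k$-tuple $x_1,\dots,x_k \in X$ achieving the maximum; without loss of generality the $x_i$ are pairwise distinct, since a duplicate can only weakly decrease the coverage. I will show that whenever some $x_i \notin H$, we can replace $x_i$ by a vertex $h \in H$ without decreasing the coverage and without introducing a duplicate. Iterating this at most $k$ times yields a maximizing $k$-tuple entirely in $H$.

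The key combinatorial observation is: the number of vertices $x' \in X$ that share \emph{any} neighbor in $Y$ with one of $x_1,\dots,x_{i-1},x_{i+1},\dots,x_k$ is bounded by
\[
\Bigl|\bigcup_{j\neq i} N(N(x_j))\Bigr| \;\le\; (k-1)\,\Delta_s\,\Delta_f,
\]
because each $x_j$ has at most $\Delta_s$ neighbors in $Y$, each of which has at most $\Delta_f$ neighbors back in $X$. Since $|H| = k\Delta_s\Delta_f > (k-1)\Delta_s\Delta_f$, pigeonhole produces some $h \in H$ with $N(h) \cap \bigcup_{j\neq i} N(x_j) = \emptyset$. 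Note moreover that each $x_j$ (with $j\neq i$) belongs to $N(N(x_j))$, so such an $h$ is automatically distinct from the other chosen vertices.

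For this $h$, the coverage of the new tuple decomposes as a disjoint union, and using $|N(h)| = \deg(h) \ge \deg(x_i) = |N(x_i)|$ we obtain
\begin{align*}
\Bigl|N(h) \cup \bigcup_{j\neq i} N(x_j)\Bigr|
&= |N(h)| + \Bigl|\bigcup_{j\neq i} N(x_j)\Bigr| \\
&\ge |N(x_i)| + \Bigl|\bigcup_{j\neq i} N(x_j)\Bigr| - \Bigl|N(x_i) \cap \bigcup_{j\neq i} N(x_j)\Bigr|
= \Bigl|\bigcup_{j} N(x_j)\Bigr|.
\end{align*}
Thus the replacement preserves optimality while strictly increasing the number of selected vertices lying in $H$. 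Repeating the step delivers the desired optimum inside $H$.

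\textbf{Main obstacle.} There is no deep obstacle; the only delicate point is calibrating the size of $H$ so that pigeonhole succeeds \emph{uniformly} across all $k$ iterations. The bound $(k-1)\Delta_s\Delta_f$ depends on $k,\Delta_s,\Delta_f$ but not on the current selection, so the choice $|H| = k\Delta_s\Delta_f$ suffices at every step. The second small care-point is ensuring the replacement avoids producing a duplicate with the remaining $x_j$, which is handled for free because $x_j \in N(N(x_j))$ is already excluded by the disjointness condition on $N(h)$.
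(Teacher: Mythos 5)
Your proof is correct and follows essentially the same exchange argument as the paper: bound the number of vertices sharing a neighbor with the other $k-1$ selected vertices by $(k-1)\Delta_s\Delta_f$, invoke pigeonhole on $|H| = k\Delta_s\Delta_f$ to find a replacement $h \in H$ whose neighborhood is disjoint from the rest, and conclude via $\deg(h) \ge \deg(x_i)$. The only cosmetic difference is that you are slightly more explicit about the iteration and about ruling out duplicate choices (a point the paper leaves implicit), but the decomposition and the key pigeonhole/degree-comparison steps are identical.
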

\begin{proof}
    If $k\Delta_f\cdot \Delta_s\geq  n$, the claim trivially holds. Hence assume that $k\Delta_f\cdot \Delta_s < n$ and let $x_1,\dots, x_k$ be vertices contained in $X$ with $x_1\not\in H$. It is sufficient to argue that we can replace $x_i$ by some vertex $x'_1\in H$ such that $|N(x_1) \cup N(x_2) \cup \dots \cup N(x_k)| \leq |N(x'_1) \cup N(x_2) \cup \dots \cup N(x_k)|$.
    Let $H'\subseteq H$ consist of all vertices in  $H$ that share a common neighbor with at least one $x_j$ for $j\geq 2$. 
    Observe that for each $x_j$ there are at most $\Delta_s\cdot\Delta_f$ many vertices $x\in X$ that share a common neighbor with~$x_j$, and thus $H'$ consists of at most $(k-1)\Delta_s\cdot\Delta_f$ many vertices.
    In particular, the set $H\setminus H'$ is non-empty.
    We claim that by setting $x_1'$ to be any vertex from $H\setminus H'$, we get the desired inequality. 
    Indeed, since $x_1\in X\setminus H$ and $x'_1\in H$, by construction of $H$ it holds that $\deg(x_1)\leq \deg(x'_1)$, and since $x'_i\not\in H'$ it shares no common neighbors with any $x_j$ (for $j\geq 2$) and we obtain the following chain of inequalities:
    \[
    \Big|\bigcup_{1\leq i \leq k}N(x_i)\Big| \leq \deg(x_1) + \Big|\bigcup_{2\leq i \leq k}N(x_i)\Big| \leq \deg(x'_1) + \Big|\bigcup_{2\leq i \leq k}N(x_i)\Big| = \Big|N(x'_1)\cup \bigcup_{2\leq i \leq k}N(x_i)\Big| \qedhere
    \]
\end{proof}

We follow the approach outlined in Section~\ref{sec:technical-overview}: We consider a hypergraph representation $\mathcal{H}$ of a given instance, and explore whether there exists an optimal solution $S$ consisting of $k$ vertices, such that the subhypergraph induced on $S$ admits a ''balanced'' cut.
We then proceed to make a win-win argument: If such an optimal solution exists, we argue that we can reduce this instance to an instance of Max-Weight-Triangle problem with small weights, which we can then solve efficiently. 
Otherwise, we argue that the obstructions to such a cut have a nice structure, so that we can enumerate them efficiently, and after guessing only constantly many such obstructions, we obtain a self reduction to a smaller instance that has an optimal solution with a ''balanced'' cut in the hypergraph representation. 
In the following paragraphs, we proceed to formally introduce the notation and terminology required to construct such an algorithm.

\paragraph{Arity-Reducing Hypercuts and Bundles} 
Let $H$ be a hypergraph.
An \emph{arity-reducing hypercut} of $H$ is a partition of vertices of $H$ into $d$ sets $S_1,\dots, S_d$ such that there is no edge crossing all $d$ parts (i.e. for any $d$-tuple of vertices $s_1\in S_1,\dots, s_d\in S_d$ it holds that $\{s_1,\dots, s_d\}\not\in E(H)$).
For the rest of the paper, we will work only with the $3$-uniform hypergraphs, hence when we talk about arity-reducing hypercuts, we will always assume that $d=3$.
If $H$ has $k$ vertices, we say that an arity-reducing hypercut $S_1, S_2, S_3$ is \emph{balanced} if $|S_1| = \lceil k/3\rceil$, $|S_2| = \lceil (k-1)/3\rceil$, $S_3 = \lfloor k/3\rfloor$.
For a given bipartite graph $G=(X\cup Y,E)$, let the \emph{hypergraph representation of $G$} denoted $\mathcal{H}(G)$ be the $3$-uniform hypergraph constructed as follows. Let $V(\mathcal{H}(G)) = X$, and for each triple of vertices $x_1, x_2, x_3\in X$, let $\{x_1,x_2,x_3\}\in E(\mathcal{H}(G))$ if and only if there is a vertex $y\in Y$ such that $y\in N(x_1)\cap N(x_2)\cap N(x_3)$.
We say that a set $S\subseteq X$ \emph{admits} a balanced arity-reducing hypercut, if there exists a balanced arity-reducing hypercut in the subhypergraph of $\mathcal{H}(G)$ induced by $S$. 
Let $c$ be a non-negative integer and define \emph{$c$-bundle} recursively as follows. A $0$-bundle is a set consisting of a single vertex in $X$. Given a $c$-bundle $B$, let $u,v\in X\setminus B$ be such that for some vertex $b\in B$ it holds that $\{u,v,b\}\in E(\mathcal{H}(G))$. Then $B\cup \{u,v\}$ is a $(c+1)$-bundle.
Clearly each $c$-bundle contains exactly $1+2c$ many vertices.
A $c$-bundle $B$ is called \emph{maximal} if for no pair $u,v$ in $X$, $B\cup \{x,y\}$ forms a $(c+1)$-bundle.
We now proceed to show that large bundles can be thought of as obstructions for balanced arity-reducing hypercuts and in particular that for any bipartite graph $G = (X\cup Y, E)$,
it suffices to remove at most two bundles from $X$, such that the remaining part admits a balanced arity-reducing hypercut in $\mathcal{H}(G)$.
\begin{lemma}\label{lemma:bundles-and-cuts}
    Let $G=(X\cup Y,E)$ be a bipartite graph. 
    There exist sets $D_1, D_2$ such that the following conditions are satisfied:
    \begin{itemize}
        \item Each $D_i$ is either empty or a $c$-bundle for some $c\ge 0$.
        \item The set $X\setminus(D_1\cup D_2)$ admits a balanced arity-reducing hypercut.
    \end{itemize}
\end{lemma}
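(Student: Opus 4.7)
The plan is to construct $D_1$, $D_2$, and the balanced arity-reducing hypercut together via a greedy bundle decomposition of $X$, relying on the fact that at most two bundles in any partition can substantially exceed one third of $|X|$ in size.

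First, I would greedily partition $X$ into pairwise disjoint bundles: iterate over vertices of $X$ in some order, and for each still-unpacked vertex $v$ seed a new $0$-bundle $\{v\}$ and extend it as long as possible by repeatedly finding two currently unpacked vertices $u_1, u_2$ and some $b$ already in the current bundle such that $\{b, u_1, u_2\} \in E(\mathcal{H}(G))$, adding $\{u_1, u_2\}$. When no such extension exists, freeze the bundle, mark its vertices as packed, and continue. This yields disjoint bundles $B_1, \dots, B_t$ partitioning $X$, each maximal relative to the vertices that were still unpacked at its freezing time.

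Second, I would sort the bundles by size, $|B_1| \ge |B_2| \ge \cdots \ge |B_t|$, and set $D_1 := B_1$ and $D_2 := B_2$, allowing either to be left empty (note that a $0$-bundle is also a single vertex and permitted by the statement). By a pigeonhole bound, at most two bundles of any partition of $X$ can have size strictly exceeding $\lceil (|X| - |D_1| - |D_2|)/3 \rceil$, since three such bundles would sum to more than the available total; therefore, after removing $D_1, D_2$, every residual bundle has size at most the target bin size.

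Third, I would distribute the residual bundles $B_3, \dots, B_t$ into three parts $S_1, S_2, S_3$ greedily, always appending the next bundle to the currently smallest part. For the arity-reducing property I would appeal to maximality of the packing: if a hyperedge $\{v_1, v_2, v_3\}$ crossed all three parts with $v_i \in S_i$, then the bundle $B_j \subseteq S_1$ containing $v_1$ could have been extended by $\{v_2, v_3\}$ at freezing time, contradicting maximality. This step requires the processing order to guarantee that $v_2, v_3$ were still unpacked when $B_j$ froze, which I expect to handle either by processing larger bundles first, or by an additional clean-up pass in which we iterate re-extensions until no further extension is possible across the current decomposition.

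The main obstacle I anticipate is forcing the three parts to be \emph{exactly} balanced, with sizes differing by at most one, rather than only approximately balanced. The vanilla greedy distribution only guarantees an imbalance bounded by the largest residual bundle size minus one, which can still be a constant fraction of $|X|$. Closing this gap likely requires exploiting finer flexibility: tuning $|D_1|$ and $|D_2|$ (since bundles come in odd sizes $1, 3, 5, \dots$, the residual total $|X| - |D_1| - |D_2| \bmod 3$ can be adjusted), or swapping individual residual bundles between the $S_i$ after the greedy pass. I expect a careful case analysis on the relative sizes of $|B_1|$ and $|B_2|$ compared to $\lceil |X|/3 \rceil$ to suffice, with $D_1$ and $D_2$ absorbing precisely the ``overflow'' beyond balanced thirds.
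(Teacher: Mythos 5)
Your high-level plan --- maximal bundle decomposition, maximality to kill crossing hyperedges, two overflow sets --- matches the paper, but the instantiation fails at exactly the balancing step you flag as the obstacle, and your suggested fixes do not close the gap. Setting $D_1, D_2$ to the two largest bundles removed \emph{whole} does not work: with bundles of sizes $35$, $35$, $30$ on $|X| = 100$, removing the two size-$35$ bundles leaves $k' = 30$ with target bins of size $10$ and a single indivisible residual bundle of size $30$. Your pigeonhole claim is also false as stated --- three bundles exceeding $\lceil (|X| - |D_1| - |D_2|)/3 \rceil$ sum to more than $|X| - |D_1| - |D_2|$, not more than $|X|$, and $D_1, D_2$ are not part of the residual mass. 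Removing more shrinks the target bin further, so removing whole bundles is self-defeating, and neither adjusting the parity of $k'$ nor swapping intact residual bundles among the $S_i$ can compensate for a single residual bundle larger than a bin.

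The paper's move is to never remove the two largest whole. It greedily distributes all bundles \emph{except} the two largest, giving imbalance bounded by the third-largest bundle, which is at most the size of either remaining bundle. It then peels a sub-bundle off $B_{\ell-1}$ --- peeling a root vertex then pairs along $B_{\ell-1}$'s own construction order, so the peeled-off set $B''_{\ell-1}$ is by construction itself a $c$-bundle --- and places the kept complement $B'_{\ell-1}$ into the deficient part, sized (by the pair-at-a-time peeling) to bring it within one of the largest part; the same is done with $B_\ell$ for the third part. The two peeled sub-bundles become $D_1, D_2$. The freedom your construction discards is precisely the ability to keep part of the two biggest bundles inside the cut while removing a piece that is itself bundle-shaped; your closing sentence gestures at ``$D_1, D_2$ absorbing precisely the overflow,'' but removing whole bundles makes this unrealizable. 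You also correctly flag that a sequential packing only yields maximality relative to vertices unpacked at freeze time; the paper avoids this by directly taking the set of all globally maximal bundles, which it shows forms a partition of $X$.
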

\begin{proof}
    Let $\mathcal B_\alpha$ be the set containing all maximal $\alpha$-bundles for any $\alpha\geq 0$ and define $\mathcal B := \bigcup_{\alpha \geq 0} \mathcal B_\alpha$. 
    Note that any distinct pair $B_1,B_2\in \mathcal B$ is disjoint, since if there is a vertex $v\in B_1\cap B_2$, then $B_1\cup B_2$ is also a bundle, hence by maximality $B_1 = B_2$.
    Moreover, $\mathcal B$ forms a partition of $X$.
    Let~$\ell$ denote the value $|\mathcal B|$ and let $B_1,\dots, B_\ell$ be the bundles from $\mathcal B$ ordered by size in the increasing order.
    We build a partition of bundles $B_1,\dots, B_{\ell-2}$ into three sets greedily as follows. 
    Initially set $S_1 = S_2 = S_3 = \emptyset$.
    Iterate over $B_1,\dots, B_{\ell-2}$ and in each iteration $B_i$ put the set $B_i$ in the set~$S_j$ that is the smallest so far.
    Note that after the iteration $B_i$ it holds that $||S_p|-|S_q||\leq |B_i|$ for each $p,q\in[3]$.
    In particular, after the last iteration, we have $||S_p|-|S_{q}||\leq |B_{\ell-2}|$.
    Without loss of generality assume that after last iteration it holds $|S_1|\geq |S_2|\geq |S_3|$. 
    We claim we can find a subset $B_{\ell-1}'$ of $B_{\ell-1}$ such that the following conditions hold:
    \begin{itemize}
        \item $|S_1|-|S_2\cup B_{\ell-1}'| \in \{0,1\}$
        \item $B_{\ell-1}'':= B_{\ell-1} \setminus B_{\ell-1}'$ is either a $c$-bundle (for some $c\geq 0$), or empty.
    \end{itemize}
    Indeed, since $||S_1|-|S_{2}||\leq |B_{\ell-2}|\leq B_{\ell-1}$, we can construct the desired bundle $B_{\ell-1}''$ recursively as follows. 
    If $|S_1|-|S_2\cup B_{\ell-1}'| \in \{0,1\}$, do nothing. 
    Otherwise, find a pair of vertices $\{x,y\}\subseteq B_{\ell-1}'$ (unless it is the first iteration, then take a single vertex $x$) such that $B_{\ell-1}''\cup \{x,y\}$ is a bundle. 
    Set $B_{\ell-1}'' = B_{\ell-1}''\cup \{x,y\}$ and $B_{\ell-1}'= B_{\ell-1}'\setminus \{x,y\}$ and recurse.
    This procedure clearly terminates with the partition of $B_{\ell-1}$ satisfying both desired conditions.
    Now repeat the same construction with $S_1$, $S_3$ and $B_\ell$ to get the sets $B_{\ell}', B_{\ell}''$ such that $|S_1|-|S_3\cup B_{\ell}'| \in \{0,1\}$ and $B_{\ell}''$ is a bundle.
    It is now straightforward to verify that $S_1, S_2\cup B_{\ell-1}', S_3\cup B_{\ell}'$ (up to reordering) forms a balanced arity-reducing hypercut of the subhypergraph of $\mathcal{H}(G)$ induced on $X\setminus (B_{\ell-1}''\cup B_{\ell}'')$, where $B_{\ell-1}'',B_{\ell}''$ are either empty, or form bundles.
\end{proof}
For any optimal solution $S$, by applying the lemma above on the hypergraph $\mathcal{H}(G)[S]$, we can conclude that we only need to guess two bundles, such that the remaining part of the solution admits a balanced arity-reducing hypercut.
It only remains to argue that we can exploit the structure of these bundles to be able to efficiently enumerate them.
\begin{lemma}\label{lemma:listing-bundles}
    Let $G=(X\cup Y,E)$ be a bipartite graph with $|X| = n$ and $\max_{x\in X} \deg(x) = \Delta_s$, $\max_{y\in Y}\deg(y) = \Delta_f$. For any fixed constant $c$, we can list all $c$-bundles in $X$ in time $\bigO(n\Delta_s^c\Delta_f^{2c})$.
\end{lemma}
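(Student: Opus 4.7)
The plan is to prove the lemma by induction on $c$, mirroring the recursive definition of a $c$\=/bundle. The base case $c=0$ is immediate: every $0$\=/bundle is a single vertex of $X$, so listing them takes time $\bigO(n)$, which matches the bound $\bigO(n\Delta_s^0\Delta_f^0)$.

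For the inductive step, assume inductively that we have an enumeration of all $c$\=/bundles in time $\bigO(n\Delta_s^c\Delta_f^{2c})$. To produce all $(c+1)$\=/bundles, I would iterate over each listed $c$\=/bundle $B$ and extend it by brute force according to the recursive definition. Concretely, for each $B$ I would try every anchor vertex $b\in B$ (only $1+2c = \bigO(1)$ choices, since $c$ is a fixed constant), then every candidate common neighbor $y\in N_G(b)\subseteq Y$ (at most $\Delta_s$ choices since $\deg_G(b)\le \Delta_s$), and then every unordered pair $\{u,v\}\subseteq N_G(y)\setminus B$ (at most $\binom{\Delta_f}{2} = \bigO(\Delta_f^2)$ choices since $\deg_G(y)\le \Delta_f$). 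By construction $\{u,v,b\}\in E(\mathcal{H}(G))$ witnessed by $y$, so $B\cup\{u,v\}$ is a valid $(c+1)$\=/bundle, and conversely any $(c+1)$\=/bundle arises this way from some $c$\=/bundle $B$ and some witness $(b,y,u,v)$. The total work is therefore at most
\[
\bigO(n\Delta_s^c\Delta_f^{2c}) \cdot (1+2c) \cdot \Delta_s \cdot \bigO(\Delta_f^2) \;=\; \bigO(n\Delta_s^{\,c+1}\Delta_f^{\,2(c+1)}),
\]
as required.

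The one subtlety is that different witness tuples $(B,b,y,u,v)$ can yield the same $(c+1)$\=/bundle, so the procedure above enumerates \emph{with repetition}. To obtain an actual list of distinct $(c+1)$\=/bundles within the same running time, I would deduplicate on the fly: each produced set of size $1+2(c+1)$ can be canonicalized (e.g.\ by sorting its vertices according to a fixed ordering of $X$) in time $\bigO(1)$ for constant $c$, and stored in a hash table (or balanced tree) keyed by this canonical form, so that duplicates are discarded in $\bigO(1)$ expected time per generated tuple. This only affects the hidden constant depending on $c$, and does not change the asymptotic bound $\bigO(n\Delta_s^{\,c+1}\Delta_f^{\,2(c+1)})$.

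I do not foresee a genuine obstacle here: the main point is simply that the degree bounds $\Delta_s$ on $X$\=/vertices and $\Delta_f$ on $Y$\=/vertices exactly match the branching structure of the recursive definition, which is why each extension step costs $\Delta_s\cdot\Delta_f^2$ and the total cost telescopes to $n\Delta_s^c\Delta_f^{2c}$. The only care needed is (i) noting that the number of anchor choices $b\in B$ is $1+2c$, a constant absorbed into the $\bigO$, and (ii) handling duplicate generation as described above.
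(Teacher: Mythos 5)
Your proof is correct and follows essentially the same inductive argument as the paper: base case $c=0$ trivial, then extend each $c$-bundle by choosing a witness $y\in Y$ (at most $(1+2c)\Delta_s$ candidates across anchors) and a pair from $N(y)$ (at most $\binom{\Delta_f}{2}$), giving the same telescoping bound. Your explicit remark about deduplicating repeated $(c+1)$-bundles is a small additional care the paper leaves implicit, but it does not change the approach.
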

\begin{proof}
    We prove this by induction on $c$. 
    For $c=0$, this bound is trivial. 
    Assume now that the asserted bound holds for some $c\geq 0$.
    Any $(c+1)$-bundle is by definition obtained by extending some $c$-bundle, by introducing two new vertices to it.
    By induction hypothesis, there are $\bigO(n\Delta_s^c\Delta_f^{2c})$ $c$-bundles to choose from to extend.
    For any $c$-bundle $S$, there are at most $(1+2c)\Delta_s = \bigO(\Delta_s)$ vertices in $Y$ that are adjacent to at least one vertex $v\in S$. 
    We can now simply iterate over all of those vertices and in time $\bigO(\binom{\Delta_f}{2}) = \bigO(\Delta_f^2)$ choose any pair of vertices that can be added to $S$ to form a $(c+1)$-bundle.
    Observe that this procedure lists all $(c+1)$-bundles in time $\bigO(n\Delta_s^{c+1}\Delta_f^{2+2c})$ as desired.
\end{proof}
These tools allow us to now construct a surprisingly simple algorithm for Partial $k$-Dominating Set and in particular to prove \autoref{theorem:partial-dom-algorithm}.
\subsection{Algorithm for Partial $k$-Dominating Set}
\begin{algorithm}
\begin{algorithmic}[1]
    \Procedure{Partial-DS}{$X,Y,k$}
    \State $t'\gets 0$
    \For{bundle $S_1$ with $0 \leq |S_1| \leq k$}
        \For{bundle $S_2$ with $0 \leq |S_1| + |S_2| \leq k$ and $S_1\cap S_2 = \emptyset$}
            \State $k' \gets k - |S_1| - |S_2|$
            \State $V_1 \gets \binom{X}{\lceil k'/3\rceil}$,  $V_2 \gets \binom{X}{\lceil (k'-1)/3\rceil}$, $V_3 \gets \binom{X}{\lfloor k'/3\rfloor}$
            \State $t'\gets \max \{t', |N(S_1)| + |N(S_2)| +\text{max-weight-triangle}(V_1, V_2, V_3, Y-N(S_1)-N(S_2))\}$
        \EndFor
    \EndFor 
    \Return $t'$
    \EndProcedure
\end{algorithmic}
\caption{}
\label{alg:partial-dom-alg}
\end{algorithm}
We now proceed to give a simple and efficient algorithm for the Partial $k$-Dominating Set problem. In fact, by copying the vertex set of a given graph twice (setting $X = Y = V(G)$) and adding edges between $x\in X$, $y\in Y$ if and only if $y$ is dominated by $x$ in $G$, we reduce Partial $k$-Dominating Set to a special case of the Max-$k$-Cover problem, where $|X|= |Y| = n$ and $\Delta_f = \Delta_s = \Delta$.
We will focus on solving this slightly more general problem and any algorithm for this problem running in time $T(n,\Delta)$ clearly implies the existence of an algorithm solving Partial $k$-Dominating Set in $\bigO(T(n,\Delta))$.

Let $G = (X\cup Y,E)$ be a bipartite graph, and for fixed positive integers $k_1,k_2,k_3$ let $V_1 \subseteq \binom{X}{k_1}, V_2\subseteq \binom{X}{k_2}, V_3\subseteq \binom{X}{k_3}$ (i.e. each vertex in $V_i$ corresponds to a subset of $X$ of size $k_i$). 
To make a distinction between vertices in $G$ and those in $V_i$, we will call vertices in $V_i$ \emph{nodes} and denote them by using the overline notation (i.e. $\overline{v}\in V_i$). Furthermore, for simplicity, if a vertex $x\in X$ is contained in the set corresponding to the node $\overline{v}\in V_i$, we will denote this by $x\in \overline{v}$. This allows us to use the set-theoretic notions (union, intersection, etc.) directly on the nodes.
Let max-weight-triangle$(V_1, V_2, V_3, Y)$ be an algorithm that constructs a double-weighted complete tripartite graph~$T$ with parts $V_1, V_2, V_3$, where the weight of each node $\overline v$ is equal to the number of vertices in $Y$ that are adjacent to some vertex in the set corresponding to $\overline v$. That is $w(\overline v) := |\bigcup_{x\in \overline v}N(x)|$. Moreover, the weight of each edge $\{\overline u,\overline v\}$ is equal to the negative number of vertices in $Y$ that are adjacent to both a vertex in the set corresponding to $\overline u$ \emph{and} a vertex in the set corresponding to~$\overline v$. That is $w(\overline u,\overline v) = -|\bigcup_{x\in \overline u, x'\in \overline v}N(x)\cap N(x')|$. After constructing this graph, the algorithm then finds a triangle in $T$ with the maximum weight in this graph. 
By applying the inclusion-exclusion principle, we can show that the weight of any triangle $\overline u,\overline v,\overline z$ in $T$ is bounded by the number of vertices in $Y$ that are adjacent to at least one vertex in $\overline u\cup \overline v\cup \overline z$.
\begin{lemma}\label{lemma:triangle-weight-bound-inclusion-exclusion}
    Given a bipartite graph $G = (X\cup Y,E)$, let $T$ be a weighted complete tripartite graph constructed as above. Then for any triangle $\overline u ,\overline v,\overline z$ in $T$ it holds that
    \begin{enumerate}[label=(\roman*)]
        \item \label{lemma:triangle-weight-bound-inclusion-exclusion:item1} $w(\overline u,\overline v,\overline z) \leq \big|\bigcup_{x\in \overline u\cup \overline v\cup \overline z} N(x)\big|.$
        \item \label{lemma:triangle-weight-bound-inclusion-exclusion:item2} If $\overline u,\overline v,\overline z$ is an arity-reducing hypercut of $\mathcal H(G)[\overline u\cup \overline v\cup \overline z]$, then $w(\overline u,\overline v,\overline z) = \big|\bigcup_{x\in \overline u\cup \overline v\cup \overline z} N(x)\big|$ 
    \end{enumerate}
\end{lemma}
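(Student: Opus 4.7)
The plan is to prove both parts simultaneously by a per-element analysis of $Y$. Fix any $y \in Y$ and let $a, b, c \in \{0,1\}$ denote the indicator that $y$ has a neighbor in $\overline{u}$, $\overline{v}$, $\overline{z}$ respectively, i.e.\ $a = \mathbb{1}[\exists x \in \overline{u}: y \in N(x)]$ and analogously for $b, c$. I will compute how much $y$ contributes to each side and then sum over all $y \in Y$.

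For the right-hand side, $y$ contributes $1$ to $\bigl|\bigcup_{x\in \overline u\cup \overline v\cup \overline z} N(x)\bigr|$ if and only if at least one of $a,b,c$ equals $1$, which can be written as $1 - (1-a)(1-b)(1-c) = a+b+c - ab - ac - bc + abc$. For the left-hand side, $y$ contributes $a$ to $w(\overline{u})$ (since $w(\overline u)=\bigl|\bigcup_{x\in\overline u}N(x)\bigr|$ counts $y$ exactly when $a=1$), and analogously $b$ to $w(\overline v)$ and $c$ to $w(\overline z)$. By the definition of the edge weights, $y$ contributes $-ab$ to $w(\overline u,\overline v)$ (it lies in $\bigcup_{x\in\overline u, x'\in\overline v} N(x)\cap N(x')$ exactly when both $a=1$ and $b=1$), and analogously $-ac$ and $-bc$ to the remaining two edges. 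Hence the contribution of $y$ to the total triangle weight is $a+b+c - ab - ac - bc$.

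Subtracting, the right-hand side exceeds the left-hand side by exactly $abc$ for each $y \in Y$. Summing over $y$,
\[ \Bigl|\bigcup_{x\in \overline u\cup \overline v\cup \overline z} N(x)\Bigr| - w(\overline u,\overline v,\overline z) \;=\; \bigl|\{y \in Y : y \text{ has a neighbor in each of } \overline u, \overline v, \overline z\}\bigr| \;\ge\; 0,\]
which immediately proves \ref{lemma:triangle-weight-bound-inclusion-exclusion:item1}.

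For \ref{lemma:triangle-weight-bound-inclusion-exclusion:item2}, the above identity shows that equality holds precisely when no $y \in Y$ has a neighbor in each of $\overline u$, $\overline v$, $\overline z$ simultaneously. By the definition of the hypergraph representation $\mathcal H(G)$, a hyperedge $\{x_1,x_2,x_3\}$ with $x_1\in \overline u, x_2\in \overline v, x_3\in \overline z$ exists in $\mathcal H(G)[\overline u\cup\overline v\cup\overline z]$ exactly when there is a common neighbor $y \in N(x_1)\cap N(x_2)\cap N(x_3)$. Thus the assumption that $(\overline u,\overline v,\overline z)$ is an arity-reducing hypercut is equivalent to saying that no such $y$ exists, giving equality. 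No step here is genuinely difficult; the only mild subtlety is making sure the edge-weight contribution is correctly $-ab$ (rather than, say, something depending on the internal sizes of $\overline u,\overline v$), which follows directly because the union in the edge-weight definition collapses multiple witnesses of a single $y$ into one.
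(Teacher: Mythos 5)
Your proof is correct and follows essentially the same approach as the paper: the paper applies the set-theoretic inclusion--exclusion identity to $\bigl|\bigcup_{x\in\overline u\cup\overline v\cup\overline z}N(x)\bigr|$ directly and then observes that the triple-intersection term is nonnegative (and vanishes for a hypercut), whereas you prove the identical identity element-by-element via the Boolean expansion $1-(1-a)(1-b)(1-c)$. The two arguments are the same computation presented at different granularities, and your care in noting that the edge-weight contribution is $-ab$ rather than something depending on multiplicities is precisely the point of the unions in the edge-weight definition.
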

\begin{proof}
    By the principle of inclusion-exclusion, we have 
    \begin{align*}
        \big|\bigcup_{x\in \overline u\cup \overline v\cup \overline z} N(x)\big| & = |\bigcup_{x\in \overline u} N(x)\big| + |\bigcup_{x\in \overline v} N(x)\big| + |\bigcup_{x\in \overline z} N(x)\big| \\ & -\big|\bigcup_{x\in \overline u, x'\in \overline v} N(x)\cap N(x')\big|-\big|\bigcup_{x\in \overline u, x'\in \overline z} N(x)\cap N(x')\big|-\big|\bigcup_{x\in \overline v, x'\in \overline z} N(x)\cap N(x')\big| \\
        & +\big|\bigcup_{x\in \overline u, x'\in \overline v, x''\in \overline z} N(x)\cap N(x')\cap N(x'')\big|.
    \end{align*}
    Recall the definition of the weight function $w$ on $T$, plugging it in the above equation, we have 
    \begin{equation*}
    \begin{split}
        \big|\bigcup_{x\in \overline u\cup \overline v\cup \overline z} N(x)\big| & = w(\overline u) + w(\overline v) + w(\overline z)\\
        & + w(\overline u,\overline v) + w(\overline u,\overline z) + w(\overline v,\overline z)\\
        & +\big|\bigcup_{x\in \overline u, x'\in \overline v, x''\in \overline z} N(x)\cap N(x')\cap N(x'')\big|.
    \end{split}
    \end{equation*}
    By definition the weight of the triangle in $T$ is just $w(\overline u,\overline v,\overline z):= w(\overline u) + w(\overline v) + w(\overline z) + w(\overline u,\overline v) + w(\overline u,\overline z) + w(\overline v,\overline z)$, and since $\big|\bigcup_{x\in \overline u, x'\in \overline v, x''\in \overline z} N(x)\cap N(x')\cap N(x'')\big|\geq 0$, we get the desired inequality for \ref{lemma:triangle-weight-bound-inclusion-exclusion:item1}.
    Moreover, if $\overline u,\overline v,\overline z$ is an arity-reducing hypercut, then by definition for each $x\in \overline u, x'\in \overline v, x''\in \overline z$ it holds that $N(x)\cap N(x')\cap N(x'') = \emptyset$, and the equality in \ref{lemma:triangle-weight-bound-inclusion-exclusion:item2} follows.
\end{proof}
So far, we have used max-weight-triangle algorithm as a black box. However, it is known that finding a maximal weight triangle in a double-weighted graph with $n$ vertices can be done in the running time of computing $(\min,+)$-product of two $n\times n$ matrices (see e.g. \cite{VassilevskaW06}).
Moreover, Zwick proved in \cite{Zwick98} that if all the vertex and edge weights are integer in range $[-M,\dots, M]$, then this running time is at most $\bigO(Mn^{\omega})$.
We now adapt this argument to obtain a desired algorithm for our setting.
\begin{lemma}\label{lemma:max-weight-triangle-algorithm}
    Let $G = (X\cup Y,E)$ be a bipartite graph with each vertex in $X$ having degree at most $\Delta_s$. For a fixed positive integer $k$, let $k_1 = \lceil k/3\rceil,k_2 = \lceil (k-1)/3\rceil, k_3 = \lfloor k/3 \rfloor$, and $V_1 = \binom{X}{k_1},V_2 = \binom{X}{k_2},V_3 = \binom{X}{k_3}$. There is an algorithm max-weight-triangle that:
    \begin{itemize}
        \item Constructs a double-weighted complete tripartite graph $T = (V_1, V_2, V_3, E')$ with the weight of each node $\overline v$ assigned as $w(\overline v) := |\bigcup_{x\in \overline v}\{y\in Y \mid x,y\in E\}|$ and the weight of each edge $\{\overline u,\overline v\}$ defined as $w(\overline u,\overline v) = -|\bigcup_{x\in \overline u, x'\in \overline v}\{y\in Y \mid x,y\in E \land x',y\in E\}|$.
        \item Finds a triangle with maximum weight in $T$.
        \item Runs in time bounded by $\bigO(|X|^{\omega \lceil k/3\rceil)}\Delta_s)$.
    \end{itemize}
\end{lemma}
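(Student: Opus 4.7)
The plan is to separate the task into (i) constructing the weighted graph $T$ efficiently, and (ii) finding the maximum weight triangle in~$T$ via a known sub-cubic algorithm in the small-weight regime. The key observation is that all vertex and edge weights are integers in a small range: any node weight satisfies $0\leq w(\overline v)\leq k_1\Delta_s$, and any edge weight satisfies $-k_1k_2\Delta_s\leq w(\overline u,\overline v)\leq 0$. Hence all weights lie in $[-M,M]$ for $M=O(k^{2}\Delta_s)$, which for fixed $k$ is $O(\Delta_s)$.

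For step (i) I would first build, for every supernode $\overline v\in V_1\cup V_2\cup V_3$, an explicit representation of the set $N(\overline v)=\bigcup_{x\in \overline v}N(x)\subseteq Y$. Because each supernode has at most $k$ underlying vertices and each $x\in X$ has degree at most $\Delta_s$, we obtain $|N(\overline v)|\leq k\Delta_s$, and the list can be produced in $O(k\Delta_s)$ time per supernode, giving the node weights as a by-product. For the edge weights we want $|N(\overline u)\cap N(\overline v)|$ for all pairs. The cleanest implementation is via a single matrix product: let $B_i\in\{0,1\}^{V_i\times Y'}$ be the indicator matrix with $B_i[\overline u,y]=1$ iff $y\in N(\overline u)$, where $Y'\subseteq Y$ is restricted to vertices appearing in at least one $N(\overline v)$ (so $|Y'|\leq |X|\Delta_s$). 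Then $(B_1B_2^T)[\overline u,\overline v]=|N(\overline u)\cap N(\overline v)|$, and analogously for the other two pairs. Using (rectangular) matrix multiplication, these products fit within $O(|X|^{\omega\lceil k/3\rceil}\Delta_s)$ time, since the outer dimensions are $|V_i|=O(|X|^{\lceil k/3\rceil})$ and the inner dimension $|Y'|$ is dominated by $|X|\Delta_s$.

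For step (ii) I would invoke the Zwick-style reduction of max-weight triangle with integer weights in $[-M,M]$ to Boolean/standard matrix multiplication (as used, e.g., in~\cite{VassilevskaW06,Zwick98}): this runs in $O(M\cdot N^{\omega})$ time on an $N$-vertex graph. Applied to $T$ with $N=O(|X|^{\lceil k/3\rceil})$ and $M=O(k^{2}\Delta_s)$, this yields total running time
\[
O\bigl(|X|^{\omega\lceil k/3\rceil}\cdot k^{2}\Delta_s\bigr)=O\bigl(|X|^{\omega\lceil k/3\rceil}\Delta_s\bigr),
\]
absorbing the $k^{O(1)}$ overhead into the constant since $k$ is fixed. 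A small accounting detail is that Zwick's bound is usually stated for a graph with either node or edge weights; to handle both, one preprocesses by pushing each node weight $w(\overline v)$ onto all incident edges (splitting it in a symmetric way, e.g.\ adding $w(\overline v)/2$ to each incident edge after doubling all weights to stay in the integer range), which only blows up $M$ by a constant factor and leaves the time bound unchanged.

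The main obstacle I anticipate is the edge-weight construction: naively computing $|N(\overline u)\cap N(\overline v)|$ for each of the $\Omega(|X|^{2\lceil k/3\rceil})$ pairs would be too slow, and one must invest in the single matrix-multiplication step described above and in verifying that its cost is indeed absorbed into the target bound. Everything else—bounding the weights, constructing node weights, and invoking the max-weight triangle subroutine—is routine given the machinery recalled just before the lemma.
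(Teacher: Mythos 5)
Your overall structure is right, and the invocation of the small-weight max-weight-triangle subroutine matches the paper's. The main thing to flag is that the obstacle you anticipate --- that ``naively computing $|N(\overline u)\cap N(\overline v)|$ for each of the $\Omega(|X|^{2\lceil k/3\rceil})$ pairs would be too slow'' --- is in fact not an obstacle. Each intersection size can be computed in $O(k\Delta_s)$ time given the neighbourhood lists, so the whole pairwise computation costs $O(|X|^{k_1+k_2}\Delta_s)\leq O(|X|^{2\lceil k/3\rceil}\Delta_s)$, and since $\omega\geq 2$ this is already within the stated budget $O(|X|^{\omega\lceil k/3\rceil}\Delta_s)$. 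The paper exploits exactly this observation and does not need matrix products to build $T$ at all; the only place matrix multiplication enters is inside the Zwick-style max-weight-triangle call.

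Your alternative via $B_1B_2^{T}$ is correct as a concept, but the assertion that the rectangular products ``fit within $O(|X|^{\omega\lceil k/3\rceil}\Delta_s)$ time'' needs a nontrivial argument: the inner dimension $|Y'|$ can be much larger or smaller than the outer dimensions, and one has to bound $\MM(|X|^{k_1},|X|\Delta_s,|X|^{k_2})$ carefully (e.g.\ by blocking into square multiplications), using $\omega\geq 2$ to land back in the target bound. That calculation works out, but as written it is asserted rather than shown, and it buys nothing over the naive preprocessing. So your proof is correct in spirit, but it introduces an unnecessary detour to route around a problem that isn't there, and leaves the one nontrivial claim (the rectangular-multiplication cost) unproved. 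The weight bound $M=O(\Delta_s)$ and the node-to-edge weight pushing trick are both fine and consistent with the paper.
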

\begin{proof}
     We can first compute the weights of any node and pair of nodes in time $\bigO(|X|^{k_1+k_2}\Delta_s) \leq \bigO(|X|^{2\lceil k/3\rceil}\Delta_s)$, and since $\omega\geq 2$, the construction of the graph $T$ takes at most $\bigO(|X|^{\omega\lceil k/3\rceil)}\Delta_s)$ as desired.
    Now note that any node and edge weight is an integer in $\{-\Delta_s \cdot k/3,\dots, \Delta_s\cdot k/3\}$, which can be written as $\{-\bigO(\Delta_s),\dots, \bigO(\Delta_s)\}$ (assuming that $k$ is a fixed constant).
    Thus we may apply \cite{VassilevskaW06,Zwick98} to detect a maximum weight triangle in time $\bigO(|X|^{k_1\omega}\Delta_s) = \bigO(|X|^{\lceil k/3\rceil\omega}\Delta_s)$.\footnote{We could even improve this running time slightly by employing the fastest rectangular matrix multiplication algorithm, but for our use-case, this running time suffices.}
\end{proof}
We are now ready to construct an algorithm for Partial $k$-Dominating Set.
\begin{proof}[Proof (of \autoref{theorem:partial-dom-algorithm})]
    Given a graph $G = (V,E)$ with $n$ many vertices and maximal degree $\Delta$, construct the bichromatic instance $G':=(X\cup Y,E')$ by copying the vertex set twice and adding the edges naturally, as discussed above.
    Sort the vertices in $X$ by degree and remove the first $\max\{0, |X|-\Delta^2\}$ many, so that the size of $X$ remains bounded by $\min\{n,\Delta^2\}$. By \autoref{lemma:bounding-number-of-sets} this yields an equivalent instance.
    Finally, run Algorithm \ref{alg:partial-dom-alg} and report the output of this algorithm as the output of the original instance.
    \begin{claim}
        The algorithm described above yields the correct solution for the Partial $k$-Dominating Set problem.
    \end{claim}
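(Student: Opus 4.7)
The plan is to argue separately soundness (no iteration returns a value exceeding the optimum) and completeness (some iteration returns a value matching the optimum), having already observed that the preprocessing steps preserve the optimum: duplicating $V(G)$ into $X$ and $Y$ yields an equivalent bichromatic instance with $\Delta_s = \Delta_f = \Delta$, and truncating $X$ to the top $\min\{n,\Delta^2\}$ vertices by degree preserves the optimum by Lemma~\ref{lemma:bounding-number-of-sets}.

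For soundness, I would fix any iteration specified by bundles $S_1, S_2$ and any triangle $(\overline u, \overline v, \overline z) \in V_1 \times V_2 \times V_3$ considered inside \textsc{max-weight-triangle}$(V_1, V_2, V_3, Y \setminus N(S_1 \cup S_2))$. Applying Lemma~\ref{lemma:triangle-weight-bound-inclusion-exclusion}\ref{lemma:triangle-weight-bound-inclusion-exclusion:item1} to the bipartite graph whose $Y$-side is restricted to $Y \setminus N(S_1 \cup S_2)$ bounds the triangle weight by $\left|N(\overline u \cup \overline v \cup \overline z) \setminus N(S_1 \cup S_2)\right|$. Reading $|N(S_1)| + |N(S_2)|$ as $|N(S_1 \cup S_2)|$ (which is the algorithm's evident intent, since the bundles in the loops need not have disjoint neighborhoods), the value considered equals $|N(S_1 \cup S_2 \cup \overline u \cup \overline v \cup \overline z)|$, which is the objective of a set of at most $k$ vertices in $X$. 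Padding with arbitrary extra vertices does not decrease the objective, so this value is at most the optimum.

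For completeness, I would fix an optimum $S^* \subseteq X$ with $|S^*|=k$ and apply Lemma~\ref{lemma:bundles-and-cuts} to the bipartite subgraph $G'[S^* \cup Y]$. This produces $D_1, D_2 \subseteq S^*$, each either empty or a bundle in $\mathcal{H}(G'[S^* \cup Y])$, such that $S^* \setminus (D_1 \cup D_2)$ admits a balanced arity-reducing hypercut $(S_1^*, S_2^*, S_3^*)$ therein. The key monotonicity observation is that every hyperedge of $\mathcal{H}(G'[S^* \cup Y])$ is also a hyperedge of $\mathcal{H}(G')$, so $D_1$ and $D_2$ are bundles in $\mathcal{H}(G')$ as well, and are therefore enumerated by the outer loops; the edge case $D_i=\emptyset$ is handled by the $|S_i|=0$ branch. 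With $k' = k - |D_1| - |D_2|$ the sizes $|S_i^*|$ exactly match those prescribed for $V_1, V_2, V_3$, so the corresponding nodes form a triangle considered by \textsc{max-weight-triangle}. Moreover, deleting $Y$-vertices only removes hyperedges, so $(S_1^*, S_2^*, S_3^*)$ remains an arity-reducing hypercut in $\mathcal{H}$ of the restricted instance, and Lemma~\ref{lemma:triangle-weight-bound-inclusion-exclusion}\ref{lemma:triangle-weight-bound-inclusion-exclusion:item2} applies, yielding triangle weight exactly $|N(S^* \setminus (D_1 \cup D_2)) \setminus N(D_1 \cup D_2)|$. Summing with $|N(D_1 \cup D_2)|$, this iteration returns $|N(S^*)|$, which combined with soundness proves the claim. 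The main obstacle I foresee is the consistency between the hypergraph in which the bundle decomposition is guaranteed (induced on $S^*$) and the hypergraph over which the algorithm enumerates bundles (all of $\mathcal{H}(G')$); this is resolved precisely by the monotonicity remark above.
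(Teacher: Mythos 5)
Your proof takes the same route as the paper: soundness via item~(i) of Lemma~\ref{lemma:triangle-weight-bound-inclusion-exclusion}, completeness via Lemma~\ref{lemma:bundles-and-cuts} applied to the optimal solution together with item~(ii) of Lemma~\ref{lemma:triangle-weight-bound-inclusion-exclusion}. What you add, and what the paper's short proof leaves implicit, are two worthwhile points of rigor. First, you correctly flag that the pseudocode as literally written adds $|N(S_1)|+|N(S_2)|$, which may strictly exceed $|N(S_1\cup S_2)|$ when the two bundles, though vertex-disjoint, have overlapping neighborhoods (a genuine possibility for two distinct maximal bundles); the paper's one-line soundness argument glosses over this, and your reading of the algorithm as computing $|N(S_1\cup S_2)|$ is the right repair. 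Second, you make explicit the monotonicity step: Lemma~\ref{lemma:bundles-and-cuts} is applied to the hypergraph induced on the solution $S^*$, so one must observe that its bundles are also bundles of the full $\mathcal{H}(G')$ enumerated by the loops, and that deleting $Y$-vertices (as the algorithm does before the triangle call) can only remove hyperedges and hence preserves the arity-reducing property. The paper asserts the conclusion without spelling these out. In short: same method, but your write-up is tighter and exposes a small pseudocode imprecision that the paper should arguably fix.
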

    \begin{proof}
        Let $\textsc{opt} := \max_{x_1,\dots, x_k\in X}|N(x_1)\cup\dots \cup N(x_k)|$ and let $t$ be the value returned by the algorithm above.
        We prove that $\textsc{opt} = t$.
        By Lemma \ref{lemma:triangle-weight-bound-inclusion-exclusion}, it follows that $t\leq \textsc{opt}$.
        Fix vertices $x_1,\dots, x_k\in X$ such that $|N(x_1)\cup \dots \cup N(x_k)| = \textsc{opt}$. 
        By Lemma \ref{lemma:bundles-and-cuts}, there exist bundles $S_1, S_2$ such that the 
        set $\{x_1,\dots, x_k\}\setminus (S_1\cup S_2)$ admits a balanced arity-reducing hypercut. 
        In particular, this means that there is a partition of set $\{x_1,\dots, x_k\}\setminus (S_1\cup S_2)$ into three sets $A_1\in \binom{X}{\lceil k'/3\rceil}$, $A_2\in \binom{X}{\lceil (k'-1)/3\rceil}$, $A_3\in \binom{X}{\lfloor k'/3\rfloor}$, where $k'$ is the size of the set $\{x_1,\dots, x_k\}\setminus (S_1\cup S_2)$, such that no edge of $\mathcal{H}(G')$ crosses all three sets $A_1,A_2,A_3$.
        By construction, the algorithm will guess bundles $S_1, S_2$ at some iteration and since the remaining vertices admit a balanced arity-reducing hypercut, by Item \ref{lemma:triangle-weight-bound-inclusion-exclusion:item2} of Lemma \ref{lemma:triangle-weight-bound-inclusion-exclusion}, we obtain the inequality $t\geq \textsc{opt}$, as desired.
    \end{proof}
    \begin{claim}
        The algorithm above runs in time $\bigO\big((\Delta^{3/2k} + \min\{n, \Delta^{2}\}^{\omega/3k})\cdot \Delta^{c}\big)$, where $c$ is a constant independent on $k$.
    \end{claim}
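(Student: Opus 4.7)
The plan is to tally the total running time as a sum over all iterations of the outer double loop, using Lemma~\ref{lemma:listing-bundles} to count iterations and Lemma~\ref{lemma:max-weight-triangle-algorithm} to bound the work per iteration. Set $m := \min\{n, \Delta^2\}$, so that $|X| \leq m$ after the preprocessing step justified by Lemma~\ref{lemma:bounding-number-of-sets}.

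For fixed $c_1, c_2 \geq 0$, the number of pairs $(S_1, S_2)$ where $S_i$ is a $c_i$-bundle is at most $\bigO(m\Delta^{3c_1}) \cdot \bigO(m\Delta^{3c_2}) = \bigO(m^2 \Delta^{3(c_1+c_2)})$ by Lemma~\ref{lemma:listing-bundles} applied with $\Delta_s = \Delta_f = \Delta$; pairs where one or both $S_i$ are empty contribute only lower-order terms. For each such pair, $k' := k - |S_1| - |S_2| = k - 2 - 2(c_1 + c_2)$, and the call to max-weight-triangle runs in time $\bigO(m^{\omega \lceil k'/3 \rceil} \Delta)$ by Lemma~\ref{lemma:max-weight-triangle-algorithm}. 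Setting $c := c_1 + c_2$ and absorbing the $\bigO(k)$ ways of decomposing $c$ into $(c_1, c_2)$, the total contribution of all iterations for a given $c$ is
\[ \bigO\!\left( k \cdot m^{2 + \omega \lceil (k-2-2c)/3 \rceil} \Delta^{3c+1} \right). \]

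The crucial step is to bound this quantity, uniformly over $c \in \{0, 1, \dots, \lfloor (k-2)/2 \rfloor\}$, by $(m^{\omega k/3} + \Delta^{3k/2}) \Delta^{O(1)}$. Taking logarithms, the exponent is---up to $O(1)$ perturbations from the ceiling---a linear function of $c$ with slope $-\tfrac{2\omega}{3}\log m + 3\log\Delta$. By linearity, the maximum is attained at an endpoint. At $c = 0$: the bound becomes $m^{\omega k/3 + O(1)} \Delta \leq m^{\omega k/3} \Delta^{O(1)}$, using $m \leq \Delta^2$ to absorb the $m^{O(1)}$ factor. At $c = \lfloor (k-2)/2 \rfloor$: the ceiling vanishes and we get $m^2 \Delta^{3k/2 + O(1)} \leq \Delta^{3k/2 + O(1)}$, again using $m \leq \Delta^2$. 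The slope changes sign at $\Delta = m^{2\omega/9}$, splitting into the sub-cases $\omega \geq 9/4$ and $\omega \leq 9/4$, but the combined bound $m^{\omega k/3} + \Delta^{3k/2}$ covers both without a case split in the algorithm.

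Summing over the $\bigO(k)$ choices of $c$ adds one more $k$-factor, yielding a $k^{O(1)}$ overhead that is absorbed into $\Delta^{O(1)}$ when $\Delta \geq k$ (and in the remaining regime the $\bigO(n\Delta)$ input-reading term of the overarching theorem dominates anyway). The hard part is the pointwise linear-in-$c$ analysis: verifying that the ceiling and endpoint rounding perturb the exponent by at most a constant, and that the endpoint argument holds without needing to know the exact value of $\omega$. Once established, this gives the claimed running time $\bigO((\Delta^{3k/2} + m^{\omega k/3}) \Delta^{O(1)})$.
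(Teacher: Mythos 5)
Your proof is correct and follows essentially the same route as the paper's: count the bundle pairs via Lemma~\ref{lemma:listing-bundles}, bound each max-weight-triangle call via Lemma~\ref{lemma:max-weight-triangle-algorithm}, and then bound the per-$c$ contribution by noting its exponent is (up to $O(1)$ ceiling slack) linear in $c$, so the maximum occurs at an endpoint, yielding the two terms $m^{\omega k/3}$ and $\Delta^{3k/2}$. The paper phrases this as an explicit case split on whether $\omega \lessgtr 9/(2\gamma)$ where $\min\{n,\Delta^2\} = \Delta^\gamma$; your appeal to linearity of the exponent is a cleaner way of saying the same thing and makes it transparent why the dichotomy is precisely a sign condition on the slope.
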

    \begin{proof}
        Construction of graph $G'$ takes only linear time. 
        After removing the light vertices, we are left with only $\min\{n,\Delta^2\}$ many vertices. 
        By Lemma \ref{lemma:listing-bundles}, for any fixed $0\leq c\leq (k-1)/2$, there are at most $\bigO(|X|\Delta^{3c})$ many $c$-bundles in $G'$ and each $c$-bundle has size $1+2c$. 
        \marvin{the following assumption is irrelevant for time analysis: }\mirza{This is just to give the intuition where the terms are coming from. But I can also remove this?} \marvin{well, you repeat the correctness argument. But it suffices to analyze the running time of Algorithm 1 as given.}Assume that we are given a promise that an optimal solution admits a balanced arity-reducing hypercut after removing two bundles $S_1, S_2$, such that $S_1$ is a $c_1$-bundle and $S_2$ is a $c_2$-bundle. Then we have a total of $\bigO(\min\{n,\Delta^2\}^2\Delta^{3c_1+3c_2})$ iterations to guess $S_1, S_2$, and the remaining number of vertices to guess with the max-weight-triangle algorithm is $k':= k-(2c_1 + 2c_2 + 2)$.
        Thus, by iterating over all possible values of $c_1,c_2$, this gives us the following running time (for simplicity, we will drop the constant factors below). 
        \begin{align*}
        T_k(n,\Delta) \leq &\; \min\{n,\Delta^2\}^{\omega\lceil k/3\rceil}\Delta & \text{(both $S_1,S_2$ are empty)}\\ 
        &+ \sum_{0\leq c_1\leq (k-1)/2} \Big( \min\{n,\Delta^2\}^{\omega\lceil(k-2c_1-1)/3\rceil+1}\Delta^{3c_1+1} & \text{($S_1$ is a $c_1$-bundle, $S_2$ is empty)}
        \\&+ \sum_{0\leq c_2\leq (k-2c_1-2)/2} \min\{n,\Delta^2\}^{\omega\lceil(k-2c_1-2c_2-2)/3\rceil+2}\Delta^{3c_1+3c_2+1} \Big) & \text{($S_1, S_2$ are $c_1, c_2$-bundles, resp.)}
        \end{align*}
        Write $n = \Delta^{\gamma}$ and notice that we can assume without loss of generality that $1\leq \gamma\leq 2$. 
        In particular, $\gamma \geq 1$ is a trivial lower bound, since no vertex can have degree larger than the number of vertices in graph, and if $\gamma > 2$, we can apply \autoref{lemma:bounding-number-of-sets} to remove all but $\Delta^2$ many vertices and hence $\gamma \leq 2$. 
        We can now plug in $\min\{n,\Delta^2\}\leq \Delta^\gamma$ in the time complexity analysis above and compute
        \begin{align*}
        T_k(n,\Delta) \leq & \; \Delta^{\gamma \omega\lceil k/3\rceil+1} & \\ 
        &+ \sum_{0\leq c_1\leq (k-1)/2} \Big(\Delta^{\gamma \omega\lceil (k-2c_1-1)/3\rceil+\gamma}\Delta^{3c_1+1} &
        \\& + \sum_{0\leq c_2\leq (k-2c_1-2)/2} \Delta^{\gamma \omega\lceil(k-2c_1-2c_2-2)/3\rceil+2\gamma}\Delta^{3c_1+3c_2+1} \Big) &
        \end{align*}
        We now prove that if $\omega\leq \frac{9}{2\gamma}$, then this running time is at most $\bigO(\Delta^{3/2k + 5})$. In order to do that, we bound each of the summands by this value, and since both sums range only over $f(k) = \bigO(1)$ many values $c_1, c_2$, we get the desired.
        We start with the simplest summand first.
        \begin{align*}
            \Delta^{\gamma \omega\lceil k/3\rceil+1} &\leq \Delta^{\gamma \omega (k/3 + 2/3) +1} & (\lceil k/3\rceil \leq \tfrac{k+2}{3}) \\
            & \leq \Delta^{3/2(k + 2)+1} & (\omega \leq \tfrac{9}{2\gamma}) \\
            & = \Delta^{3/2k +4}
        \end{align*}
        Moving on to the second part of our expression.
        \begin{align*}
            \Delta^{\gamma \omega\lceil(k-2c_1-1)/3\rceil+\gamma}\Delta^{3c_1+1} & \leq \Delta^{\gamma \omega(k-2c_1+1)/3+\gamma}\Delta^{3c_1+1} & (\text{using }\lceil\tfrac{x}{3}\rceil \leq \tfrac{x+2}{3}) \\
            & \leq \Delta^{3/2(k-2c_1 + 1)+\gamma}\Delta^{3c_1+1} & (\omega\leq \tfrac{9}{2\gamma}) \\ 
            & \leq \Delta^{3/2(k+1)+3} & (\gamma \leq 2)\\
            & < \Delta^{3/2k+5}
        \end{align*}
        Using the exact same approach, we can bound the last part of the expression as well.
        \begin{align*}
            \Delta^{\gamma\omega\lceil(k-2c_1-2c_2-2)/3\rceil+2\gamma}\Delta^{3c_1+3c_2+1} \leq \Delta^{3/2k+5}
        \end{align*}
        This implies that if $\omega\leq \tfrac{9}{2\gamma}$, then $T_k(n,\Delta) \leq f(k)\cdot \Delta^{3/2k+5} = \bigO(\Delta^{3/2k+5})$.

        We remark that for the case when $\Delta$ is small in comparison to the number of vertices in the input graph (i.e. $\gamma \geq 2$), we have $\frac{9}{2\gamma} \leq 2.25$.
        It is still open whether $\omega\leq 2.25$, and at the time of submitting this paper, the best known upper bound is $\omega < 2.371552$ \cite{WilliamsXXZ24}. 
        In fact, with this value of $\omega$, for all values of $\gamma$ larger than roughly $1.898$, the computation above yields no bound on the running time of our algorithm.
        Thus, it still remains to consider the time complexity of our algorithm in terms of state-of-the-art value of $\omega$.
        To this end, we prove that if $\omega>\frac{9}{2\gamma}$, then each summand is bounded by $\Delta^{\gamma \omega k/3 + 5}$. Clearly, the first summand $\Delta^{\gamma \omega \lceil k/3\rceil + 1}$ satisfies this bound, so we can move on to the remaining terms.
        \begin{align*}
            \Delta^{\gamma\omega(k-2c_1+1)/3+\gamma}\Delta^{3c_1+1} & \leq \Delta^{\gamma\omega/3(k+1)}\Delta^{3c_1 -(2\gamma\omega/3) c_1 + \gamma+1} & \text{(rearranging the terms)} \\
            &< \Delta^{2\omega/3(k+1)}\Delta^{\gamma + 1} & (\omega>\tfrac{9}{2\gamma}) \\
            &< \Delta^{2\omega k/3 + 5} & (\omega<3, \gamma \leq 2)
        \end{align*} 
        Finally, we apply the similar approach for the last summand.
        \begin{align*}
            \Delta^{\gamma \omega\lceil(k-2c_1-2c_2-2)/3\rceil+2\gamma}\Delta^{3c_1+3c_2+1} & \leq  \Delta^{\gamma\omega(k-2c_1-2c_2)/3+2\gamma}\Delta^{3c_1+3c_2+1} \\
            & = \Delta^{\gamma\omega k/3+2\gamma}\Delta^{3c_1+3c_2+1 - \gamma\omega(2c_1+2c_2)/3} & \text{(rearranging)}\\
            & < \Delta^{\gamma \omega k/3+5} & (\omega>\tfrac{9}{2\gamma}, \gamma \leq 2) \\
        \end{align*}
        We can thus conclude that the running time of our algorithm is bounded by either $f(k)\Delta^{\gamma\omega k/3+5}$, or by $f(k)\Delta^{3/2k+5}$, depending on the value of $\omega$.
        Hence, regardless of value of $\omega$, we can bound \[T_k(n,\Delta)\leq \bigO\big((\Delta^{\gamma\omega k/3} + \Delta^{3/2k})\Delta^5\big) = 
        \bigO\big((\min\{n,\Delta^2\}^{\omega k/3} + \Delta^{3/2k})\Delta^5\big),\] 
        as desired.
    \end{proof}
    By combining the previous two claims, whenever $\gamma>3/2$, by running the algorithm above we can achieve the running time $\bigO\big((\Delta^{3k/2} + \min\{n,\Delta^{2}\}^{k\omega/3})\cdot \Delta^{5}\big)$. On the other hand, the case $\gamma\leq 3/2$ can be handled by the baseline $\bigO(n^{k+o(1)})$ algorithm, and thus, we can solve the Partial $k$-Dominating Set in time 
    \[
    T_{k}(n,\Delta)\leq \bigO\big((\min\{n,\Delta^{3/2}\}^k + \min\{n,\Delta^{2}\}^{k\omega/3})\cdot \Delta^{5}\big),
    \]
    as desired.
\end{proof}
\subsection{Extending to Max $k$-Cover}\label{sec:max-k-cover}
In the last subsection, we have shown that using Algorithm \ref{alg:partial-dom-alg}, combined with some preprocessing, we can efficiently solve the special case of Max-$k$-Cover, where $\Delta_f = \Delta_s = \Delta$ and $|X|=|Y|$. In fact, we proceed to show that running the same algorithm on any instance of Max-$k$-Cover yields a correct solution.
\begin{lemma}\label{lemma:max-k-cover-algorithm-large-universe}
    Given a bipartite graph $G = (X\cup Y,E)$ with $|X| = n$ and every $x\in X$ satisfying $\deg(x)\leq \Delta_s$ and respectively every $y\in Y$ satisfying $\deg(y)\leq \Delta_f$, we can find the $k$ vertices $x_1,\dots, x_k\in X$ maximizing the value $|N(x_1)\cup\dots\cup N(x_k)|$ in time $\bigO\Big(((\Delta_f\cdot \sqrt{\Delta_s})^k + (\min\{n,\Delta_s\Delta_f\})^{k\omega/3})(\Delta_f\Delta_s)^5\Big)$.
\end{lemma}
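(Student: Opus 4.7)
The plan is to execute Algorithm~\ref{alg:partial-dom-alg} essentially unchanged on the bipartite graph $G$, after an initial preprocessing that reduces the size of $X$. Specifically, I would first apply Lemma~\ref{lemma:bounding-number-of-sets} by sorting the vertices of $X$ by degree and keeping only the $\min\{n, k\Delta_s\Delta_f\}$ highest-degree vertices; this preserves the optimum and ensures $|X| \le \min\{n, k\Delta_s\Delta_f\}$ for the remainder of the analysis. Then we run the same algorithm as for Partial $k$-Dominating Set on this reduced instance.

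Correctness follows by an almost verbatim adaptation of the correctness argument in the proof of Theorem~\ref{theorem:partial-dom-algorithm}. Fix an optimal $k$-tuple $x_1,\dots,x_k \in X$ and apply Lemma~\ref{lemma:bundles-and-cuts} to the subhypergraph $\mathcal{H}(G)[\{x_1,\dots,x_k\}]$ to obtain bundles $S_1, S_2$ whose removal leaves a set that admits a balanced arity-reducing hypercut. The algorithm enumerates $S_1, S_2$ in some iteration, and Lemma~\ref{lemma:triangle-weight-bound-inclusion-exclusion}\,(ii) ensures that the corresponding max-weight-triangle call reports the exact coverage of the remaining vertices, while Lemma~\ref{lemma:triangle-weight-bound-inclusion-exclusion}\,(i) prevents overestimation in any other iteration. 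Note that the definitions of bundle and of $\mathcal{H}(G)$ carry over to the bichromatic setting without change, so these lemmas apply directly.

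For the running time, I would combine two cost bounds, tracking $\Delta_s$ and $\Delta_f$ separately. By Lemma~\ref{lemma:listing-bundles}, the number of $c$-bundles in $X$ is $\bigO(|X|\,\Delta_s^c\Delta_f^{2c})$, so enumerating two disjoint bundles of respective sizes $c_1, c_2$ costs $\bigO(|X|^2 (\Delta_s\Delta_f^2)^{c_1+c_2})$. After fixing such bundles, we are left with $k' = k-2-2c_1-2c_2$ vertices to determine; the max-weight-triangle call runs in time $\bigO(|X|^{\omega\lceil k'/3\rceil}\Delta_s)$ via Lemma~\ref{lemma:max-weight-triangle-algorithm} (whose proof only uses that the vertex and edge weights are integers of absolute value $\bigO(k\Delta_s)$, a property that is preserved in the bichromatic setting). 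Summing over valid $(c_1, c_2)$ and substituting $|X| \le \min\{n, k\Delta_s\Delta_f\}$ gives
\[
T_k(n,\Delta_s,\Delta_f) \;\le\; \sum_{c_1,c_2\geq 0} |X|^{2+\omega\lceil k'/3\rceil}\,(\Delta_s\Delta_f^2)^{c_1+c_2}\,\Delta_s.
\]
The endpoint $c_1=c_2=0$ contributes the $\min\{n,\Delta_s\Delta_f\}^{\omega k/3}$ term, while the opposite extreme $c_1+c_2 = (k-2)/2$ contributes $(\Delta_s\Delta_f^2)^{(k-2)/2} = (\sqrt{\Delta_s}\,\Delta_f)^k \cdot \Delta_s^{-1}\Delta_f^{-2}$ times polynomial factors, matching the first term of the claimed bound.

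The main obstacle will be to argue, for every intermediate pair $(c_1, c_2)$, that the corresponding summand is dominated by one of the two boundary cases up to an overhead absorbed into the $(\Delta_s\Delta_f)^5$ factor. This is the two-parameter analogue of the case distinction $\omega \le \tfrac{9}{2\gamma}$ versus $\omega > \tfrac{9}{2\gamma}$ used in the proof of Theorem~\ref{theorem:partial-dom-algorithm}. I would perform this analysis by treating $\alpha := c_1+c_2$ as a single parameter, rewriting each summand as $\min\{n,\Delta_s\Delta_f\}^{\omega(k-2\alpha)/3}\cdot (\Delta_s\Delta_f^2)^{\alpha}$ up to $(\Delta_s\Delta_f)^{O(1)}$ factors, and checking that this convex-like expression in $\alpha$ attains its maximum at one of the endpoints $\alpha=0$ or $\alpha=(k-2)/2$ regardless of the current value of $\omega$. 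The arithmetic is a routine but slightly tedious generalization of the single-parameter estimates already carried out for Partial $k$-Dominating Set, with the additional factor of $\Delta_f$ (versus $\Delta$) appearing symmetrically throughout, which is what ultimately produces the bound $(\Delta_f\sqrt{\Delta_s})^k + \min\{n,\Delta_s\Delta_f\}^{k\omega/3}$ up to the promised polynomial overhead.
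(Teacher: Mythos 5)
Your proposal is correct and follows essentially the same route as the paper's proof: reduce $|X|$ via Lemma~\ref{lemma:bounding-number-of-sets}, run Algorithm~\ref{alg:partial-dom-alg} verbatim, inherit correctness from the $k$-Dominating Set analysis, and bound the running time by the same sum over bundle sizes using Lemmas~\ref{lemma:listing-bundles} and~\ref{lemma:max-weight-triangle-algorithm}. Your framing of the final estimate as endpoint-dominance of a geometric expression in $\alpha = c_1 + c_2$ is exactly the content of the paper's case distinction on whether $\omega$ is above or below $\tfrac{3(\sigma+1/2)}{\sigma+\gamma}$ (writing $\Delta_f = \Delta_s^\sigma$, $n = \Delta_f\Delta_s^\gamma$), so this is not a genuinely different argument, just a more compact description of the same threshold analysis; the one small piece you leave implicit is that when $n < \Delta_f\sqrt{\Delta_s}$ one should fall back to the $n^k$ baseline so the claimed bound still holds.
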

\begin{proof}
    By \autoref{lemma:bounding-number-of-sets} we can construct an equivalent instance satisfying $|X| \leq \min\{n, \Delta_s\Delta_f\}$.
    Now run Algorithm \ref{alg:partial-dom-alg} and report the output of this algorithm as the output of the original instance as above.
    The proof of correctness is the same as in the proof of \autoref{theorem:partial-dom-algorithm}.
    It only remains to verify the running time.
    We start by noting that by definition $n\geq \Delta_f$ (otherwise there is a $y\in Y$ such that $|N(y)|>n = |X|$, contradiction). 
    Hence we can write $n = \Delta_f\Delta_s^\gamma$ for $0\leq \gamma\leq 1$. By the argument above, it is clear that $\gamma$ cannot be negative. On the other hand, if $\gamma>1$ initially, \autoref{lemma:bounding-number-of-sets} makes sure that before running Algorithm \ref{alg:partial-dom-alg}, $\gamma\leq 1$.
    Moreover, we can provide an even better lower bound for $\gamma$ by noticing that if $\gamma\leq 1/2$, the claimed running time of this algorithm is worse than simply running the baseline $n^k$ algorithm, and in fact if $\gamma\leq 1/2$, any algorithm that runs significantly better than $n^k$ would refute the $3$-Uniform Hyperclique Hypothesis (see Section~\ref{section:kh-OV-LB} for details).
    Hence, we may assume $1/2\leq \gamma\leq 1$. 
    We now provide an analysis similar to the one in the proof of Theorem \ref{theorem:partial-dom-algorithm}.
    By Lemma \ref{lemma:listing-bundles}, for any fixed $0\leq c\leq (k-1)/2$, there are at most $\bigO(|X|\Delta_s^{c}\Delta_f^{2c})$ many $c$-bundles in $G'$ and each $c$-bundle has size $1+2c$. 
    Assume that we are given a promise that an optimal solution admits a balanced arity-reducing hypercut after removing two bundles $S_1, S_2$, such that $S_1$ is a $c_1$-bundle and $S_2$ is a $c_2$-bundle. Then we have a total of $\bigO(\min\{n,\Delta_s\Delta_f\}^2\Delta_s^{c_1+c_2}\Delta_f^{2c_1+2c_2})$ iterations to guess $S_1, S_2$, and the remaining number of vertices to guess with the max-weight-triangle algorithm is $k':= k-(2c_1 + 2c_2 + 2)$.
    Up to constant factors, we can bound the running time of our algorithm as follows.
    \begin{align*}
        T_k(n,u, \Delta_s, \Delta_f) \leq &\; (\Delta_f\Delta_s^\gamma)^{\omega\lceil k/3\rceil}\Delta_s & 
        \\ 
        &+ \sum_{0\leq c_1\leq (k-1)/2} \Big( (\Delta_f\Delta_s^\gamma)^{\omega\lceil(k-2c_1-1)/3\rceil+1}\Delta_s^{c_1+1}\Delta_f^{2c_1} & 
        \\&+ \sum_{0\leq c_2\leq (k-2c_1-2)/2} (\Delta_f\Delta_s^\gamma)^{\omega\lceil(k-2c_1-2c_2-2)/3\rceil+2}\Delta_s^{c_1+c_2+1}\Delta_f^{2c_1+2c_2} \Big) & 
    \end{align*}
    We proceed similarly as in the proof of Theorem \ref{theorem:partial-dom-algorithm}, and prove that depending on value of $\omega$, the value of $T_k(n,u, \Delta_s, \Delta_f)$ is bounded by either $(\Delta_f\sqrt{\Delta_s})^k(\Delta_f\Delta_s)^{\bigO(1)}$, or by $(\Delta_f\Delta_s^\gamma)^{k\omega/3}(\Delta_f\Delta_s)^{\bigO(1)}$. In order to achieve this, we bound each of the terms in the expression above by the corresponding value.
    Before doing this, we need to relate the values of $\Delta_f$ and $\Delta_s$. To this end, we write $\Delta_f = \Delta_s^\sigma$ for some fixed $\sigma\geq 0$.
    \begin{claim}
        If $\omega \leq \frac{3(\sigma+1/2)}{(\sigma+\gamma)}$, then $T_k(n,u, \Delta_s, \Delta_f)\leq \bigO\big((\Delta_f\sqrt{\Delta_s})^k(\Delta_f\Delta_s)^{3}\big)$. 
    \end{claim}
    \begin{proof}
        We rewrite the time complexity expression while plugging in $\Delta_f = \Delta_s^\sigma$. 
        \begin{align*}
        T_k(n,u, \Delta_s, \Delta_f) \leq &\; (\Delta_s^\sigma\Delta_s^\gamma)^{\omega\lceil k/3\rceil}\Delta_s & 
        \\ 
        &+ \sum_{0\leq c_1\leq (k-1)/2} \Big( (\Delta_s^\sigma\Delta_s^\gamma)^{\omega\lceil(k-2c_1-1)/3\rceil+1}\Delta_s^{c_1+1}\Delta_s^{\sigma 2c_1} & 
        \\&+ \sum_{0\leq c_2\leq (k-2c_1-2)/2} (\Delta_s^\sigma\Delta_s^\gamma)^{\omega\lceil(k-2c_1-2c_2-2)/3\rceil+2}\Delta_s^{c_1+c_2+1}\Delta_s^{\sigma(2c_1+2c_2)} \Big) & 
    \end{align*}
    We now proceed to show that each of the summands satisfies the desired inequality (up to constant factors).
    Let us start with the simplest summand.
    \begin{align*}
    (\Delta_s^\sigma\Delta_s^\gamma)^{\omega\lceil k/3\rceil}\Delta_s
    &\leq \Delta_s^{(\sigma+\gamma)\omega (k+2)/3}\Delta_s & (\lceil k/3\rceil \leq (k+2)/3)\\
    & \leq \Delta_s^{(\sigma+1/2)(k+2)}\Delta_s & (\omega \leq 3\tfrac{(\sigma + 1/2)}{\sigma+\gamma})\\
    & = (\Delta_f\sqrt{\Delta_s})^{k+2}\Delta_s.\\
    & < (\Delta_f\sqrt{\Delta_s})^{k}\Delta_f^2\Delta_s^3
     \end{align*}
     We now repeat the similar approach for the remaining two terms.
     \begin{align*}
         (\Delta_s^\sigma\Delta_s^\gamma)^{\omega\lceil(k-2c_1-1)/3\rceil+1}\Delta_s^{c_1+1}\Delta_s^{\sigma 2c_1} &\leq \Delta_s^{(\sigma+\gamma)(\omega(k-2c_1+1)/3+1)} \Delta_s^{c_1+1+\sigma 2c_1} \\
         & \leq \Delta_s^{(\sigma+1/2)(k-2c_1+1) + \sigma+ 1} \Delta_s^{c_1+1+\sigma 2c_1} & (\omega \leq 3\tfrac{(\sigma + 1/2)}{\sigma+\gamma})\\
        & = \Delta_s^{(\sigma+1/2)(k+1) + \sigma+ 2} \\
        & = (\Delta_f\sqrt{\Delta_s})^{(k+1)}\Delta_f\Delta_s^2 \\
        & < (\Delta_f\sqrt{\Delta_s})^{k}\Delta_f^2\Delta_s^3.
     \end{align*}
    \begin{align*}
        (\Delta_s^\sigma\Delta_s^\gamma)^{\omega\lceil(k-2c_1-2c_2-2)/3\rceil+2}\Delta_s^{c_1+c_2+1}\Delta_s^{\sigma(2c_1+2c_2)} 
        &\leq \Delta_s^{(\sigma+\gamma)(\omega(k-2c_1-2c_2)/3+2)} \Delta_s^{c_1+c_2+1+\sigma(2c_1+2c_2)}\\
        & \leq \Delta_s^{(\sigma+1/2)(k-2c_1-2c_2)+2\sigma + 2}\Delta_s^{c_1+c_2+1+\sigma(2c_1+2c_2)} \\
        & = \Delta_s^{(\sigma+1/2)k + 2\sigma + 3} \\ 
        & = (\Delta_f\sqrt{\Delta_s})^{k}\Delta_f^2\Delta_s^3\\
    \end{align*}
    Since each of the summands is bounded by $\bigO\big((\Delta_f\sqrt{\Delta_s})^k(\Delta_f\Delta_s)^{5}\big)$, the total time complexity is thus bounded by $f(k)(\Delta_f\sqrt{\Delta_s})^k(\Delta_f\Delta_s)^{5}$ (for some computable function $f$), which for fixed constant $k$ gives us the desired bound.
    \end{proof}
    \begin{claim}
        If $\omega > \frac{3(\sigma+1/2)}{(\sigma+\gamma)}$, then $T_k(n,u, \Delta_s, \Delta_f)\leq \bigO\big((\Delta_f\Delta_s^\gamma)^{k\omega/3}(\Delta_f\Delta_s)^3\big)$.
    \end{claim}
    \begin{proof}
        We proceed similarly as above, by providing a bound for each of the summands.
        This bound clearly holds for the term $(\Delta_f\Delta_s^\gamma)^{\omega\lceil k/3\rceil}\Delta_s$, so we only need to bound the remaining two terms.
        Similarly as above, we will write $\Delta_f = \Delta_s^\sigma$.
        \begin{align*}
        \mathmakebox[4\mathindent][l]{
            (\Delta_s^\sigma\Delta_s^\gamma)^{\omega\lceil(k-2c_1-1)/3\rceil+1}\Delta_s^{c_1+1}\Delta_s^{\sigma 2c_1} }\\
            &\leq \Delta_s^{(\sigma+\gamma)(\omega(k-2c_1+1)/3+1)} \Delta_s^{c_1+1+\sigma 2c_1} \\
            & = \Delta_s^{(\sigma+\gamma)(\omega(k+1)/3+1)}\Delta_s^{-(\sigma+\gamma)\omega 2c_1/3 +c_1+1+\sigma 2c_1} & \text{(rearranging terms)}\\
            & \leq \Delta_s^{(\sigma+\gamma)(\omega(k+1)/3+1)}\Delta_s^{-(\sigma+1/2)2c_1 +c_1+1+\sigma 2c_1} & (\omega>\tfrac{3(\sigma+1/2)}{(\sigma+\gamma)})\\
            & = \Delta_s^{(\sigma+\gamma)(\omega(k+1)/3)}\Delta_s^{\sigma + 2} & (\gamma \leq 1)\\ 
            & = (\Delta_s^\gamma \Delta_f)^{(\omega(k+1)/3)}\Delta_s^2\Delta_f\\
        \end{align*}
        Applying the same approach, we bound the remaining term as follows.
        \begin{align*}
            \mathmakebox[4\mathindent][l]{
            (\Delta_s^\sigma\Delta_s^\gamma)^{\omega\lceil(k-2c_1-2c_2-2)/3\rceil+2}\Delta_s^{c_1+c_2+1}\Delta_s^{\sigma(2c_1+2c_2)} }
            \\
            &\leq \Delta_s^{(\sigma+\gamma)(\omega k/3+2)} \Delta_s^{(\sigma+\gamma)(\omega (-2c_1-2c_2)/3) + c_1+c_2+1+\sigma(2c_1+2c_2)}\\
            & \leq \Delta_s^{(\sigma+\gamma)(\omega k/3+2)} \Delta_s^{(\sigma+1/2)(-2c_1-2c_2) + c_1+c_2+1+\sigma(2c_1+2c_2)}\\
            & \leq \Delta_s^{(\sigma+\gamma)(\omega k/3)} \Delta_s^{\gamma+2(\sigma+1)}\\
            & = (\Delta_f\Delta_s^\gamma)^{\omega k/3}\cdot \Delta_s^3 \Delta_f^2
        \end{align*}
        We have thus bounded each summand by the desired value, and as argued above, since we only have $f(k) = \bigO(1)$ iterations in our two sums, this yields the desired bound on total running time of the algorithm up to constant factors.
    \end{proof}
    By combining the two claims, we obtain that depending on the value of matrix multiplication exponent $\omega$, we can bound the running time of our algorithm by either $\bigO\big((\Delta_f\sqrt{\Delta_s})^k(\Delta_f\Delta_s)^{5}\big)$, 
    or by $\bigO\big((\Delta_f\Delta_s^\gamma)^{k\omega/3}(\Delta_f\Delta_s)^3\big) = \bigO\big(\min\{n,(\Delta_f\Delta_s)\}^{k\omega/3}(\Delta_f\Delta_s)^3\big)$.
    Independently of the value of $\omega$, the running time is thus bounded by the sum of these two terms, and we get
    \[
    T_k(n,u,\Delta_s, \Delta_f) \leq \bigO \big( (\min\{n,(\Delta_f\Delta_s)\}^{k\omega/3} + (\Delta_f\sqrt{\Delta_s})^k )\Delta_f^2\Delta_s^3 \big)
    \]
\end{proof}
Perhaps surprisingly, depending on the size of the universe $u$, in the general case we might be able to get further improvements.
Crucially, in the special case of the Max $k$-Cover when reducing from Partial $k$-Dominating Set (as constructed above), if  the universe is small ($|Y|<\Delta^{3/2}$), then also the number of sets is small ($|X|<\Delta^{3/2}$), and any significant improvement over the baseline $\bigO(n^k)$ algorithm would refute the $3$-uniform $k$-Hyperclique hypothesis (see Section \ref{section:kh-OV-LB}). However, in general this is not necessarily true and we can have instances where the universe is small, while the number of sets is relatively large and it turns out that if we are given such instance, we can exploit the small universe size to get further improvements over our algorithm.
\paragraph{Small Universe Size} We now demonstrate how one can obtain improvements over the algorithm above when the value $|Y|$ is sufficiently small. 
More precisely, we show that we can obtain an algorithm running in time $\bigO\big(((\min\{n,\Delta_f\cdot \min\{u^{1/3}, \sqrt{\Delta_s}\}\})^k + (\min\{n,\Delta_f\min\{\sqrt{u},\Delta_s\}\})^{k\omega/3})(\Delta_s\Delta_f)^{\bigO(1)}\big)$.

For simplicity, let us assume for now that any solution consists of vertices that admit an arity-reducing hypercut, and give a subroutine that solves any such instance efficiently (we will handle the obstructions separately). The main strategy of our approach is to first show that any potential solution contains a "heavy" vertex. Then we consider two cases, either there are many heavy vertices, or there are few heavy vertices in $X$. 
We then provide a win-win argument:
\begin{enumerate}[label=(\roman*)]
    \item If there are many heavy vertices, we prove that any potential solution consists exclusively of "moderately heavy" vertices and we reduce to max-weight-triangle instance similarly as before.
    \item If there are only a few heavy vertices, we can afford to guess one and we then recurse.
\end{enumerate}
Formally the key properties of any solution that we discussed above are provided in the Regularization Lemma below.
\begin{lemma}[Regularization Lemma]\label{lemma:regularization-lemma}
    Let $G=(X\cup Y,E)$ be a bipartite graph with $|X| = n$ and $|Y| = u$ with $\max_{x\in X}\deg(x) = \Delta_s$ (resp. $\max_{y\in Y}\deg(y) = \Delta_f$). Let $H_1\subseteq X$ contain all vertices from $X$ whose degree is at least $\frac{\Delta_s}{k}$ and $H_2\subseteq X$ contain all vertices from $X$ whose degree is at least $\frac{\Delta_s}{2k}$.
    Let $x_1,\dots, x_k\in X$ be the vertices that maximize the value $|N(x_1)\cup \dots \cup N(x_k)|$.
    Then the following conditions are satisfied.
    \begin{enumerate}[label=(\roman*)]
        \item At least one vertex from $x_1,\dots, x_k$ is contained in $H_1$. 
        \item If $|H_1| \geq 2k^2 \cdot \Delta_f$, then all of the vertices $x_1,\dots, x_k$ are contained in $H_2$.
    \end{enumerate}
\end{lemma}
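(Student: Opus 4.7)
The plan is to prove both items by simple counting/exchange arguments, leveraging that $|U| := |\bigcup_{i \neq j} N(x_i)|$ is small and that each $y \in Y$ has degree at most $\Delta_f$.

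For item (i), the approach is a direct pigeonhole on the coverage. Let $x^*$ be any vertex of maximum degree $\Delta_s$. Then the trivial solution consisting of $x^*$ together with any $k-1$ other vertices covers at least $\Delta_s$ elements, so $|N(x_1) \cup \cdots \cup N(x_k)| \ge \Delta_s$. Suppose for contradiction that \emph{no} $x_i$ lies in $H_1$, i.e., $\deg(x_i) < \Delta_s/k$ for every $i \in [k]$. Then $|N(x_1) \cup \cdots \cup N(x_k)| \le \sum_{i=1}^k \deg(x_i) < k \cdot \Delta_s/k = \Delta_s$, a contradiction.

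For item (ii), I will use an exchange argument. Suppose some $x_j$ satisfies $\deg(x_j) < \Delta_s/(2k)$, i.e., $x_j \notin H_2$. Set $U := \bigcup_{i \ne j} N(x_i)$, so $|U| \le (k-1)\Delta_s$. Call a vertex $x' \in H_1$ \emph{bad} if $|N(x') \cap U| \ge \Delta_s/(2k)$. The total number of edges from $X$ into $U$ is at most $|U| \cdot \Delta_f \le (k-1)\Delta_s \Delta_f$, so the number of bad vertices in $H_1$ is at most
\[
\frac{(k-1)\Delta_s \Delta_f}{\Delta_s/(2k)} = 2k(k-1)\Delta_f.
\]
Since $|H_1| \ge 2k^2 \Delta_f$, the set $H_1$ contains at least $2k\Delta_f \ge 2k \ge k+1$ non-bad vertices, so we can pick a non-bad $x' \in H_1 \setminus \{x_1,\dots,x_k\}$. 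By construction,
\[
|N(x') \setminus U| \;\ge\; \deg(x') - |N(x') \cap U| \;>\; \frac{\Delta_s}{k} - \frac{\Delta_s}{2k} \;=\; \frac{\Delta_s}{2k} \;>\; \deg(x_j) \;\ge\; |N(x_j) \setminus U|.
\]
Hence replacing $x_j$ with $x'$ strictly increases $|N(x_1) \cup \cdots \cup N(x_k)|$, contradicting the optimality of $x_1,\dots,x_k$.

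The only mild subtlety is the bookkeeping in (ii) to ensure that after discarding at most $2k(k-1)\Delta_f$ bad vertices and the at most $k$ candidates $x_1,\dots,x_k$ from $H_1$, at least one valid replacement remains; the slack between $2k^2\Delta_f$ and $2k(k-1)\Delta_f$ in the hypothesis is exactly what makes this go through. Otherwise both parts are straightforward, and no new technical machinery beyond counting edges incident to $U$ is needed.
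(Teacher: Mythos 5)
Your proof is correct and takes essentially the same approach as the paper's: part (i) is identical, and part (ii) is the same exchange argument driven by counting edges into the union $U = \bigcup_{i \ne j} N(x_i)$ (the paper phrases this via a lower bound on $\sum_{x\in H_1} |N(x)\setminus U|$ using path counts $P(x_i)\le \Delta_s\Delta_f$, while you phrase it as a Markov-style bound on the number of ``bad'' vertices with large $|N(x')\cap U|$ --- these are two reformulations of the same averaging step).
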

\begin{proof}
    Let us first argue why the first item is true. If we assume that no vertex $x_1,\dots, x_k$ is contained in $H_1$, then $|N(x_1)\cup \dots \cup N(x_k)| < k\cdot \frac{\Delta_s}{k} = \Delta_s$. 
    However, by assumption, $X$ contains a vertex $x$ with degree equal to $\Delta_s$.
    Hence, replacing any of the vertices $x_1,\dots, x_k$ by $x$ would strictly improve the number of covered vertices in $Y$, contradicting maximality of $|N(x_1)\cup \dots \cup N(x_k)|$.
    For the second item, assume that $|H_1|\geq 2k^2\cdot \Delta_f$ and that $x_1\not\in H_2$, that is $\deg(x_1)<\frac{\Delta_s}{2k}$.
    We prove that there is a vertex $x\in H_1$ (and thus in $H_2$), such that replacing $x_1$ by $x$ yields a strictly better solution. For any $x\in X$, let $P(x)$ denote the number of paths of length $2$ in $G$ that contain $x$ as an endpoint. Note that for any vertex $x$ it holds that $P(x)\leq \Delta_s\cdot \Delta_f$. Consider now the following chain of inequalities.
    \begin{align*}
        \sum_{x\in H_1}|N(x)\setminus \big(N(x_2)\cup \dots \cup N(x_k)\big)| & \geq \sum_{x\in H_1} \Big( \deg(x) - \sum_{i=2}^k|N(x_i)\cap N(x)|\Big)&\\
         &\geq |H_1|\cdot \frac{\Delta_s}{k} - \sum_{x\in H_1}\sum_{i=2}^k|N(x_i)\cap N(x)| &\\
        & \geq |H_1|\cdot \frac{\Delta_s}{k} - \sum_{i=2}^kP(x_i) &\\
        & \geq |H_1|\cdot \frac{\Delta_s}{k} - (k-1)\Delta_s\Delta_f &\\
        & \geq |H_1|\cdot \frac{\Delta_s}{k} - k\Delta_s\frac{|H_1|}{2k^2} & (|H_1|\geq 2k^2\cdot \Delta_f)\\
        & = |H_1|\cdot \frac{\Delta_s}{2k}
    \end{align*}
    In particular, this inequality implies that there exists a vertex $x\in H_1$ such that $|N(x)\setminus\big(N(x_2)\cup\dots \cup N(x_k)\big)|\geq \frac{\Delta_s}{2k}>N(x_1)$, hence $|N(x)\cup\dots\cup N(x_k)|>|N(x_1)\cup\dots\cup N(x_k)|$, contradicting the maximality assumption and concluding the proof.
\end{proof}
In light of the regularization Lemma consider the following algorithm.
\begin{algorithm}
\begin{algorithmic}[1]
    \Procedure{regularize-and-solve}{$X,Y,E,k$}
        \State Let $\Delta_s, \Delta_f, H_1, H_2$ be as in \autoref{lemma:regularization-lemma} 
        \If{$|H_1|<2k^2 \Delta_f$}
            \State \label{alg:solve:line4}
            \Return {$\max_{x\in H_1} |N(x)| + $ \Call{regularize-and-solve}{$X-x, Y-N(x), E', k-1$}} \footnotemark 
        \EndIf
        \State $X'\gets H_2$, $Y'\gets N(H_2)$ \Comment{Regularization step}
        \State $E\gets E(G)\cap (X'\times Y')$
        \State Remove all but the heaviest $\min \{k\Delta_s\Delta_f, |X'|\}$ vertices from $X'$. \label{alg:solve:line8}
        \State Let $T$ be as in Lemma \ref{lemma:max-weight-triangle-algorithm} \Comment{Reduction to max-weight-triangle}
        \State \Return max-weight-triangle($T$)
    \EndProcedure
\end{algorithmic}
\caption{}
\label{alg:small-universe-algorithm}
\end{algorithm}
\footnotetext{$E'$ represents the set of edges $E$ restricted to those with no endpoints in $x$.}
\begin{lemma}\label{lemma:alg2-correctness}
    Let $G=(X\cup Y,E)$ be a bipartite graph with $|X| = n$ and $|Y| = u$ with $\max_{x\in X}\deg(x) = \Delta_s$ (resp. $\max_{y\in Y}\deg(y) = \Delta_f$).
    Assume that after running line \ref{alg:solve:line8}, there exists an optimal solution $x_1,\dots, x_k$, 
    that admits an arity-reducing hypercut. Then Algorithm \ref{alg:small-universe-algorithm} returns the correct value.
\end{lemma}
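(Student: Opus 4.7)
The plan is to argue correctness by induction on $k$, splitting on the branching at line~3 of Algorithm~\ref{alg:small-universe-algorithm}. Let $x_1^\star, \dots, x_k^\star \in X$ denote an optimal solution promised by the hypothesis (after line~\ref{alg:solve:line8} has been applied), and let $S_1, S_2, S_3$ be an arity-reducing hypercut of $\mathcal{H}(G)[\{x_1^\star, \dots, x_k^\star\}]$.

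For Case~1 ($|H_1| < 2k^2 \Delta_f$), I would first invoke Regularization Lemma~\ref{lemma:regularization-lemma}(i) to conclude that some $x_i^\star$ lies in $H_1$, so at some iteration the algorithm picks $x = x_i^\star$. A short exchange argument then shows that the subproblem $(X \setminus \{x\}, Y \setminus N(x), k-1)$ has optimum equal to $|N(x_1^\star) \cup \dots \cup N(x_k^\star)| - |N(x)|$, attained by $\{x_1^\star, \dots, x_k^\star\} \setminus \{x\}$. To apply the inductive hypothesis to the recursive call, I must verify that the subproblem satisfies its own precondition after its line~\ref{alg:solve:line8}. The key observation is that $\mathcal{H}$ is monotone under deletions in the universe: removing vertices from $Y$ can only destroy hyperedges, never create them. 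Consequently, $(S_1, S_2, S_3)$ with $x$ excised remains an arity-reducing hypercut of the sub-hypergraph on $\{x_1^\star, \dots, x_k^\star\} \setminus \{x\}$, and combined with Lemma~\ref{lemma:bounding-number-of-sets} (guaranteeing the sub-call's pruning preserves an optimal solution), the precondition is inherited, closing the induction.

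For Case~2 ($|H_1| \ge 2k^2 \Delta_f$), Regularization Lemma~\ref{lemma:regularization-lemma}(ii) ensures every $x_i^\star$ lies in $H_2$, so setting $X' = H_2$ and $Y' = N(H_2)$ preserves both the optimal vertices and their neighborhoods. The pruning at line~\ref{alg:solve:line8} keeps only the heaviest $\min\{k\Delta_s \Delta_f, |X'|\}$ vertices, and Lemma~\ref{lemma:bounding-number-of-sets} ensures that an optimal solution survives this step. By hypothesis, this surviving optimum admits an arity-reducing hypercut, so Lemma~\ref{lemma:triangle-weight-bound-inclusion-exclusion}(ii) exhibits a triangle in the constructed $T$ whose weight equals $|N(x_1^\star) \cup \dots \cup N(x_k^\star)|$, while part~(i) of the same lemma bounds the weight of every triangle from above by the true objective value. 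Hence the \textsc{max-weight-triangle} call returns the correct optimum.

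The main obstacle I anticipate is the precondition-propagation step in Case~1: one must ensure that after the sub-call performs its own regularization and pruning, the resulting instance still has \emph{some} optimal solution admitting an arity-reducing hypercut. The monotonicity of $\mathcal{H}$ under universe deletions handles most of this cleanly, but there is a subtlety in that Lemma~\ref{lemma:bounding-number-of-sets} only guarantees survival of \emph{some} optimal solution, not necessarily $\{x_1^\star, \dots, x_k^\star\} \setminus \{x\}$ itself; one should argue that if this particular solution is pruned, an alternative optimal solution with a (possibly different) hypercut is retained. Once this bookkeeping is in place, the remainder of the argument is a direct combination of the already-established lemmas.
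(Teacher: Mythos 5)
Your proposal is correct and follows essentially the same route as the paper: induction on $k$, case-splitting on the branching condition $|H_1| < 2k^2\Delta_f$, with the Regularization Lemma handling each branch and Lemma~\ref{lemma:triangle-weight-bound-inclusion-exclusion} finishing the base case. The paper's own proof is terser (it dismisses the recursive case with ``a simple induction on $k$''), and the precondition-propagation issue you flag---that the recursive sub-instance's pruned optimum must itself admit a (balanced) hypercut---is real but resolvable exactly as you suggest: the exchange vertex supplied by Lemma~\ref{lemma:bounding-number-of-sets} shares no common neighbor with the remaining solution, hence lies in no hyperedge with them, so the hypercut survives the substitution.
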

\begin{proof}
    Let $H_1, H_2$ be as defined in the Regularization Lemma and let $x_1,\dots, x_k\in X$ be a solution that maximizes the value $|N(x_1)\cup \dots \cup N(x_k)|$.
    Assume first that the algorithm enters Line \ref{alg:solve:line4} and without loss of generality assume that $x_1\in H_1$ (by Regularization Lemma there exists at least one $x_i\in H_1$).
    Then the algorithm returns the value $|N(x_1)| + \textsc{regularize-and-solve}(X-x_1, Y-N(x_1), E', k-1)$. By applying a simple induction on $k$, we can conclude that this is exactly equal to $|N(x_1)\cup \dots \cup N(x_k)|$.
    
    Assume now that the algorithm does not reach Line \ref{alg:solve:line4}. 
    By applying Regularization Lemma, the Regularization Step yields an equivalent instance.
    Moreover, by Lemma \ref{lemma:bounding-number-of-sets}, Line \ref{alg:solve:line8} also yields an equivalent instance.
    By assumption that there is an optimal solution that contains an arity-reducing hypercut, the same argument as in the proof of Theorem \ref{theorem:partial-dom-algorithm} can be applied to conclude that we get the correct solution.
\end{proof}
\begin{lemma}\label{lemma:alg2-running-time}
    Algorithm \ref{alg:small-universe-algorithm} runs in $\bigO\big((\Delta_f^k + (\min\{n,\Delta_f\min\{\sqrt{u},\Delta_s\}\})^{k\omega/3})\Delta_s^3\Delta_f^2\big)$.
\end{lemma}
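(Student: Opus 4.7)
The plan is to bound the running time by analyzing Algorithm~\ref{alg:small-universe-algorithm} as a recursion tree: the ``few heavy'' branch repeatedly guesses a vertex from $H_1$, producing a branching tree whose leaves either bottom out at $k=0$ or invoke the max-weight-triangle routine on a regularized instance. The key quantitative inputs are a tight bound on $|X'|$ after regularization and a clean case split comparing $\Delta_f$ against $M^{\omega/3}$, where $M := \min\{n,\Delta_f\min\{\sqrt{u},\Delta_s\}\}$.

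First, I would upper bound $|X'|$ after Lines~6--8. Since every vertex of $H_2$ has degree at least $\Delta_s/(2k)$ and $|E(G)|\le u\Delta_f$, double counting gives $|H_2|\le 2ku\Delta_f/\Delta_s$. Combined with the cap of $k\Delta_s\Delta_f$ imposed by Line~\ref{alg:solve:line8} (justified via Lemma~\ref{lemma:bounding-number-of-sets}) and the trivial bound $|X'|\le n$, a short case analysis on whether $\Delta_s \lessgtr \sqrt{u}$ yields
\[
|X'|\;\le\;\min\bigl\{n,\, k\Delta_s\Delta_f,\, 2ku\Delta_f/\Delta_s\bigr\}\;\le\;f(k)\cdot M.
\]
Applying Lemma~\ref{lemma:max-weight-triangle-algorithm} to $X'$ with residual budget $k-d$ therefore costs $O\bigl(M^{\omega\lceil (k-d)/3\rceil}\cdot \Delta_s\bigr) \le O(M^{(k-d)\omega/3}\cdot M^{2\omega/3}\Delta_s)$; since $M^{2\omega/3}\le M^2\le (\Delta_f\Delta_s)^2$, this fits inside the poly-slack $\Delta_s^3\Delta_f^2$ promised in the statement.

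Next, I would account for the recursion. A node entering Line~\ref{alg:solve:line4} branches over $|H_1|<2k^2\Delta_f$ children and drops the budget by one, so at depth $d$ the tree has at most $(2k^2\Delta_f)^d$ nodes. Each such node either (i) continues recursing, (ii) bottoms out at $k=0$ with constant work, or (iii) enters the regularization branch with budget $k-d$. Summing over where the recursion stops and pulling the $\Delta_s^3\Delta_f^2$ slack out, the total time is at most
\[
\Delta_s^3\Delta_f^2\cdot g(k)\left(\Delta_f^k \;+\; \sum_{d=0}^{k-1}\Delta_f^d\cdot M^{(k-d)\omega/3}\right).
\]
Finally, I would establish the term-by-term bound $\Delta_f^d\cdot M^{(k-d)\omega/3}\le \Delta_f^k + M^{k\omega/3}$ via a two-line case split: if $\Delta_f\le M^{\omega/3}$ then $\Delta_f^d\le M^{d\omega/3}$ so the product is at most $M^{k\omega/3}$, whereas if $\Delta_f\ge M^{\omega/3}$ then $M^{(k-d)\omega/3}\le \Delta_f^{k-d}$ so the product is at most $\Delta_f^k$. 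Summing collapses the series to $O(\Delta_f^k+M^{k\omega/3})$, producing the claimed bound.

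The main obstacle I anticipate is not the recursion itself but the careful verification that the polynomial slack $\Delta_s^3\Delta_f^2$ truly absorbs all bookkeeping along the entire recursion tree, including sorting by degree, constructing $H_1$ and $H_2$, restricting the edge set at each recursive call, and building the weighted tripartite graph consumed by max-weight-triangle; once the set sizes have been trimmed by Line~\ref{alg:solve:line8}, each such cost is bounded by $\mathrm{poly}(\Delta_s,\Delta_f)$ and thus fits inside the stated slack, while the combinatorial heart of the proof -- the case split above -- is routine.
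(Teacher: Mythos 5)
Your proof is correct and follows essentially the same route as the paper's: bound $|X'|$ after regularization by combining the Line~\ref{alg:solve:line8} cap with the edge double-counting argument to get $|X'|\le f(k)\cdot M$, model the ``few heavy vertices'' branch as a recursion tree with branching factor $O(\Delta_f)$, and collapse $\sum_d \Delta_f^d M^{(k-d)\omega/3}$ via the two-sided comparison of $\Delta_f$ against $M^{\omega/3}$. The one place where the paper is more careful than your sketch is in explicitly tracking the post-recursion maximum degree $\Delta_s'$ (rather than the original $\Delta_s$) when optimizing $\min\{\Delta_f'\Delta_s',\,u\Delta_f'/\Delta_s'\}\le\Delta_f\sqrt{u}$ over $1\le\Delta_s'\le\Delta_s$, but your case split implicitly performs the same optimization and reaches the same bound.
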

\begin{proof}
    After each recursive call, we remove a heavy vertex from $X$ and its neighborhood from~$Y$. By doing so, we may decrease the values of $\Delta_s$ and $\Delta_f$ each time the recursion is called. We denote by $\Delta_s',\Delta_f'$ the values of $\max_{x\in X}\deg(x)$, $\max_{y\in Y}\deg(y)$ respectively, after the last recursive call and the regularization step.
    Let $k'$ be the number of vertices remaining to guess after the last recursive call.
    Recall that by Lemma \ref{lemma:max-weight-triangle-algorithm}, the max-weight-triangle$(V_1,V_2, V_3)$ takes at most $|X'|^{\omega (k'/3+1)}\cdot\Delta_s$, so we need to bound the size of $X$ after the regularization step by $\min\{n,\Delta_f\min\{\Delta_s, \sqrt{u}\}\}$ to get the desired running time.

    Recall that after the regularization step, all vertices in $X$ have degree at least $\frac{\Delta_s'}{2k}$ and at most $\frac{\Delta_s'}{k}$.
    We proceed by double counting the number of edges in $G$.
    Namely, by simple counting argument one can obtain that there are at least $|X'|\frac{\Delta'_s}{2k} = \bigO(|X|\Delta'_s)$ many edges in $G$ after regularization step (each of the $|X'|$ vertices is incident to at least $\frac{\Delta'_s}{2k}$ edges, and since $G$ is bipartite, there are no edges within $X'$).
    In particular, if we denote the number of edges in $G$ by $m$, this implies that after regularization there are at most $\bigO(m/\Delta'_s)$ many vertices in $X$. 
    On the other hand, by the similar argument, there are at most $u\Delta'_f$ many edges, hence, plugging this in for $m$, we obtain the bound for the number of vertices in $X$ after regularization as:
    \[
    |X|\leq \bigO(u\Delta'_f/\Delta'_s) \leq \bigO(u\Delta_f/\Delta'_s).
    \]
    On the other hand, in Line \ref{alg:solve:line8} of the algorithm we remove all but heaviest $\min\{n, k\Delta'_f\Delta'_s\}$ many vertices.
    Hence, we can bound (up to $f(k) = \bigO(1)$ factors) the number of vertices in $X$ as 
    \[
    |X| \leq \min\{\Delta_f'\Delta'_s, \Delta_f\frac{u}{\Delta'_s}, \}\leq \min\{\Delta_f\Delta'_s, u\Delta_f\frac{u}{\Delta'_s}, \}.
    \]
    By observing that the value of $\Delta_s'$ is always between $1$ and $\Delta_s$, we obtain:
    \[
    |X| \leq \max_{1\leq \Delta_s'\leq \Delta_s} \min\{\Delta_f\Delta'_s, \Delta_f\frac{u}{\Delta'_s}\} \leq \Delta_f\sqrt{u}.
    \]
    Combining the upper bounds from above yields
    \[
    |X| \leq \min\{n,\Delta_f\min\{\sqrt{u}, \Delta_s\}\}.
    \]
  %
    We have now proved that once the algorithm reaches the last recursive call with the value $k'$, the last recursive call takes 
    $\bigO\big((\min\{n,\Delta_f\min\{\sqrt{u},\Delta_s\}\})^{k'\omega/3}|X|\Delta_s)\big)$.
    This allows us to bound the total running time as follows (for simplicity we ignore the constant factors in the expression below).
    \begin{align*}
        T_k(n,u,\Delta_s, \Delta_f) \leq \sum_{i = 0}^k \Delta_f^i \cdot (\min\{n,\Delta_f\min\{\sqrt{u},\Delta_s\}\})^{(k-i)\omega/3}|X|\Delta_s
    \end{align*}
    We now consider two cases. First consider the case when $\Delta_f>(\min\{n,\Delta_f\min\{\sqrt{u},\Delta_s\}\})^{\omega/3}$. 
    This yields the following
    \begin{align*}
        T_k(n,u,\Delta_s, \Delta_f) &\leq \sum_{i = 0}^k \Delta_f^i \cdot ((\min\{n,\Delta_f\min\{\sqrt{u},\Delta_s\}\})^{\omega/3})^{(k-i)}|X|\Delta_s\\
        & < \sum_{i = 0}^k \Delta_f^i \cdot \Delta_f^{(k-i)}|X|\Delta_s\\
        & = k\Delta_f^k.
    \end{align*}
    On the other hand, if $\Delta_f\leq (\min\{n,\Delta_f\min\{\sqrt{u},\Delta_s\}\})^{\omega/3}$, we apply a similar argument to show
    \begin{align*}
        T_k(n,u,\Delta_s, \Delta_f) & \leq \sum_{i = 0}^k \Delta_f^i \cdot (\min\{n,\Delta_f\min\{\sqrt{u},\Delta_s\}\})^{(k-i)\omega/3}|X|\Delta_s \\
        & \leq \sum_{i = 0}^k (\min\{n,\Delta_f\min\{\sqrt{u},\Delta_s\}\})^{i\omega/3} \cdot (\min\{n,\Delta_f\min\{\sqrt{u},\Delta_s\}\})^{(k-i)\omega/3}|X|\Delta_s \\
        & = k(\min\{n,\Delta_f\min\{\sqrt{u},\Delta_s\}\})^{k\omega/3}|X|\Delta_s.
    \end{align*}
    Hence, the total running time of the algorithm can be bounded by 
    \begin{align*}
    T_k(n,u,\Delta_s,\Delta_f) &\leq \bigO(\Delta_f^k + (\min\{n,\Delta_f\min\{\sqrt{u},\Delta_s\}\})^{k\omega/3}|X|\Delta_s \cdot |X|\Delta_s) \\
    &\leq \bigO(\Delta_f^k + (\min\{n,\Delta_f\min\{\sqrt{u},\Delta_s\}\})^{k\omega/3}\Delta_f^2\Delta_s^3).
    \end{align*}
\end{proof}
So far we have assumed that the optimal solution admits a balanced arity-reducing hypercut. Clearly, there are instances where this assumption does not hold, and we proceed to show how to construct a self-reduction from any such instance to a smaller instance where we can find an optimal solution that admits an arity-reducing hypercut. 
To this end, we make a distinction between three cases, based on the ratio between the size of the universe and the value of $\Delta_s$:
\begin{enumerate}[label=(\roman*)]
    \item $\Delta_s^2\leq u$\label{item:large-universe}
    \item $\Delta_s^{3/2}\leq u\leq \Delta_s^2$\label{item:intermediate-universe}
    \item $u\leq \Delta_s^{3/2}$.\label{item:small-universe}
\end{enumerate}
Moreover, for the rest of this section, we assume that $n\geq \Delta_f\min\{\sqrt{\Delta_s}, u^{1/3}\}$, since otherwise, by running the baseline $n^{k+o(1)}$ algorithm, we achieve the running time from Theorem \ref{theorem:max-k-cover-alg}.
\subparagraph{Case \ref{item:large-universe}: $\Delta_s^2\leq u$.} In this case, we have $\min\{u^{1/3},\sqrt{\Delta_s}\} = \sqrt{\Delta_s}$, and $\min\{\sqrt{u},\Delta_s\} = \Delta_s$, hence the running time from Theorem \ref{theorem:max-k-cover-alg} becomes 
\[
\bigO\big( (\Delta_f\sqrt{\Delta_s})^k + \min\{n,\Delta_f\Delta_s\}^{k\omega/3}(\Delta_s\Delta_f)^5\big).
\]
We notice that we can achieve this time by a simply running the algorithm from Lemma \ref{lemma:max-k-cover-algorithm-large-universe}. 
\subparagraph{Case \ref{item:intermediate-universe}: $\Delta_s^{3/2}\leq u\leq \Delta_s^2$.} Recall that in this case we want to construct an algorithm that solves Max-$k$-Cover in time ${\bigO\big(((\Delta_f\cdot \sqrt{\Delta_s})^k + (\min\{n,\Delta_f\cdot \sqrt{u}\})^{k\omega/3})\cdot (\Delta_s\Delta_f)^{\bigO(1)}\big)}$.
The strategy is to apply Lemma \ref{lemma:bundles-and-cuts} and guess two disjoint bundles $D_1, D_2$, such that there is an optimal solution $S$ containing these two bundles and such that $S-(D_1\cup D_2)$ admits a balanced arity-reducing hypercut, and then by Lemma \ref{lemma:alg2-correctness} we can find the optimal solution using Algorithm \ref{alg:small-universe-algorithm}.
Intuitively, the first part of the running time will correspond to the contribution of guessing the bundles, while the second part comes from running Algorithm \ref{alg:small-universe-algorithm}. 
\begin{lemma}
    Given a bipartite graph $G = (X\cup Y,E)$ with $|X| = n$, $|Y|=u$, such that every $x\in X$ satisfies $\deg(x)\leq \Delta_s$, and respectively every $y\in Y$ satisfies $\deg(y)\leq \Delta_f$, we can find the $k$ vertices $x_1,\dots, x_k\in X$ maximizing the value $|N(x_1)\cup\dots\cup N(x_k)|$ in time 
    \[{\bigO\big(((\Delta_f\cdot \sqrt{\Delta_s})^k + (\min\{n,\Delta_f\cdot \sqrt{u}\})^{k\omega/3})\cdot (\Delta_s\Delta_f)^{5}\big)}.\]
\end{lemma}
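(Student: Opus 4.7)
The plan is to combine the structural result of Lemma~\ref{lemma:bundles-and-cuts} with Algorithm~\ref{alg:small-universe-algorithm}: we guess the two obstruction bundles promised by Lemma~\ref{lemma:bundles-and-cuts}, then invoke Algorithm~\ref{alg:small-universe-algorithm} on the residual instance, where by construction an optimal solution now admits a balanced arity-reducing hypercut. The first term $(\Delta_f\sqrt{\Delta_s})^k$ in the target bound will come from the cost of enumerating the two bundles, while the second term $(\min\{n,\Delta_f\sqrt{u}\})^{k\omega/3}$ will come from the final max-weight triangle computation inside Algorithm~\ref{alg:small-universe-algorithm}.

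Concretely, the procedure I propose is the following. First, apply Lemma~\ref{lemma:bounding-number-of-sets} to reduce to the case $n \leq k\Delta_s\Delta_f$. Then iterate over all pairs of nonnegative integers $(c_1,c_2)$ with $2(c_1+c_2)+2 \leq k$, and for each such pair enumerate all ordered pairs of disjoint bundles $(D_1,D_2)$ where $D_i$ is a $c_i$-bundle (allowing $D_i=\emptyset$), using Lemma~\ref{lemma:listing-bundles}. For each guess, build the residual bipartite graph $G' = (X',Y',E')$ with $X' = X \setminus (D_1\cup D_2)$, $Y' = Y \setminus N(D_1\cup D_2)$ and $E' = E \cap (X'\times Y')$, and call Algorithm~\ref{alg:small-universe-algorithm} with parameter $k' = k - |D_1| - |D_2|$. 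Report $|N(D_1\cup D_2)| + \text{(output of Algorithm~\ref{alg:small-universe-algorithm})}$ and keep the maximum across all guesses.

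Correctness follows directly from Lemma~\ref{lemma:bundles-and-cuts} applied to any fixed optimal solution $S$: there exist bundles $D_1,D_2\subseteq S$ (each either empty or a $c_i$-bundle) such that $S\setminus(D_1\cup D_2)$ admits a balanced arity-reducing hypercut. For that correct guess, the precondition of Lemma~\ref{lemma:alg2-correctness} is satisfied in the residual instance, so Algorithm~\ref{alg:small-universe-algorithm} returns the correct residual value, and adding $|N(D_1\cup D_2)|$ recovers the true optimum. The degenerate cases where $D_1$ or $D_2$ is empty (corresponding to solutions that already admit a balanced arity-reducing hypercut) are covered by the outer loop.

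The main obstacle, and the part that requires careful bookkeeping, is the running time analysis. For fixed $(c_1,c_2)$, the bundle enumeration costs $\bigO(n^2 \Delta_s^{c_1+c_2}\Delta_f^{2(c_1+c_2)})$ by Lemma~\ref{lemma:listing-bundles}, and each inner call to Algorithm~\ref{alg:small-universe-algorithm} runs in $\bigO\bigl((\Delta_f^{k'} + (\min\{n,\Delta_f\sqrt{u}\})^{k'\omega/3})\Delta_s^3\Delta_f^2\bigr)$ by Lemma~\ref{lemma:alg2-running-time}, where I used $\min\{\sqrt{u},\Delta_s\} = \sqrt{u}$ from the Case~(ii) assumption $u\leq \Delta_s^2$. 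The task is then to show that summing this product over $(c_1,c_2)$ is bounded by the claimed expression times $(\Delta_s\Delta_f)^5$. I expect to split the analysis into two contributions exactly as in the proof of Theorem~\ref{theorem:partial-dom-algorithm}: for the $\Delta_f^{k'}$ piece, substituting $n\leq k\Delta_s\Delta_f$ and using $c_1+c_2\leq (k-2)/2$ absorbs the bundle cost into $(\Delta_f\sqrt{\Delta_s})^k$ up to an $(\Delta_s\Delta_f)^{O(1)}$ overhead; for the $(\Delta_f\sqrt{u})^{k'\omega/3}$ piece, the Case~(ii) lower bound $u\geq \Delta_s^{3/2}$ is the key ingredient, since it lets us replace $\Delta_s^{c_1+c_2}$ factors by $u^{2(c_1+c_2)/3}$ factors and thereby fold the bundle-enumeration overhead into the $(\Delta_f\sqrt{u})^{(2c+2)\omega/3}$ ``gap'' between $(\Delta_f\sqrt{u})^{k\omega/3}$ and $(\Delta_f\sqrt{u})^{k'\omega/3}$. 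A careful case distinction analogous to the one in the proof of Theorem~\ref{theorem:partial-dom-algorithm} (possibly branching on whether $\omega$ is above or below a threshold depending on the relationship between $\Delta_s,\Delta_f,u$) should then yield the stated bound.
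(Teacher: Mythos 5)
Your proposal is correct and follows essentially the same route as the paper: guess two disjoint bundles (justified by Lemma~\ref{lemma:bundles-and-cuts}), call Algorithm~\ref{alg:small-universe-algorithm} on the residual instance, invoke Lemma~\ref{lemma:alg2-correctness} for correctness, and sum the bundle-enumeration cost (Lemma~\ref{lemma:listing-bundles}) against the inner running-time bound (Lemma~\ref{lemma:alg2-running-time}). The paper streamlines the running-time step by observing that, after the $\Delta_f^{k'}$ contribution is dropped (it is dominated by the extreme-bundle term), the remaining expression factors as $\Delta_s^{2+2\sigma}$ times the exact expression already analyzed in the proof of Lemma~\ref{lemma:max-k-cover-algorithm-large-universe}, so the earlier $\omega$-threshold case distinction can be reused verbatim with $\gamma$ reset to $\min\{\beta,\tau/2\}$. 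One small imprecision in your sketch: the substitution $\Delta_s^{c}\le u^{2c/3}$ from $u\ge\Delta_s^{3/2}$ does not by itself absorb the bundle cost into the $N^{k\omega/3}$ term (the accompanying $\Delta_f^{2c}$ factor outgrows $\Delta_f^{(2c+2)\omega/3}$ once $c$ is large); what actually closes the argument is exactly the $\omega$-versus-$\tfrac{3(\sigma+1/2)}{\sigma+\gamma}$ case split you allude to at the end, under which either the $(\Delta_f\sqrt{\Delta_s})^k$ term or the $N^{k\omega/3}$ term dominates the entire sum.
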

\begin{proof}
    Consider the following algorithm.
    \begin{algorithm}
\begin{algorithmic}[1]
    \State $t'\gets 0$
    \For{bundle $S_1$ with $0 \leq |S_1| \leq k$}
        \For{bundle $S_2$ with $0 \leq |S_1| + |S_2| \leq k$ and $S_1\cap S_2 = \emptyset$}
            \State $k' \gets k - |S1| - |S2|$
            \State $X' \gets X-S_1-S_2$,  $Y'\gets Y-N(S_1)-N(S_2)$
            \State $E' \gets E(G)\cap (X'\times Y')$
            \State $t\gets \max \{t, |N(S1)| + |N(S2)| +\textsc{regularize-and-solve}(X',Y', E', k')$
        \EndFor
    \EndFor 
    \Return $t$
\end{algorithmic}
\caption{}
\label{alg:intermediate-universe-size}
\end{algorithm}
The correctness of the algorithm is a straightforward consequence of Lemmas \ref{lemma:bundles-and-cuts} and \ref{lemma:alg2-correctness}. 
We proceed to show the running time of the algorithm.
To simplify the analysis, let us relate the parameters. We can write $u$ as $\Delta_s^\tau$, for some $3/2\leq \tau\leq 2$.
Similarly, we write $\Delta_f = \Delta_s^\sigma$.
Furthermore, since $n\geq \Delta_f$, we can write $n = \Delta_f\Delta_s^\beta$.
Finally, by setting $\gamma:=\min\{\beta, \tau/2\}$, we can rewrite the running time of Algorithm \ref{alg:small-universe-algorithm} in this notation as:
\[
    \bigO\big((\Delta_f^k + (\min\{n,\Delta_f\sqrt{u})^{\omega\lceil k/3\rceil})\Delta_s^3\Delta_f^2\big) = \bigO\big((\Delta_s^{\sigma k} + \Delta_s^{(\sigma+\gamma)\omega\lceil k/3\rceil})\Delta_s^{3+2\sigma}\big).
\]
It is now easy to see that we can bound the running time of Algorithm \ref{alg:intermediate-universe-size} similarly as in proof of Lemma \ref{lemma:max-k-cover-algorithm-large-universe}, by distinguishing between the cases when 1) $S_1,S_2$ are both empty, 2) $S_1$ is a bundle and $S_2$ is empty, and 3) $S_1,S_2$ are both bundles.
    \begin{equation}\label{eq:max-k-cover-alg-runtime}
    \begin{split}
        T_k(n,u, \Delta_s, \Delta_f) \leq &\;(\Delta_s^{\sigma k} + \Delta_s^{(\sigma+\gamma)\omega\lceil k/3\rceil})\Delta_s^{3+2\sigma} & 
        \\ 
        &+ \sum_{0\leq c_1\leq (k-1)/2} \Big(
        (\Delta_s^{\sigma (k-2c_1-1)} + \Delta_s^{(\sigma+\gamma)(\omega\lceil (k-2c_1-1)/3\rceil+1)})\Delta_s^{3+2\sigma} \cdot \Delta_s^{c_1 + 2\sigma c_1} & 
        \\& + \sum_{0\leq c_2\leq (k-2c_1-2)/2} (\Delta_s^{\sigma (k-2c_1-2c_2-2)} + \Delta_s^{(\sigma+\gamma)(\omega\lceil (k-2c_1-2c_2-2)/3\rceil+1)})\Delta_s^{3+2\sigma} \cdot \Delta_s^{c_1 + c_2 + 2\sigma( c_1+c_2)}\Big)
    \end{split}
    \end{equation}
    The first goal is to get rid of the $(\Delta_s^{\sigma (k-2c_1-1)}$ factors.
    We can do that by noticing that \[\Delta_s^{\sigma (k-2c_1-1)}\Delta_s^{c_1 + 2\sigma c_1} = \Delta_s^{\sigma(k-1) + c_1} \leq \Delta_s^{\sigma(k-1) + (k-1)/2},\] and this term is already achieved by plugging in the extreme value of $c_1 = (k-1)/2$ to the factor $\Delta_s^{c_1 + 2\sigma c_1}$, hence if we remove $(\Delta_s^{\sigma (k-2c_1-1)}$ from the expression above, the value of $T_k$ stays the same (up to $f(k)=\bigO(1)$ factors). 
    By applying the same argument to the factor $\Delta_s^{\sigma (k-2c_1-2c_2-2)}$, we can bound the running time, up to constant factors by the following expression.
    \begin{align*}
        T_k(n,u, \Delta_s, \Delta_f) &\leq \Delta_s^{2+2\sigma} \Big[\Delta_s^{(\sigma+\gamma)\omega\lceil k/3\rceil}\Delta_s & 
        \\ 
        &+ \sum_{c_1 = 0}^{(k-1)/2} \Big(
        \Delta_s^{(\sigma+\gamma)(\omega\lceil (k-2c_1-1)/3+1)\rceil} \cdot \Delta_s^{c_1 +1}\Delta_s^{ 2\sigma c_1} & 
        \\& + \sum_{c_2 = 0}^{(k-2c_1-2)/2} \Delta_s^{(\sigma+\gamma)(\omega\lceil (k-2c_1-2c_2-2)/3\rceil+2)} \cdot \Delta_s^{c_1 + c_2 + 1} \Delta_s^{\sigma(2c_1+2c_2)}\Big)\Big]
    \end{align*}
    We now observe that the expression in the square brackets is exactly the same as the expression in \autoref{eq:max-k-cover-alg-runtime}, hence, we can reuse the already carried out computations to conclude that up to the constant factors the following inequality is true:
    \[
    T_k(n,u,\Delta_s,\Delta_f) \leq \Delta_s^{2+2\sigma} \big(\Delta_s^{(\sigma + \gamma)(\omega k/3)} + \Delta_s^{(\sigma + 1/2)k}\big) \Delta_s^{2\sigma + 3}.
    \]
    Writing this back in terms of parameters $n,u,\Delta_f,\Delta_s$, we obtain:
    \[
        T_k(n,u,\Delta_s,\Delta_f) \leq \big((\Delta_f\min\{n,\sqrt{u}\})^{\omega k/3} + (\Delta_f\sqrt{\Delta_s})^{k}\big) \Delta_s^{5}\Delta_f^4.
    \]
\end{proof}
\subparagraph{Case \ref{item:small-universe}: $u\leq \Delta_s^{3/2}$.} In this case we want to obtain an algorithm solving Max-$k$-Cover in time $\bigO\Big(\big((\Delta_f u^{1/3})^k + (\Delta_f\sqrt{u})^{k\omega/3}\big)\cdot (\Delta_s\Delta_f)^{\bigO(1)}\Big)$. We remark that if we are given a promise that there exists an optimal solution $S$ of size $k$, such that the subhypergraph of the hypergraph representation $\mathcal{H}(G)$ induced on $S$ contains no hyperedges, then vacuously $S$ admits a balanced arity-reducing hypercut.
This motivates the following approach.
First run Algorithm \ref{alg:small-universe-algorithm} and store the returned value.
Then guess a triple of vertices $x_1,x_2,x_3\in X$, such that $\{x_1,x_2,x_3\}$ is a hyperedge in $\mathcal{H}(G)$ and proceed recursively by removing $x_1,x_2,x_3$ from $X$ and their neighborhood from $Y$.
We can bound the number of hyperedges in $\mathcal{H}(G)$ by $u\Delta_f^3$, by noticing that by definition each hyperedge corresponds to a triple of vertices in $X$ sharing a common neighbor in $Y$. There are only $u$ choices for the common neighbor, and for each there are $\binom{\Delta_f}{3}\leq \Delta_f^3$ choices for the three vertices in $X$.
This gives us the amortized time of $\Delta_fu^{1/3}$ per vertex for guessing the hyperedges and finally, after we have guessed all the hyperedges from a solution, by Lemma \ref{lemma:alg2-correctness} running Algorithm \ref{alg:small-universe-algorithm} yields a correct solution on the remaining $k'$ vertices in time $\bigO((\Delta_f\sqrt{u})^{k'\omega/3}(\Delta_f\Delta_s)^3)$.
We prove the details below.
\begin{lemma}\label{lemma:small-universe}
    Given a bipartite graph $G = (X\cup Y,E)$ with $|X| = n$, $|Y|=u$, such that every $x\in X$ satisfies $\deg(x)\leq \Delta_s$, and respectively every $y\in Y$ satisfies $\deg(y)\leq \Delta_f$, we can find the $k$ vertices $x_1,\dots, x_k\in X$ maximizing the value $|N(x_1)\cup\dots\cup N(x_k)|$ in time 
    \[{\bigO\Big(\big((\Delta_f u^{1/3})^k + \min\{n,\Delta_f\sqrt{u}\}^{k\omega/3}\big)\cdot (\Delta_s\Delta_f)^{3}\Big)}\]
\end{lemma}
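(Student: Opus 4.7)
The plan is to implement the recursive strategy sketched informally above. Specifically, I would define a procedure $\textsc{SmallUniverse}(X, Y, E, k)$ that returns the maximum of two quantities: (i) the value returned by invoking Algorithm~\ref{alg:small-universe-algorithm} directly on $(X, Y, E, k)$, and (ii) over each hyperedge $\{x_1, x_2, x_3\} \in E(\mathcal{H}(G))$, the value $|N(\{x_1, x_2, x_3\})|$ plus the result of the recursive call $\textsc{SmallUniverse}(X \setminus \{x_1, x_2, x_3\},\ Y \setminus N(\{x_1, x_2, x_3\}),\ E',\ k-3)$, where $E'$ denotes the induced edge set on the reduced graph. Enumerating the hyperedges of $\mathcal{H}(G)$ is straightforward: iterate over each $y \in Y$ and over each triple of its (at most $\Delta_f$) neighbors, yielding at most $u\binom{\Delta_f}{3} \leq u\Delta_f^3$ hyperedges in $\bigO(u\Delta_f^3)$ total time. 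For the small base cases $k \in \{0,1,2\}$ no hyperedge of arity $3$ can fit inside $\leq 2$ chosen vertices, so the call to Algorithm~\ref{alg:small-universe-algorithm} alone suffices (the induced sub-hypergraph on any such set is empty, and every balanced partition is trivially an arity-reducing hypercut).

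For correctness I would induct on $k$. Fix an optimum $S^\star = \{x_1^\star, \ldots, x_k^\star\}$. If $\mathcal{H}(G)[S^\star]$ has no hyperedge, then any balanced partition of $S^\star$ is an arity-reducing hypercut, and Lemma~\ref{lemma:alg2-correctness} guarantees that Algorithm~\ref{alg:small-universe-algorithm} returns $|N(x_1^\star) \cup \cdots \cup N(x_k^\star)|$. Otherwise some hyperedge $\{x_{i_1}^\star, x_{i_2}^\star, x_{i_3}^\star\}$ is contained in $S^\star$, and the for-loop considers it in some iteration. The optimum on the original instance factorizes as $|N(\{x_{i_1}^\star, x_{i_2}^\star, x_{i_3}^\star\})|$ plus the optimum of the reduced Max-$(k{-}3)$-Cover instance on $(X \setminus \{x_{i_1}^\star, x_{i_2}^\star, x_{i_3}^\star\},\ Y \setminus N(\{x_{i_1}^\star, x_{i_2}^\star, x_{i_3}^\star\}))$; the inductive hypothesis then yields the claim.

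For the running time, the recurrence $T(k) \leq T_{\mathrm{Alg}\,\ref{alg:small-universe-algorithm}}(k) + u\Delta_f^3 \cdot T(k-3)$ unfolds to
\[
T(k) \;\leq\; \sum_{i=0}^{\lfloor k/3\rfloor} (u\Delta_f^3)^i \cdot T_{\mathrm{Alg}\,\ref{alg:small-universe-algorithm}}(k-3i).
\]
Substituting the bound from Lemma~\ref{lemma:alg2-running-time} splits each summand into two contributions, $(u\Delta_f^3)^i \cdot \Delta_f^{k-3i} = u^i \Delta_f^k$ and $(u\Delta_f^3)^i \cdot \min\{n, \Delta_f\sqrt{u}\}^{(k-3i)\omega/3}$. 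Both expressions are log-linear in $i$ and hence maximized at an endpoint. The first is maximized at $i = k/3$, giving $(\Delta_f u^{1/3})^k$. The second is maximized at either $i = 0$ (giving $\min\{n, \Delta_f\sqrt{u}\}^{k\omega/3}$) or $i = k/3$ (again giving $(\Delta_f u^{1/3})^k$). Summing over the $O(k)$ values of $i$ contributes only a $f(k)$ factor, yielding the desired bound $\bigO\!\left(\left((\Delta_f u^{1/3})^k + \min\{n, \Delta_f\sqrt{u}\}^{k\omega/3}\right) \cdot (\Delta_s\Delta_f)^{O(1)}\right)$.

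The main subtle point is that the peeling is exhaustive: after guessing exactly the hyperedges in $\mathcal{H}(G)[S^\star]$ (which happens in some leaf of the recursion tree), the remaining portion of $S^\star$ induces an empty sub-hypergraph in the reduced instance, which is precisely the hypothesis under which Lemma~\ref{lemma:alg2-correctness} guarantees correctness of Algorithm~\ref{alg:small-universe-algorithm}. A secondary technical point is that $\Delta_s$, $\Delta_f$, $n$, and $u$ can only shrink through the recursion, so we may safely replace them by their original values when bounding sub-instance running times, and the overall polynomial overhead remains $(\Delta_s\Delta_f)^{O(1)}$.
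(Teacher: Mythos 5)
Your proposal matches the paper's own proof almost exactly: the paper's Algorithm~\ref{alg:small-universe-size} is precisely your $\textsc{SmallUniverse}$ procedure (run Algorithm~\ref{alg:small-universe-algorithm} directly, then for each hyperedge of $\mathcal{H}(G)$ peel it off and recurse with $k-3$), the hyperedge count is bounded by $u\binom{\Delta_f}{3}$ in the same way, and correctness is proved by the same induction on the number of hyperedges of $\mathcal{H}(G)$ induced on an optimal solution. Your running-time analysis via log-linearity of the two contributions in the unrolled recurrence is equivalent to the paper's case distinction on whether $u\Delta_f^3 \geq \min\{n,\Delta_f\sqrt{u}\}^{\omega}$ or not.
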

\begin{proof}
    Consider the following algorithm.
    \begin{algorithm}
    \begin{algorithmic}[1]
        \Procedure{solve}{$X,Y,E,k$}
            \State $t \gets \textsc{regularize-and-solve}(X,Y,E,k)$
            \For{$\{x_1,x_2,x_3\}\in E(\mathcal{H}(G))$}
                \State $X' \gets X-\{x_1,x_2,x_3\}$, $Y'\gets Y-(N(x_1)\cup N(x_2)\cup N(x_3))$
                \State $E' \gets (X'\times Y') \cap E(G)$
                \State $t\gets \max\{t, |N(x_1)\cup N(x_2)\cup N(x_3)| + \textsc{solve}(X', Y', E', k-3))\}$
            \EndFor
            \Return $t$
        \EndProcedure
    \end{algorithmic}
    \caption{}
    \label{alg:small-universe-size}
    \end{algorithm}
    We first argue correctness of this algorithm. Let $x_1,\dots, x_k\in X$ be the vertices that maximize the $|N(x_1)\cup \dots \cup N(x_k)| = \textsc{opt}$. By the proof of Lemma \ref{lemma:alg2-correctness}, the function $\textsc{regularize-and-solve}(X,Y,E,k)$ always returns the value that is $\leq \textsc{opt}$. Moreover, before calling our function recursively, we first delete the neighborhood of the guessed triple from $Y$, thus this property gets preserved in every recursive call.
    It suffices to prove that the returned value is $\geq \textsc{opt}$.
    To this end, we proceed by induction on the number of hyperedges in the corresponding hypergraph $\mathcal{H}(G)$ induced on an optimal solution.
    For a base case, assume that there is an optimal solution such that the corresponding hypergraph $\mathcal{H}(G)$ induced on this solution contains no hyperedges. Then by Lemma \ref{lemma:alg2-correctness}, the function $\textsc{regularize-and-solve}(X,Y,E,k)$ will return $\textsc{opt}$ and since at each recursive call we take the maximum of the value found so far and the value after guessing a hyperedge, clearly the returned value $t$ satisfies $t\geq \textsc{opt}$. 
    On the other hand, if (without loss of generality) $x_1,x_2,x_3\in E(\mathcal{H}(G))$, then the returned value $t$ satisfies $t\geq |N(x_1)\cup N(x_2)\cup N(x_3)| + \textsc{solve}(X-\{x_1,x_2,x_3\}, Y'-(N(x_1)\cup N(x_2)\cup N(x_3)), E', k-3)$. 
    It now suffices to show that $\textsc{solve}(X-\{x_1,x_2,x_3\}, Y-(N(x_1)\cup N(x_2)\cup N(x_3)), E', k-3)$ returns the value $t \geq |(N(x_4)\cup \dots \cup N(x_k))\setminus (N(x_1)\cup N(x_2)\cup N(x_3))|$. 
    By noticing that the subhypergraph of $\mathcal{H}(G)$ induced on $x_4,\dots, x_k$ has strictly fewer hyperedges than the subhypergraph of $\mathcal{H}(G)$ induced on $x_1,\dots, x_k$, we can apply induction hypothesis to conclude the proof.

    We now have to argue the running time of the algorithm. As briefly explained above, we can bound the number of the hyperedges in the graph $\mathcal{H}(G)$ by $u\Delta_f^3$. It is now straightforward to verify that the time complexity of the algorithm satisfies the following inequality (using Lemma \ref{lemma:alg2-running-time}). For simplicity, we drop the constant factors.
    \begin{align*}
        T(n,u,\Delta_s, \Delta_f) &\leq \sum_{c=0}^{k/3} (u\Delta_f^3)^c \big((\Delta_f^{k-3c} + (\min\{n,\Delta_f\sqrt{u}\})^{(k-3c)\omega/3})\Delta_s^3\Delta_f^2\big) \\
        & \leq \sum_{c=0}^{k/3} (u\Delta_f^3)^c \min\{n,\Delta_f\sqrt{u}\}^{(k-3c)\omega/3}\Delta_s^3\Delta_f^2 & \text{(up to constant factors)}
    \end{align*}
    We distinguish between two cases. Either $(u\Delta_f^3)\geq \min\{n,\Delta_f\sqrt{u}\}^\omega$, in which case for any $0\leq c\leq k/3$ it holds
    \[
        (u\Delta_f^3)^c \min\{n,\Delta_f\sqrt{u}\}^{(k-3c)\omega/3}\Delta_s^3\Delta_f^2 \leq (u\Delta_f^3)^{k/3}\Delta_s^3\Delta_f^2.
    \]
    And hence, in this case we can write (up to $f(k)=\bigO(1)$ factors):
    \[
        T(n,u,\Delta_s, \Delta_f) \leq (u^{1/3}\Delta_f)^{k}\Delta_s^3\Delta_f^2.
    \]
    Otherwise, if $(u\Delta_f^3) < \min\{n,\Delta_f\sqrt{u}\}^\omega$, then for any $0\leq c\leq k/3$ it holds 
    \[
        (u\Delta_f^3)^c \min\{n,\Delta_f\sqrt{u}\}^{(k-3c)\omega/3}\Delta_s^3\Delta_f^2 \leq \min\{n,\Delta_f\sqrt{u}\}^{k\omega/3}\Delta_s^3\Delta_f^2.
    \]
    Finally, we can conclude that, up to $f(k)=\bigO(1)$ factors, we can always bound the time complexity of this algorithm as
    \[
        T(n,u,\Delta_s, \Delta_f) \leq \big((u^{1/3}\Delta_f)^{k} + \min\{n,\Delta_f\sqrt{u}\}^{k\omega/3}\big)\Delta_s^3\Delta_f^2. \qedhere
    \]
\end{proof}

\section{Conditional lower bounds via $(k,h)$-maxIP/minIP}\label{section:kh-OV-LB}
In this section we prove that the algorithm we constructed in the last section is conditionally optimal. That is, any significant improvement of our algorithms would refute either $k$-Clique Hypothesis, or $3$-Uniform Hyperclique Hypothesis.
To do this we construct efficient reductions from two intermediate problems, namely $(k,h)$-minIP and $(k,h)$-maxIP.
Notably, for even values of~$h$, we reduce from $(k,h)$-minIP to Partial $k$-Dominating Set and Max-$k$-Cover, and for odd values of $h$, we reduce from $(k,h)$-maxIP to Partial $k$-Dominating Set and Max-$k$-Cover. 
We then show that this by extension gives us efficient reductions from $h$-Uniform Hyperclique Detection (if~$h\geq 3$) and from $k$-Clique Detection (if $h=2$) to Partial $k$-Dominating Set and Max-$k$-Cover. 
More precisely, we prove the following two main theorems for this section.
\begin{theorem}\label{theorem:max-k-cover-LB}
    Given a collection of $n$ sets $X:=\{S_1,\dots, S_n\}$ over the universe $Y:=[u]$ such that the maximum size of a set in $X$ is $\Delta_s$ and the maximum frequency of an element in $Y$ is $\Delta_f$, if there exists $\varepsilon>0$ such that we can solve Max-$k$-Cover in time 
    \begin{itemize}
     \item $\bigO\Big(\big(\min\{n,\Delta_f \cdot \min\{\sqrt u, \Delta_s\}\}\big)^{\omega/3 k(1-\varepsilon)}\Big)$, then $k$-Clique Hypothesis is false.
        \item $\bigO\Big(\big(\min\{n,\Delta_f\cdot \min\{u^{1/h}, \Delta_s^{1/(h-1)}\}\}\big)^{k(1-\varepsilon)}\Big)$ for $h\geq 3$ then $h$-Uniform Hyperclique Hypothesis is false.
       \item $\bigO\Big(\big(\min\{n,\Delta_f\cdot \min\{u^{1/k}, \Delta_s^{1/(k-1)}\}\}\big)^{k(1-\varepsilon)}\Big)$  then the $k$-OV Hypothesis is false.
    \end{itemize}
\end{theorem}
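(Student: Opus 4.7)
The plan is to prove all three conditional lower bounds of Theorem~\ref{theorem:max-k-cover-LB} via a unified reduction from the intermediate problems $(k,h)$-maxIP (when $h$ is odd) and $(k,h)$-minIP (when $h$ is even). The parameter $h$ will be instantiated as $h=2$ for the $k$-clique lower bound, as the given $h$ for the $h$-uniform hyperclique lower bound, and as $h=k$ for the $k$-OV lower bound.

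My first step is to appeal to a hardness lemma (Lemma~\ref{lemma:hyperclique-hardness-of-khov}) which packages standard reductions in the style of \cite{AbboudBDN18,KarppaKK18}: the $k$-clique hypothesis yields $N^{\omega k/3 - o(1)}$ hardness for $(k,2)$-maxIP/minIP, the $h$-uniform $k$-hyperclique hypothesis yields $N^{k-o(1)}$ hardness for $(k,h)$-maxIP/minIP, and the $k$-OV hypothesis yields $N^{k-o(1)}$ hardness for $(k,k)$-maxIP/minIP. I then invoke a regularization lemma (Lemma~\ref{lemma:khov-regularity}) promising that $(k,h)$-maxIP/minIP remains hard even when, for every $r<h$ and every $r$-subset $\{i_1,\dots,i_r\} \subseteq [k]$, the partial inner product $v_{i_1} \odot \cdots \odot v_{i_r}$ equals a constant $C_r(i_1,\dots,i_r)$ independent of the chosen vectors $v_{i_j} \in A_{i_j}$. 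This regularity is the linchpin of the inclusion-exclusion argument below.

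Next I execute the construction sketched in Section~\ref{sec:technical-overview}: partition each $A_i$ into $s = N^{1/h}$ groups $A_i^{(1)}, \dots, A_i^{(s)}$; create a dimension vertex $(y, g_1, \dots, g_h)$ for every $y \in D_{i_1,\dots,i_h}$ and $g_1, \dots, g_h \in [s]$; and connect $a_i \in A_i^{(g)}$ to $(y, g_1, \dots, g_h)$ iff there is some position $a$ with $i=i_a$, $g=g_a$ and $a_i[y]=1$. The Max-$k$-Cover instance takes the sets $\{N(a) : a \in A_1 \cup \cdots \cup A_k\}$ over the universe of dimension vertices, and a forcing gadget (analogous to the $k=2$ sketch) ensures that any optimal $k$-set collection picks exactly one vertex from each $A_i$. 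Applying identity~\eqref{eq:inc-exc-main} to any tuple $a_1 \in A_1, \dots, a_k \in A_k$, all terms of order $r > h$ vanish (each dimension vertex has only $h$ active classes), all terms of order $r < h$ evaluate to constants by regularity, and the unique $r = h$ term equals $\pm$ the $(k,h)$-OV objective of $(a_1, \dots, a_k)$; hence solving Max-$k$-Cover on this instance solves the $(k,h)$-maxIP/minIP instance.

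The produced instance has $n \leq kN$, $u \leq dN$, $\Delta_s \leq dN^{(h-1)/h}$, and $\Delta_f \leq hN^{(h-1)/h}$, with $d = N^{o(1)}$, so $\min\{n,\, \Delta_f \cdot \min\{u^{1/h}, \Delta_s^{1/(h-1)}\}\} = N^{1 \pm o(1)}$; an algorithm violating any of the three stated bounds would then solve the corresponding $(k,h)$-maxIP/minIP instance in time $N^{k(1-\varepsilon)+o(1)}$, refuting the relevant hypothesis. I expect the main obstacle to be the regularization lemma: the padding must simultaneously equalize all $\binom{k}{r}$ many $r$-level intersection counts for every $1 \leq r < h$, while preserving the $h$-level values and keeping $d$ subpolynomial. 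My plan is to induct on $r$ from $1$ to $h-1$, at each stage appending a carefully designed block of coordinates with a chosen active-index pattern that absorbs the remaining irregularity at level $r$ without disturbing already-regularized levels below and without touching the top level.
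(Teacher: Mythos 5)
Your plan follows the paper's proof architecture closely — reduction from $(k,h)$-maxIP/minIP via the regularization Lemma~\ref{lemma:khov-regularity}, the grouped dimension-vertex construction, inclusion-exclusion, and the hardness package for $(k,h)$-OV from Lemma~\ref{lemma:hyperclique-hardness-of-khov} (together with Lemma~\ref{lemma:maxip-hardness}, which you should cite explicitly for the maxIP side). The regularization detail also works as you describe: going up from $r=1$ to $h-1$ is legitimate, though you would additionally need to verify that regularizing level $r$ adds a \emph{constant} $2^{\,r-r'}d$ (resp.\ $0$) to every level-$r'$ product for $r'<r$, so that lower-level regularity is preserved; the paper avoids this obligation by inducting \emph{downward}, since Lemma~\ref{lemma:r-regularization-khov} only guarantees exact preservation of the levels \emph{above} $r$.

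The substantive gap is your fixed choice $s = N^{1/h}$. This forces the constructed instance into a single ``balanced'' slice of the parameter space: $n \approx u \approx N$ and $\Delta_s \approx \Delta_f \approx N^{(h-1)/h}$. The theorem, however, asserts the lower bound for \emph{every} parameter tuple $(n,u,\Delta_s,\Delta_f)$, i.e.\ that in every regime a beat-the-formula algorithm would falsify the hypothesis; this is what is needed to certify that the upper bound of Theorem~\ref{theorem:max-k-cover-alg} is conditionally tight in each regime. The paper's Lemma~\ref{lemma:max-k-cover-lb-reduction} achieves this by letting $n,u,\Delta_s,\Delta_f$ be arbitrary given targets, setting $|A_i| = \min\{n,\Delta_f\min\{u^{1/h},\Delta_s^{1/(h-1)}\}\}$, and then taking the group count $s := |X|/\Delta_f$ so that $s^h = O(u)$, $s^{h-1} = O(\Delta_s)$, and each group has size $O(\Delta_f)$; this decouples $\Delta_f$ from the other parameters and produces hard instances throughout the full parameter range, not just on the one-parameter curve your choice of $s$ traces out. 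Your construction is the special case where this $s$ happens to coincide with $N^{1/h}$, i.e.\ where $\Delta_f \approx N^{(h-1)/h}$; in any other regime you do not obtain a matching hard instance, so you have not proved the theorem in its full quantified form.
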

\begin{theorem}\label{theorem-partial-dom-LB}
    Given a graph $G$ with $n$ vertices and maximum degree $\Delta$, if there exists $\varepsilon>0$ such that we can solve Partial $k$-Dominating Set in time 
    \begin{itemize}
        \item $\min\{n,\Delta^2\}^{\omega/3 k(1-\varepsilon)}$, then $k$-Clique Hypothesis is false.
        \item $\min\{n,\Delta^{h/(h-1)}\}^{k(1-\varepsilon)}$ for $h\geq 3$ then the $h$-Uniform Hyperclique Hypothesis is false.
        \item $\min\{n,\Delta^{k/(k-1)}\}^{k(1-\varepsilon)}$ then the $k$-OV Hypothesis is false.
    \end{itemize}
\end{theorem}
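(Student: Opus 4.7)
The plan is to unify all three lower bounds of Theorem~\ref{theorem-partial-dom-LB} through a single main reduction from $(k,h)$-maxIP (for odd $h$) or $(k,h)$-minIP (for even $h$) to Partial $k$-Dominating Set, and then chain this with the known reductions mentioned in Section~\ref{sec:technical-overview}: namely, $k$-Clique reduces to $(k,2)$-maxIP in time $n^{\omega/3 \cdot k - o(1)}$, $3$-Uniform $k$-Hyperclique reduces to $(k,3)$-maxIP in time $n^{k-o(1)}$, and $k$-OV reduces to $(k,k)$-OV (and by a standard reduction to $(k,k)$-maxIP/minIP) in time $n^{k-o(1)}$ (formalized as Lemma~\ref{lemma:hyperclique-hardness-of-khov}). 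Plugging the resulting parameters from the main reduction into each of these chains yields the three stated bounds.

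The main reduction follows the construction outlined in the technical overview. Given a $(k,h)$-maxIP/minIP instance $A_1,\dots,A_k \subseteq \{0,1\}^d$ with $|A_i|=N$ and $d=N^{o(1)}$, I first apply a regularization step (to be established as a separate lemma analogous to Lemma~\ref{lemma:khov-regularity}): this augments each $A_i$ with additional coordinates so that for every $r<h$ and every $1 \le i_1 < \cdots < i_r \le k$, the quantity $a_{i_1}\odot\cdots\odot a_{i_r}$ becomes a constant $C_{i_1,\dots,i_r}$ independent of the chosen vectors, while preserving all $r=h$ values. Then I partition each $A_i$ into groups $A_i^{(1)},\dots,A_i^{(s)}$ of size $s = N^{1/h}$, and build a graph on the vertex set $A_1 \cup \cdots \cup A_k \cup Z$, where $Z$ contains, for every choice of active indices $i_1<\cdots<i_h$ and every $(y,g_1,\dots,g_h)$ with $y \in a(i_1,\dots,i_h)$ and $g_1,\dots,g_h \in [s]$, a dimension vertex connected to $a \in A_{i_j}^{(g)}$ iff $g_j = g$ and $a[y]=1$.

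The correctness of the construction is driven by an inclusion-exclusion analysis of $|N(a_1) \cup \cdots \cup N(a_k)|$: since each dimension vertex has neighbors in exactly $h$ of the sets $A_i$, all terms of arity $r>h$ vanish, the terms with $r<h$ sum to a constant depending only on the regularization constants $C_{i_1,\dots,i_r}$ and the group sizes, and the single term at $r=h$ equals (up to sign $(-1)^{h+1}$ and a factor of $s^{h-1}$) the $(k,h)$-maxIP/minIP objective. Thus maximizing the union of neighborhoods solves $(k,h)$-maxIP when $h$ is odd and $(k,h)$-minIP when $h$ is even. A small gadget, analogous to the one sketched for $k=2$, is attached to enforce that every optimal Partial $k$-Dominating Set solution picks exactly one vertex from each $A_i$ (and no vertex from $Z$ or the gadget), without blowing up the maximum degree beyond $N^{(h-1)/h+o(1)}$; isolated vertices pad the instance to any desired $n = \Omega(dN)$.

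The main technical obstacle is executing the regularization and gadget construction carefully enough to preserve the degree bound $\Delta = N^{(h-1)/h+o(1)}$: each $a_i \in A_i$ is adjacent to at most $h d s^{h-1} = d N^{(h-1)/h}$ dimension vertices, each dimension vertex to $O(N/s)=O(N^{(h-1)/h})$ vectors, and the gadget and regularization contributions must stay within the same bound. Setting $N = \Delta^{h/(h-1)-o(1)}$ ensures this. A lower bound of $N^{k-o(1)}$ for the relevant $(k,h)$ problem then translates to $\Delta^{\frac{h}{h-1}k-o(1)}$; choosing $n = \Omega(dN) = \Delta^{h/(h-1)+o(1)}$ makes $\min\{n,\Delta^{h/(h-1)}\} = \Delta^{h/(h-1)}$, yielding the claimed exponents (with the $\omega/3$ factor appearing in the $h=2$ case because the $k$-Clique reduction to $(k,2)$-maxIP only delivers a $n^{\omega k/3 - o(1)}$ lower bound rather than $n^{k-o(1)}$). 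Once the reduction is stated and its degree/size bounds are verified, each bullet of the theorem follows by substituting the appropriate $h$ into this translation.
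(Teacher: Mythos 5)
Your plan mirrors the paper's proof almost step for step: the same chain $(k,h)$-maxIP/minIP $\to$ Partial $k$-Dominating Set instantiated with $h\in\{2,3,k\}$, the same regularization lemma, the same group/dimension-vertex construction with a penalty gadget, the same inclusion--exclusion analysis, and the same hardness transfer via Lemma~\ref{lemma:hyperclique-hardness-of-khov}. Aside from a trivial arithmetic slip (the $r=h$ inclusion--exclusion term equals the $(k,h)$-IP objective exactly, with no $s^{h-1}$ factor --- those $s$-powers only appear for $r<h$ where $h-r$ group labels are free, and it is the $r<h$ terms that the regularization makes constant), this is the paper's argument.
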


Before we prove these theorems, let us highlight some interesting aspects of the underlying reductions: We achieve them by a single core  reduction from $(k,h)$-minIP/maxIP which we instantiate with different values for $h\in \{2,3, \dots, k\}$. This reduction is enabled by ensuring a strong regularity property in the given $(k,h)$-minIP/maxIP instances -- interestingly, such a strong regularity property can be a achieved in a simpler way for $(k,h)$-minIP/maxIP than for $h$-uniform hyperclique~\cite{FischerKRS25}; we circumvent the use of this result, which would have given an alternative, more complicated approach.

Let us formally introduce our notion of regularity of instances: 
We say that the sets $V_1,\dots, V_k$ are \emph{regular}, if for every vector $v_i\in V_i$, the set of coordinates~$y$ such that $i$ is an active index for $y$ and $v_i[y] = 1$ has the same size.
More generally, for any $r\leq h$ we say that the sets $V_1,\dots, V_k$ are \emph{$r$-regular}, if for every $r$-tuple $v_{i_1}\in V_{i_1},\dots, v_{i_r}\in V_{i_r}$ the product $v_{i_1}\odot \dots \odot v_{i_r}$ is the same.
By applying simple combinatorial gadgets, we now prove that we can without loss of generality assume that any given instance is $r$-regular for every $r\leq h-1$. This turns out to be an extremely useful property to have when constructing the reductions that we need to show hardness of Partial $k$-Dominating Set and Max $k$-Cover.
\begin{lemma}\label{lemma:r-regularization-khov}
    Given sets consisting of $n$ $d$ dimensional vectors $A_1,\dots, A_k$ and an integer $r<h$, one can construct the corresponding sets $A_1',\dots, A_k'$ of dimensions $f(k,r)\cdot d$ for some computable function $f$, such that
    \begin{itemize}
        \item $A_1',\dots, A_k'$ are $r$-regular.
        \item For any $r< s\leq h$ and any $a_{i_1}\in A_{i_1}, \dots, a_{i_s}\in A_{i_s}$ (for pairwise distinct $i_j$) it holds that $a_{i_1}\odot\dots \odot a_{i_s} = a'_{i_1}\odot\dots\odot a'_{s}$ (for vectors $a'_{i_j}\in A'_{i_j}$ corresponding to $a_{i_j}$).
    \end{itemize}
\end{lemma}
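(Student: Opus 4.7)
The plan is to proceed by induction on $r$. The base case $r = 1$ admits a direct \emph{sink-gadget} construction: for each index $i \in [k]$, fix an auxiliary $(h-1)$-subset $S_i \subseteq [k] \setminus \{i\}$ and add a block of $T_1 := \max_{j,\,v_j \in V_j} v_j \odot \mathbf{1}$ new coordinates whose active indices are $\{i\} \cup S_i$. On this block, I would set every vector $v_j \in V_j$ with $j \in S_i$ to all zeros, which neutralizes any contribution to $s$-wise products for $s \ge 2$ that touch the block (such a product either uses an index from $S_i$, and is then killed by the zero vector, or misses index $i$ entirely, in which case its own index set is not covered by the active set $\{i\} \cup S_i$). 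Each $v_i \in V_i$, on the other hand, is free to assign $1$ to the first $T_1 - (v_i \odot \mathbf{1})$ positions and $0$ to the rest, which equalizes $v_i \odot \mathbf{1} = T_1$ across all $v_i$. The dimension blow-up is $O(k) \cdot d$.

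For the inductive step $r \ge 2$, I would first invoke the inductive hypothesis to obtain an $(r-1)$-regular instance and then overlay analogous sink gadgets, one per $r$-subset $I = \{i_1 < \dots < i_r\}$. Each such gadget has active indices $I \cup S_I$ for a fixed $(h-r)$-subset $S_I \subseteq [k] \setminus I$, with vectors in $V_j$ for $j \in S_I$ set to zero on the block; exactly as in the base case, this automatically preserves all $s$-wise products with $s > r$. The task within a block is to design $0/1$ patterns on the vectors of $V_{i_1}, \dots, V_{i_r}$ so that the product $v_{i_1} \odot \dots \odot v_{i_r}$ evaluates to the same constant $T_r$ for every choice of $r$-tuple, using only $f(k, r) \cdot d$ coordinates in total. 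The idea is to leverage the $(r-1)$-regularity: because all proper sub-products are already uniform, the required correction $T_r - (v_{i_1} \odot \dots \odot v_{i_r})$ can be decomposed, via the inclusion-exclusion identity $\prod_j u_{i_j}[y] = 1 - \bigvee_j (1 - u_{i_j}[y])$ and a realization of the $\bigvee$ as a disjoint union of coordinate blocks keyed on which factor vanishes, into contributions that each depend on at most one vector and can therefore be padded independently via a base-case-style gadget.

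The main obstacle will be carrying out the inductive step without blowing up the dimension count to $\Theta(n^{r-1} \cdot d)$. A naive per-tuple padding scheme would allocate $T_r$ coordinates per $r$-tuple of vectors, which is clearly too expensive; the crucial step is to argue, using the prior regularity, that the correction reduces to a sum of per-vector terms so that padding can be shared across tuples as in the base case. A secondary concern is the compatibility of the new coordinates with lower-arity regularity: while the statement only demands $r$-regularity, the induction proceeds more smoothly if each gadget additionally preserves $(r-1)$-regularity (so that the hypothesis can be reapplied to the augmented instance), which I would enforce by ensuring each vector's total number of newly assigned ones on any given block is itself a constant independent of the vector.
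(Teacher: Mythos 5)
You propose to induct on $r$ and, for the key step, to decompose the deficiency $T_r - (v_{i_1}\odot\cdots\odot v_{i_r})$ into per-vector terms using prior $(r-1)$-regularity. Two things are off here, and the second is a genuine gap. First, the induction is unnecessary: the paper proves the lemma for a single $r$ by a direct construction, with no appeal to lower-arity regularity. Second, and more importantly, your inductive step requires the $r$-gadget to \emph{preserve} $(r-1)$-regularity so that the hypothesis remains available after augmentation; but the natural constructions (yours included, and the paper's) do not preserve it, since the added coordinates alter the lower-arity inner products. The paper circumvents this by only ever invoking the lemma top-down in $r$ (in Lemma~\ref{lemma:khov-regularity}, going $r = k-1, k-2, \dots, 1$), using that the lemma guarantees preservation only of arities \emph{strictly greater} than $r$. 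Your ``secondary concern'' fix --- equalizing each vector's number of new ones per block --- buys only $1$-regularity of the block, not $(r-1)$-regularity, so the bottom-up induction does not close.

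The right idea is in your sketch, though: you mention realizing the $\bigvee$ ``as a disjoint union of coordinate blocks keyed on which factor vanishes.'' Pushed through, this is exactly the paper's construction. For each $r$-subset $S = \{s_1,\dots,s_r\}$, append $(2^r-1)d$ coordinates organized into $2^r - 1$ blocks indexed by nonzero $b \in \{0,1\}^r$; on the $b$-block, the extension of $a_{s_j}$ is $a_{s_j}$ itself if $b[j]=0$ and its coordinatewise complement $\overline{a_{s_j}}$ if $b[j]=1$, while every vector outside $S$ is extended by all zeros. Then for each original coordinate $y$ there is a \emph{unique} $b$ (namely $b[j]=1-a_{s_j}[y]$) on which the $r$-fold product equals $1$, so the $S$-product over the original coordinates plus all $S$-blocks is exactly $d$, independently of the chosen $r$-tuple. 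No per-vector accounting is needed, and the dimension overhead is $2^r\binom{k}{r}d$, a function of $k$ and $r$ only. Products of arity $>r$ are untouched because any $(r+1)$-tuple involves an index outside $S$, which is zero on the $S$-blocks, and products on blocks for other $S'$ also vanish since some index lies in $S'\setminus S$ or $S\setminus S'$. This is where the ``per-vector terms'' framing breaks down: the cancellation is pointwise across complement patterns, not an additive decomposition over single vectors.
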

\begin{proof}
    Fix an arbitrary subset $S = \{s_1,\dots, s_r\}$ of $[k]$. 
    For each vector $a_i$ in $A_i$ with $i\in [k]\setminus S$ let the corresponding vector $a_i'\in A_i'$ be obtained by concatenating an all zero vector of size $(2^r-1)d$ to $a_i$.
    We construct the remaining vectors as follows.
    For each binary string $b$ of length $r$ define
    \[
    e(a_{s_j}, b) := 
    \begin{cases}
     a_{s_j}, & \text{if $b[j]=0$} \\
     \overline{a_{s_j}}, & \text{if $b[j]=1$}
    \end{cases}
    \]
    and let $a_{s_j}'$ be obtained by concatenating the vectors $e(a_{s_j}, 0\dots00), e(a_{s_j}, 0\dots01), \dots, e(a_{s_j}, 1\dots11)$.
    
    Observe that for every $r$-tuple of vectors $a_{s_1}\in A_{s_1}, \dots, a_{s_r}\in A_{s_1}$ and for every $j\in [d]$ there is a unique binary string $b$ such that $e(a_{s_1},b)[j] = \dots = e(a_{s_r},b)[j] = 1$.
    Note that we have added $(2^r-1)d$ many new coordinates to each vector. Associate the active indices to each of the added coordinates to contain $s_1,\dots, s_r$ and assign the remaining $(h-r)$ active indices arbitrarily.
    It is easy to verify now that each vector $a'_{s_1}\in A_{s_1},\dots, a'_{s_r}\in A_{s_r}$ satisfies $a'_{s_1}\odot \dots \odot a'_{s_r} = d$, while the product of any other $r$-tuple of vectors stays the same as before adding the new coordinates.
    Finally, repeating this process for every subset $S \subseteq [k]$ of size $r$ gives the $r$-regularity of $A'_1,\dots A'_k$.

    We further observe that if we take any set of $(r+1)$ vectors $a'_{s_1}\in A'_{s_1},\dots, a'_{s_{r+1}}\in A'_{s_{r+1}}$ (for pairwise distinct $s_1,\dots, s_{r+1}$), in each entry $j>d$ there is at least one $i$ such that $a'_{s_i}[j] = 0$.
    In particular, $a'_{s_1}\odot\dots\odot a'_{s_{r+1}} = a_{s_1}\odot\dots\odot a_{s_{r+1}}$.
    We also note that the dimension of every vector $a'_i$ is at most $2^{r}\cdot\binom{k}{r} d$.
\end{proof}
We can now apply the construction from the lemma above to get the desired regularization.
\begin{lemma}\label{lemma:khov-regularity}
    Let $V_1,\dots, V_k\subseteq \{0,1\}^d$, with each coordinate $y\in [d]$ associated to active indices $i_1,\dots, i_h\in [k]$ as above.
    We can construct the corresponding sets $V_1',\dots, V_k'\subseteq \{0,1\}^{d'}$ such that each vector $v_i\in V_i$ corresponds to a unique vector $v_i'\in V_i'$ and vice versa and the following conditions are satisfied.
    \begin{itemize}
        \item $d' = f(k)d$ for a computable function $f$.
        \item $V'_1,\dots, V'_k$ are $r$-regular for every $1\leq r<h$.
        \item For any pairwise distinct indices $i_1,\dots i_h \in [k]$ and the vectors $v_{i_1}\in V_{i_1},\dots, v_{i_h}\in V_{i_h}$, the product $v_{i_1}\odot\dots \odot v_{i_h} = v'_{i_1}\odot \dots \odot v'_{i_h}$ (for vectors $v'_{i_j}\in V'_{i_j}$ corresponding to $v_{i_j}\in V_{i_j}$).
    \end{itemize}
\end{lemma}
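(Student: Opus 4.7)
The plan is to obtain the claimed regularity properties by iteratively applying Lemma~\ref{lemma:r-regularization-khov} for decreasing values of $r$, starting with $r = h-1$ and going down to $r = 1$. The choice of \emph{decreasing} order is essential, and it is the only conceptual point in the proof.

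Concretely, the $r$-th iteration takes the current instance (initially $V_1,\dots,V_k$) and, by Lemma~\ref{lemma:r-regularization-khov}, produces a new instance that is $r$-regular while keeping all $s$-tuple products for $s > r$ unchanged. Thus, if at some earlier step we already achieved $r'$-regularity for some $r' > r$, then, since the $r'$-tuple products are preserved by the current $r$-step, the $r'$-regularity is retained. An easy induction over the sequence $r = h-1, h-2, \dots, 1$ therefore shows that the final sets $V_1',\dots,V_k'$ are simultaneously $r$-regular for every $1 \le r \le h-1$. Moreover, since $h > r$ throughout, every iteration preserves $h$-tuple products, so the $h$-tuple products in $V_1',\dots,V_k'$ equal those in the original $V_1,\dots,V_k$, establishing item~3 of the lemma statement.

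The required vector correspondence follows from composing the bijections that Lemma~\ref{lemma:r-regularization-khov} provides at each step (each application simply extends vectors by appended coordinates). For the dimension bound, each application at level $r$ multiplies the current ambient dimension by at most $1 + (2^r - 1)\binom{k}{r}$ (as read off from the construction inside the proof of Lemma~\ref{lemma:r-regularization-khov}), so
\[
    d' \;\le\; d \cdot \prod_{r=1}^{h-1}\Bigl(1 + (2^r - 1)\tbinom{k}{r}\Bigr) \;\le\; f(k)\cdot d
\]
for some computable function $f$ depending only on $k$ (recall $h \le k$), as required.

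The main (and only) subtlety is picking the iteration order: if one processed $r$ from small to large instead, the $(r+1)$-step could destroy previously established $r$-regularity, because products of size $s = r$ are \emph{not} guaranteed to be preserved by the $(r{+}1)$-step. Once the correct ordering is fixed, the verification is essentially a routine bookkeeping argument, so I do not anticipate any further obstacles.
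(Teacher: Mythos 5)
Your proof is correct and follows essentially the same approach as the paper's: iterate Lemma~\ref{lemma:r-regularization-khov} in \emph{decreasing} order of $r$, using the fact that the $r$-step preserves all $s$-tuple products with $s > r$ to argue that previously-established higher regularity survives. One minor discrepancy worth flagging: the paper's proof text begins the iteration at $r = k-1$, but Lemma~\ref{lemma:r-regularization-khov} has the precondition $r < h$, so that starting value is only admissible when $h = k$ — your choice of starting at $r = h-1$ is in fact the correct instantiation for general $2 \le h \le k$ (the paper's $r = k-1$ appears to be a typo). Your dimension bookkeeping (multiplier $1 + (2^r-1)\binom{k}{r}$ per step, which is at most the paper's $2^r\binom{k}{r}$) is also fine; both give a computable $f(k)$.
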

\begin{proof}
        Let $f_t(k) = 2^{t}\binom{k}{t}$ for any $1\leq t\leq k$.
    We apply the construction of the previous lemma on sets $A_1,\dots A_k$ and $r=k-1$ to obtain $k-1$-regular sets $A_1^{(k-1)},\dots, A_k^{(k-1)}$ of vectors of dimensions at most $2^{k-1}kd = f_{k-1}(k)\cdot d$ that satisfy the second constraint.
    Then, apply the same construction recursively on sets $A_1^{(k-1)},\dots, A_k^{(k-1)}$ and $r=k-2$, to obtain $(k-1)$-regular, $(k-2)$-regular sets $A_1^{(k-2)},\dots, A_k^{(k-2)}$ of vectors of dimensions at most $2^{k-2}\binom{k}{k-2}f_{k-1}(k)\cdot d = f_{k-2}(k)f_{k-1}(k) \cdot d$.
    
    Proceed recursively and set $A'_i = A^{(1)}_i$.
    The dimension of vectors in $A'_i$ are $d\cdot \prod_{i=1}^{k-1}f_i(k) = f(k)\cdot d$.
    Clearly the conditions of the lemma are satisfied.
\end{proof}
It is well known that an algorithm solving $k$-minIP/maxIP in time $O(n^{k-\varepsilon})$ would refute $k$-OV hypothesis (this is trivial for $k$-minIP; for $k$-maxIP, see e.g. \cite{KarppaKK18} where a proof for $k=2$ is given). 
For completeness, we adapt this approach to show that an efficient algorithm solving $(k,h)$-maxIP would imply an efficient algorithm for $(k,h)$-OV, and remark that the reduction from $(k,h)$-OV to $(k,h)$-minIP is trivial.
\begin{lemma}\label{lemma:maxip-hardness}
    Let $A_1,\dots,A_k\subseteq \{0,1\}^d$ be given sets each consisting of $n$ $d$ dimensional binary vectors, together with the set of $h$ associated active indices for each coordinate $y\in [d]$.
    Let $d' = 2^h\binom{k}{h}\cdot d$.
    We can construct an instance $A_1',\dots, A_k'\subseteq \{0,1\}^{d'}$ of size $n$ and an integer $t$ such that there are vectors $a_1\in A_1,\dots,a_k\in A_k$ satisfying $a_1\cdot\dots\cdot a_k = 0$ if and only if there are vectors $a'_1\in A'_1,\dots,a'_k\in A'_k$ satisfying $a'_1\cdot\dots\cdot a'_k \geq t$.
\end{lemma}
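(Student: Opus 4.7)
The plan is to expand each coordinate of the given $(k,h)$-OV instance into a small block of new coordinates, so that the $(k,h)$-maxIP value of the resulting instance counts precisely the original coordinates where the active entries are \emph{not} all equal to $1$. Concretely, for each coordinate $y \in [d]$ with active indices $i_1,\dots,i_h$ and each bit-string $b \in \{0,1\}^h \setminus \{(1,\dots,1)\}$, I will introduce a fresh coordinate $y_b$ and declare its active indices to again be $i_1,\dots,i_h$. For the corresponding vectors I will set
\[
a'_{i_j}[y_b] \;=\; \begin{cases} a_{i_j}[y] & \text{if } b[j] = 1, \\ 1 - a_{i_j}[y] & \text{if } b[j] = 0. \end{cases}
\]
Entries of non-active vectors at $y_b$ can be set arbitrarily (say to $1$), since $\odot$ only looks at the $h$ active entries. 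The total number of new coordinates is $(2^h-1)d$, comfortably within $d' = 2^h\binom{k}{h} d$, and each $|A'_i|$ remains equal to $n$ because the construction is coordinate-wise.

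The central step will be to verify the clean identity
\[
a'_1 \cdot \dots \cdot a'_k \;=\; d - a_1 \cdot \dots \cdot a_k.
\]
To show this I would fix any original coordinate $y$ with active indices $i_1,\dots,i_h$ and any choice of vectors $a_{i_1}\in A_{i_1},\dots,a_{i_h}\in A_{i_h}$, and observe that exactly one pattern $b\in\{0,1\}^h$ satisfies $\prod_j a'_{i_j}[y_b] = 1$, namely $b = (a_{i_1}[y],\dots,a_{i_h}[y])$. Since we excluded the all-ones pattern, the block $\{y_b\}_b$ contributes $1$ to $a'_{i_1}\odot\dots\odot a'_{i_h}$ precisely when $(a_{i_1}[y],\dots,a_{i_h}[y]) \ne (1,\dots,1)$, i.e.\ when $\prod_j a_{i_j}[y] = 0$. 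Summing over all $y \in [d]$ and using $a'_1\cdot\dots\cdot a'_k = \sum_{i_1<\dots<i_h} a'_{i_1}\odot\dots\odot a'_{i_h}$, the identity follows.

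With this identity in hand, choosing $t := d$ immediately yields the desired equivalence $a_1\cdot\dots\cdot a_k = 0 \iff a'_1\cdot\dots\cdot a'_k \ge t$, because $a_1\cdot\dots\cdot a_k$ is a non-negative integer bounded above by $d$. I do not anticipate any substantive obstacle: the only subtle point is the book-keeping of active indices, namely that the non-active entries really are harmless thanks to the way $\odot$ is defined to only multiply active entries. This lets the inclusion-exclusion-style per-block argument above go through unaffected by any choices we make outside the active positions.
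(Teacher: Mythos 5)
Your proof is correct and uses the same complementation idea as the paper, but a leaner, direct implementation. The paper's route applies Lemma~\ref{lemma:r-regularization-khov} with $r=h$ --- appending, for each of the $\binom{k}{h}$ subsets $S\subseteq[k]$, a block of $(2^h-1)d$ complement-coordinates with active set $S$ --- then zeroes out the first $d$ (original) coordinates and takes $t=\binom{k}{h}d$. You instead expand each coordinate $y$, with active set $\{i_1,\dots,i_h\}$, into $2^h-1$ fresh coordinates $y_b$ (one per $b\neq(1,\dots,1)$) inheriting the same active set, and take $t=d$. Your version is more economical ($(2^h-1)d$ coordinates versus $\binom{k}{h}2^h d$) and makes the accounting exact by construction: the block $\{y_b\}_b$ contributes one to $a'_{i_1}\odot\cdots\odot a'_{i_h}$ precisely when $(a_{i_1}[y],\dots,a_{i_h}[y])\neq(1,\dots,1)$, giving $a'_1\cdot\dots\cdot a'_k = d - a_1\cdot\dots\cdot a_k$ on the nose and hence the threshold $t=d$ immediately. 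This also sidesteps a subtlety in the paper's route: there each per-$S$ block spans complements of \emph{all} $d$ originals, so the pre-zeroing $h$-product $a'_{s_1}\odot\cdots\odot a'_{s_h}$ equals $a_{s_1}\odot\cdots\odot a_{s_h}+d-|\{y\in[d]: a_{s_1}[y]=\cdots=a_{s_h}[y]=1\}|$, which (owing to the promise that inactive entries equal $1$) equals $d$ only if no coordinate has active set disjoint from $S$; your per-$y$ block carries the actual active set of $y$, so no such caveat arises. The only thing you give up is the modularity of reusing Lemma~\ref{lemma:r-regularization-khov}.
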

\begin{proof}
    By applying the exact same construction from Lemma \ref{lemma:r-regularization-khov}, but plugging in $r = h$, we get $h$-regular instance $A_1',\dots, A_k'\subseteq \{0,1\}^{d'}$ (more precisely, the product of each $h$ vectors from pairwise distinct sets is equal to $d$).
    We now iterate through each vector $a'_i\in A'_i$ (for every $i\in [k]$) and set $a'_i[y]=0$ for each $y\in [d]$. 
    By doing so, we notice that in the product of any $h$ vectors $a'_{i_1}\in A'_{i_1},\dots, a'_{i_h}\in A'_{i_h}$ from pairwise distinct sets we lose exactly the contribution of the product from the corresponding vectors $a_{i_1}\in A_{i_i},\dots, a_{i_h}\in A_{i_h}$.
    In particular, this gives for any $a'_{i_1}\in A'_{i_1},\dots, a'_{i_h}\in A'_{i_h}$
    \[
        a'_{i_1}\odot\dots\odot a'_{i_h} = d - a_{i_1}\odot\dots\odot a_{i_h},
    \]
    for the vectors $a_{i_1}\dots a_{i_h}$ corresponding to $a'_{i_1}\dots a'_{i_h}$.
    By setting $t:=\binom{k}{h}d$, the desired follows directly. 
\end{proof}

We now proceed to show that a significant improvement to any of our algorithms from the previous section would yield a significant improvement to the one of the $(k,h)$-minIP, or $(k,h)$-maxIP. 
In order to do that, we distinguish between two cases depending on the parity of $h$.
Namely, if $h$ is odd, we reduce from $(k,h)$-maxIP, whereas if $h$ is even, we reduce from $(k,h)$-minIP.
In particular, we will first provide a general reduction framework that will be sufficient to cover both reductions and then we will verify the details of the two reductions separately.
\begin{lemma} \label{lemma:max-k-cover-lb-reduction}
    Let $2\leq h\leq k$ be fixed integers and $n,u,\Delta_s\leq u, \Delta_f\leq n$ be given positive integers. Let $A_1,\dots, A_k\subseteq \{0,1\}^d$ be sets consisting of $\min\{n,\Delta_f\cdot \min\{u^{1/h}, \Delta_s^{1/(h-1)}\}\}$ many $d$-dimensional binary vectors \footnote{$d = |A_i|^{\delta}$, for any $\delta>0$.}, with each coordinate $y\in [d]$ associated to $h$ active indices $i_1,\dots, i_h\in [k]$.
    We can construct a bipartite graph $G = (X\cup Y,E)$ satisfying the following conditions:
    \begin{itemize}
        \item $X$ consists of at most $\bigO(\min\{n,\Delta_f\cdot \min\{u^{1/h}, \Delta_s^{1/(h-1)}\}\})$ many vertices and for every $x\in X$ it holds that $\deg(x)\leq \bigO(\Delta_s\cdot d)$.
        \item $Y$ consists of at most $\bigO(u\cdot d)$ many vertices and for every $y\in Y$ it holds that $\deg(y)\leq \bigO(\Delta_f)$.
        \item We can compute positive integers $t,\alpha$ such that $X$ contains $k$ vertices $x_1,\dots, x_k$ satisfying $|N(x_1)\cup \dots \cup N(x_k)|\geq t$ if and only if there are vectors $a_1\in A_1,\dots, a_k\in A_k$ satisfying $a_1\cdot\dots\cdot a_k\geq \alpha$ if $h$ is odd (reduction from $(k,h)$-maxIP), and if $h$ is even, $|N(x_1)\cup \dots \cup N(x_k)|\geq t$ if and only if $a_1\cdot\dots\cdot a_k \leq \alpha$ (reduction from $(k,h)$-minIP).
        \item $G$ can be constructed deterministically in time $\bigO((|X|\Delta_s + |Y|\Delta_f)\cdot d)$.
    \end{itemize}
\end{lemma}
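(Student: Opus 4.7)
The plan is to build $G$ as the union of three pieces: a \emph{main gadget} that encodes the $(k,h)$-minIP/maxIP instance via inclusion-exclusion, the regularization pre-processing supplied by Lemma~\ref{lemma:khov-regularity}, and a \emph{penalty gadget} that forces any optimal solution to select one vertex per~$A_i$. Throughout, I identify vectors with $X$-vertices.

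First I would invoke Lemma~\ref{lemma:khov-regularity} to assume without loss of generality that $A_1,\dots,A_k$ are $r$-regular for all $1\le r<h$, at the cost of inflating $d$ by a constant factor $f(k)$. Writing $N:=|A_i|$, I choose a group-size parameter $s\in\Theta(N/\Delta_f)$ and partition each $A_i$ into equal groups $A_i^{(1)},\dots,A_i^{(s)}$ of size $N/s$. Setting $X:=A_1\cup\dots\cup A_k$, I then introduce, for every $h$-subset $I=\{i_1<\dots<i_h\}\subseteq[k]$, every tuple $(g_1,\dots,g_h)\in[s]^h$, and every coordinate $y\in a(i_1,\dots,i_h)$, a $Y$-vertex $(I,y,g_1,\dots,g_h)$, and connect $a\in A_i^{(g)}$ to it iff $i=i_\ell$, $g=g_\ell$ for some $\ell\in[h]$, and $a[y]=1$. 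Direct counting then yields $|X|=O(N)$, $|Y|=O(s^h d)$, $\deg(x)=O(s^{h-1}d)$ and $\deg(y)=O(hN/s)$. Combined with the three constraints $s\le u^{1/h}$, $s\le \Delta_s^{1/(h-1)}$, and $s\ge \Theta(N/\Delta_f)$, this produces the exact size and degree bounds claimed in the lemma under our choice of~$N$.

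The heart of the correctness proof is an inclusion-exclusion identity: for any $a_1\in A_1,\dots,a_k\in A_k$,
\begin{equation*}
|N(a_1)\cup\cdots\cup N(a_k)| \;=\; \sum_{r=1}^{h}(-1)^{r+1}\sum_{1\le i_1<\cdots<i_r\le k}|N(a_{i_1})\cap\cdots\cap N(a_{i_r})|,
\end{equation*}
with the terms for $r>h$ vanishing since each $Y$-vertex has neighbors in at most $h$ of the $A_i$'s. For $r=h$, the group constraints force exactly one valid tuple $(g_1,\dots,g_h)$, so the $r=h$ layer evaluates to $a_1\cdot\ldots\cdot a_k$; for $r<h$, each intersection has size $s^{h-r}$ times an expression that, thanks to $r$-regularity, depends only on the index pattern and not on the chosen vectors. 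Hence the coverage equals $C+(-1)^{h+1}(a_1\cdot\ldots\cdot a_k)$ for an explicit integer constant $C$ computable from the instance, and picking $\alpha$ and $t$ appropriately reads off maxIP (odd $h$) or minIP (even $h$).

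Finally, I would enforce that an optimum picks one vertex per $A_i$ by attaching a penalty gadget: for each pair $(i,g)$, add $M:=\Theta(k^2 s^{h-1}d)$ ``private'' $Y$-vertices adjacent to every member of $A_i^{(g)}$. Each such $y$ has degree $N/s=O(\Delta_f)$, each $a\in X$ gains only $M=O(\Delta_s d)$ in its degree, and $|Y|$ grows by $ksM=O(ud)$. Since the total main-gadget coverage is $O(ks^{h-1}d)\ll M$, a solution that reuses a set $A_i$ forfeits at least $M$ units compared to one using $k$ distinct sets, so an optimum must take one vertex per $A_i$; and since every such solution contributes exactly $kM$ penalty units (one active group per set), the identity above is preserved with offset $C':=C+kM$. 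All edge counts are proportional to $|Y|\Delta_f+|X|\Delta_s$ times $O(d)$, matching the asserted construction time. The main obstacle I anticipate lies in calibrating the penalty gadget: one must choose $M$ large enough to dominate the main-gadget variance yet arrange the private vertices so that every valid one-per-set selection incurs an \emph{identical} penalty contribution; otherwise inclusion-exclusion acquires a non-constant correction. Distributing the penalty per group (and per set) rather than per set alone, as sketched above, resolves this while respecting the $\Delta_f$ bound, and the remaining verifications are routine arithmetic with the parameters.
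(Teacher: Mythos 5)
Your construction of the main gadget and the overall inclusion–exclusion analysis match the paper's proof, and the size/degree bookkeeping is fine, but your penalty gadget has a genuine gap: it does not actually force an optimal solution to pick one vertex per $A_i$.

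Your private vertices are attached per group block $(i,g)$, so if a solution takes two vertices of $A_i$ from \emph{different} groups $g_1\ne g_2$ (and consequently skips some $A_j$), the penalty coverage is still exactly $kM$: the two $A_i$-vertices touch disjoint blocks of private vertices, just as if they had come from two different sets. There is no forfeit, and the ``at least $M$ units'' exchange argument only applies to same-group duplicates. But once a solution omits a set $A_j$, the $r$-regularity-based constancy of the $r<h$ intersection terms no longer applies, so you cannot read off an inner-product value and cannot rule out a false positive. The fix the paper uses is to attach penalty sets $P^i_{j,\ell}$ to \emph{pairs} of groups $\{j,\ell\}$, with $x\in X_i$ adjacent to $P^i_{j,\ell}$ iff $x$'s group is in $\{j,\ell\}$; then any two vertices of $X_i$ — whatever their groups — have a common neighbor in at least one penalty set $P^i_{g_1,g_2}$ of size $100k^k d\,s^{h-2}$, which dominates the maximal pairwise $D$-gadget overlap of $O(k^{h-2}s^{h-2}d)$. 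That double-coverage is exactly what makes the exchange argument go through for cross-group duplicates. Your proposal identifies the right concern but the ``distribute the penalty per group'' idea does not resolve it; you would need the pairwise (or otherwise overlapping) penalty structure to get the paper's conclusion.
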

\begin{proof}
    By \autoref{lemma:khov-regularity}, we can assume without loss of generality that $A_1,\dots, A_k$ are $r$-regular sets for each $r<h$. 
    Moreover, we can assume without loss of generality that for each $h$-tuple of pairwise distinct indices $i_1,\dots, i_h \in [k]$ the set $|a(i_1,\dots, i_h)| = d'$ (for $d' = \bigO(d)$).
    Let $s$ be a positive integer that we will fix later and 
    \begin{align*}
        & X =  X_1\cup \dots \cup X_k  \\
        & Y = \bigcup_{\substack{1\leq i_1<\dots<i_h\leq k \\ g_1,\dots, g_h \in [s]}} D_{i_1\dots i_h}^{(g_1\dots g_h)} \cup \bigcup_{\substack{i\in [k] \\ 1\leq j<\ell \leq s}} P^{i}_{j,\ell} 
    \end{align*}
    where each $X_i$ corresponds to $A_i$, each $D_{i_1\dots i_h}^{g_1\dots g_h}$ consists of a copy of $d'$ vertices corresponding to coordinates in $[d]$, for which the active indices are $i_1,\dots, i_h$, and each $P^{i}_{j,\ell}$ corresponds to a copy of $[100k^{k}ds^{h-2}]$.

    We divide each $X_i$ uniformly into $s$ many groups labeled $1,\dots, s$, each group containing at most $\Delta_f$ many vertices. 
    Add an edge between a vertex $x_{i_j}\in X_{i_j}$ (for any $1\leq j\leq h$) and a vertex $t\in D_{i_1\dots i_h}^{(g_1\dots g_h)}$ if and only if the following conditions are satisfied:
    \begin{itemize}
        \item $x_{i_j}$ is in the group labelled $g_j$ in $X_{i_j}$. 
        \item The vector $a_{i_j}$ corresponding to $x_{i_j}$ has the entry $1$ in the coordinate corresponding to $t$.
    \end{itemize}
    Intuitively, the $i_j$'s indicate which indices are active for the given coordinate, while the $g_j$'s ensure we only add edges from one group per active set (thus controlling the maximal degree of vertices in $D_{i_1\dots i_h}^{(g_1\dots g_h)}$).
    Finally, we add an edge between $x_i\in X_i$ and $p\in P^i_{j,\ell}$ if and only if $x_i$ is contained in one of the groups $j,\ell$ in $X_i$.
    Intuitively, this gadget penalizes the selection of vertices from the same set in the solution. 
    There are no other edges in $G$.

    We fix $s:=|X|/\Delta_f$ and proceed to show that with this choice for $s$, the bounds for the number of vertices in $G$ and the maximal degrees are satisfied. 
    We first notice that $|X| = k\cdot |A_1| = \bigO(n)$, so we proceed to count the vertices in $Y$. To this end, we count the vertices in $\bigcup_{\substack{1\leq i_1<\dots<i_h\leq k \\ g_1,\dots, g_h \in [s]}} D_{i_1\dots i_h}^{(g_1\dots g_h)}$ and in  $\bigcup_{\substack{i\in [k] \\ j,\ell \in [s]}} P^{i}_{j,\ell}$ separately. 
    Note that we have at most $k^{h} = \bigO(1)$ choices for indices $i_1,\dots, i_h$ and at most $s^h \leq u$ many choices for the indices $g_1,\dots, g_h$. 
    Since each set $D_{i_1\dots i_h}^{(g_1\dots g_h)}$ has $\bigO(d)$ many vertices, thus :
    \[\Big|\bigcup_{\substack{1\leq i_1<\dots<i_h\leq k \\ g_1,\dots, g_h \in [s]}} D_{i_1\dots i_h}^{(g_1\dots g_h)}\Big| \leq \bigO(du) .\]
    On the other hand, we have a total of $k\cdot s^{2} = \bigO(s^2)$ many sets 
    $P^{i}_{j,\ell}$, and each contains $\bigO(ds^{h-2})$ many vertices. Hence we get:
    \[\Big|\bigcup_{\substack{i\in [k] \\ j,\ell \in [s]}} P^{i}_{j,\ell}\Big|\leq \bigO(ds^h) \leq \bigO(du).\]
    It remains to argue that the maximal degree conditions are satisfied. 
    Consider a vertex $x_{i}$ that is contained in the group labeled $j$ in set $X_i$. 
    By construction, $x_i$ is adjacent to all vertices in $P_{j,\ell}^i$, where $\ell\in [s]$ is arbitrary. 
    In total, this gives us $s$ sets, each consisting of $\bigO(ds^{h-2})$ many vertices, bounding a total neighborhood size of $x_i$ in $\bigcup P_{j,\ell}^i$ by at most $\bigO(ds^{h-1}) \leq \bigO(d\Delta_s)$. 
    On the other hand, $x_i$ is adjacent to at most $\bigO(d)$ many vertices in each $D_{i_1,\dots, i_h}^{(g_1,\dots, g_h)}$ with $i = i_r$ for some $1\leq r\leq h$ and the corresponding index $g_r = j$.
    There are at most $k^{h-1} = \bigO(1)$ choices for the indices $i_1,\dots, i_h$ and at most $s^{h-1} \leq \bigO(\Delta_s)$ many choices for the indices $g_1,\dots, g_h$. 
    This allows us to bound the degree of $x_i$ as 
    \begin{align*}
    \deg(x_i) &\leq \Big|N(x_i)\cap \big(\bigcup_{\substack{i\in [k] \\ j_1\dots j_{h-1} \in [s]}} P^{i}_{j,\ell}\big) \Big| + \Big|N(x_i)\cap \big(\bigcup_{\substack{1\leq i_1<\dots<i_h\leq k \\ g_1,\dots, g_h \in [s]}} D_{i_1\dots i_h}^{(g_1\dots g_h)}\big) \Big| \\
    & \leq \bigO(d\Delta_s + d\Delta_s) = \bigO(d\Delta_s).
    \end{align*}
    On the other hand, any vertex $y\in P^{i}_{j,\ell}$ is only adjacent to at most $\bigO(\Delta_f)$ many vertices in groups labelled $j$ and $\ell$ of set $X_i$.
    Similarly, any $y\in D_{i_1\dots i_h}^{(g_1\dots g_h)}$ is adjacent only to the $\bigO(\Delta_f)$ in group $g_j$ of set $i_j$ for all $j\in [h]$. This allows us to bound $\deg(y)\leq \bigO(\Delta_f)$ for any $y\in Y$. 
    
    We now proceed to show that any choice of vertices $x_1,\dots, x_k \in V(G)$ that maximizes the value $|N(x_1)\cup \dots \cup N(x_k)|$ satisfies (without loss of generality) $x_1\in X_1,\dots, x_k\in X_k$. We do so by assuming we are given any subset $S$ of $X$ such that no vertex from $X_i$ is contained in $S$, and showing that we can replace some vertex from $S$ by any vertex from $X_i$ so that the number of dominated vertices in $Y$ increases.
    \begin{claim}
        Let $S\subseteq X$ be a set containing $k$ vertices from $X$, such that for some $i\in [k]$, the set $X_i\cap S$ is empty. 
        Then there exists a vertex $b\in S$, such that for any vertex $x_i\in X_i$ it holds that $|N(S)| \leq |N(S\setminus\{b\}\cup \{x_i\})|$.
    \end{claim}
    \begin{proof}
        We first observe that since $S\cap X_i$ is empty, there exists some $\alpha\neq i$ such that $|S\cap X_\alpha| \geq 2$. 
        Let $x_\alpha,x'_\alpha$ be two vertices contained in $S\cap X_\alpha$, and $x_i$ be any vertex from $X_i$.
        We can observe that there is at least one set $P^\alpha_{j,\ell}$ such that $P^\alpha_{j,\ell}\subseteq N(x_\alpha)\cap N(x_{\alpha}')$.
        Hence, we get the following
        \begin{equation} \label{eq:remove-x_alpha-from-S}
            \begin{split}
                |N(S\setminus\{x_{\alpha}'\})| &\geq |N(S)| - \deg(x_{\alpha}') + |P^\alpha_{j,\ell}| \\
                & \geq |N(S)| - \deg(x_{\alpha}') + 100k^kds^{h-2}
            \end{split}
        \end{equation}
        On the other hand, by adding any vertex $x_i\in X_i$ to the set $S\setminus \{x_{\alpha}\}$, we get
        \begin{align*}
        \mathmakebox[\mathindent][l]{
        |N(S\setminus\{x_{\alpha}'\}\cup\{x_i\})| }\\
        & \geq |N(S\setminus\{x_{\alpha}'\})| + \deg(x_i) - \sum_{y\in S\setminus x_{\alpha}}|N(y) \cap N(x_i)| & \text{(by I.E. principle)}\\
            & \geq \big(|N(S)| - \deg(x_{\alpha}') + 100k^kds^{h-2}) + \deg(x_i) - \sum_{y\in S\setminus x_{\alpha}}|N(y) \cap N(x_i)| & \text{(\autoref{eq:remove-x_alpha-from-S})} \\ 
            & = |N(S)| + 100k^kds^{h-2} - \sum_{y\in S\setminus \{x_{\alpha'}\}}|N(y) \cap N(x_i)| & (\deg(x_i) = \deg(x'_\alpha))\\ 
            & \geq  |N(S)| + 100k^kds^{h-2} - k^{h-1}s^{h-2}d' & \\
            & \geq |N(S)|, &
        \end{align*}
        where second to last inequality follows by observing that for any pair of vertices $x_i\in X_i, x_j\in X_j$, where $i,j$ are distinct and $x_i$, $x_j$ are from groups $\ell_1,\ell_2$ respectively, the set $N(x_i)\cap N(x_j)$ contains only vertices from $D_{i_1\dots i_h}^{(g_1\dots g_h)}$ where $i,j\in \{i_1,\dots, i_h\}$ and the corresponding group labels are $\ell_1, \ell_2$. Observe that there are at most $s^{h-2}$ many choices for the remaining indices $\{i_1,\dots i_h\}-\{i,j\}$ and at most $k^{h-2}$ many choices for the remaining group indices.
        Recall that each set $D_{i_1\dots i_h}^{(g_1\dots g_h)}$ has $d'$ many vertices.
        Exploiting the assumption that there are no vertices from $X_i$ in $S$ and plugging in this bound, we get $\sum_{y\in S\setminus \{x_{\alpha'}\}}|N(y) \cap N(x_i)|\leq \sum_{y\in S\setminus \{x_{\alpha'}\}}k^{h-2}s^{h-2}d' = k^{h-1}s^{h-2}d'$.
    \end{proof}
    Now fix a set $S\subseteq X$ of size $k$, such that for any other subset of size $k$ $S'\subseteq X$ it holds that $|N(S)|\geq |N(S')|$.
    By the claim above, we can assume without loss of generality that $S = \{x_1,\dots, x_k\}$ and $S\cap X_i = \{x_i\}$ for each $i\in [k]$.
    We proceed to count the number of vertices in $N(S)$.
    We can observe that since $S$ contains exactly one vertex from each set, for every pair $x_i, x_j\in S$ for $i\neq j$ it holds that $N(x_i)\cap N(x_j) \subseteq \bigcup_{\substack{1\leq i_1<\dots<i_h\leq k \\ g_1,\dots, g_h \in [s]}} D_{i_1\dots i_h}^{(g_1\dots g_h)}$.
    Moreover, for every $(h+1)$ tuple of vertices from $S$ $x_{i_1},\dots, x_{i_{h+1}}$ it holds that $N(x_{i_1})\cap \dots \cap N(x_{i_{h+1}}) = \emptyset$.
    Applying the principle of inclusion-exclusion, we get 
    \begin{align*}
        |N(S)| &= \sum_{r\in [k]} (-1)^{r+1} \cdot \sum_{1\leq i_1<\dots < i_r\leq k} |N(x_{i_1}) \cap\dots \cap N(x_{i_r})| &\\
        & = \sum_{r\in [h]} (-1)^{r+1} \cdot \sum_{1\leq i_1<\dots < i_r\leq k} |N(x_{i_1}) \cap\dots \cap N(x_{i_r})| 
    \end{align*}
    We now observe that since sets $V_1,\dots, V_k$ are $r$-regular for every $r<h$, it holds that
    \[\sum_{r\in [h-1]} (-1)^{r+1} \cdot \sum_{1\leq i_1<\dots < i_r\leq k} |N(x_{i_1}) \cap\dots \cap N(x_{i_r})| = \sum_{r\in [h-1]} (-1)^{r+1} \cdot \sum_{1\leq i_1<\dots < i_r\leq k} |N(x'_{i_1}) \cap\dots \cap N(x'_{i_r})|,\]
    for any $x'_1\in X_1,\dots, x'_k\in X_k$.
    Hence, the total number of vertices dominated by $S$ depends only on the value $\sum_{1\leq i_1<\dots < i_h\leq k} |N(x_{i_1}) \cap\dots \cap N(x_{i_h})|$.
    In particular, if we fix $t:=\sum_{r\in [h-1]} (-1)^{r+1} \cdot \sum_{1\leq i_1<\dots < i_r\leq k} |N(x_{i_1}) \cap\dots \cap N(x_{i_r})|$, we get 
    \[
    |N(S)| = t + (-1)^{h+1} \cdot \sum_{1\leq i_1<\dots < i_h\leq k} |N(x_{i_1}) \cap\dots \cap N(x_{i_h})|.
    \]
    Finally, by noticing that $|N(x_{i_1}) \cap\dots \cap N(x_{i_h})|>0$ if and only if the corresponding vectors satisfy $v_{i_1}\odot \dots \odot v_{i_h}>0$, we can conclude that if $h$ is even, 
    the set $S = \{x_1,\dots, x_k\}$ that maximizes the value of $|N(S)|$ exactly corresponds to the set of vectors $\{v_1,\dots, v_k\}$ that minimizes the value $v_1\cdot \dots \cdot v_k$. That is, $|N(S)|\geq t-\alpha$ if and only if $v_1\cdot \dots \cdot v_k\leq \alpha$, for any non-negative integer $\alpha$, giving us a reduction from $(k,h)$-minIP.
    Similarly if $h$ is odd, the set $S = \{x_1,\dots, x_k\}$ that maximizes the value of $|N(S)|$ exactly corresponds to the set of vectors $\{v_1,\dots, v_k\}$ that maximizes the value $v_1\cdot \dots \cdot v_k$. That is, $|N(S)|\geq t+\alpha$ if and only if $v_1\cdot \dots \cdot v_k\geq \alpha$ for any non-negative integer $\alpha$, giving us a reduction from $(k,h)$-maxIP.
\end{proof}
By using the last lemma, we can now show that a fast algorithm for Max-$k$-Cover would give us a fast algorithm for $(k,h)$-minIP (resp. $(k,h)$-maxIP).
Formally, we state and prove this property below.
\begin{lemma}\label{lemma:max-k-cover-lb-khov}
    For any fixed $k\geq 2$, $2\leq h \leq k$, there exists a bipartite graph $G = (X\cup Y,E)$ with $|X|=n$, $|Y| = u$, $\max_{x\in X}\deg(x) = \Delta_s$, $\max_{y\in Y}\deg(y) = \Delta_f$ such that the following holds. Let $N_h:=\min\{n,\Delta_f\cdot \min\{u^{1/h}, \Delta_s^{1/(h-1)}\}\}$ then:
    \begin{enumerate}
        \item If $h\geq 3$ and there is an algorithm solving Max-$k$-Cover on $G$ in time $\bigO(N_h^{k(1-\varepsilon)})$ for some $\varepsilon>0$, then there exists a $\delta>0$, such that we can solve any $(k,h)$-minIP instance (if $h$ is even, otherwise any $(k,h)$-maxIP instance) $A_1,\dots, A_k$ with $|A_1| = \dots = |A_k| = N_h$ of dimensions $d=N_h^\delta$ in time $\bigO(N_h^{k(1-\varepsilon')})$ for some $\varepsilon'>0$.
        \item If there exists an algorithm solving Max-$k$-Cover on $G$ in time $\bigO(N_2^{k\omega/3(1-\varepsilon)})$ for some $\varepsilon>0$, then there exists a $\delta>0$ such that we can solve any $(k,2)$-minIP (maxIP) instance $A_1,\dots, A_k$ with $|A_1| = \dots = |A_k| = N_2$ of dimensions $d=N_2^\delta$ in time $\bigO(N_2^{\omega k/3(1-\varepsilon')})$ for some $\varepsilon'>0$.
    \end{enumerate}
\end{lemma}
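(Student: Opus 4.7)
The plan is to directly apply the reduction of Lemma~\ref{lemma:max-k-cover-lb-reduction}, with the four free parameters $n, u, \Delta_s, \Delta_f$ tuned so that (i) the input size required by that reduction matches $N$, and (ii) the constructed graph's own ``$N_h$-parameter'' stays at $\bigO(N)$. Concretely, given a $(k,h)$-minIP/maxIP instance with $|A_i| = N$ and $d = N^\delta$, I set $n = N$, $\Delta_f = N^\eta$, $\Delta_s = N^{(1-\eta)(h-1)}$, $u = N^{(1-\eta)h}$ for a small auxiliary parameter $\eta \in [0, 1)$. A direct check confirms $\Delta_s \leq u$, $\Delta_f \leq n$, and $\min\{n, \Delta_f \cdot \min\{u^{1/h}, \Delta_s^{1/(h-1)}\}\} = N$, so the preconditions of Lemma~\ref{lemma:max-k-cover-lb-reduction} are met.

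The reduction then produces a graph $G$ with $|X| \leq \bigO(N)$, $\max_{x\in X}\deg(x) \leq \bigO(N^{(1-\eta)(h-1) + \delta})$, $|Y| \leq \bigO(N^{(1-\eta)h + \delta})$, and $\max_{y\in Y}\deg(y) \leq \bigO(N^\eta)$. Crucially, the graph's own $N_h$-parameter $\min\{|X|, \max_y\deg(y) \cdot \min\{|Y|^{1/h}, (\max_x\deg(x))^{1/(h-1)}\}\}$ is bounded by $|X| = \bigO(N)$, so invoking the assumed Max-$k$-Cover algorithm on $G$ solves it in time $\bigO(N^{k(1-\varepsilon)})$ for Case~1 (respectively $\bigO(N^{k\omega/3 \cdot (1-\varepsilon)})$ for Case~2). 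Correctness of the composed reduction---including the parity-dependent switch between $(k,h)$-minIP (even $h$) and $(k,h)$-maxIP (odd $h$)---is inherited directly from Lemma~\ref{lemma:max-k-cover-lb-reduction}.

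The main subtlety is ensuring that the \emph{construction time}, which Lemma~\ref{lemma:max-k-cover-lb-reduction} bounds by $\bigO((|X|\Delta_s + |Y|\Delta_f)\cdot d) = \bigO(N^{h - \eta(h-1) + 2\delta})$, does not dominate the target running time. In Case~1 with $h < k$, this is automatic with $\eta = 0$ and $\delta$ sufficiently small, since $h < k(1-\varepsilon)$ for small $\varepsilon$. In the extremal case $h = k$, I take $\eta = 3\delta/(k-1)$, which pushes the construction exponent down to roughly $k - \delta$, still strictly below $k(1-\varepsilon')$ for a suitable $\varepsilon' > 0$ once $\delta$ is small. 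For Case~2, choosing $\eta$ sufficiently close to $1$ so that $2 - \eta + 2\delta < k\omega/3 \cdot (1-\varepsilon)$ makes the reduction subdominant whenever $k\omega/3 > 1$, which holds for all $k \geq 2$. Combining the two phases yields a $(k,h)$-minIP/maxIP algorithm running in time $\bigO(N^{k(1-\varepsilon')})$ (resp.\ $\bigO(N^{k\omega/3 \cdot (1-\varepsilon')})$) for some $\varepsilon' > 0$, as required.

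The main obstacle is the parameter balancing: since Lemma~\ref{lemma:max-k-cover-lb-reduction}'s construction time grows with $h$ and approaches $N^k$ as $h \to k$, the boundary regimes ($h = k$, and small $k$ in Case~2) force the auxiliary exponent $\eta > 0$ to be chosen delicately against $\delta$ and $\varepsilon$. Once this trade-off is set up correctly, the construction exponent lies strictly below the algorithmic exponent in every regime, which is all the conclusion requires.
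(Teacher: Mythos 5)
Your proposal takes essentially the same core approach as the paper --- apply the reduction of Lemma~\ref{lemma:max-k-cover-lb-reduction} and observe that the running time of the assumed Max-$k$-Cover algorithm on the resulting instance is $\bigO(N^{k(1-\varepsilon')})$ --- but the packaging differs in one structurally significant way. The paper treats $n, u, \Delta_s, \Delta_f$ as \emph{given} (by whoever is invoking the lemma, e.g.\ the proof of Theorem~\ref{theorem:max-k-cover-LB}), applies Lemma~\ref{lemma:max-k-cover-lb-reduction} with those very parameters, and absorbs all $d$-factors uniformly by choosing $\delta = \varepsilon/(2k)$: the constructed graph has parameters $(\bigO(N_h), \bigO(ud), \bigO(\Delta_s d), \bigO(\Delta_f))$, and plugging these back into the $N_h$-formula gives $\bigO(N_h^{\delta k} \cdot N_h^{k-\varepsilon}) = \bigO(N_h^{k - \varepsilon/2})$, valid for \emph{every} admissible parameter tuple. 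You instead \emph{choose} the four parameters (via $n = N$, $\Delta_f = N^\eta$, $\Delta_s = N^{(1-\eta)(h-1)}$, $u = N^{(1-\eta)h}$) so that the minimum in the $N_h$-formula is saturated on both sides. This produces only a one-parameter family of parameter regimes, all with $n = \Delta_f u^{1/h} = \Delta_f \Delta_s^{1/(h-1)}$. Since the lemma is used to underwrite Theorem~\ref{theorem:max-k-cover-LB}, which is a lower bound for \emph{all} $(n, u, s, f)$, your proof as written covers only a slice of what is needed; you would want to either pad the constructed instance up to the given parameters (noting that padding with isolated/irrelevant vertices can only grow $N_h$ and thus the claimed budget) or, more simply, drop the parameter-tuning and mirror the paper's directly-plug-in calculation.

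On the upside, you are more careful than the paper about the \emph{construction time}: Lemma~\ref{lemma:max-k-cover-lb-reduction} bounds it by $\bigO((|X|\Delta_s + |Y|\Delta_f)\cdot d)$, and you verify explicitly that this exponent $h - \eta(h-1) + 2\delta$ lies strictly below $k(1-\varepsilon')$ in all regimes, with the $\eta > 0$ slack being essential exactly when $h = k$. The paper dismisses this with ``we could in linear time construct $G$'' without elaboration. Your bookkeeping here is a genuine improvement, and the observation that the boundary case $h = k$ needs $\eta > 0$ is correct. I would keep that part, just present it without pre-fixing the parameter tuple.
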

\begin{proof}
    We prove the first item, and the proof for the second item follows the similar lines.
    Let $\varepsilon>0$ be arbitrary and fix $\delta = \frac{\varepsilon}{2k}$.
    For fixed $k\geq h\geq 3$, let $A_1,\dots, A_k$ be a given instance of $(k,h)$-OV (if $h$ is even, otherwise $(k,h)$-maxIP) with $|A_1| = \dots = |A_k| = N_h$ and $d = N_h^\delta$, and let $G$ be the graph as in the proof of Lemma \ref{lemma:max-k-cover-lb-reduction}.
    Recall that $G$ is a bipartite graph with parts $(X,Y)$ and $\bigO(N_h)$ vertices in $X$, at most $\bigO(ud)$ vertices in $Y$, such that any $x\in X$ satisfies $\deg(x)\leq d\Delta_s$, similarly any $y\in Y$ satisfies $\deg(y)\leq \Delta_f$.

    Assume that there is an algorithm solving Max-$k$-Cover on any graph $G$ in time \[\bigO\big(\min\{n,\Delta_f\cdot \min\{(du)^{1/h}, (d\Delta_s)^{1/(h-1)}\}\}^{k-\varepsilon}\big).\]
    From the instance $A_1,\dots, A_k$, we could in linear time construct $G$, run this algorithm on $G$ and applying Lemma \ref{lemma:max-k-cover-lb-reduction}, this would give us an answer of the original instance. 
    This yields an algorithm solving $(k,h)$-minIP (i.e. $(k,h)$-maxIP) in time
    \begin{align*}
    \mathmakebox[4\mathindent][l]{
        \bigO\big(\min\{n,\Delta_f\cdot \min\{(du)^{1/h}, (d\Delta_s)^{1/(h-1)}\}\}^{k-\varepsilon}\big) }\\
        & = \bigO\big(\min\{n,\Delta_f\cdot \min\{(N_h^\delta u)^{1/h}, (N_{h}^\delta\Delta_s)^{1/(h-1)}\}\}^{k-\varepsilon}\big) \\
        & \leq  \bigO\big(N_{h}^{\delta k}\min\{n,\Delta_f\cdot \min\{u^{1/h}, \Delta_s^{1/(h-1)}\}\}^{k-\varepsilon}\big) \\
        &\leq \bigO\big(N_{h}^{\varepsilon/2}\min\{n,\Delta_f\cdot \min\{u^{1/h}, \Delta_s^{1/(h-1)}\}\}^{k-\varepsilon}\big) \\ 
        & = \bigO\big(N_{h}^{k-\varepsilon/2}\big)
    \end{align*}
\end{proof}
The previous two lemmas show that if our algorithms for Max-$k$-Cover could be significantly improved, then for some $h\geq 2$, we would also obtain a significant improvement over $(k,h)$-minIP ($(k,h)$-maxIP).
In order to prove Theorem \ref{theorem:max-k-cover-LB}, it remains to show that any such improvement for $(k,h)$-minIP ($(k,h)$-maxIP) would refute the corresponding hardness assumption depending on the value of $h$. We remark that it is sufficient to show the hardness for $(k,h)$-OV, since by \ref{lemma:maxip-hardness}, this implies the hardness for both optimization variants, $(k,h)$-minIP and $(k,h)$-maxIP as well.
\begin{lemma}\label{lemma:hyperclique-hardness-of-khov}
    Let $k$, $h$ be fixed positive integers such that $2\leq h\leq k$ and $\delta>0$ be arbitrary.
    For every sufficiently large positive integer $n$, there exists a $(k,h)$-OV instance $A_1,\dots, A_k$ with $|A_1| = \dots = |A_k| = n$, and $d = n^\delta$ such that
    \begin{enumerate}
        \item If $h = k$, and for some $\varepsilon>0$ there exists an algorithm solving $A_1,\dots, A_k$ in time $\bigO(n^{k-\varepsilon})$, then the $k$-OV Hypothesis is false.
        \item \label{item:hyperclique-hardness} If $h\geq 3$, and for some $\varepsilon>0$ there exists an algorithm solving $A_1,\dots, A_k$ in time $\bigO(n^{k-\varepsilon})$, then the $h$-Uniform Hyperclique Hypothesis is false.
        \item \label{item:clique-hardness} If $h=2$, and for some $\varepsilon>0$ there exists an algorithm solving $A_1,\dots, A_k$ in time $n^{\omega k/3 - \varepsilon}$, then the $k$-Clique Hypothesis is false.
    \end{enumerate}
\end{lemma}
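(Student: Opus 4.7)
The plan is to establish each of the three items via a separate reduction to $(k,h)$-OV with $n$ vectors per set and dimension $d = n^\delta$.

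For item~1 (the case $h = k$), I would observe that any $k$-OV instance of dimension $d = n^\delta$ is automatically a $(k,k)$-OV instance, since every coordinate then has all $k$ indices as active. Thus an $O(n^{k-\varepsilon})$-time algorithm for $(k,k)$-OV with $d = n^\delta$ would directly refute the $k$-OV hypothesis.

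Items~2 and~3 proceed via analogous reductions from $h$-uniform $k$-hyperclique and $k$-clique, respectively. For item~2 (where $h \ge 3$), given an $N$-vertex $h$-uniform hypergraph, I would first reduce to a $k$-partite instance with color classes $V_1, \dots, V_k$ of size $\Theta(N/k)$ each. I would then build $A_i$ with one vector per vertex of $V_i$, and for each $h$-subset $I = \{i_1 < \cdots < i_h\} \subseteq [k]$ and each tuple $T = (u_{i_1}, \dots, u_{i_h}) \in V_{i_1} \times \cdots \times V_{i_h}$ that is \emph{not} a hyperedge of the input, introduce a coordinate with active indices $I$, setting the entry of $a_v$ (for $v \in A_j$, $j \in I$) to $1$ iff $v = u_j$. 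A direct counting argument then shows $v_1 \cdot \dots \cdot v_k = 0$ iff the chosen vertices form a $k$-hyperclique. Item~3 ($h = 2$) would follow by the analogous reduction from $k$-clique, with coordinates encoding non-edges across color class pairs; an $O(n^{\omega k/3 - \varepsilon})$-time algorithm for $(k,2)$-OV would then yield a $k$-clique algorithm in $O(N^{\omega k/3 - \varepsilon})$ time, refuting the $k$-clique hypothesis. I would conclude by noting that hardness transfers to $(k,h)$-minIP trivially (by setting the threshold $\alpha = 0$) and to $(k,h)$-maxIP via Lemma~\ref{lemma:maxip-hardness}.

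The main obstacle is ensuring $d = n^\delta$ for arbitrarily small $\delta > 0$, since the naive constructions in items~2 and~3 produce dimension $\Theta(N^h)$. This would be handled by the standard fine-grained dimension-reduction techniques from~\cite{AbboudBDN18,KarppaKK18}: partitioning the vertex set into carefully sized groups, enumerating $k$-tuples of groups, and amortizing across sub-instances, which brings the dimension per sub-instance down to $n^\delta$ with only a constant-factor loss in the overall exponent.
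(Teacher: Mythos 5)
Your proof of item~1 ($h = k$) coincides exactly with the paper's: $(k,k)$-OV is simply $k$-OV, so nothing to do. For items~2 and~3 the core per-vertex construction (one vector per vertex, one coordinate per non-(hyper)edge, active indices given by the color classes of the non-edge) has the right shape, and the inclusion of $(k,h)$-minIP/maxIP via Lemma~\ref{lemma:maxip-hardness} is handled correctly.

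The gap is in the dimension-reduction step, which is the entire technical content of items~2 and~3. Your naive construction on an $N$-vertex $k$-partite hypergraph has $n = \Theta(N/k)$ vectors per set and dimension $d = \Theta(N^h) = \Theta(n^h)$, vastly larger than the required $n^\delta$. The fix you sketch --- ``partitioning the vertex set into carefully sized groups, enumerating $k$-tuples of groups, and amortizing across sub-instances'' --- would not achieve this: on a sub-instance containing $g$ vertices per color class the number of non-edges is still $\Theta(g^h)$, so the dimension-to-vector ratio is unchanged ($d \approx n^h$ within each sub-instance, for any choice of $g$). Making $d = n^\delta$ requires \emph{increasing} the number of vectors relative to the number of underlying vertices, not decreasing it. The paper does this by reducing from the $h$-uniform $(qk)$-hyperclique problem on a $(qk)$-partite hypergraph with $n$ vertices per part: each vector in $A_i$ encodes a $q$-tuple of vertices (one from each of the $q$ color classes $X_{q(i-1)+1}, \dots, X_{qi}$), so $|A_i| = n^q$ while the non-edge dimension stays $O(n^h)$. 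Setting $q = h/\delta$ yields $d = O(n^h) = O\big((n^q)^{\delta}\big) = O(|A_i|^\delta)$, and an $O(|A_i|^{k-\varepsilon})$-time algorithm then solves $(qk)$-hyperclique in time $n^{qk-q\varepsilon}$, which is what refutes the hypothesis. (For $h=2$ the same holds with the $\omega k/3$ exponent against the $k$-clique hypothesis.) This super-vertex packing is the technique the cited references actually use; your description points at the wrong mechanism.
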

\begin{proof}
    We first remark that by setting $h=k$, by definition, the $(k,h)$-OV problem is exactly equivalent to $k$-OV problem and an algorithm running in $\bigO(n^{k-\varepsilon})$ for this problem would refute $k$-OVH.
    Thus, it remains to prove the remaining two statements.
    We prove \autoref{item:hyperclique-hardness} and note that the proof for \autoref{item:clique-hardness} is analogous.

    Let $q$ be a positive integer whose value we will fix later.
    We reduce from $h$-Uniform $(qk)$-Hyperclique Detection problem in $(qk)$-Partite graph.
    Let $G = (X_1,\dots, X_{qk}, E)$ be a $(qk)$-partite $h$-uniform hypergraph with $|X_i| = n$ for each $i\in [qk]$.
    Let $\overline E$ denote the set of non-edges in $G$, i.e.
    \[
    \overline{E} := \big\{e\in \binom{V(G)}{h} \mid \forall i\in [k], e\cap X_i\leq 1 \big\} \setminus E
    \]
    Let $d = \bigO(n^h)$ and let $A_1,\dots, A_k$ be the sets consisting of $\bigO(n^q)$ $d$-dimensional vectors defined as follows.
    Each vector $a_i$ in $A_i$ corresponds to a collection of vertices $x_{1}\in X_{q(i-1)+1},\dots, x_q\in X_{qi}$. 
    By abuse of notation (for the sake of simplicity), we will identify each vector $a_i$ with the set $\{x_1,\dots, x_q\}$, so that we can use the set-theoretic notation directly on the elements $a_i$ (e.g. containment, union, etc.).    
    Let each coordinate $y$ correspond uniquely to an element $\{x_1,\dots, x_h\}$ of $\overline{E}$. For any vector $a_i\in A_i$, set the value of $a_i[y]$ to $0$ if 
    there exists an $a_i' \in A_i$ such that 
    $a_i'\cap \{x_1,\dots, x_h\}>a_i\cap \{x_1,\dots, x_h\}$
    , otherwise set $a_i[y]$ to $1$.
    We now proceed to associate to each coordinate $y\in [d]$ a set of active indices. 
    We associate to $y$ an index $i$ as active if the non-edge corresponding to $y$ contains a vertex from one of the sets $X_{q(i-1)+1},\dots, X_{qi}$.
    More precisely, if $\{x_1,\dots, x_h\}$ is the non-edge corresponding to $y$, and if $x_1\in X_{i_1},\dots, x_h\in X_{i_h}$, we associate to $y$ the set of active indices $\lfloor(i_1-1)/q \rfloor + 1,\dots, \lfloor(i_h-1)/q\rfloor +1$.
    Clearly, this will result in some coordinates being associated to less than $h$ indices.
    To those coordinates, we associate the remaining indices arbitrarily, until each coordinate $y$ has exactly $h$ active indices associated to it.

    We claim that there is an $h$-uniform $(qk)$-hyperclique $x_1\in X_1,\dots, x_{qk}\in X_{qk}$ if and only if there are vectors $a_1\in A_1,\dots, a_k\in A_k$ satisfying $a_1\cdot\dots\cdot a_k = 0$. 
    Assume first that there is an $h$-uniform $(qk)$-hyperclique $x_1\in X_1,\dots, x_{qk}\in X_{qk}$.
    We claim that vectors $a_1,\dots, a_k$ such that $a_i$ corresponds to $x_{q(i-1)+1}\in X_{q(i-1)+1},\dots, x_{qi}\in X_{qi}$ satisfy the orthogonality condition.
    It is sufficient to show that any $h$-tuple $a_{i_1},\dots, a_{i_h}$ satisfies $a_{i_1}\odot \dots \odot a_{i_h} = 0$.
    Assume for contradiction that $a_{i_1}\odot \dots \odot a_{i_h} \geq 1$. 
    This implies that there is a coordinate $y$ for which all of the coordinates $a_{i_1},\dots, a_{i_h}$ are active and $a_{i_1} = \dots = a_{i_h} = 1$. 
    Consider the non-edge $e$ corresponding to $y$. 
    By the construction of the active indices, there exist $a'_{i_1}\in A_{i_1},\dots, a'_{i_h}$ such that $e\subseteq a'_{i_1}\cup \dots \cup a'_{i_h}$.
    However, since $a_{i_1} = \dots = a_{i_h} = 1$, we have that $|a_{i_j}\cap e| \geq |a'_{i_j}\cap e|$ for each $j\in[h]$. Hence, $e\subseteq a_{i_1}\cup \dots \cup a_{i_h}$, implying that the vertices in $a_{i_1}\cup \dots \cup a_{i_h}$ do not form a hyperclique in $G$, which yields a contradiction.
    Conversely, assume that there are vectors $a_1\in A_1,\dots, a_k\in A_k$ such that $a_1\cdot\dots\cdot a_k = 0$. We claim that the vertices in $a_1\cup \dots\cup a_k$ form a hyperclique in $G$.
    To this end it is sufficient to show that no $e\in \overline{E}$ is contained in $a_1\cup \dots\cup a_k$.
    Assume for contradiction that there is a non-edge $e$ contained in $a_1\cup \dots\cup a_k$.
    Consider the coordinate $y$ which $e$ corresponds to and assume that $i_1,\dots, i_h$ are active indices associated to $y$. 
    If, for some $j\in [h]$, there exists $a'_{i_j}\in A_{i_j}$ such that $a'_{i_j}\cap e > a_{i_j}$, then clearly $e\not\subseteq a_1\cup \dots\cup a_k$, hence by our assumption that this containment holds, we can conclude that no such vertex exists, and in particular this implies that $a_{i_j}[y] = 1$ for each $j\in[h]$. However, this further implies $a_{i_1}\odot \dots \odot a_{i_h} \geq 1$ and moreover $a_1\cdot\dots\cdot a_k\geq 1$, contradicting the assumption that $a_1\cdot\dots\cdot a_k = 0$.
    We may thus conclude that no such non-edge exists and in particular, that the vertices in $a_1\cup \dots\cup a_k$ form a hyperclique in $G$.

    Assume that for some $\delta>0$ and $\varepsilon>0$, there exists an algorithm $\mathcal{A}$ solving any instance $A_1,\dots, A_k$ of $(k,h)$-OV, with $|A_i| = N$ and $d = N^\delta$ in time $N^{k-\varepsilon}$.
    Let $q = h/\delta$. Given a $(qk)$-partite $h$-uniform hypergraph $G$, construct the instance of $(k,h)$-OV instance $A_1,\dots, A_k$ as above and run $\mathcal{A}$ on the obtained instance.
    We have: $N:=|A_i|\leq n^{q}$ and $d\leq n^{h} = n^{q\delta} = N^{\delta}$.
    Hence, $\mathcal{A}$ solves this instance and consequently the original $h$-Uniform $(qk)$-Hyperclique Detection instance in time $N^{k-\varepsilon} = n^{qk-q\varepsilon}$, thus refuting the $h$-Uniform $k$-Hyperclique hypothesis.
\end{proof}
Theorem \ref{theorem:max-k-cover-LB} now follows directly by combining Lemma \ref{lemma:hyperclique-hardness-of-khov} with Lemma \ref{lemma:max-k-cover-lb-khov} and \ref{lemma:maxip-hardness}.
It remains to prove Theorem \ref{theorem-partial-dom-LB}.
To this end, we will reuse the reduction for Max-$k$-Cover and verify that even in the monochromatic instance, all of the desired properties get preserved.
\begin{lemma} \label{lemma:partial-domination-lb-reduction}
    Let $2\leq h\leq k$ be fixed integers and $n,\Delta\leq n$ be given positive integers. Let $A_1,\dots, A_k\subseteq \{0,1\}^d$ be sets consisting of $\min\{n, \Delta^{h/(h-1)}\}$ many $d$-dimensional binary vectors with $d = |A_i|^{\delta}$, for any $\delta>0$, and each coordinate $y\in [d]$ associated $h$ active indices $i_1,\dots, i_h\in [k]$.
    We can construct a graph $G = (X\cup Y,E)$ satisfying the following conditions:
    \begin{itemize}
        \item $G$ consists of at most $\bigO(\min\{n, \Delta^{h/(h-1)}\}d)$ many vertices and for every $x\in V(G)$, it holds that $\deg(x)\leq \bigO(\Delta\cdot d)$.
        \item We can compute positive integers $t,\alpha$ such that $G$ contains $k$ vertices $x_1,\dots, x_k$ satisfying $|N(x_1)\cup \dots \cup N(x_k)|\geq t$ if and only if there are vectors $a_1\in A_1,\dots, a_k\in A_k$ satisfying $a_1\cdot\dots\cdot a_k\geq \alpha$ if $h$ is odd (reduction from $(k,h)$-maxIP), and if $h$ is even, $|N(x_1)\cup \dots \cup N(x_k)|\geq t$ if and only if $a_1\cdot\dots\cdot a_k = 0$ (reduction from $(k,h)$-OV).
        \item $G$ can be constructed deterministically in time $\bigO(|\min\{n, \Delta^{h/(h-1)}\}|\Delta\cdot d)$.
    \end{itemize}
\end{lemma}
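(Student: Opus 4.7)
The plan is to reuse the construction of Lemma~\ref{lemma:max-k-cover-lb-reduction} essentially verbatim, treating the bipartite graph it produces as a single graph, and to verify that the penalty gadgets $P^i_{j,\ell}$ already enforce the monochromatic exchange argument without any additional padding. Concretely, I would start from a $(k,h)$-maxIP/minIP instance $A_1,\dots,A_k$ of size $\min\{n,\Delta^{h/(h-1)}\}$, apply Lemma~\ref{lemma:khov-regularity} to obtain $r$-regularity for every $r < h$, then invoke the construction of Lemma~\ref{lemma:max-k-cover-lb-reduction} with $\Delta_s = \Delta_f = \Delta$ and $u$ chosen so that $\Delta_f\cdot\min\{u^{1/h},\Delta_s^{1/(h-1)}\} = \Delta^{h/(h-1)}$ (e.g.\ $u = \min\{n,\Delta^{h/(h-1)}\}$). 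This yields the graph $G = (X\cup Y, E)$, which we now view as a single undirected graph on $V(G) = X \cup Y$. The counts $|X|\le \bigO(\min\{n,\Delta^{h/(h-1)}\})$ and $|Y|\le \bigO(du) = \bigO(d\cdot\min\{n,\Delta^{h/(h-1)}\})$ together give the vertex bound, while the analysis in Lemma~\ref{lemma:max-k-cover-lb-reduction} gives $\deg(x) \le \bigO(d\Delta)$ for $x\in X$ and $\deg(y)\le \bigO(\Delta)$ for $y\in Y$, so every vertex has degree $\bigO(d\Delta)$, as required.

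The only substantive new step is to extend the exchange argument inside Lemma~\ref{lemma:max-k-cover-lb-reduction}, which showed that an optimal $k$-subset $S\subseteq X$ can be assumed to contain exactly one vertex from each $X_i$, to solutions $S\subseteq V(G) = X\cup Y$. Suppose $S$ is an optimal $k$-set with $y\in S\cap Y$. Since $|S|=k$ and $|S\cap X|\le k-1$, there exists $i\in[k]$ with $S\cap X_i = \emptyset$; pick any $x\in X_i$ and set $S' := S\setminus\{y\}\cup\{x\}$. The loss $|N(y)\setminus N(S\setminus\{y\})|$ is at most $\deg(y) = \bigO(\Delta)$. For the gain, observe that $x$ is adjacent, for every $\ell\in[s]$, to the entire set $P^i_{j_x,\ell}$ (where $j_x$ is $x$'s group index in $X_i$), contributing $s\cdot 100k^k d s^{h-2} = \Theta(d s^{h-1}) = \Theta(d\Delta)$ vertices. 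Crucially, every $P^i_{j,\ell}$ is adjacent only to $X_i$, so none of these penalty vertices lies in $N(S\setminus\{y\})$ via any $X_j$ with $j\ne i$ or via any $Y$-vertex of $S\setminus\{y\}$; the only overlap is the at most $k-1$ penalty vertices that may themselves sit in $S\setminus\{y\}$. Hence $|N(x)\setminus N(S\setminus\{y\})| \ge \Theta(d\Delta) - (k-1) > \bigO(\Delta) \ge |N(y)\setminus N(S\setminus\{y\})|$ for all sufficiently large $d$, so $S'$ is strictly better, contradicting optimality.

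Once the optimum is known to be attained by some $x_1\in X_1,\dots,x_k\in X_k$, the correspondence between $|N(x_1)\cup\dots\cup N(x_k)|$ and $a_1\cdot\ldots\cdot a_k$ is word-for-word that of Lemma~\ref{lemma:max-k-cover-lb-reduction}: by inclusion–exclusion together with the $r$-regularity for every $r<h$, the coverage value equals a constant (depending only on the $r$-regular marginals) plus $(-1)^{h+1}\sum_{i_1<\dots<i_h} |N(x_{i_1})\cap\dots\cap N(x_{i_h})|$, and each intersection is positive exactly when the corresponding $v_{i_1}\odot\dots\odot v_{i_h}$ is. This gives the reduction from $(k,h)$-maxIP for odd $h$ and from $(k,h)$-OV (or minIP) for even $h$, with $t$ and $\alpha$ read off from the explicit inclusion–exclusion constant. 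The construction time $\bigO((|X|\Delta + |Y|\Delta)\cdot d) = \bigO(\min\{n,\Delta^{h/(h-1)}\}\cdot \Delta\cdot d)$ matches the stated bound.

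The main obstacle, as above, is verifying the exchange step against $Y$-vertices: one has to confirm that the penalty structure $P^i_{j,\ell}$ provides a gain of order $d\Delta$ while the removal of a $Y$-vertex costs only $\bigO(\Delta)$, which requires tracking exactly what fraction of $x$'s neighborhood can already be dominated by $S\setminus\{y\}$. Everything else is inherited mechanically from the bichromatic reduction.
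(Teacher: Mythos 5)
Your proposal follows the right overall route (reuse the bichromatic construction of Lemma~\ref{lemma:max-k-cover-lb-reduction}, treat the graph as monochromatic, and add a new exchange step that evicts $Y$-vertices from an optimal solution), but it has a genuine gap precisely where you declare that ``the penalty gadgets $P^i_{j,\ell}$ already enforce the monochromatic exchange argument without any additional padding.'' That claim is false in the regime $n < \Delta^{h/(h-1)}$, and the paper's proof in fact modifies the construction to handle it: it sets $|P^i_{j,\ell}| = 100k^k d\,\max\{s^{h-2},\Delta/s\}$ rather than $100k^k d\, s^{h-2}$.

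Concretely, your exchange argument bounds the gain from adding a fresh $x\in X_i$ by $\sum_{\ell\in[s]}|P^i_{j_x,\ell}| = s\cdot 100k^k d\, s^{h-2} = 100k^k d\, s^{h-1}$ and then asserts $\Theta(d\, s^{h-1}) = \Theta(d\Delta)$. That identity holds only when $|X| = \Theta(\Delta^{h/(h-1)})$, since then $s = |X|/\Delta = \Theta(\Delta^{1/(h-1)})$ and $s^{h-1} = \Theta(\Delta)$. But the lemma also covers the regime $n < \Delta^{h/(h-1)}$ (which is allowed, since $n^{(h-1)/h} < \Delta \le n$ is a nonempty range), in which case $|X| = n$, $s = n/\Delta$, and $s^{h-1} = (n/\Delta)^{h-1}$ can be arbitrarily much smaller than $\Delta$. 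The loss from removing $y\in Y$ is $|N(y)| = \Theta(\Delta)$, so the comparison you need is $d\,s^{h-1} \gtrsim \Delta$, i.e.\ $d \gtrsim \Delta^h / n^{h-1}$. Since $d = |A_i|^\delta$ with $\delta>0$ arbitrary (and one wants $\delta$ small so the degree blow-up $O(d\Delta)$ is sub-polynomial), this inequality can be forced to fail, and your exchange step then no longer shows that an optimal solution can be assumed to lie entirely in $X_1\cup\dots\cup X_k$. The paper's modification makes $s\cdot|P^i_{j,\ell}| \ge 100k^k d\,\Delta$ unconditionally, and one checks (as the paper implicitly does) that this enlargement does not hurt the vertex count ($\max\{s^h, s\Delta\} \le |X|$ since $s\Delta = |X|$ and $s^h \le |X|$) nor the degree bound on $X$-vertices ($O(d\max\{s^{h-1},\Delta\}) = O(d\Delta)$), nor does it break the bichromatic exchange argument already present in Lemma~\ref{lemma:max-k-cover-lb-reduction}. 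Without this modification, your argument only establishes the lemma in the subcase $n \ge \Delta^{h/(h-1)}$.
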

\begin{proof}
    Given a such instance $A_1,\dots, A_k$ construct the bipartite graph $(X\cup Y,E)$ as in the proof of Lemma \ref{lemma:max-k-cover-lb-reduction}, by setting $u=n$, $\Delta_f = \Delta_s = \Delta$.
    Moreover, we make one slight adjustment to the size of the sets $P^i_{j,\ell}$ and set $|P^i_{j,\ell}| = 100k^kd\max\{s^{h-2},\Delta/s\}$. It is easy to verify that this modification does not change anything in the proof of Lemma \ref{lemma:max-k-cover-lb-reduction} (there we would use the value $|P^i_{j,\ell}| = 100k^kd\max\{s^{h-2},\frac{\Delta_s}{s}\}$).
    We now need to verify that the sizes match our desired values. 
    In particular, we obtain a graph that has $\bigO(\min\{n,\Delta_f\cdot \min\{u^{1/h}, \Delta_s^{1/(h-1)}\}\}d)$ many vertices\footnote{The statement of the lemma gives a crude upper bound for the set $Y$, however by carefully examining the proof, one can see that we actually get this upper bound.}. We can rewrite this value in terms of our parameters as
    \[
        \min\{n,\Delta_f\cdot \min\{u^{1/h}, \Delta_s^{1/(h-1)}\}\}d = \min\{n, n^{1/h}\Delta, \Delta^{h/(h-1)}\}d.
    \]
    We now only have to verify that the term $\Delta n^{1/h}$ vanishes. Indeed, if $n^{1/h}\geq \Delta^{1/(h-1)}$, then clearly $\min\{n^{1/h}\Delta, \Delta^{h/(h-1)}\} = \Delta^{h/(h-1)}$.
    On the other hand, if $n^{1/h}\leq \Delta^{1/(h-1)}$, then equivalently $\Delta\geq n^{(h-1)/h}$, and hence $\Delta n^{1/h} \geq n^{(h-1)/h} n ^{1/h} = n$, i.e. $\min\{n, n^{1/h}\Delta\} = n$.
    This gives us the proof of the first statement. It remains to prove the second one. We can observe that if we have an optimal solution $S$ of size $k$, such that $S\cap Y$ is empty, the result follows by applying Lemma \ref{lemma:max-k-cover-lb-reduction}.
    It is thus sufficient to prove that given a set $S\subseteq V(G)$ of size $k$, such that there is a vertex $y\in S\cap Y$, we can replace $y$ by some vertex $x\in X$ such that $|N(S - \{y\} \cup \{x\})| \geq |N(S)|$.
    Indeed, any vertex $y\in Y$ is adjacent to at most $h\Delta$ many vertices, hence we obtain:
    \[
        |N(S - \{y\})|\geq |N(S)| - |N(y)|\geq |N(S)|- h\Delta.
    \]
    On the other hand, we notice that if $S\cap Y$ is non-empty, then since $S$ consists of $k$ vertices, there exists a set $X_i$ such that $S\cap X_i$ is empty.
    In particular, this further implies that $N(S)$ contains no vertices from any set $P^i_{j,\ell}$.
    Let $x\in X_i$ be an arbitrary vertex from the group labelled $j$ in $X_i$, and observe that 
    \[
    |N(S\cup \{x\})|\geq |N(S)| + \sum_{\ell \in [s]}|P^i_{j,\ell}| \geq |N(S)| + s|P^i_{j,\ell}| \geq k \Delta.
    \]
    Hence, by combining the last two inequalities, we get
    \[
        |N(S - \{y\} \cup \{x\})| \geq |N(S)| - h\Delta +  k \Delta \geq |N(S)|.
    \]
\end{proof}
Completely analogous proof as in Lemma \ref{lemma:max-k-cover-lb-khov} shows the following.
\begin{lemma}\label{lemma:par-dom-lb-khov}
    For any fixed $k\geq 2$, $2\leq h \leq k$, there exists a graph $G$ with $|V(G)|=n$, $\max_{x\in V(G)}\deg(x) = \Delta$, such that the following holds. Let $N_h:=\min\{n, \Delta^{h/(h-1)}\}$.
    \begin{enumerate}
        \item If $h\geq 3$ and there is an algorithm solving Partial $k$-Dominating Set on $G$ in time $\bigO(N_h^{k(1-\varepsilon)})$ for some $\varepsilon>0$, then there exists a $\delta>0$, such that we can solve any $(k,h)$-OV instance $A_1,\dots, A_k$ with $|A_1| = \dots = |A_k| = N_h$ of dimensions $d=N_h^\delta$ in time $\bigO(N_h^{k(1-\varepsilon')})$ for some $\varepsilon'>0$.
        \item If there exists an algorithm solving Partial $k$-Dominating Set on $G$ in time $\bigO(N_2^{k\omega/3(1-\varepsilon)})$ for some $\varepsilon>0$, then there exists a $\delta>0$ such that we can solve any $(k,2)$-OV instance $A_1,\dots, A_k$ with $|A_1| = \dots = |A_k| = N_2$ of dimensions $d=N_2^\delta$ in time $\bigO(N_2^{\omega k/3(1-\varepsilon')})$ for some $\varepsilon'>0$.
    \end{enumerate}
\end{lemma}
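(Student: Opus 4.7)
The plan is to mirror essentially verbatim the proof of Lemma~\ref{lemma:max-k-cover-lb-khov}, but using Lemma~\ref{lemma:partial-domination-lb-reduction} in place of Lemma~\ref{lemma:max-k-cover-lb-reduction} as the reduction gadget. As in that earlier proof, we contrapositively assume an algorithm $\mathcal{A}$ solving Partial $k$-Dominating Set on graphs with $n'$ vertices and max degree $\Delta'$ in time $\bigO((N_h')^{k(1-\varepsilon)})$ (or $\bigO((N_2')^{\omega k/3(1-\varepsilon)})$ for the second item), where $N_h' = \min\{n', (\Delta')^{h/(h-1)}\}$; then given a $(k,h)$-OV instance $A_1,\dots,A_k$ with $|A_i| = N_h$ in dimension $d = N_h^{\delta}$ for $\delta := \varepsilon/(2k)$, we feed the graph produced by Lemma~\ref{lemma:partial-domination-lb-reduction} to $\mathcal{A}$ and argue that the blow-up in parameters is negligible.

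The first step is to invoke Lemma~\ref{lemma:partial-domination-lb-reduction} on $A_1,\dots,A_k$ to produce, in linear time, a graph $G$ on $n' = \bigO(N_h \cdot d)$ vertices with maximum degree $\Delta' = \bigO(\Delta \cdot d)$, together with the integers $t,\alpha$ that faithfully translate the $(k,h)$-OV question (using the appropriate $(k,h)$-maxIP or $(k,h)$-minIP instance via Lemma~\ref{lemma:maxip-hardness} when $h$ is odd or even, respectively) into a Partial $k$-Dominating Set question on $G$. Running $\mathcal A$ on $G$ thus solves the $(k,h)$-OV instance.

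The main (and only genuine) technical check is the running-time bookkeeping, which differs slightly from the bichromatic case because both the universe and set-system are now inflated by the same factor $d$. Writing $N_h' = \min\{n', (\Delta')^{h/(h-1)}\}$, a brief case distinction on whether $N_h = n$ or $N_h = \Delta^{h/(h-1)}$ shows that in either case $N_h' \leq \bigO(N_h \cdot d) \leq \bigO(N_h^{1+\delta})$; the key observation is that $d^{h/(h-1)}$ dominates $d$, so the $(\Delta')^{h/(h-1)}$ branch never forces an unfavorable bound. Plugging this into the assumed running time yields
\begin{equation*}
  \bigO\!\left((N_h')^{k(1-\varepsilon)}\right) \;\leq\; \bigO\!\left(N_h^{k(1+\delta)(1-\varepsilon)}\right) \;\leq\; \bigO\!\left(N_h^{k(1-\varepsilon/2)}\right),
\end{equation*}
where the last inequality uses $\delta = \varepsilon/(2k)$ together with $(1+\delta)(1-\varepsilon) \leq 1 - \varepsilon/2$ for $\varepsilon \leq 1$. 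Thus $\varepsilon' := \varepsilon/2$ witnesses the conclusion for the first item.

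For the second item ($h=2$), the entire argument is identical, except that the exponent $k(1-\varepsilon)$ is replaced by $\omega k/3\cdot (1-\varepsilon)$ throughout; the same $N_h' \leq \bigO(N_h^{1+\delta})$ bound again absorbs the dimension blow-up and yields an algorithm for $(k,2)$-OV in time $\bigO(N_2^{\omega k/3\cdot (1-\varepsilon')})$ for a suitable $\varepsilon' > 0$. No new combinatorial ideas are required beyond those already used in the bichromatic version, so I do not anticipate any real obstacle; the only thing to be careful about is verifying the case analysis for $N_h'$ and choosing $\delta$ small enough so that the polynomial overhead from $d$ is absorbed into the slack in the exponent.
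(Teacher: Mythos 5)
Your proof is correct and takes essentially the same approach as the paper, which simply asserts that the argument is ``completely analogous'' to the proof of Lemma~\ref{lemma:max-k-cover-lb-khov}, substituting Lemma~\ref{lemma:partial-domination-lb-reduction} for Lemma~\ref{lemma:max-k-cover-lb-reduction}. Your bookkeeping is right — since the constructed graph has $n' = \bigO(N_h \cdot d)$ vertices (by the first bullet of Lemma~\ref{lemma:partial-domination-lb-reduction}), one always has $N_h' \le n' = \bigO(N_h^{1+\delta})$, from which the exponent slack $\delta = \varepsilon/(2k)$ absorbs the polynomial blow-up exactly as in the bichromatic case.
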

Finally, by combining Lemma \ref{lemma:par-dom-lb-khov} with Lemmas
\ref{lemma:hyperclique-hardness-of-khov} and \ref{lemma:maxip-hardness}, Theorem \ref{theorem-partial-dom-LB} follows.

\section{Influence of Sparsity on Max $k$-Cover and Partial $k$-Dominating Set}\label{sec:sparse}
This section is dedicated to resolving the complexity of the Max $k$-Cover and Partial $k$-Dominating Set in sparse graphs. More precisely, we aim to classify the complexity of the two problems when parameterized by the number of edges \footnote{Here we consider the graph-theoretic formulation of Max $k$-Cover problem.}.
Fischer et al. \cite{FischerKR24} proved that the canonical decision version of the Max $k$-Cover, namely the $k$-Set Cover problem requires $m^{k- o(1)}$ time unless the $k$-OV hypothesis fails. 
Together with Proposition \ref{prop:baseline-max-cover}, we get a full understanding of the fine-grained complexity of Max $k$-Cover problem in sparse graphs.
For completeness, we state this result here. 
\begin{proposition}[Complexity of Max $k$-Cover in sparse graphs]
    Let $k\geq 2$. We can solve Max $k$-Cover on a given bipartite graph $G=(X \cup Y,E)$ with $|X| = n$, $|Y| = u$ in time 
    \[
    \MM\Big(n^{\lceil \frac{k}{2}\rceil}, m, n^{\lfloor \frac{k}{2}\rfloor}\Big).
    \]
    If $\omega = 2$, or $k\geq 8$, this running time is bounded by $m^{k+o(1)}$.
    Moreover, any algorithm solving Max-$k$-Cover in time $O(m^{k-\varepsilon})$ for any $\varepsilon>0$ would refute the $k$-OV hypothesis.
\end{proposition}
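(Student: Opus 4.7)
The plan is to prove the upper bound via a two-split Eisenbrand--Grandoni-style algorithm, and to derive the lower bound as a direct reduction from the cited $k$-Set Cover hardness.

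For the upper bound, set $k_1 = \lceil k/2\rceil$ and $k_2 = \lfloor k/2\rfloor$. For every $T \in \binom{X}{k_1}\cup \binom{X}{k_2}$, I precompute the union $U_T := \bigcup_{S\in T} S \subseteq Y$ and $|U_T|$ by iterating over the edges incident to the sets in $T$, which totals $O(n^{k_1-1}m + n^{k_2-1}m)$. By inclusion--exclusion,
\[
|U_{S_1}\cup U_{S_2}| \;=\; |U_{S_1}| + |U_{S_2}| - |U_{S_1}\cap U_{S_2}|,
\]
so the optimum is recovered once all pairwise intersection sizes are known. These are exactly the entries of the matrix product $AB$, where $A\in\{0,1\}^{\binom{X}{k_1}\times Y}$ and $B\in\{0,1\}^{Y\times \binom{X}{k_2}}$ are defined by $A[S_1,y] = \mathbb{1}[y\in U_{S_1}]$ and $B[y,S_2] = \mathbb{1}[y\in U_{S_2}]$. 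After discarding isolated elements of $Y$ (which never contribute to any solution), I may assume $u\leq m$, and the total running time is dominated by $\MM(n^{k_1}, u, n^{k_2}) \le \MM(n^{\lceil k/2\rceil}, m, n^{\lfloor k/2\rfloor})$.

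To deduce the $m^{k+o(1)}$ bound I distinguish two cases. When $\omega=2$, the identity $\MM(a,b,c)\le (ab+ac+bc)^{1+o(1)}$ combined with $n\le m$ (WLOG each set is nonempty) bounds each of the three terms $n^{k_1}m$, $mn^{k_2}$, $n^k$ by $m^{k+o(1)}$. When $k\geq 8$ and $\omega>2$, I case-split on $m$ relative to $n^{k_1}, n^{k_2}$: padding gives $\MM\le m^{\omega+o(1)}\le m^{k+o(1)}$ once $m\ge n^{k_1}$; for smaller $m$ I either block-decompose both outer dimensions into blocks of size $m$ (obtaining the bound $n^k m^{\omega-2}$) or invoke the rectangular matrix multiplication bound $\omega(1,r,1) = 2$ valid for $r\le \alpha^*$, exploiting that $k_1,k_2\ge 4$ places the relevant inner-to-outer ratios inside this regime. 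In each case the resulting bound simplifies, via $n\le m$ and elementary exponent arithmetic, to $m^{k+o(1)}$. The main technical obstacle is patching these regimes together uniformly in $k$; this essentially mirrors the Eisenbrand--Grandoni baseline analysis referenced in the introduction (cf.\ Appendix~\ref{sec:baseline}) transposed to the two-split setting.

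For the lower bound, I observe that $k$-Set Cover (``do $k$ sets cover the universe?'') is the decision version of Max $k$-Cover: its optimum equals $u$ if and only if some $k$ sets cover $[u]$. Consequently any $O(m^{k-\varepsilon})$-time algorithm for Max $k$-Cover solves $k$-Set Cover within the same budget, contradicting the $m^{k-o(1)}$ lower bound of Fischer, K\"unnemann, Redzic~\cite{FischerKR24} under the $k$-OV hypothesis, and finishing the proof.
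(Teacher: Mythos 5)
Your proof takes the same route as the paper's: compute all pairwise coverage-intersection counts via a single rectangular Boolean matrix product indexed by $\lceil k/2\rceil$- and $\lfloor k/2\rfloor$-subsets of $X$ (the paper happens to store the complement indicator ``$y$ not covered,'' but this is immaterial), discard isolated universe elements so the inner dimension is bounded by $m$, and invoke the $k$-Set Cover lower bound of Fischer, K\"unnemann, Redzic for conditional hardness. Your case analysis for $m^{k+o(1)}$ when $k\ge 8$ and $\omega>2$ is somewhat more fiddly than it needs to be (and the ``or'' between the two subcases is left unquantified): it is cleaner to split the middle dimension into blocks of size $n$, yielding $\MM(n^{\lceil k/2\rceil}, m, n^{\lfloor k/2\rfloor}) \le (m/n)\cdot \MM(n^{\lceil k/2\rceil}, n, n^{\lfloor k/2\rfloor}) \le (m/n)\cdot n^{k+o(1)} \le m^{k+o(1)}$ using the Eisenbrand--Grandoni bound together with $n\le m$.
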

The complexity landscape of Partial $k$-Dominating Set in sparse graphs is much more interesting. We first focus on the case when $k\geq 3$. 
Similarly as above, from the $k$-Dominating Set lower bound due to Fischer et al. \cite{FischerKR24}, we get the $k$-OV based lower bound of $mn^{k-2-o(1)}$ for each $k\geq 3$. The natural question is if we can match this lower bound. 
By a simple modification of the construction from the proof of Theorem \ref{thm:sparse-graph-pd-hardness}, we show that it is unlikely to match this for every dependence between $m$ and $n$, unless $3$-Uniform $k$-Hyperclique hypothesis fails. 
In particular, we show an incomparable lower bound of $m^{3k/5-o(1)}$, and more generally prove the following theorem.
\begin{theorem}\label{thm:sparse-graph-pd-hardness}
        Given a graph $G$ with $n$ vertices $m$ edges, if there exists $\varepsilon>0$ such that we can solve Partial $k$-Dominating Set in time 
    \begin{itemize}
        \item $\bigO\left( \min\{n,m^{2/3}\}^{\frac{k\omega}{3}-\varepsilon}\right)$, then $k$-Clique Hypothesis is false.
        \item $\bigO\left(\min\{n,m^{\frac{h}{2h-1}}\}^{k-\varepsilon}\right)$, for $k> h\geq 3$, then the $h$-Uniform $k$-Hyperclique Hypothesis is false.
        \item $\bigO\left(\min\{n,m^{\frac{k}{2k-1}}\}^{k-\varepsilon}\right)$, for $k\geq 2$, then the $k$-OV Hypothesis is false.
        \item $\bigO\left(mn^{k-2-\varepsilon}\right)$, then the $k$-OV Hypothesis is false.
    \end{itemize}
\end{theorem}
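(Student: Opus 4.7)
The plan is to derive the four lower bounds by combining the reduction from Lemma~\ref{lemma:partial-domination-lb-reduction} (which already establishes the analogous bounds in terms of the maximum degree $\Delta$) with a careful bookkeeping of the number of edges $m$ in the constructed instance. The fourth item is a direct consequence of prior work.

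\textbf{Items 1--3.} Fix any $h\in\{2,3,\dots,k\}$ (with $h=2$ for item~1, $3\le h\le k-1$ for item~2, $h=k$ for item~3). Given a $(k,h)$-maxIP / minIP / OV instance $A_1,\dots,A_k$ with $|A_i|=N$ and dimension $d=N^{o(1)}$, I would invoke Lemma~\ref{lemma:partial-domination-lb-reduction} with target degree parameter $\Delta$ chosen so that $N=\min\{n,\Delta^{h/(h-1)}\}$, yielding a graph $G$ with $O(Nd)$ vertices and maximum degree $O(\Delta d)$. The key additional observation, not explicitly stated in that lemma but immediate from the construction, is the edge bound
\[ m \;\le\; |V(G)|\cdot \bigl(\max_{v} \deg(v)\bigr) \;=\; O\!\left(N \cdot \Delta \cdot d^2\right) \;=\; \Delta^{(2h-1)/(h-1) + o(1)}, \]
using that $N = \Delta^{h/(h-1)}$ in the relevant regime and $d = N^{o(1)}$. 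Solving for $\Delta$ gives $\Delta \ge m^{(h-1)/(2h-1) - o(1)}$, and hence
\[ N \;=\; \Delta^{h/(h-1)} \;\ge\; m^{h/(2h-1)-o(1)}. \]
Consequently, any algorithm running in time $\min\{n,m^{h/(2h-1)}\}^{k-\varepsilon}$ (respectively $\min\{n,m^{2/3}\}^{k\omega/3-\varepsilon}$ when $h=2$) would solve the constructed instance in time $N^{k-\varepsilon'}$ (resp.\ $N^{k\omega/3-\varepsilon'}$) for some $\varepsilon'>0$, which by Lemmas~\ref{lemma:par-dom-lb-khov} and~\ref{lemma:hyperclique-hardness-of-khov} refutes the corresponding hypothesis. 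The small parameter $d=N^\delta$ can be absorbed by choosing $\delta$ small enough, exactly as in the proof of Lemma~\ref{lemma:max-k-cover-lb-khov}.

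\textbf{Item 4.} Partial $k$-Dominating Set generalizes the decision version of $k$-Dominating Set: a graph $G$ admits a $k$-dominating set if and only if the optimal Partial $k$-Dominating Set value equals $|V(G)|$. Fischer, K\"unnemann and Redzic~\cite{FischerKR24} established a $k$-OV based lower bound of $mn^{k-2-o(1)}$ for $k$-Dominating Set, and this immediately transfers to Partial $k$-Dominating Set.

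\textbf{Main obstacle.} The only non-routine step is verifying that the edge bound $m = O(N\Delta \cdot d^2)$ indeed holds tightly for the Lemma~\ref{lemma:partial-domination-lb-reduction} construction in the regime $N = \Delta^{h/(h-1)}$: one must check that the dominating contribution to $m$ comes from the ``active-coordinate'' edges between $X_i$ and the sets $D_{i_1\dots i_h}^{(g_1\dots g_h)}$, rather than from the penalty gadgets $P^i_{j,\ell}$. A direct count of incidences using the $r$-regularity guarantees from Lemma~\ref{lemma:khov-regularity} confirms this, and then the claimed relationship between $m$, $\Delta$, and $N$ follows as above.
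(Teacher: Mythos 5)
Your proof is correct and follows essentially the same route as the paper: instantiate the $(k,h)$-maxIP/minIP/OV reduction of Lemma~\ref{lemma:partial-domination-lb-reduction} with $\Delta$ chosen so that the resulting graph has about $m$ edges, observe that this forces $N\approx m^{h/(2h-1)}$, and then conclude via Lemmas~\ref{lemma:par-dom-lb-khov} and~\ref{lemma:hyperclique-hardness-of-khov} (the paper packages this as Lemmas~\ref{lemma:sparse-partial-domination-lb-reduction} and~\ref{lemma:sparse-par-dom-lb-khov}), with item 4 taken verbatim from~\cite{FischerKR24}. One small remark on your ``main obstacle'' paragraph: you do not actually need the edge bound to be \emph{tight}, nor do the $r$-regularity guarantees play any role in counting edges---the trivial upper bound $m\le\frac12|V(G)|\cdot\max_v\deg(v)$ (in fact even the tighter $m\le|X|\cdot\max_{x\in X}\deg(x)=O(N\Delta d)$, since only $X$ has $O(N)$ vertices) already suffices, because the subpolynomial factor $d$ is absorbed into the $o(1)$ in the exponent; the regularity is needed only for correctness of the reduction, not for the edge count.
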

We note that unlike for Theorem \ref{theorem-partial-dom-LB}, the $k$-Clique and the $3$-Uniform $k$-Hyperclique lower bounds are not incomparable and in particular, lower bounds based on $3$-Uniform $k$-Hyperclique are stronger regardless of the value of $2<\omega<2.372$~\cite{AlmanDWXXZ25}.
\begin{corollary}
    Given a graph $G$ with $n$ vertices $m$ edges, if there exists $\varepsilon>0$ such that we can solve Partial $k$-Dominating Set in time $\bigO\left(m^{3k/5-\varepsilon}\right)$, then the $3$-Uniform $k$-Hyperclique Hypothesis is false.
\end{corollary}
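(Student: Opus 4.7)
The plan is to obtain this corollary as the $h=3$ case of Theorem~\ref{thm:sparse-graph-pd-hardness} together with a routine parameter inspection of the underlying reduction. For $h=3$, the theorem states that an $\bigO(\min\{n, m^{3/5}\}^{k-\varepsilon'})$-time algorithm would refute the $3$-Uniform $k$-Hyperclique Hypothesis, so it suffices to observe that on the specific hard instances produced in the proof of that theorem we have $n \ge m^{3/5}$ (up to $N^{o(1)}$ factors), making $\min\{n, m^{3/5}\} = m^{3/5}$. Consequently, any $\bigO(m^{3k/5-\varepsilon})$-time algorithm automatically meets the theorem's hypothesis on those instances with $\varepsilon' := 5\varepsilon/3$.

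More concretely, I would inspect the sparse-graph reduction behind Theorem~\ref{thm:sparse-graph-pd-hardness} (an adaptation of Lemma~\ref{lemma:partial-domination-lb-reduction} with $h=3$, with parameters tuned to balance $n$ and $m$) and verify that a $3$-Uniform $k$-Hyperclique instance on $N$ vertices is mapped to a Partial $k$-Dominating Set instance with $n = N^{1+o(1)}$ vertices and $m = N^{5/3+o(1)}$ edges, so that $m^{3/5} = \Theta(n)$ up to subpolynomial factors. A hypothetical $\bigO(m^{3k/5-\varepsilon})$-time algorithm for Partial $k$-Dominating Set then yields a
\[\bigO\!\left(m^{3k/5-\varepsilon}\right) \;=\; \bigO\!\left(N^{(5/3)(3k/5-\varepsilon)+o(1)}\right) \;=\; \bigO\!\left(N^{k-5\varepsilon/3 + o(1)}\right)\]
time algorithm for $3$-Uniform $k$-Hyperclique Detection, contradicting the hypothesis for sufficiently large $N$.

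I do not anticipate any real obstacle; the entire argument is a one-line specialization of Theorem~\ref{thm:sparse-graph-pd-hardness}. The only point requiring mild care is checking that the sparse-graph reduction in the proof of that theorem is indeed balanced so as to land in the regime $n = \Theta(m^{3/5})$. This is built into the parameter choice underlying the $h=3$ bound in that theorem (in particular, the choice of the maximum degree $\Delta$ and the grouping parameter $s$ in the adaptation of Lemma~\ref{lemma:partial-domination-lb-reduction}), which is precisely what makes the $m^{3k/5}$-style lower bound tight on the produced instances.
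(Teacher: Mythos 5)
Your proposal is correct and follows the same route the paper (implicitly) intends: the corollary is just the $h=3$ line of Theorem~\ref{thm:sparse-graph-pd-hardness} instantiated in the parameter regime $m \approx n^{5/3}$, where $\min\{n,m^{3/5}\}=m^{3/5}$. Concretely, Lemma~\ref{lemma:sparse-partial-domination-lb-reduction} with $h=3$ and the choice $m = n^{5/3}$ (well within the allowed range $n\le m\le n^2$) maps a $(k,3)$-OV instance on $N=\min\{n,m^{3/5}\}=n$ vectors to a graph with $\bigO(N)$ vertices and $\bigO(N^{5/3+o(1)})$ edges, and Lemma~\ref{lemma:hyperclique-hardness-of-khov} ties $(k,3)$-OV to the $3$-uniform hyperclique hypothesis, exactly as you sketch. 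One small wording point: the theorem supplies hard instances over a \emph{range} of $(n,m)$, so the claim that ``on the specific hard instances we have $n\ge m^{3/5}$'' should really be ``we may \emph{choose} the hard instances so that $n\approx m^{3/5}$'' --- you do effectively acknowledge this when you say the parameters are ``tuned to balance $n$ and $m$,'' so this is a matter of phrasing rather than a gap.
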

However, we still obtain two incomparable conditional lower bounds, namely the $\bigO\left(m^{3k/5-\varepsilon}\right)$ one from the $3$--Uniform $k$--Hyperclique hypothesis and the $\bigO\left(mn^{k-2-\varepsilon}\right)$ from the $k$-OV hypothesis.
Perhaps surprisingly, we show that we can match both of those incomparable lower bounds.
We defer the proof of Theorem \ref{thm:sparse-graph-pd-hardness} to the next section, and focus on the algorithms in this section.
The rest of this section will be dedicated to proving the following theorem.
\begin{theorem}
    For every $k\geq 9$, there is an algorithm solving Partial $k$-Dominating Set in time \[
    mn^{k-2+o(1)} + m^{3k/5+\bigO(1)}.
    \]
    If $\omega=2$, this running time can be achieved for every $k\geq 3$.
\end{theorem}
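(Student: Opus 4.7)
The plan is to combine two algorithms, whose running times sum to the claimed bound. The first algorithm is a sparse adaptation of the baseline Eisenbrand-Grandoni algorithm from Appendix~\ref{sec:baseline}; the key modification is to replace the dense matrix multiplication step in the standard Eisenbrand-Grandoni scheme by a sparse matrix multiplication that exploits the total number of edges $m$. Under the condition $k\ge 9$ (or $\omega=2$ and $k\ge 3$), the resulting sparse matrix product collapses to the output-sensitive bound $mn^{k-2+o(1)}$. The second algorithm applies Theorem~\ref{theorem:partial-dom-algorithm} to an instance preprocessed via the Regularization Lemma~\ref{lemma:regularization-lemma}, which will yield the $m^{3k/5+O(1)}$ term. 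Running both algorithms in parallel gives the claimed combined time.

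For the Regularization-based algorithm, I would instantiate Lemma~\ref{lemma:regularization-lemma} with $\Delta_s=\Delta_f=\Delta$ and $u=n$. This either reduces to recursing on a high-degree vertex (with $O(k^2\Delta)$ options per recursion level, contributing $O(\Delta^k)$ over $k$ levels in total), or restricts attention to vertices of degree at least $\Delta/(2k)$, whose count is bounded via the handshake lemma by $n'\le O(km/\Delta)$. On this regularized instance, invoking Theorem~\ref{theorem:partial-dom-algorithm} yields running time $O((\min\{n',\Delta^{3/2}\}^k+\min\{n',\Delta^2\}^{k\omega/3})\Delta^{O(1)})$. A case analysis on $\Delta$ then completes the argument: for $\Delta\le m^{2/5}$, the first term is bounded by $m^{3k/5}$ directly, and the second reduces to bounding $\min\{m/\Delta,\Delta^2\}^{k\omega/3}$, which is controlled via the sub-case split between $\Delta\le m^{1/3}$ and $\Delta>m^{1/3}$ using the current bound $\omega<2.7$; for $\Delta>m^{2/5}$, the refined bound $n'<O(m^{3/5})$ directly controls every term in Theorem~\ref{theorem:partial-dom-algorithm}, and the regularization's recursion overhead $O(\Delta^k)$ is absorbed into the first algorithm's $mn^{k-2+o(1)}$ bound since that becomes $m\cdot n'^{k-2+o(1)}\le m^{(3k-1)/5+o(1)}$ in this regime.

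The main obstacle will be verifying this case analysis carefully around the crossover point $\Delta=m^{2/5}$, where both algorithmic bounds meet; and ensuring that the recursion overhead $O(\Delta^k)$ from the Regularization Lemma does not exceed the target bound when $\Delta>m^{3/5}$. This last issue is handled precisely because in that regime the vertex count $n'\le O(m/\Delta)$ becomes very small, making the sparse Eisenbrand-Grandoni bound $mn^{k-2+o(1)}$ small enough to absorb the recursion cost, while simultaneously making the $m^{3k/5+O(1)}$ term achievable through Theorem~\ref{theorem:partial-dom-algorithm} on the reduced instance.
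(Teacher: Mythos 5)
Your high-level decomposition into two subroutines — a matrix-multiplication-based algorithm for the $mn^{k-2+o(1)}$ term and a regularization-plus-hypercut algorithm for the $m^{3k/5+O(1)}$ term — matches the paper's structure, but there is a genuine gap in how you obtain each part and, more seriously, in why combining them is correct.

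The core missing idea is the explicit \emph{decomposition of solutions into two classes}: those $S\in\binom{V}{k}$ for which $G[S]$ contains an edge, and those that form an independent set. The paper handles these disjointly. For the first class, the savings do not come from ``sparse matrix multiplication'' of the baseline matrices (which are \emph{dense} — the entry $A[S,y]$ records that $y$ is \emph{not} dominated by $S$, so most entries are $1$ and sparse multiplication does not help; moreover the output product is itself dense). Rather, the $m$ factor arises because one restricts the row index to subsets $P$ of size $\lfloor(k-2)/2\rfloor+2$ that are forced to contain an edge, so there are only $O(mn^{\lfloor(k-2)/2\rfloor})$ rows; this is a structural restriction on which rows you enumerate, not a sparsity trick. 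For the second class, the regularization-based algorithm's recursion guesses a high-degree vertex $x\in H_1$ and then deletes $N[x]$ from $G$; this step is only valid because we are searching over independent sets — if $S$ contained a neighbor $x'$ of $x$, deleting $N[x]$ would destroy $x'$ and the recursion would miss $S$. Your proposal does not note this restriction, so as written neither subroutine solves the full problem, and ``running both in parallel'' is not justified. Once the decomposition is made explicit, each subroutine is correct on its respective class and you take the best answer of the two.

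A secondary issue is the bookkeeping for the recursion overhead. You claim the recursion per level costs $O(\Delta)$ (giving $O(\Delta^k)$ total) and then argue this is ``absorbed into the first algorithm's $mn^{k-2}$ bound'' via $n'$ — but the first algorithm runs on the original $n$-vertex graph, not the reduced instance, so that absorption argument does not apply. The correct bound on the branching factor is simpler and self-contained: whenever the algorithm recurses it is either because $\Delta\ge m^{2/5}$ (then $|H_1|\le 2km/\Delta\le 2km^{3/5}$ by handshaking) or because $|H_1|\le 2k^2\Delta$ with $\Delta<m^{2/5}$; in either case $|H_1|=O(m^{3/5})$, and after at most $k-2$ levels one reaches either the $k=2$ base case or the max-weight-triangle call, yielding the $m^{3k/5}n^{O(1)}$ bound without any appeal to the other subroutine. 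Your case analysis for the max-weight-triangle call at the leaf (splitting on $\Delta\lessgtr m^{1/3}$, using $|X'|\le O(m/\Delta)$ from the Regularization Lemma, and $\omega<2.7$) is on the right track and matches the spirit of the paper's computation; it is the global structure and the first subroutine that need to be fixed.
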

The strategy is to first show that we can find all the solutions $S$ such that the induced subgraph $G[S]$ contains an edge efficiently by using the standard matrix multiplication type of argument \cite{EisenbrandG04}. 
Then we construct a recursive algorithm that detects any solution that forms independent set in $G$. If the value of matrix multiplication exponent $\omega$ is small enough, this already suffices to obtain the desired running time. 
However, to match the $3$--Uniform $k$--Hyperclique based lower bound with the current value of $\omega$, we additionally need to apply the Regularization Lemma (i.e. Lemma \ref{lemma:regularization-lemma}).
\begin{lemma}
    Let $k\ge 3$, and let $G$ be a graph with $n$ vertices and $m$ edges and $t\leq n$ be arbitrary.
    We can enumerate all sets of $k$ vertices $S$ that satisfy the following two conditions: 
    \begin{enumerate}
        \item The induced subgraph $G[S]$ contains an edge,
        \item The vertices in $S$ dominate at least $t$ vertices (i.e. $|N[S]|\geq t$),
    \end{enumerate}
     in time 
     \[
     \MM \left(mn^{\lfloor \frac{k-2}{2}\rfloor}, n, n^{\lceil \frac{k-2}{2}\rceil}\right)
     \]
     If $k\geq 9$, or $\omega = 2$, this time becomes $mn^{k-2}$.
\end{lemma}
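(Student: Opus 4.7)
The plan is to adapt the Eisenbrand--Grandoni matrix multiplication paradigm for $k$-Dominating Set~\cite{EisenbrandG04} to the ``edge-anchored'' setting forced by Condition~(1). Concretely, since $G[S]$ must contain an edge, I will first iterate over all $m$ edges $e = \{u,v\} \in E(G)$, treating $e$ as a fixed sub-pair of $S$. For each fixed $e$, it then remains to enumerate all $(k-2)$-subsets $\{w_1,\dots,w_{k-2}\} \subseteq V(G)$ for which $|N[\{u,v\} \cup \{w_1,\dots,w_{k-2}\}]| \geq t$.

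To handle this residual task, I will split the $k-2$ missing vertices into a left tuple of size $a := \lfloor (k-2)/2 \rfloor$ and a right tuple of size $b := \lceil (k-2)/2 \rceil$. The ``combined left sets'' $A := \{u,v\} \cup L$ then range over at most $mn^{a}$ values and the right tuples $R$ over $n^{b}$ values. I form the integer matrix $M_1 \in \{0,1\}^{(mn^a) \times n}$ whose row indexed by $A$ is the indicator vector of $N[A] \subseteq V(G)$, and $M_2 \in \{0,1\}^{n \times n^{b}}$ whose column indexed by $R$ is the indicator vector of $N[R]$. Then $(M_1 M_2)_{A,R} = |N[A] \cap N[R]|$, and after precomputing $|N[A]|$ and $|N[R]|$ in time dominated by the main product, inclusion--exclusion yields $|N[A] \cup N[R]| = |N[A]| + |N[R]| - (M_1 M_2)_{A,R}$; a single scan of the product matrix then emits every $S = A \cup R$ whose union exceeds the threshold $t$.

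The dominant cost is the rectangular product $\MM(mn^{a}, n, n^{b})$, which already establishes the general bound claimed by the lemma. For $\omega = 2$, the standard estimate $\MM(P, N, Q) \leq (PN + NQ + PQ)^{1+o(1)}$ with $P = mn^a$, $N = n$, $Q = n^b$ makes the output-sized term $PQ = mn^{k-2}$ dominate (using the crude bound $m \leq n^2$), yielding $mn^{k-2+o(1)}$ for every $k \geq 3$. For general $\omega$ and $k \geq 9$, both $a \geq 3$ and $b \geq 4$, so the middle dimension $n$ is polynomially smaller than each outer dimension; invoking current quantitative bounds on rectangular matrix multiplication (equivalently, the known lower bound on the dual matrix multiplication exponent $\alpha$) gives $\MM(n^{a}, n, n^{b}) \leq n^{a+b+o(1)}$, and multiplying by the $m$-factor produces the desired $mn^{k-2+o(1)}$.

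The main obstacle I anticipate is the tightness of the threshold $k \geq 9$: the argument hinges on the current best lower bound on $\alpha$ being just sufficient to make the matrix product essentially output-linear already at the borderline case $(a,b)=(3,4)$, and any weaker estimate would raise the threshold to $k \geq 10$ or larger. A secondary, purely bookkeeping subtlety is that the same set $S$ can be emitted up to $O(k^2)$ times (once per edge of $G[S]$ and per left/right partition), but this constant-in-$n$ overhead is absorbed into the $o(1)$ and duplicates can be removed in a final pass.
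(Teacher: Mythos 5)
Your approach is essentially the same as the paper's: both batch all edge-anchored $(\lfloor(k-2)/2\rfloor+2)$-sets into a single rectangular matrix product against the $\lceil(k-2)/2\rceil$-tuples, with the same $\MM\bigl(mn^{\lfloor(k-2)/2\rfloor},n,n^{\lceil(k-2)/2\rceil}\bigr)$ cost. The only cosmetic difference is that the paper's matrices encode \emph{non}-domination ($A[P,v]=1$ iff $v\notin N[P]$), so the product directly gives $n-|N[P\cup Q]|$ without your separate inclusion--exclusion precomputation of $|N[A]|$ and $|N[R]|$; the $k\ge 9$ threshold, which you derive from the dual exponent $\alpha$, is merely asserted in the paper.
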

\begin{proof}
    Let $A$ be a $\{0,1\}$ matrix whose rows are indexed by subsets of $V$ of size $\lfloor \frac{k-2}{2}\rfloor + 2$ whose induced subgraphs contain an edge, and columns are indexed by the vertices $V$. Set $A[P,v] = 1$ if and only if $v\not\in N[P]$.
    Similarly, let $B$ be a $\{0,1\}$ matrix whose columns are indexed by subsets of $V$ of size $\lceil \frac{k-2}{2}\rceil$, and rows are indexed by $V$. 
    Set $B[v, Q] = 1$ if and only if $v\not\in N[Q]$.
    Similarly as in Proposition \ref{prop:baseline-max-cover}, if we define $C:=A\cdot B$, then $C[P,Q]$ counts the number of vertices $v\in V$ that are not dominated by $P\cup Q$ (i.e. $C[P,Q] = n-\left|N[P\cup Q]\right|$). 
    Hence for every subset $S\in \binom{V}{k}$ that contains an edge, we can read off the value of $|N[S]|$ just from $C$. 
    Since the dimensions of $A$ and $B$ are $\bigO\left( mn^{\lfloor \frac{k-2}{2}\rfloor}\right)\times n$ and $n\times \bigO\left( n^{\lceil\frac{k-2}{2}  \rceil} \right)$, respectively, the claimed running time follows.
\end{proof}
If there exists a set $S\in \binom{V}{k}$ such that $G[S]$ contains an edge and $\max_{T\in \binom{V}{k}}|N[T]| = |N[S]|$, then the previous algorithm will find it. 
We now show that if for all $S$ that satisfy $\max_{T\in \binom{V}{k}}|N[T]| = |N[S]|$, $S$ induces an independent set, we can still find a valid solution that maximizes the value $|N[S]|$ efficiently.

\begin{algorithm}
\begin{algorithmic}[1]
    \Procedure{independent-partial-DS}{$G,k$}
        \If{$k=2$}
            \State Run the algorithm from \autoref{prop:baseline-max-cover}.
        \EndIf
        \State Let $n:= |V(G)|; \; m:=|E(G)|;\; \Delta:= \max_{v\in V(G)} \deg(v)$.
        \State Let $H_1, H_2$ be as in \autoref{lemma:regularization-lemma}.
        \If{$\Delta \geq m^{2/5}$ or $|H_1|\leq 2k^2\Delta$}
            \State
            \Return {$\max_{x\in H_1} (\, |N[x]| + {}$\Call{independent-partial-DS}{$G-N[x], k-1$}$\,)$} 
        \Else
            \State $X\gets H_2, Y \gets N(H_2)$ \label{alg:ind-pds:line-8}
            \State Remove all but the heaviest $\min \{k\Delta^2, |X|\}$ vertices from $X$.
            \State 
            \Return \Call{Partial-DS}{$X, Y, k$} \Comment{Function from \autoref{alg:partial-dom-alg}}
        \EndIf
    \EndProcedure
\end{algorithmic}
\caption{}
\label{alg:ind-pds}
\end{algorithm}

\begin{lemma}
     Let $k\ge 3$, and let $G$ be a graph with $n$ vertices and $m$ edges.
     If there exists an independent set $I$ of size $k$ such that for any $S\in \binom{V}{k}$ such that $G[S]$ contains an edge it holds that $N[I] > N[S]$,
     then in time bounded by $m^{3k/5}n^{\bigO(1)}$ we can find $I$. 
\end{lemma}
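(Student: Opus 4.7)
Plan: I will establish both correctness and the $m^{3k/5} n^{O(1)}$ time bound for Algorithm~\ref{alg:ind-pds}. The algorithm is a win-win driven by the Regularization Lemma (Lemma~\ref{lemma:regularization-lemma}): either few vertices are ``heavy enough'' to contain a member of the optimum $I$ and we branch over them, or many are, in which case every vertex of $I$ is moderately heavy and we can call Partial-DS on a small, regularized subinstance.

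For correctness I will induct on $k$, using the baseline algorithm from Proposition~\ref{prop:baseline-max-cover} at $k=2$. In the inductive step, let $I$ be the hypothesized independent optimum. The first part of the Regularization Lemma provides some $v \in I \cap H_1$. In the recursive branch the correct guess $x = v$ reduces to the $(G - N[x], k-1)$ subproblem, whose value by induction is at least $|N[I]| - |N[x]|$; taking the max over $x \in H_1$ and adding $|N[x]|$ recovers $|N[I]|$. In the regularization branch, $|H_1| > 2k^2 \Delta$ forces $I \subseteq H_2$ by the second part of the lemma, so restricting $X \gets H_2$ and invoking Lemma~\ref{lemma:bounding-number-of-sets} for the cleanup step preserves $I$; Theorem~\ref{theorem:partial-dom-algorithm} then certifies that Partial-DS correctly returns $|N[I]|$.

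For the time bound I will write $T(k, m)$ for the running time on an $m$-edge instance and analyze the two branches. In the recursive branch a double-counting argument gives $|H_1| \leq 2km/\Delta$, which combined with the branch condition ($\Delta \geq m^{2/5}$ or $|H_1| \leq 2k^2 \Delta$) bounds $|H_1| \leq O_k(m^{3/5})$ in either case, so $T(k, m) \leq O_k(m^{3/5}) \cdot T(k-1, m) + n^{O(1)}$. In the regularization branch the key estimate is $|X| \leq \min\{k\Delta^2,\,4km/\Delta\} \leq O_k(m^{2/3})$, where the first term comes from the cleanup step and the second from double-counting edges incident to $H_2$ (each vertex of $H_2$ has degree at least $\Delta/(2k)$). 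I then plug $\Delta \leq m^{2/5}$ and this $|X|$ bound into the Partial-DS running time from Theorem~\ref{theorem:partial-dom-algorithm}: the $\Delta^{3k/2}$ term becomes $\leq m^{3k/5}$ directly, while the $|X|^{k\omega/3}$ term becomes $O_k(m^{2k\omega/9})$, which is $\leq O_k(m^{3k/5})$ since $2\omega/9 < 3/5$ whenever $\omega < 2.7$ (and in particular for $\omega < 2.372$). Unrolling the recursion gives $T(k, m) \leq m^{3k/5} n^{O(1)}$ as claimed.

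The main obstacle will be exactly this last calculation. Using only the crude cleanup bound $|X| \leq k\Delta^2 \leq km^{4/5}$ yields $O_k(m^{4k\omega/15})$, which exceeds the target $m^{3k/5}$ precisely when $\omega > 9/4 = 2.25$ and hence would not close the gap under current knowledge of $\omega$. The trick is to intersect the cleanup with the Regularization Lemma estimate $|H_2| \leq O(m/\Delta)$, producing the improved bound $|X| = O_k(m^{2/3})$ attained near $\Delta \approx m^{1/3}$; this pushes the worst-case constraint comfortably out to $\omega < 2.7$ and makes the algorithm match the $m^{3k/5}$ target robustly for all current values of $\omega$.
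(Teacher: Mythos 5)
Your proposal is correct and follows essentially the same approach as the paper's proof: the same win-win driven by the Regularization Lemma, the same recursion, and the same use of the $H_2$ double-counting bound on $|X|$ to push the threshold from $\omega < 2.25$ out to $\omega < 2.7$. The only cosmetic difference is that where you directly bound $|X| \leq \min\{k\Delta^2, 4km/\Delta\} = O_k(m^{2/3})$ and then compute $|X|^{k\omega/3} \leq O_k(m^{2k\omega/9}) \leq O_k(m^{3k/5})$, the paper instead splits into the cases $\Delta \leq m^{9/(10\omega)}$ and $\Delta \geq m^{9/(10\omega)}$ before invoking $\min\{|X|,\Delta^2\}$; both packagings yield the identical constraint $\omega < 2.7$.

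===END===
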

\begin{proof}
    Consider Algorithm \autoref{alg:ind-pds}.
    We first show that if such an independent set $I$ exists, this algorithm will detect it correctly.
    Let $\Delta$ be the maximum degree in $G$ and let $H_1 := \{v\mid \deg(v)\geq \frac \Delta k\}$, $H_2 := \{v\mid \deg(v)\geq \frac{\Delta}{2k}\}$.
    By Lemma \ref{lemma:regularization-lemma}, any solution contains a vertex from $H_1$.\footnote{Note that in general when finding an independent set $I$ of size $k$ that maximizes $N[I]$, this is not necessary, since it is possible that $G$ contains no independent sets of size $k$ that intersect $H_1$, but if this was the case, then there is a strictly better solution $T$, such that the induced subgraph $G[T]$ contains an edge, contradicting our assumption that $\max_{S\in \binom{V}{k}} N[S] = N[I]$ for some independent set $I$.}
    We observe that, since we are only looking for solutions that induce an independent set, by deleting the closed neighborhood of the guessed vertex, we are not destroying any potential solutions. 
    Formally, combining the two arguments, if for each $\ell$, $I^{(\ell)}_G$ denotes the set of all independent sets in $G$ of size $\ell$, we have the following equality:
    \begin{align*}
        \max_{S\in I_G^{(k)}} |N_G[S]| &= \max_{\substack{v\in V\\ S'\in I_{G-N[v]}^{(k-1)}}} |N[v]| + |N[S']| & \\
        &= \max_{\substack{v\in H_1\\ S'\in I_{G-N[v]}^{(k-1)}}} |N[v]| + |N[S']| & \left(I\cap H_1\neq \emptyset\right).
    \end{align*}
    Now applying a simple induction establishes the correctness of the recursive step, where the correctness of base case is discussed in the Appendix \ref{sec:baseline}.
    The correctness of the remaining part follows easily from Lemmas \ref{lemma:regularization-lemma}, \ref{lemma:bounding-number-of-sets}, and the proof of Theorem \ref{theorem:partial-dom-algorithm}, combined with the assumption that there exists an independent set $I$ of size $k$ such that for any $S\in \binom{V}{k}$ such that $G[S]$ contains an edge it holds that $N[I] > N[S]$.

    It remains to analyze the time complexity of the algorithm $T(m,n,k)$. 
    Let us first consider the time complexity of the last recursive call assuming $k>2$ at the time of the last recursive call (namely we enter the else block on line \ref{alg:ind-pds:line-8}).
    As argued in the proof of Theorem \ref{theorem:partial-dom-algorithm}, the running time of this procedure is bounded by 
    \[
     T_1(m,n,k) \leq \left(\Delta^{3k/2} + \min\{|X|, \Delta^2\}^{k\omega/3}\right)\Delta^{\bigO(1)}.
    \]
    We claim that we can bound this value by $m^{3k/5}\Delta^{\bigO(1)}$.
    We first note that we can only enter line~\ref{alg:ind-pds:line-8} if the value of $\Delta<m^{2/5}$. Hence $\Delta^{3k/2}\leq m^{3k/5}$ as desired. 
    For the second term, if the matrix multiplication exponent $\omega<2.25$, we get the similar result, since clearly $\Delta^{2k\omega/3}\leq m^{3k/5}$ when $\Delta\leq m^{2/5}$. 
    However, with the current value of $\omega$, we need to be slightly more careful and that is where the Regularization Lemma (Lemma \ref{lemma:regularization-lemma}) comes into play. 
    Namely, we can only enter line \ref{alg:ind-pds:line-8} if $|H_1|> 2k^2\Delta$, hence by Regularization Lemma, all the solution vertices are contained in $H_2$ (hence there are at most $\bigO(\frac{m}{\Delta})$ many choices for each solution vertex). 
    In particular, $|X|\leq \bigO\left(\frac{m}{\Delta}\right)$.
    We now make a simple case distinction.
    \begin{itemize}
        \item If $\Delta\leq m^{\frac{9}{10\omega}}$, we have
        \begin{align*}
            \min\{|X|, \Delta^2\}^{k\omega/3}&\leq \Delta^{2k\omega/3} 
            \\& \leq m^{\frac{3k}{5}} & (\Delta\leq m^{\frac{9}{10\omega}}).
        \end{align*}
        \item If $\Delta\geq m^{\frac{9}{10\omega}}$, we have
        \begin{align*}
            \min\{|X|, \Delta^2\}^{k\omega/3}&\leq |X|^{k\omega/3} 
            \\& \leq \bigO\left(\left(\frac m \Delta\right)^{k\omega/3}\right)  & \text{(Regularization Lemma)} \\
            &\leq \bigO\left(m^{\left(1-{\frac{9}{10\omega}}\right) k\omega/3} \right)& (\Delta\geq m^{\frac{9} {10\omega}}) \\
            & = \bigO\left(m^{\frac{10\omega-9}{30}k}\right) \\
            & \leq \bigO\left(m^{3k/5}\right) & (\omega<2.7).
        \end{align*}
    \end{itemize}
    Plugging this back in, we can conclude that $T_1(m,n,k)\leq \left(m^{3k/5}\Delta^{\bigO(1)}\right)$. 
    Finally, before giving the full running time of the algorithm above, we need to state a few more simple observations.
    \begin{enumerate}
        \item $T(m,n,2)\leq n^{\omega+o(1)}$.
        \item At each step we recurse on at most $m^{3/5}$ branches.
        \item For each $m'\leq m$ and $n'\leq n$, it holds that $T(m',n',k) \leq T(m,n,k)$.
        \item For each $x\in V$, we can construct the graph $G-N[x]$ in time $\bigO(m)$
    \end{enumerate}
    We can hence bound the total running time (up to constant factors) for each $k\geq 3$ as follows.
    \begin{align*}
        T(m,n,k) &\leq m^{3/5}\left( m+ T(m,n,k-1)\right) + T_1(m,n,k) & \\
        & \leq m^{3/5}\cdot T(m,n,k-1) + T_1(m,n,k) & \left(\text{for $k\geq 3$, }T(m,n,k-1)\geq m\right)\\
        & \leq m^{3/5}\cdot T(m,n,k-1) + m^{3k/5}\Delta^{\bigO(1)}& \left(T_1(m,n,k) \leq m^{3k/5}\Delta^{\bigO(1)}\right)\\
        &\leq m^{3(k-2)/5}n^{\omega+o(1)} + m^{3k/5}\Delta^{\bigO(1)} \\
        & \leq  m^{3k/5}n^{\bigO(1)}.& \qedhere
    \end{align*}
\end{proof}
\subsection{Algorithm for Partial $2$-Dominating Set}
A very interesting special case which was not considered in depth in the previous section is the case $k=2$ (in the previous section we just gave a baseline algorithm running in $n^\omega$ that does not exploit sparsity).
In particular, the lower bound construction for $k$-Dominating Set from \cite{FischerKR24} gives no meaningful lower bound for this special case, and they also show that $2$-Dominating Set can be solved in near-linear time $\tilde\bigO(m)$ if $\omega = 2$. 
This raises the question of whether we can obtain a similar algorithm for Partial $2$-Dominating Set.
The lower bound construction from the next section answers this question negatively by providing a non-trivial conditional lower bound based on OVH:
\begin{theorem} \label{thm:2pds-lb}
    If there exists an algorithm solving Partial $2$-Dominating Set in time $\bigO\left(m^{4/3-\varepsilon}\right)$, then the OV Hypothesis is false.
\end{theorem}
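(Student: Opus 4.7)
The plan is to apply the reduction sketched in the proof of Theorem~\ref{thm:OVLB} (and given in detail in Section~\ref{section:kh-OV-LB} via Lemma~\ref{lemma:partial-domination-lb-reduction} with $h=k=2$) to an OV instance, and then re-count the parameters of the resulting graph in terms of the number of edges $m$ rather than the maximum degree $\Delta$. Concretely, starting from an OV instance $A_1, A_2 \subseteq \{0,1\}^d$ with $|A_1|=|A_2|=N$ and $d=N^{o(1)}$, I would set $s := \sqrt{N}$, partition each $A_i$ into $s$ groups of $s$ vectors, and build the graph on $A_1 \cup A_2 \cup Z$ (plus the constant-overhead gadget that forces any optimal pair to have one endpoint in each of $A_1,A_2$) exactly as in Section~\ref{section:kh-OV-LB}.

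The key bookkeeping step is counting edges. Each vertex in $A_1 \cup A_2$ has degree $O(ds) = N^{1/2+o(1)}$, and since $|A_1|+|A_2| = 2N$, the total number of incidences, and hence edges, is $m = N \cdot N^{1/2+o(1)} = N^{3/2+o(1)}$. I would verify that the enforcing gadget (which introduces $O(ds^2) = N^{1+o(1)}$ auxiliary vertices and an easily bounded number of edges) as well as the padding by isolated vertices preserve the bound $m \leq N^{3/2+o(1)}$; none of these components contribute more than $N^{3/2+o(1)}$ edges. Consequently, the reduction converts an OV instance of size $N$ into a Partial $2$-Dominating Set instance with $m = N^{3/2+o(1)}$ edges whose optimal value encodes the OV answer (cf.\ equation~\eqref{eq:k2-main}).

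Given any hypothetical algorithm for Partial $2$-Dominating Set running in time $O(m^{4/3-\varepsilon})$ for some $\varepsilon>0$, running it on the reduced instance solves OV in time
\[
O\!\left(m^{4/3-\varepsilon}\right) \;=\; O\!\left(N^{(3/2)(4/3-\varepsilon)+o(1)}\right) \;=\; O\!\left(N^{2-3\varepsilon/2+o(1)}\right),
\]
which contradicts the OV Hypothesis. Since the OV instance size $N$ corresponds to $n = \Theta(N)$ in the usual formulation, the $o(1)$ in the exponent is absorbed by choosing $\varepsilon' := \varepsilon/2$ for all sufficiently large $N$.

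The main obstacle I anticipate is the edge bookkeeping for the gadget that forces any optimal solution to be of the form $(u,v)$ with $u \in A_1$, $v\in A_2$: one has to choose this gadget carefully so that it introduces only $N^{3/2+o(1)}$ edges (e.g., by attaching a moderate-degree ``penalty'' gadget of the same flavor as the $P^{i}_{j,\ell}$ sets from Lemma~\ref{lemma:partial-domination-lb-reduction}, but tuned so that its total edge count does not exceed the $N^{3/2+o(1)}$ budget). Once this is done, the rest of the argument follows the $k=2$ special case of Lemma~\ref{lemma:partial-domination-lb-reduction} verbatim, with $\Delta = N^{1/2+o(1)}$ and the additional observation that $m = \Theta(\Delta \cdot N)$ for this construction.
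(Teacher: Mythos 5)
Your proof is correct and follows essentially the same route as the paper, which packages this argument as the third bullet of Theorem~\ref{thm:sparse-graph-pd-hardness} (for $k=2$), derived from Lemma~\ref{lemma:sparse-partial-domination-lb-reduction} instantiated with $h=k=2$ and $\Delta := m^{1/3}$ together with Lemmas~\ref{lemma:sparse-par-dom-lb-khov} and~\ref{lemma:hyperclique-hardness-of-khov}. The key step you identify---that the $(2,2)$-OV construction with $s=\sqrt{N}$ produces a Partial $2$-Dominating Set instance on $m = N^{3/2+o(1)}$ edges (including the $P^{i}_{j,\ell}$-style penalty gadget), so an $O(m^{4/3-\varepsilon})$-time algorithm would solve OV in time $O(N^{2-3\varepsilon/2+o(1)})$---is precisely the observation underlying the paper's proof.
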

This section is dedicated to showing that we can in fact construct an algorithm that matches this lower bound (if $\omega=2$). In particular we prove the following theorem.
\begin{theorem}\label{thm:2pds-algo}
    There exists an algorithm that solves Partial $2$-Dominating Set in time $m^{\frac{2\omega}{\omega+1}+o(1)}$.
\end{theorem}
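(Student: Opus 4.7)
The plan is to combine the baseline $O(n \Delta + \Delta^4)$-time algorithm of Theorem~\ref{thm:k2-algo} with a direct per-pivot enumeration and, for the hardest parameter range, an Alon--Yuster--Zwick-style sparse matrix-multiplication step; which sub-routine to invoke is decided by the maximum degree $\Delta := \Delta_{\max}$. Two exchange-style bounds drive the case analysis. First, Proposition~\ref{prop:exchange-pds} restricts both optimal solution vertices $(x^*, y^*)$ to the set $H$ of the top $\min\{2\Delta^2, n\}$ highest-degree vertices. Second, because $\max_{v} |N[v]| = \Delta + 1$ is already a feasible one-vertex solution, the optimum value satisfies $\textsc{opt} \geq \Delta + 1$, and therefore $\deg(x^*) + \deg(y^*) \geq \Delta - 1$; hence at least one of $x^*, y^*$ lies in $\tilde H := \{v : \deg(v) \geq (\Delta-1)/2\}$, a set of size at most $4m/\Delta$.

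For the low-degree regime $\Delta \leq m^{\omega/(2(\omega+1))}$, I would directly apply Theorem~\ref{thm:k2-algo}; using $n \leq 2m$ (drop isolated vertices) it runs in time $O(m \Delta + \Delta^4) \leq O(m^{2\omega/(\omega+1)})$. For the high-degree regime $\Delta \geq m^{2/(\omega+1)}$, I would iterate over each pivot $x \in \tilde H$ and compute $\max_y |N[x] \cup N[y]|$ in $O(m)$ time by sweeping all edges incident to $N[x]$ once while maintaining per-$y$ counters tracking $|N[x] \cap N[y]|$; this yields total $O(m \cdot |\tilde H|) = O(m^2/\Delta) \leq O(m^{2\omega/(\omega+1)})$.

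The remaining intermediate regime is the technical core. Here the plan is to apply an Alon--Yuster--Zwick-type wedge decomposition: fix an auxiliary threshold $\Delta^*$; wedges $x-v-y$ whose middle $v$ is light ($\deg(v) < \Delta^*$) are enumerated directly, contributing $\sum_{v \text{ light}} |N(v) \cap \tilde H| \cdot |N(v) \cap H| = O(m \Delta^*)$, while wedges through a heavy middle $v \in H^* := \{v : \deg(v) \geq \Delta^*\}$ are gathered by a single matrix-multiplication $A_{\tilde H, H^*} \cdot A_{H^*, H}$ of cost $\operatorname{MM}(|\tilde H|, |H^*|, |H|)$. Balancing $\Delta^* = m^{(\omega-1)/(\omega+1)}$ and substituting the bounds $|\tilde H| \leq 4m/\Delta$, $|H^*| \leq 2m/\Delta^*$, $|H| \leq 2\Delta^2$ keeps the total within $O(m^{2\omega/(\omega+1)})$ throughout this range. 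For each pivot $x \in \tilde H$ one then reads off $\max_{y \in H} |N[x]| + |N[y]| - |N(x) \cap N(y)| - O(1)$ from the resulting counts.

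The main obstacle will be the boundary $\Delta \approx \sqrt{n/2}$ where Proposition~\ref{prop:exchange-pds} becomes trivial ($|H| = n$) and the matrix-multiplication output $|\tilde H| \cdot |H|$ risks exceeding the target. In that boundary subregime the plan is to additionally invoke the sparse-triangle-counting algorithm of Alon--Yuster--Zwick~\cite{AlonYZ97}: for each edge $(x,y) \in E$ compute the triangle count $t(x,y) = |N(x) \cap N(y)|$ in total time $O(m^{2\omega/(\omega+1)})$, which immediately yields the objective $|N[x]| + |N[y]| - t(x,y)$ for every edge pair; non-edge pairs are then handled separately via a truncated $2$-neighborhood exploration around each pivot in $\tilde H$. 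Verifying that the accumulated overhead across regimes is absorbed into the $m^{o(1)}$ slack factor---and, in particular, controlling the rectangular matrix-multiplication exponent $\operatorname{MM}(a,b,c)$ as a function of $\omega$---will be the most delicate bookkeeping.
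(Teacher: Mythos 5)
Your low-degree regime ($\Delta \le m^{\omega/(2(\omega+1))}$, handled via Theorem~\ref{thm:k2-algo}) and your high-degree regime ($\Delta \ge m^{2/(\omega+1)}$, handled via per-pivot $O(m)$ sweeps over $\tilde H$) are both sound and correctly budgeted. The gap is in the intermediate regime, roughly $m^{(\omega-1)/(\omega+1)} < \Delta < m^{2/(\omega+1)}$, and it is not a bookkeeping subtlety but an information-theoretic obstruction to your plan. You propose to compute $|N(x)\cap N(y)|$ for \emph{every} pair $(x,y)\in\tilde H\times H$ via a wedge-splitting matrix product, but the output of that product already has size $|\tilde H|\cdot|H| = \Theta\!\left(\min\{m\Delta,\, mn/\Delta\}\right)$, which exceeds $m^{2\omega/(\omega+1)}$ whenever $m^{(\omega-1)/(\omega+1)} < \Delta < n\,/\,m^{(\omega-1)/(\omega+1)}$ (a nonempty range for every $\omega<3$ whenever $m<n^{(\omega+1)/(2(\omega-1))}$). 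No choice of the inner threshold $\Delta^*$ can fix this, since the cost of simply writing the answer down is independent of $\Delta^*$. You partially flag this and propose AYZ triangle counts for edge pairs plus ``truncated $2$-neighborhood exploration'' for non-edge pairs, but the 2-neighborhood sweep around a single pivot already costs $\Theta\!\left(\sum_{z\in N[x]}\deg(z)\right) = \Theta(\min\{m,\Delta^2\})$ in the worst case, so summing over $|\tilde H| = \Theta(m/\Delta)$ pivots gives $\Theta(\min\{m^2/\Delta,\, m\Delta\})$ -- which is again above budget throughout the problematic range, and ``truncated'' is left unexplained. (Nor can a $(\max,+)$-style aggregation inside the matrix product help, since no subcubic algorithm for that semiring is known.)

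The paper closes exactly this gap with a different and more surgical device (Observation~\ref{obs:2pds-Vd} and Lemma~\ref{lemma:2pds-light-solutions}). It layers the light solution vertex by degree scale $d=2^\ell$ for $\ell = 0,\dots,\lceil\log\gamma\rceil$ with $\gamma = m^{(\omega-1)/(\omega+1)}$; for a fixed $d\le\Delta/4$ it shows the heavy partner $x_1$ must satisfy $\deg(x_1)\ge\Delta-2d$, from which, after pivoting on $S:=N[x]$ for a max-degree vertex $x$ and $R:=V\setminus S$, one deduces that $x_1$ has at most $4d$ non-neighbors inside $S$ and at most $2d$ neighbors inside $R$. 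This lets the paper enumerate only $O(md)$ relevant wedges (through $S$ via \emph{non}-edges, and through $R$ via edges) and fill in two sparse tables $C_S, C_R$ from which the objective on any candidate pair is a constant-time read. Crucially, the paper then avoids touching all of $H_d\times V_d$ by a prioritized-candidate argument (the set $\mathcal H$ filled in Algorithm~\ref{alg:vertex-pair-sorting}): it shows the optimal pair must appear among the top $O(md)$ pairs when ordered by $|N(x)|+|N(y)\cap R|$, together with the $O(md)$ pairs for which $C_S$ or $C_R$ is populated. This pruning step -- the insight that you do not need to look at all candidate pairs because the optimum must be realized among a cheaply-identified short prefix -- is the piece your proposal is missing and would need to reinvent to make the intermediate regime go through.
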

The strategy to proving this theorem is to first argue that if both solution nodes have degree at least $d$ (for some $d$ that we determine later), we can apply an approach similar to the one from the classical sparse triangle counting algorithm \cite{AlonYZ97} to efficiently detect any such solution.
On the other hand, if there is a solution that contains a vertex of degree at most $d$, a slightly more involved counting argument shows that we can correctly detect any such solution in time $\bigO(m\cdot d)$.
\begin{lemma}\label{lemma:k2-heavy-solutions}
    Given a graph $G$ with $n$ vertices and $m$ edges, in time $m\cdot d + \left(\frac{m}{d}\right)^{\omega+o(1)}$, we can detect any solution $x_1,x_2$ that maximizes the value $\max_{x_1,x_2\in V}|N[x_1]\cup N[x_2]|$, with $d\leq \deg(x_2)\leq \deg(x_1)$.
\end{lemma}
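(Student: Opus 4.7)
The plan is to adapt the classical high-degree/low-degree split of Alon--Yuster--Zwick~\cite{AlonYZ97} from sparse triangle counting. I would first set $H := \{v \in V : \deg(v)\ge d\}$, so that by the degree-sum bound $|H|\le 2m/d$, and observe that any optimal pair $(x_1,x_2)$ with $\deg(x_1),\deg(x_2)\ge d$ lies in $H\times H$. By inclusion--exclusion, detecting such an optimum reduces to populating the table $C[x_1,x_2]:=|N[x_1]\cap N[x_2]|$ over $H\times H$, after which the maximum of $(\deg(x_1){+}1)+(\deg(x_2){+}1)-C[x_1,x_2]$ can be read off in $O(|H|^2)\le O((m/d)^2)$ time.

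The key step is to compute $C$ by splitting the ``middle vertex'' $v\in N[x_1]\cap N[x_2]$ according to whether $v\in H$ or $v\notin H$. For the high-side contribution, I would build the $|H|\times |H|$ closed-neighborhood indicator matrix $A$ (with $A[x,v]=1$ iff $v\in N[x]$, for $x,v\in H$) and compute $AA^\top$ by one square matrix multiplication, costing $|H|^{\omega+o(1)}\le (m/d)^{\omega+o(1)}$. For the low-side contribution, I would iterate over $v\in V\setminus H$ and, for each such $v$, enumerate all ordered pairs in $N(v)\cap H$ and increment the corresponding entry of $C$. Summing $|N(v)\cap H|^2$ over $v\notin H$ and bounding one factor by $\deg(v)<d$ yields $d$ times the number of edges between $V\setminus H$ and $H$, i.e.\ at most $m\cdot d$.

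The main obstacle I expect is ensuring the low-side enumeration costs only $O(m\cdot d)$ rather than $O(nd^2)$: this hinges on restricting the pair enumeration to $N(v)\cap H$ (rather than all of $N[v]$), so that the work can be charged against edges incident to $H$, mirroring the low-degree case in Alon--Yuster--Zwick. Everything else---computing degrees, splitting $V$, precomputing $|N[x]|$, and scanning $H\times H$ for the maximum---fits comfortably within the stated budget of $m\cdot d + (m/d)^{\omega+o(1)}$.
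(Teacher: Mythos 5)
Your proposal is correct and follows essentially the same Alon--Yuster--Zwick heavy/light blueprint as the paper: restrict the output to $H\times H$, split the middle witness $v$ into a small class handled by an $O(m/d)\times O(m/d)$ matrix product and a residual class handled by direct pair enumeration charged at $O(m\cdot d)$. The only minor difference is the splitting criterion---the paper restricts rows to heavy vertices and then splits the \emph{columns} of the resulting $O(m/d)\times n$ matrix by how many ones each column contains ($\ge d$ versus $< d$), whereas you split middle vertices by their degree in $G$---but both yield $O(m/d)$ columns for the matrix product and an $O(m\cdot d)$ enumeration cost, so the running-time analysis goes through identically.
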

\begin{proof} 
    Recall that by the principle of inclusion-exclusion, we have $|N[i]\cup N[j]| = |N[i]| + |N[j]|-|N[i]\cap N[j]|$.
    Hence, it suffices to show that we can for each $i,j$ compute the value $|N[i]\cap N[j]|$ in the claimed time.
    Note that there are at most $s:= \bigO(\frac{m}{d})$ many vertices of degree at least $d$. 
    \begin{claim}
        Let $A$ be a $\{0,1\}$--matrix of dimensions $\frac{m}{d}\times n$, with at most $m$ ones. 
        Then the matrix $AA^T$ can be computed in time $m\cdot d + \left(\frac{m}{d}\right)^{\omega+o(1)}$.
    \end{claim}
    \begin{subproof}
        We follow a simple heavy-light approach as in~\cite{YusterZ05}.
        Let $A_d$ be the $\frac{m}{d}\times \bigO(\frac{m}{d})$ submatrix of $A$ such that each column of $A_{\ge d}$ has at least $d$ many ones. 
        Since $A$ has at most $m$ many ones in total, clearly $A_{\ge d}$ has at most $\bigO(\frac{m}{d})$ many columns.
        Hence, computing $B:=A_{\ge d}A_{\ge d}^T$ takes time~\smash{$(\frac{m}{d})^{\omega+o(1)}$}.
        Consider now the submatrix $A_{<d}$ of $A$ consisting of the columns with less than $d$ ones. 
        Fix a pair of indices $(i,k)$ such that $A_{<d}[i,k] = 1$.
        Now for all $j$, such that $A_{<d}[k,j] = 1$ increment $B[i,j]$ by one.
        After doing this for all pairs $(i,k)$, clearly $B$ will precisely be equal to $AA^T$.
        Note that there are at most $m$ pairs $(i,k)$, satisfying $A_{<d}[i,k] = 1$ and for each such pair, by construction of $A_{<d}$, there are at most $d$ many indices $j$ such that $A_{<d}[k,j] = 1$. 
        This step hence takes a total of $\bigO(m\cdot d)$ time, yielding the desired time to compute $AA^T$.
    \end{subproof}
    We now construct a submatrix $A$ of the adjacency matrix of $G$ consisting only of those rows with degree at least $d$. 
    Applying the claim above, we can compute the matrix $B:=AA^T$ in time $m\cdot d + \left(\frac{m}{d}\right)^{\omega+o(1)}$, and note that for each pair of indices $(i,j)$, it holds that $B[i,j] = |N(i) \cap N(j)|$. 
    Now for each pair of vertices $(i,j)$, we have 
    \[|N[i]\cup N[j]| = |N[i]| + |N[j]| - (A[i,j]+B[i,j]). \qedhere\]
\end{proof}
We now focus on the remaining part, namely finding solutions $x_1,x_2$ such that $\deg(x_2)\leq d$.
For the rest of this section, let $V_d:=\{v\in V\mid d\leq \deg(v)< 2d\}$.
We first prove that if there exists a solution $x_1,x_2$ that intersects $V_d$, we can efficiently find it and by running this process.
We begin by proving a simple observation.
\begin{observation}\label{obs:2pds-Vd}
    Let $G$ be a graph with maximum degree $\Delta$. 
    Let $x_1,x_2$ be a solution to Partial $2$-Dominating Set on $G$, such that $x_2\in V_d$ and $\deg(x_2)\leq \deg(x_1)$.
    Then $\deg(x_1)\geq \max\{\Delta-2d, \frac \Delta 2\}$.
\end{observation}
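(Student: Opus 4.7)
The plan is to exploit the optimality of $(x_1, x_2)$ together with two simple upper bounds on $|N[x_1]\cup N[x_2]|$. First, I would establish a baseline lower bound on the optimum: if $v^\star$ is any vertex with $\deg(v^\star) = \Delta$, then the pair $(v^\star, x_2)$ is a feasible solution, so $|N[v^\star] \cup N[x_2]| \ge |N[v^\star]| = \Delta + 1$. By optimality of $(x_1, x_2)$, we obtain
\[
|N[x_1] \cup N[x_2]| \;\ge\; \Delta + 1.
\]

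Next, I would combine this with the union bound $|N[x_1]\cup N[x_2]| \le |N[x_1]| + |N[x_2]| = \deg(x_1) + \deg(x_2) + 2$. Plugging in $\deg(x_2) < 2d$ (which holds because $x_2 \in V_d$) yields $\deg(x_1) + 2d + 1 \ge \Delta$, hence $\deg(x_1) \ge \Delta - 2d$. For the second branch of the maximum, I would use $\deg(x_2) \le \deg(x_1)$ in the same inequality to get $2\deg(x_1) + 2 \ge \Delta + 1$, i.e.\ $\deg(x_1) \ge (\Delta-1)/2$, which is (up to the additive $1/2$ that is absorbed into constants elsewhere) the claimed bound of $\Delta/2$.

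There is no real obstacle here; the only subtlety is the trivial case check that such a $v^\star$ exists (it does, since $\Delta$ is attained by assumption) and that we may safely assume $x_1 \ne x_2$ in the optimum. The observation is essentially a two-line exchange argument, and its role in the subsequent algorithm will be to guarantee that whenever a light vertex $x_2 \in V_d$ appears in an optimal solution, its partner $x_1$ is forced to be heavy, which lets us iterate over $x_2$ cheaply (at most $|V_d|$ choices) and then look for $x_1$ only among vertices of degree at least $\max\{\Delta-2d,\Delta/2\}$.
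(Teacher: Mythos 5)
Your proof is correct and takes essentially the same route as the paper: compare the optimal pair against the feasible pair consisting of a maximum-degree vertex and $x_2$, then combine with the union bound $|N[x_1]\cup N[x_2]|\le |N[x_1]|+|N[x_2]|$. The paper packages the $\Delta-2d$ branch as a contradiction (assume $\deg(x_1)<\Delta-2d$ and exhibit the strictly better pair $(x,x_2)$) and simply cites the Regularization Lemma for the $\Delta/2$ branch, whereas you handle both branches symmetrically from the single inequality $\deg(x_1)+\deg(x_2)+2\ge\Delta+1$; the content is the same. Your acknowledged $1/2$-discrepancy in the second branch is genuine but harmless: integrality gives $\deg(x_1)\ge\lceil(\Delta-1)/2\rceil=\lfloor\Delta/2\rfloor$, and in any case the observation is only applied later with $d\le\Delta/4$, where $\Delta-2d\ge\Delta/2$ so the first branch of the max is the binding one.
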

\begin{proof}
    We note that $\deg(x_1)\geq \frac \Delta 2$ was already argued in Lemma \ref{lemma:regularization-lemma}, so we only need to prove that $\deg(x_1)\geq \Delta-2d$.
    By the inclusion-exclusion principle, we have that 
    \[\max\{|N[x_1], |N[x_2]|\} \leq |N[x_1] \cup N[x_2]|\leq  |N[x_1]| + |N[x_2]|.\]
    Assume that $x_2\in V_d$ and $\deg(x_1)<\Delta-2d$, and let $x$ be a vertex such that $\deg(x) = \Delta$ (highest degree vertex).
    Then by the inequality above 
    \begin{align*}
        |N[x_1] \cup N[x_2]|& \leq  |N[x_1]| + |N[x_2]| \\
        &< \Delta \\&\leq |N[x]|
        \\ &\leq |N[x] \cup N[x_2]|,
    \end{align*}
    hence $x,x_2$ is a strictly better solution than $x_1,x_2$, contradicting the assumption that $x_1,x_2$ was a valid solution to Partial $2$-Dominating Set.
\end{proof} 
\begin{lemma}\label{lemma:2pds-light-solutions}
    Let $d\leq \frac{\Delta}{4}$.
    There exists an algorithm that in time $\bigO(m\cdot d)$ either returns a pair of vertices that maximizes the value $\max_{x_1\in V, x_2\in V_d}|N(x_1)\cup N(x_2)|$, or correctly reports that no solution contains a vertex from $V_d$ (i.e. $\max_{x_1\in V, x_2\in V_d}|N(x_1)\cup N(x_2)|<\max_{\substack{y_1,y_2\in V}}|N(y_1)\cup N(y_2)|$).
\end{lemma}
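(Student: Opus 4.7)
The plan is to exploit Observation~\ref{obs:2pds-Vd} to restrict the candidate set for $x_1$ to an easily-bounded ``heavy'' set, and then compute the best pair via a common-neighbor enumeration whose cost is bounded via a case analysis on $\Delta$. By the Observation, any globally optimal pair $(x_1,x_2)$ with $x_2\in V_d$ and $\deg(x_2)\le\deg(x_1)$ satisfies $\deg(x_1)\ge \Delta-2d$. I would therefore focus on the set $V^{\#}:=\{v\in V:\deg(v)\ge \Delta-2d\}$; because $d\le \Delta/4$ forces $\Delta-2d\ge \Delta/2$, a degree-sum argument gives $|V^{\#}|\le 4m/\Delta$.

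For each $x_2\in V_d$ I would compute $|N[x_1]\cap N[x_2]|$ for all $x_1\in V^{\#}$ by common-neighbor enumeration: iterate $y\in N[x_2]$ (of which there are at most $2d+1$), and for each such $y$ iterate $x_1\in N[y]\cap V^{\#}$, incrementing a lazy counter $C_{x_2}[x_1]$. After processing, the best partner for $x_2$ inside $V^{\#}$ is $\arg\max_{x_1\in V^{\#}}\!\big(\deg(x_1)+1-C_{x_2}[x_1]\big)$, and $|N[x_1]\cup N[x_2]|$ follows by adding $\deg(x_2)+1$. Lazy initialization (touching and resetting only entries that were actually incremented) avoids any $\bigO(|V^{\#}|)$ overhead per $x_2$.

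The main obstacle is the running-time analysis, which must yield $\bigO(md)$. The total work is dominated by
\[
    \sum_{x_2\in V_d}\sum_{y\in N[x_2]}|N[y]\cap V^{\#}| \;=\; \sum_{y\in V}|N[y]\cap V_d|\cdot|N[y]\cap V^{\#}|.
\]
I anticipate splitting into regimes according to $\Delta$: when $\Delta\ge m/d$, the crude per-pair bound $|V^{\#}|\cdot|V_d|\cdot d=\bigO(m^2/\Delta)\le \bigO(md)$ already suffices; when $\Delta=\bigO(d)$, the bound $|N[y]\cap V^{\#}|\le \deg(y)+1=\bigO(\Delta)$ combined with $\sum_y|N[y]\cap V_d|=\sum_{x_2\in V_d}(\deg(x_2)+1)=\bigO(m)$ gives $\bigO(m\Delta)=\bigO(md)$. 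Bridging the intermediate regime $4d<\Delta<m/d$ (which forces $d^2<m/4$) will be the trickiest part; here I would introduce a light/heavy threshold $T$ on $\deg(y)$, paying $\bigO(Tm)$ for light $y$'s and exploiting that there are only $\bigO(m/T)$ heavy $y$'s for the remaining work, then choosing $T$ as a function of $m, d, \Delta$ to balance the two contributions.

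Finally, the algorithm outputs the best pair $(x_1,x_2)\in V^{\#}\times V_d$ it found; the disjunctive ``reports that no solution contains a vertex from $V_d$'' case is obtained by comparison against the global answer produced by Lemma~\ref{lemma:k2-heavy-solutions} (or the trivial baseline $\Delta+1$ of a single max-degree vertex): if the best $V_d$-pair strictly beats that value, return it, otherwise report no $V_d$-solution, in accordance with the lemma's disjunctive guarantee.
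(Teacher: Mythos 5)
Your plan---restrict $x_1$ to $H_d = \{v : \deg(v) \ge \Delta-2d\}$ via Observation~\ref{obs:2pds-Vd}, then recover union sizes from common-neighbor counts---is in the right spirit, but the running-time analysis does not close, and no threshold on $\deg(y)$ repairs it. Your enumeration cost is $\Sigma := \sum_{y} |N[y]\cap V_d|\cdot|N[y]\cap V^{\#}|$. Take a clique on $k:=m^{1/3}$ vertices and attach to each clique vertex $\Delta-k+1$ private leaves, where $\Delta := m^{2/3}$; this graph has $\Theta(m)$ edges. For $d=1$ (which satisfies $d\le \Delta/4$), $V_d$ consists of the $\approx m$ leaves and $V^{\#}=H_d$ is the clique. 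Every leaf's only non-trivial neighbor is a clique vertex $y$ with $|N[y]\cap V^{\#}|=k$, so $\Sigma = \Theta(mk) = \Theta(m^{4/3})$ whereas $md=m$. Your light/heavy threshold on $\deg(y)$ does not rescue this: the expensive $y$'s are exactly the clique vertices, each contributing $|N[y]\cap V_d|\cdot |N[y]\cap V^{\#}| \approx m^{2/3}\cdot m^{1/3} = m$ work, and there are $m^{1/3}$ of them.

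The crucial idea the paper has that your proposal lacks is an $\bigO(m)$-time \emph{pruning} step that takes the ``report no $V_d$-solution'' disjunct whenever enumeration would be expensive. Fixing a max-degree vertex $x$ and setting $S:=N[x]$, $R:=V\setminus S$, the paper observes (Claim 1 of its proof) that if any vertex $v$ has $|N(v)\cap R| \ge 2d+1$, then the pair $(x,v)$ already dominates $\ge \Delta+2d+2$ vertices, strictly more than the $\le \Delta+2d+1$ any pair containing a $V_d$-vertex can cover, so one may immediately report ``no $V_d$-solution''---which is exactly what fires in the clique-plus-leaves example. A second check (Claim 2) further restricts each $H_d$-vertex to miss at most $4d$ vertices of $S$. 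Only after both checks pass does the paper enumerate common neighbors, and it does so split across $S$ and $R$: neighbors in $R$ directly (now cheap, since every vertex has $\le 2d$ $R$-neighbors), and neighbors in $S$ by \emph{complementation}, iterating over the $\le 4d$ non-neighbors in $S$ of each $H_d$-vertex. This $S/R$ decomposition with the complement trick on $S$ is the combinatorial core you are missing. A secondary gap: you compute $\arg\max_{x_1}(\deg(x_1)+1-C_{x_2}[x_1])$, but lazy dictionaries only touch pairs with nonzero $C$; finding the best $x_1$ among the untouched (zero) entries without scanning $|V^{\#}|\cdot|V_d|$ pairs---again too many when $\Delta d < m$---also needs an argument, which the paper supplies via the pre-sorted candidate list $\mathcal H$ of Algorithm~\ref{alg:vertex-pair-sorting} and a counting argument on $|C_S|+|C_R|$.
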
 
\begin{proof}
    Let $H_d$ be the set consisting of all vertices $x$ that satisfy $\deg(x)\geq \Delta-2d$. 
    By the previous observation, if $x_2$ is in $V_d$, then $x_1$ must be contained in $H_d$. 
    Let $x$ be a vertex of degree $\Delta$ and let $S:=N[x]$, $R:=V\setminus S$.
    \begin{claim}
        If there exists a vertex $v$ such that $|N(v)\cap R|\ge 2d +1$, we can report that no solution contains a vertex from $V_d$.
    \end{claim}
    \begin{subproof}
        Any pair of vertices $x_1, x_2$ such that $x_2\in V_d$ dominates at most $(\Delta + 1) + 2d$ many vertices.
        Let $y$ be any vertex that contains $\ge 2d +1$ many neighbors in $R$. 
        Then we have:
        \begin{align*}
            |N[x] \cup N[y]| &\ge (\Delta+1) + (2d+1) \\&= \Delta + 2d + 2 \\& > |N[x_1] \cup N[x_2]|,
        \end{align*}
        therefore $x_1, x_2$ cannot be a valid solution.
    \end{subproof}
    We may thus assume that for any vertex $v\in V$ it holds that $|N(v)\cap R| \leq 2d$.
    \begin{claim}\label{claim:2pds-2}
        If there exists a vertex $v\in H_d$ that is \emph{not} adjacent to at least $4d+1$ many vertices in $S$, we can report that no solution contains a vertex from $V_d$.
    \end{claim}
    \begin{subproof}
        Since $v$ is in $H_d$, it has degree at least $\Delta-2d$ and hence at least $(\Delta-2d) - (\Delta-(4d+1)) = 2d+1$ neighbors in $R$. Hence, by the previous claim, we can conclude that no solution contains a vertex from $V_d$.
    \end{subproof}
    We can thus further assume that for any vertex $v\in H_d$, it holds that $|S\setminus N(v)| \leq 4d$.
    For any $X,Y\subseteq V$, let $E(X,Y)$ (resp. $\overline{E}(X,Y)$) denote the set of edges (resp. non-edges) between $X$ and $Y$ (i.e. the set $(X\times Y)\cap E$ and $(X\times Y)\setminus E$ respectively).
    \begin{claim}\label{claim:2pds-3}
        Assuming that the last two claims do not report that there is no valid solution that intersects $V_d$, then
        \begin{enumerate}
            \item We can enumerate all triples of vertices $(x,s,y)\in H_d\times S \times V_d$ such that $\{x,s\}\in \overline E$ and $\{s,y\}\in E$ in time $\bigO(m\cdot d)$.
            \item We can enumerate all triples of vertices $(x,r,y)\in H_d\times R \times V_d$ such that $\{x,r\}\in E$ and $\{r,y\}\in E$ in time $\bigO(m\cdot d)$.
        \end{enumerate}
    \end{claim}
    \begin{subproof}
        Note that for any $d\leq \frac{\Delta}{4}$, if $x_1,x_2$ is a valid solution with $x_2\in V_d$, then 
        $H_d$ consists of at most $\bigO\left(\frac{m}{\Delta}\right)$ many vertices.
        Thus we can, for each vertex $x\in H_d$ enumerate each of the $\bigO(d)$ many vertices $s\in S$ that are non-adjacent to $x$ (assuming that the procedure from the claim above did not report that there is no valid solution intersecting $V_d$), and finally since the maximum degree in $G$ is $\Delta$, we can in $\bigO(\Delta)$ enumerate all neighbors of $s$ that are in $V_d$. 
        In total this enumerates all desired triples $(x,s,y)\in H_d\times S \times V_d$, in time $\bigO(m\cdot d)$. 
        The triples $(x,r,y)\in H_d\times R \times V_d$ such that $\{x,r\}\in E$ and $\{r,y\}\in E$ can be enumerated similarly.
    \end{subproof}
    We now create two empty dictionaries $C_S$ and $C_R$. 
    We enumerate all triples $(x,s,y)\in H_d\times S \times V_d$ such that $\{x,s\}\in \overline E$ and $\{s,y\}\in E$ as in the previous claim and for each such $(x,s,y)$, if $(x,y)$ is a key in $C_S$, we increment the value stored in the table: $C_S[x,y]\gets C_S[x,y] + 1$.
    Otherwise, initialize the entry $C_S[x,y] = 1$.
    We fill in the dictionary $C_R$ similarly, using the enumerated triples $(x,r,y)\in H_d\times R \times V_d$ from the previous claim.
    It is easy to see that for any pair $x,y\in H_d \times V_d$, the following two equalities hold (we assume $C_S[x,y]$ (resp. $C_R[x,y]$) to be $0$ for all non-initialized pairs):
    \begin{align}
        &|\left(N(x) \cup N(y)\right)\cap S| = |N(x)\cap S| + C_S[x,y] \label{eq:5}\\
        &|\left(N(x) \cup N(y)\right)\cap R| = |N(x)\cap R| + |N(y)\cap R| - C_R[x,y]\label{eq:6}.
    \end{align}
    From this point on, by combining the values from the two equalities above, we can for each pair of vertices $(x,y) \in H_d \times V_d$ compute the quantity $|N(x) \cup N(y)|$ in constant time.
    \begin{claim}
        For any constant $c$, there exists an algorithm that returns, in time $\bigO(m\cdot d)$, the set $\mathcal H\subseteq H_d\times V_d$, such that:
        \begin{enumerate}
            \item $|\mathcal H| = \max\{|H_d \times V_d|, c\cdot m\cdot d\}$.
            \item For every pair $(x,y)\in \mathcal H$ and for any pair $(x',y') \in (H_d\times V_d) \setminus \mathcal H$, the inequality \[|N(x')| + |N(y')\cap R| \leq |N(x)| + |N(y)\cap R|\] is satisfied.
        \end{enumerate}
    \end{claim}
    \begin{subproof}
        Consider the following algorithm.
        \begin{algorithm}
            \begin{algorithmic}[1]
                \State $\text{count}\gets 0$
                \State $\mathcal{H} \gets \emptyset$
                \For{$ \text{sum}= \Delta + 2d, \dots, 0$}
                    \For{$q = 0,\dots, 2d$}
                        \For {$y\in V_d$ satisfying $|N(y)\cap R| = q$}
                            \For {$x\in H_d$ satisfying $|N(x)| = \text{sum}- q$}
                                \State Add $(x,y)$ to $\mathcal H$
                                \State $\text{count}{+}+$
                                \If{$\text{count} \ge c\cdot m\cdot d$}
                                    \State \Return $\mathcal H$
                                \EndIf 
                            \EndFor
                        \EndFor
                    \EndFor
                \EndFor 
            \end{algorithmic}
            \caption{}
            \label{alg:vertex-pair-sorting}
        \end{algorithm}
        
        It is easy to verify that this algorithm halts after at most $\bigO(\Delta\cdot d + c\cdot m\cdot d) = \bigO(m\cdot d)$ many steps and returns the set $\mathcal H$ that satisfies both of the desired conditions.
    \end{subproof}
    From Equations \ref{eq:5} and \ref{eq:6}, we can see that for any pair $(x,y)\in H_d\times V_d$, if $C_S[x,y]$ is not initialized, then $|N(x) \cup N(y)| = |N(x)|+|N(y)\cap R|-C_R[x,y]$.
    Hence, by utilizing the previous algorithm, we can prove the following claim.
    \begin{claim}
        There exists an algorithm that in time $\bigO(md)$ finds a pair of vertices $(x,y)\in H_d\times V_d$, such that the following holds:
        \begin{enumerate}
            \item $C_S[x,y]$ is not initialized.
            \item For each pair $x',y'$ such that $C_S[x',y']$ is not initialized, $|N(x)\cup N(y)|\geq |N(x')\cup N(y')|$.
        \end{enumerate}
    \end{claim}
        \begin{subproof}
            Let $c$ be any constant such that $c\cdot md > |C_S| + |C_R|$ (where $|C_S|, |C_R|$ denote the number of initialized pairs) and let $\mathcal H$ be the set of size $c\cdot md$ returned by Algorithm \ref{alg:vertex-pair-sorting}. We now run the following algorithm on the pairs in $\mathcal H$
            \begin{algorithm}
                \begin{algorithmic}[1]
                    \State $\text{current}\gets null$
                    \State $M\gets 0$
                    \For{$(x,y)\in \mathcal H$}
                        \If{$C_S[x,y]$ is initialized}
                            \State \textbf{continue} \Comment{skip the pairs with $C_S[x,y]$ initialized}
                        \EndIf 
                        \If{$|N(x)|+|N(y)\cap R|-C_R[x,y] >M$}
                            \State $\text{current}\gets (x,y)$
                            \State $M\gets |N(x)|+|N(y)\cap R|-C_R[x,y]$
                        \EndIf 
                    \EndFor 
                    \State \Return current
                \end{algorithmic}
                \caption{}
                \label{alg:iterating-H}
            \end{algorithm}       

            Clearly, there are at most $\bigO(md)$ many iterations and each runs in $\bigO(1)$ time, hence the total time complexity is $\bigO(md)$. 
            We now verify that this algorithm is correct.
            It is also easy to see that the algorithm returns the pair $(x,y)$ that maximizes the value $|N(x)\cup N(y)|$ out of all pairs $(x,y)\in \mathcal H$ for which $C_S[x,y]$ is not initialized.
            We now show that it is enough to look at pairs $(x,y)\in \mathcal H$.
            First of all, if $|\mathcal H| = |H_d\times V_d|$, it is trivial. 
            Hence, we can assume that $|\mathcal H| = c\cdot md$. 
            Let $(x,y)$ be the pair that the algorithm returned,
            and assume for contradiction that there is a pair $(x',y')\in (H_d\times V_d)\setminus \mathcal H$ such that $C_S[x',y']$ is not initialized and $|N(x')\cup N(y')|>|N(x)\cup N(y)|$.
            As already argued above, since $C_S[x',y']$ is not initialized, we have 
            \begin{align*}
                |N(x')\cup N(y')| &=|N(x')|+|N(y')\cap R|-C_R[x,y] &\text{(Eq \ref{eq:5} and \ref{eq:6})}\\
                &\le |N(x')|+|N(y')\cap R| \\
                & \le \min_{(x^*, y^*)\in \mathcal H} |N(x^*)|+|N(y^*)\cap R| & \text{(by construction of $\mathcal H$)}
            \end{align*}
            However, by the choice of the constant $c$, we have that $|\mathcal H|>|C_S| + |C_R|$, and since we skip at most only $|C_S|$ pairs, at some iteration, we will come across a pair $(x^\dagger, y^{\dagger})$, for which it holds that $C_R[x^\dagger, y^{\dagger}] = 0$ (i.e. $C_R[x^\dagger, y^\dagger]$ is not initialized), and we have:
            \begin{align*}
                |N(x^\dagger)\cup N(y^\dagger)| &= |N(x^\dagger)|+|N(y^\dagger)\cap R| -C_R[x^\dagger,y^\dagger]&\\
                & = |N(x^\dagger)|+|N(y^\dagger)\cap R| & (C_R[x^\dagger,y^\dagger] = 0)\\
                & \ge \min_{(x^*, y^*)\in \mathcal H} |N(x^*)|+|N(y^*)\cap R| & \left((x^\dagger,y^\dagger)\in \mathcal H\right)\\
                & \geq |N(x')\cup N(y')|
            \end{align*}
        \end{subproof}
        Finally, it only remains to check the pairs $(x,y)$ for which $C_S$ is initialized. However, there are only $\bigO(md)$ such pairs and by utilizing Equations \ref{eq:5} and \ref{eq:6}, we can compute the value of $|N(x)\cup N(y)|$ for each such pair in constant time. \footnote{Note that unlike earlier in the paper, this lemma considers the open neighborhoods of the solution vertices. This is only due to clarity of the presentation, to avoid dealing with constant additive factors at each step, as well as distinguishing between whether the solution vertices are adjacent or not, but we remark that it is very simple to extend each argument to also hold for the closed neighborhood.}
\end{proof}
\begin{proof}[Proof (of Theorem \ref{thm:2pds-algo})]
    Let $\gamma = m^{\frac{\omega-1}{\omega+1}}$ and run the algorithm from Lemma
    \ref{lemma:k2-heavy-solutions} to find all potential solutions $x_1,x_2$ with $\gamma \leq \deg(x_2)\leq \deg(x_1)$ in time $m^{\frac{2\omega}{\omega+1}+o(1)}$.
    Let $M$ be the maximum number of dominated vertices by any potential solution considered so far.
    Now, for each $0\leq \ell\leq \lceil\log \gamma \rceil$ set $d:=2^\ell$ and run the algorithm from Lemma \ref{lemma:2pds-light-solutions} in time $\bigO(m\cdot d)$ and update $M$ accordingly to keep track of the best solution seen at each point.
    Clearly, this covers the whole search space and the correctness follows from Lemmas \ref{lemma:2pds-light-solutions} and \ref{lemma:k2-heavy-solutions}.
     Furthermore, note that we are running the algorithm from Lemma \ref{lemma:2pds-light-solutions} only $\bigO(\log m)$ times, and the running time at each iteration is bounded by $\bigO(m\cdot \gamma) = \bigO\left(m^{\frac{2\omega}{\omega+1}}\right)$.
    Therefore, the total running time is bounded by $m^{\frac{2\omega}{\omega+1}+o(1)}$.
\end{proof}

\subsection{Hardness of Partial $k$-Dominating Set in Sparse Graphs}
In this section we prove Theorem \ref{thm:sparse-graph-pd-hardness}. 
The proof essentially uses the same construction as the proof of Theorem \ref{theorem-partial-dom-LB}, with carefully chosen parameters that assure the number of vertices and edges in the reduction to remain as desired.
\begin{lemma} \label{lemma:sparse-partial-domination-lb-reduction}
    Let $2\leq h\leq k$ be fixed integers and $n, m$ be given positive integers with $n\leq m\leq n^2$. Let $A_1,\dots, A_k\subseteq \{0,1\}^d$ be sets consisting of $\min\{n, m^{h/(2h-1)}\}$ many $d$-dimensional binary vectors, and each coordinate $y\in [d]$ associated with $h$ active indices $i_1,\dots, i_h\in [k]$.
    We can construct a graph $G = (X\cup Y,E)$ satisfying the following conditions:
    \begin{itemize}
        \item $G$ consists of at most $\bigO(n)$ many vertices and at most $\bigO(m\cdot d)$ many edges.
        \item We can compute positive integers $t,\alpha$ such that $G$ contains $k$ vertices $x_1,\dots, x_k$ satisfying $|N(x_1)\cup \dots \cup N(x_k)|\geq t$ if and only if there are vectors $a_1\in A_1,\dots, a_k\in A_k$ satisfying $a_1\cdot\dots\cdot a_k\geq \alpha$ if $h$ is odd (reduction from $(k,h)$-maxIP), and if $h$ is even, $|N(x_1)\cup \dots \cup N(x_k)|\geq t$ if and only if $a_1\cdot\dots\cdot a_k = 0$ (reduction from $(k,h)$-OV).
        \item $G$ can be constructed deterministically in time $\bigO(m\cdot d)$.
    \end{itemize}
\end{lemma}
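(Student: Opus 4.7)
The plan is to simply invoke Lemma~\ref{lemma:partial-domination-lb-reduction} with a carefully chosen value of $\Delta$, so that the vertex and edge counts collapse to the sparse bounds claimed here. Specifically, set
\[
\Delta := m^{(h-1)/(2h-1)},
\]
and observe that $\Delta^{h/(h-1)} = m^{h/(2h-1)}$, so the quantity $\min\{n,\Delta^{h/(h-1)}\}$ featuring in Lemma~\ref{lemma:partial-domination-lb-reduction} equals exactly $N := \min\{n,m^{h/(2h-1)}\}$, matching the input size $|A_i| = N$ required here. I would therefore feed the input $(k,h)$-OV/maxIP instance $A_1,\dots,A_k\subseteq\{0,1\}^d$ directly into that lemma's construction with this $\Delta$, and output the resulting graph $G$ together with the thresholds $t,\alpha$. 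Since the correspondence between Partial $k$-Dominating Set solutions in $G$ and $(k,h)$-OV / $(k,h)$-maxIP solutions is already established in the third bullet of Lemma~\ref{lemma:partial-domination-lb-reduction}, the logical correctness of the third bullet here is immediate and only the quantitative size bounds remain to be checked.

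For the vertex count, Lemma~\ref{lemma:partial-domination-lb-reduction} guarantees $|V(G)| = O(N \cdot d)$; since $N \le n$ and $d$ is subpolynomial in $N$ (under the standard convention $d = N^{o(1)}$), this yields $|V(G)| \le O(n)$ up to subpolynomial factors, which is absorbed into the $o(1)$ slack of the target lower bound exponents. For the edge count, every edge of $G$ is incident to some vertex $x \in X$ and that lemma bounds $\deg(x) \le O(\Delta\cdot d)$, hence
\[
|E(G)| \;\le\; |X|\cdot O(\Delta\, d) \;=\; O(N\cdot \Delta\cdot d).
\]
The choice of $\Delta$ yields $N\Delta \le m$ in both relevant regimes: if $N = m^{h/(2h-1)}$ then $N\Delta = m^{h/(2h-1)}\cdot m^{(h-1)/(2h-1)} = m$; if instead $N = n$, then $m \ge n^{(2h-1)/h}$ by the definition of the $\min$, which rearranges to $m^{(h-1)/(2h-1)}\cdot n \le m$, again giving $N\Delta \le m$. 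Therefore $|E(G)| = O(m\cdot d)$ as required, and the construction time inherits the same bound $O(N\Delta d) = O(md)$ directly from the previous lemma.

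The potential obstacle is only the bookkeeping around the case split $N = n$ versus $N = m^{h/(2h-1)}$: one has to verify that the choice $\Delta = m^{(h-1)/(2h-1)}$ is always compatible with the preconditions of Lemma~\ref{lemma:partial-domination-lb-reduction} and with the penalty-gadget sizes used in its proof. This is a short arithmetic check driven entirely by the identity $N\Delta \le m$ derived above, so no new combinatorial idea beyond the reduction of Lemma~\ref{lemma:partial-domination-lb-reduction} is needed.
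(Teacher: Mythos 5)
Your proposal is correct and follows essentially the same route as the paper: the paper's proof also sets $\Delta := m^{(h-1)/(2h-1)}$ and invokes Lemma~\ref{lemma:partial-domination-lb-reduction}, declaring the remaining checks routine. You simply carry out the arithmetic verification (the identity $\Delta^{h/(h-1)} = m^{h/(2h-1)}$ and the case analysis establishing $N\Delta \le m$) that the paper leaves implicit, including the correct observation that the residual $d$-factor in the vertex count is subpolynomial and absorbed in the downstream $o(1)$ slack.
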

\begin{proof}
    Given a such instance $A_1,\dots, A_k$ construct the bipartite graph $(X\cup Y,E)$ as in the proof of Lemma \ref{lemma:partial-domination-lb-reduction}, by setting $\Delta:=m^{\frac{h-1}{2h-1}}$.
    It is easy to check that by plugging in the value of $\Delta$ as above, all of the properties of Lemma \ref{lemma:partial-domination-lb-reduction} translate nicely to our desired properties, and the proof follows directly.
\end{proof}
The proof of the following lemma is analogous to the proof of Lemma \ref{lemma:max-k-cover-lb-khov}.
\begin{lemma}\label{lemma:sparse-par-dom-lb-khov}
    For any fixed $k\geq 2$, $2\leq h \leq k$, there exists a graph $G$ with $|V(G)|=n$, $|E| = m$, such that the following holds. Let $N_h:=\min\{n, m^{h/(2h-1)}\}$.
    \begin{enumerate}
        \item If $h\geq 3$ and there is an algorithm solving Partial $k$-Dominating Set on $G$ in time $\bigO(N_h^{k(1-\varepsilon)})$ for some $\varepsilon>0$, then there exists a $\delta>0$, such that we can solve any $(k,h)$-OV instance $A_1,\dots, A_k$ with $|A_1| = \dots = |A_k| = N_h$ of dimensions $d=N_h^\delta$ in time $\bigO(N_h^{k(1-\varepsilon')})$ for some $\varepsilon'>0$.
        \item If there exists an algorithm solving Partial $k$-Dominating Set on $G$ in time $\bigO(N_2^{k\omega/3(1-\varepsilon)})$ for some $\varepsilon>0$, then there exists a $\delta>0$ such that we can solve any $(k,2)$-OV instance $A_1,\dots, A_k$ with $|A_1| = \dots = |A_k| = N_2$ of dimensions $d=N_2^\delta$ in time $\bigO(N_2^{\omega k/3(1-\varepsilon')})$ for some $\varepsilon'>0$.
    \end{enumerate}
\end{lemma}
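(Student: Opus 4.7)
The plan is to mirror the proof of Lemma~\ref{lemma:max-k-cover-lb-khov} almost verbatim, with Lemma~\ref{lemma:sparse-partial-domination-lb-reduction} playing the role that Lemma~\ref{lemma:max-k-cover-lb-reduction} played there. Starting with Item~1, I fix $\varepsilon>0$ from the hypothesis and choose a sufficiently small $\delta>0$; concretely, $\delta := \varepsilon/(4k)$ will suffice. Given a $(k,h)$-OV instance $A_1,\dots,A_k$ with $|A_i|=N_h$ and dimension $d = N_h^{\delta}$, I feed it into Lemma~\ref{lemma:sparse-partial-domination-lb-reduction} to obtain a graph $G$ whose Partial $k$-Dominating Set answer decides the original instance. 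For odd $h$, where the core reduction is phrased from $(k,h)$-maxIP, I first invoke Lemma~\ref{lemma:maxip-hardness} to translate $(k,h)$-OV into $(k,h)$-maxIP, which only inflates the dimension by an $O(2^h)=O(1)$ factor and is thus harmless.

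The key numerical step is to track the sparse parameter $N_h' := \min\{n', (m')^{h/(2h-1)}\}$ of the graph $G$, where by Lemma~\ref{lemma:sparse-partial-domination-lb-reduction} we have $n' = O(n)$ and $m' = O(m\cdot d) = O(m\cdot N_h^{\delta})$. Using $h/(2h-1)\leq 1$, one obtains
\[
    N_h' \;\leq\; \min\!\bigl\{O(n),\, O\!\bigl(m^{h/(2h-1)} \cdot N_h^{\delta \cdot h/(2h-1)}\bigr)\bigr\} \;\leq\; N_h^{\,1+\delta}.
\]
Hence a hypothetical algorithm solving Partial $k$-Dominating Set on $G$ in time $O((N_h')^{k(1-\varepsilon)})$ yields a $(k,h)$-OV algorithm running in time $O\!\bigl(N_h^{k(1+\delta)(1-\varepsilon)}\bigr) \leq O\!\bigl(N_h^{k(1-\varepsilon')}\bigr)$, where $\varepsilon' := \varepsilon - (1-\varepsilon)\delta$ is strictly positive for the chosen $\delta$.

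Item~2 ($h=2$) proceeds by the same reduction, with running times measured in the exponent $\omega k/3$ throughout: the identical bound $N_2' \leq N_2^{1+\delta}$ translates the hypothetical $O((N_2')^{\omega k/3 (1-\varepsilon)})$-time algorithm into a $(k,2)$-OV algorithm running in time $O\!\bigl(N_2^{\omega k/3 \cdot (1-\varepsilon')}\bigr)$ for a suitably chosen $\varepsilon'$. The only non-routine aspect is the bookkeeping of the multiplicative $d$-factor coming from the edge count of $G$; the displayed estimate above is precisely the verification that this factor is absorbed into a $N_h^{O(\delta)}$ overhead, and by choosing $\delta$ subpolynomial in $\varepsilon/k$ we remain safely below $N_h^{k(1-\varepsilon')}$ in both items.
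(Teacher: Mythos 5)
Your proof is correct and spells out exactly the ``analogous'' argument the paper gestures at by its one-line reference to Lemma~\ref{lemma:max-k-cover-lb-khov}: plug the $(k,h)$\=/OV (or, for odd $h$, the $(k,h)$\=/maxIP via Lemma~\ref{lemma:maxip-hardness}) instance into Lemma~\ref{lemma:sparse-partial-domination-lb-reduction}, bound the parameter blow-up $N_h' \leq N_h^{1+\delta}$ caused by the $d$\=/factor on the edge count, and absorb it by choosing $\delta = \Theta(\varepsilon/k)$. Your arithmetic ($\varepsilon' = \varepsilon - (1-\varepsilon)\delta > 0$) and the case split on whether $N_h$ equals $n$ or $m^{h/(2h-1)}$ are both sound, so the proposal matches the paper's intended proof in both structure and detail.
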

One final ingredient missing in the proof of Theorem \ref{thm:sparse-graph-pd-hardness} is a straightforward consequence of the lower bound construction for the $k$-Dominating Set problem in sparse graphs given by Fischer et al. \cite{FischerKR24}.
\begin{lemma}\label{lemma:soda24-hardness}
    For any fixed $k\geq 3$, if there exists an algorithm solving Partial $k$-Dominating Set on graphs with $n$ vertices and $m$ edges in time $\bigO(mn^{k-2-\varepsilon})$, then the $k$-OV Hypothesis is false.
\end{lemma}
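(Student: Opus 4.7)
The plan is to obtain this lower bound as a direct consequence of the $k$-Dominating Set lower bound in sparse graphs of Fischer, Künnemann, and Redzic~\cite{FischerKR24}, observing that their construction's output graph already has the property that deciding $k$-Dominating Set is a special case of computing the Partial $k$-Dominating Set objective.

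First I would recall what the construction in~\cite{FischerKR24} delivers: starting from a $k$-OV instance with $k$ vector sets of size $N$ over $\{0,1\}^d$ with $d=N^{o(1)}$, it produces in near-linear time a graph $G=(V,E)$ with $n=|V|=\Theta(N)$ and $m=|E|=\Theta(N^{1+o(1)})$ such that the $k$-OV instance is a YES instance if and only if $G$ admits a $k$-Dominating Set, i.e., there exist $v_1,\dots,v_k\in V$ with $N[v_1]\cup\cdots\cup N[v_k]=V$. By the parameter tradeoff in their reduction, solving $k$-Dominating Set on $(n,m)$-graphs in time $O(mn^{k-2-\varepsilon})$ would refute the $k$-OV Hypothesis.

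The key observation is that a $k$-Dominating Set in $G$ exists if and only if $\max_{v_1,\dots,v_k\in V}|N[v_1]\cup\cdots\cup N[v_k]|=n$. Hence any algorithm $\mathcal{A}$ solving Partial $k$-Dominating Set in time $T(n,m)$ can be used to solve $k$-Dominating Set in time $T(n,m)+O(1)$: simply run $\mathcal{A}$ and compare the returned optimal value to $n=|V|$, answering YES iff the two values coincide. Consequently, a hypothetical algorithm for Partial $k$-Dominating Set running in time $O(mn^{k-2-\varepsilon})$ would yield a $k$-Dominating Set algorithm with the same running time, contradicting the lower bound of~\cite{FischerKR24} and thus refuting the $k$-OV Hypothesis.

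No real obstacle arises here; the only thing to double-check is that the graph produced by~\cite{FischerKR24} indeed has edge count in the range used to claim the $m n^{k-2-o(1)}$ lower bound under $k$-OV (which is exactly the regime of near-linear number of edges that makes their reduction tight), and that the reduction is deterministic and runs in time $\widetilde O(m)$ so that the overhead of comparing the optimal value to $n$ is negligible. Both facts are established in~\cite{FischerKR24}, completing the proof.
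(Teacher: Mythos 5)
Your core argument is correct and is exactly the ``straightforward consequence'' the paper invokes without further elaboration: \cite{FischerKR24} establish an $mn^{k-2-o(1)}$ lower bound for $k$-Dominating Set under the $k$-OV Hypothesis, and $k$-Dominating Set reduces to Partial $k$-Dominating Set with $O(1)$ overhead because a graph admits a $k$-dominating set if and only if the Partial $k$-Dominating Set optimum equals $n$. Since you only use the FKR'24 lower bound as a black box, your argument is sound. However, your recalled description of the FKR'24 construction is inaccurate in a way worth flagging: if the reduction produced a graph with $n=\Theta(N)$ vertices and $m=\Theta(N^{1+o(1)})$ edges from a balanced $k$-OV instance with $k$ sets of size $N$, then a $k$-DS algorithm running in time $O(mn^{k-2-\varepsilon})$ would only give $N^{k-1-\Omega(1)}$ time for that $k$-OV instance, which does \emph{not} contradict the $N^{k-o(1)}$ barrier. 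To obtain a lower bound parameterized by both $m$ and $n$, the construction must span a range of densities (typically via unbalanced vector set sizes, e.g.\ one set of size $\approx m/n$ and $k-1$ sets of size $\approx n$, so that the $k$-OV instance has hardness $\approx (m/n)\cdot n^{k-1}=mn^{k-2}$). The parenthetical claim that ``near-linear number of edges makes their reduction tight'' is therefore misleading; nothing in your argument actually uses it, so the proof goes through, but the sentence should be corrected or dropped.
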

The proof of Theorem \ref{thm:sparse-graph-pd-hardness} now follows directly from Lemmas \ref{lemma:soda24-hardness}, \ref{lemma:sparse-par-dom-lb-khov}, \ref{lemma:sparse-partial-domination-lb-reduction}, and \ref{lemma:hyperclique-hardness-of-khov}.

\bibliographystyle{plainurl}
\bibliography{refs}

\appendix

\section{Baseline algorithm for Max $k$-Cover}
\label{sec:baseline}

For completeness, we adapt the $k$-Dominating Set algorithm by Eisenbrand and Grandoni \cite{EisenbrandG04} to Max $k$-Cover, which establishes a baseline algorithm.
Here, we use the bipartite graph formulation of Max $k$-Cover established in Section~\ref{sec:algorithms} .

\begin{proposition}[Baseline Algorithm]\label{prop:baseline-max-cover}
    Let $k\geq 2$. We can solve Max-$k$-Cover on a given bipartite graph $G=(X \cup Y,E)$ with $|X| = n$, $|Y| = u$ in time 
    \[
    \MM\Big(n^{\lceil \frac{k}{2}\rceil}, u, n^{\lfloor \frac{k}{2}\rfloor}\Big).
    \]
\end{proposition}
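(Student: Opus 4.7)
The plan is to adapt the classical matrix-multiplication-based approach of Eisenbrand and Grandoni~\cite{EisenbrandG04} from $k$-Dominating Set to Max $k$-Cover, by working with the complement of the coverage rather than the coverage itself. Let $k_1 := \lceil k/2 \rceil$ and $k_2 := \lfloor k/2 \rfloor$, so that $k_1+k_2 = k$.

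First, I would define two $0/1$ matrices indexed as follows. Let $A$ have rows indexed by $\binom{X}{k_1}$ and columns indexed by $Y$, with $A[S,y] = 1$ iff $y \notin N(S) = \bigcup_{x \in S} N(x)$. Symmetrically, let $B$ have rows indexed by $Y$ and columns indexed by $\binom{X}{k_2}$, with $B[y,T] = 1$ iff $y \notin N(T)$. Both matrices can be built in time proportional to their size times the maximum degrees, which is absorbed into the stated bound. The key observation is that, for any $S \in \binom{X}{k_1}$ and $T \in \binom{X}{k_2}$,
\[
    (A \cdot B)[S, T] = |\{ y \in Y : y \notin N(S) \cup N(T)\}| = u - |N(S \cup T)|.
\]
Hence, computing the minimum entry of $A \cdot B$ over all pairs $(S,T)$ of disjoint subsets yields $u$ minus the Max $k$-Cover optimum (restricting to disjoint pairs only excludes pairs $S \cup T$ with $|S \cup T| < k$, but such pairs are dominated by extending to any $k$-element superset).

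The running time is dominated by computing the rectangular product $A \cdot B$ of dimensions $n^{k_1} \times u$ and $u \times n^{k_2}$, which by definition takes time $\MM(n^{\lceil k/2 \rceil}, u, n^{\lfloor k/2 \rfloor})$. Finally, one sweeps through the entries of the resulting matrix and returns the witnessing pair $(S^\star, T^\star)$ attaining the minimum; the value $|N(S^\star \cup T^\star)| = u - (A \cdot B)[S^\star, T^\star]$ is the optimum.

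There is no real obstacle here: the only subtlety is ensuring that we correctly handle the case $|S \cup T| < k$ (when $S$ and $T$ overlap), but since Max $k$-Cover is monotone -- adding arbitrary sets never decreases the coverage -- the optimum is always attained on some $k$-sized union, and thus restricting to disjoint pairs (or simply ignoring overlaps) does not affect correctness.
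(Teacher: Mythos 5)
Your proposal is correct and follows essentially the same approach as the paper: the same two matrices $A$ and $B$ indexed by $\binom{X}{\lceil k/2\rceil}$, $Y$, $\binom{X}{\lfloor k/2\rfloor}$, the same product $C=A\cdot B$ whose entry $C[S,T]$ counts uncovered elements, and reading off the optimum as $\max_{S,T}\,u-C[S,T]$. Your extra observation about overlapping $S,T$ and monotonicity is a fine (and correct) refinement the paper simply leaves implicit.
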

\begin{proof}
    Let $A$ be a matrix over $\{0,1\}$ whose rows are indexed by subsets of $X$ of size $\lceil\frac{k}{2}\rceil$ and columns are indexed by the elements of $Y$, such that $A[S,y] = 1$ if and only if there is \emph{no} vertex $x\in S$ such that $\{x,y\}\in E$. 
    Similarly, let $B$ be a matrix over $\{0,1\}$ whose columns are indexed by subsets of $X$ of size $\lfloor\frac{k}{2}\rfloor$ and rows are indexed by the elements of $Y$, such that $B[y,T] = 1$ if and only if there is \emph{no} vertex $x\in T$ such that $\{x,y\}\in E$.
    Define $C:=A\cdot B$.
    It is straightforward to verify that each entry $C[S,T]$ counts the number of vertices $y\in Y$ such that no vertex in $S\cup T$ is adjacent to $y$.
    Hence, from $C$ we can read off the desired optimal value $\max_{S,T} u-C[S,T]$.
   Since  the dimensions of $A$ and $B$ are $\bigO(n^{\lceil \frac{k}{2}\rceil})\times u$ and $u \times \bigO(n^{\lfloor \frac{k}{2}\rfloor})$, respectively, the claimed running time follows.
\end{proof}
As a consequence of the proposition above, we directly get the following baseline algorithm for Partial $k$-Dominating Set.
\begin{corollary}\label{cor:baseline-pd}
    Let $k\geq 2$. Partial $k$-Dominating Set on a given graph $G=(V,E)$ with $|V| = n$ can be solved in time 
    \[
    \MM\Big(n^{\lceil \frac{k}{2}\rceil}, n, n^{\lfloor \frac{k}{2}\rfloor}\Big).
    \]
    If $k\geq 8$, or $\omega=2$, this running time becomes $n^{k+o(1)}$.
\end{corollary}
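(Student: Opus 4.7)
The plan is to derive the corollary as a direct consequence of Proposition~\ref{prop:baseline-max-cover} via a trivial reduction from Partial $k$-Dominating Set to (graph-theoretic) Max $k$-Cover, and then verify the $n^{k+o(1)}$ claim by a short matrix-multiplication bookkeeping.

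First I would perform the reduction: given $G=(V,E)$, form the bipartite graph $G'=(X\cup Y,E')$ with $X=Y=V$ and $\{x,y\}\in E'$ iff $y\in N[x]$ (i.e., $y=x$ or $xy\in E$). Then for any choice $v_1,\dots,v_k\in V$, we have $|N_{G'}(v_1)\cup\cdots\cup N_{G'}(v_k)|=|N[v_1]\cup\cdots\cup N[v_k]|$, so solving Max $k$-Cover on $G'$ solves Partial $k$-Dominating Set on $G$. The construction takes linear time. Applying Proposition~\ref{prop:baseline-max-cover} with $|X|=|Y|=n$ immediately yields the running time $\MM(n^{\lceil k/2\rceil},n,n^{\lfloor k/2\rfloor})$.

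For the case $\omega=2$, I would invoke the preliminaries bound $\MM(a,b,c)\le(ab+ac+bc)^{1+o(1)}$ with $a=n^{\lceil k/2\rceil}$, $b=n$, $c=n^{\lfloor k/2\rfloor}$. Since $\lceil k/2\rceil+\lfloor k/2\rfloor=k$ and $\lceil k/2\rceil+1\le k$ for all $k\ge 2$, every term is at most $n^{k}$, so the total is $n^{k+o(1)}$.

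The only genuine content is the case $k\ge 8$ with $\omega$ left as an unknown. Here I would invoke Coppersmith's rectangular matrix multiplication bound: for some absolute constant $\alpha_0>0.29$, one has $\MM(N,N,N^{\alpha_0})=N^{2+o(1)}$. Setting $N=n^{m}$ with $m:=\lfloor k/2\rfloor\ge 4$, we get $\MM(n^m,n^m,n^{\alpha_0 m})=n^{2m+o(1)}$, and since $\alpha_0 m\ge 4\alpha_0>1$, this gives $\MM(n^m,n^m,n)\le\MM(n^m,n^m,n^{\alpha_0 m})=n^{2m+o(1)}$. For odd $k=2m+1$, the third dimension exponent becomes $m+1$, and we split the $n^{m+1}\times n$ matrix into $n$ stacks of $n^{m}\times n$ blocks and apply the square bound to each, paying a factor of $n$ overall. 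Combining with transpose symmetry of $\MM$, we conclude $\MM(n^{\lceil k/2\rceil},n,n^{\lfloor k/2\rfloor})=n^{k+o(1)}$ whenever $\lfloor k/2\rfloor\ge 4$, i.e., $k\ge 8$.

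The main (and only) subtlety is verifying that Coppersmith's bound applies at $m=4$, which amounts to checking $4\alpha_0\ge 1$; this holds since $\alpha_0>0.29$ (in fact current bounds give $\alpha_0$ well above $0.3$). Everything else is a mechanical combination of Proposition~\ref{prop:baseline-max-cover} with the standard relationship between $\MM$ and $\omega$.
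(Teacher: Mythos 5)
Your proposal is correct and follows essentially the same route as the paper: the same $X=Y=V$ closed-neighborhood reduction to Max $k$-Cover used in Section~\ref{sec:algorithms}, followed by Proposition~\ref{prop:baseline-max-cover}, with the $\omega=2$ case read off from the preliminaries bound $\MM(a,b,c)\le (ab+ac+bc)^{1+o(1)}$ and the $k\ge 8$ case from Coppersmith-style rectangular matrix multiplication (the paper leaves this to the citation of Eisenbrand--Grandoni, whereas you spell out the splitting argument, correctly checking $\alpha_0\lfloor k/2\rfloor\ge 1$ for $k\ge 8$). One cosmetic remark: what you invoke is really the full permutation symmetry of $\MM$, not just transpose symmetry — or, simpler, apply Coppersmith directly in the form $\MM(N,N^{\alpha},N)=N^{2+o(1)}$, since in $\MM\bigl(n^{\lceil k/2\rceil},n,n^{\lfloor k/2\rfloor}\bigr)$ the small dimension $n$ is already the inner one; this does not affect correctness.
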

Furthermore, by noticing that $u\leq n\Delta_s$, we obtain another consequence of this algorithm.
\begin{corollary}
    Let $k\geq 2$. We can solve Max-$k$-Cover on a given bipartite graph $G=(X \cup Y,E)$  with $|X| = n$, $|Y| = u$, and $\Delta_s = \max_{x\in X}\deg(x)$ in time 
    \[
    \MM\Big(n^{\lceil \frac{k}{2}\rceil}, n\Delta_s, n^{\lfloor \frac{k}{2}\rfloor}\Big).
    \]
    If $\omega = 2$, this running time becomes $n^{k+o(1)} + \Delta_s n^{\lceil k/2\rceil + 1+o(1)}$.
\end{corollary}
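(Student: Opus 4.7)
The plan is to reduce this directly to Proposition~\ref{prop:baseline-max-cover} by a simple preprocessing step that replaces the bound $u$ on the universe size by the bound $n\Delta_s$.

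First I would note that any element $y \in Y$ with $\deg(y)=0$ can never be covered by any set $S \in X$, and therefore contributes $0$ to the objective $|N(x_1)\cup\dots\cup N(x_k)|$ regardless of the choice of $x_1,\dots,x_k$. Hence we may delete all isolated vertices of $Y$ without changing the optimum, in time $O(n\Delta_s)$ (which is dominated by the final bound). After this preprocessing, every $y\in Y$ has degree at least $1$, so by double-counting edges we obtain
\[
    u \;=\; |Y| \;\le\; |E| \;\le\; \sum_{x\in X}\deg(x) \;\le\; n\Delta_s.
\]

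Next I would invoke Proposition~\ref{prop:baseline-max-cover} on the preprocessed instance, whose universe has size at most $n\Delta_s$. Since $\MM(a,b,c)$ is monotone non-decreasing in each of its arguments, this gives the advertised bound
\[
    \MM\bigl(n^{\lceil k/2\rceil},\, n\Delta_s,\, n^{\lfloor k/2\rfloor}\bigr).
\]

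Finally, for the case $\omega=2$ I would use the fact recalled in Section~\ref{sec:prelims} that $\MM(a,b,c)\le (ab+ac+bc)^{1+o(1)}$. Plugging in the three dimensions yields
\[
    \MM\bigl(n^{\lceil k/2\rceil},\, n\Delta_s,\, n^{\lfloor k/2\rfloor}\bigr)
    \;\le\; \bigl(n^{\lceil k/2\rceil+1}\Delta_s \;+\; n^{k} \;+\; n^{\lfloor k/2\rfloor+1}\Delta_s\bigr)^{1+o(1)}
    \;\le\; \bigl(n^{k} + \Delta_s\, n^{\lceil k/2\rceil+1}\bigr)^{1+o(1)},
\]
which is precisely the claimed $n^{k+o(1)} + \Delta_s\, n^{\lceil k/2\rceil+1+o(1)}$ running time. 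There is no real obstacle here — the whole proof is a one-line observation on top of Proposition~\ref{prop:baseline-max-cover} — but the only subtle point worth double-checking is that removing isolated vertices of $Y$ genuinely preserves the optimum, which follows because neither the objective value nor the feasible solutions depend on uncovered elements.
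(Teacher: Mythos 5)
Your proof is correct and follows essentially the same route the paper takes: the corollary is stated as an immediate consequence of Proposition~\ref{prop:baseline-max-cover} ``by noticing that $u\le n\Delta_s$''. You actually add a small but genuine refinement the paper glosses over: without deleting isolated elements of $Y$, the inequality $u\le n\Delta_s$ need not hold, so your $O(n\Delta_s)$-time preprocessing step (which clearly preserves the optimum) is what makes the claimed bound valid for arbitrary inputs rather than only for instances with no uncovered elements. The remainder — monotonicity of $\MM$ in each argument and the estimate $\MM(a,b,c)\le(ab+ac+bc)^{1+o(1)}$ when $\omega=2$, with $ab=n^{\lceil k/2\rceil+1}\Delta_s$ dominating $bc=n^{\lfloor k/2\rfloor+1}\Delta_s$ — matches the paper's intended calculation exactly.
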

\end{document}